\documentclass[11pt]{article}
\usepackage{graphicx}
\usepackage{fullpage}
\usepackage{subcaption}
\usepackage{float}
\usepackage{dialogue}
\usepackage{natbib}
\usepackage{setspace}
\usepackage{tabularx}
\usepackage{multirow}
\usepackage{comment}
\usepackage{hyperref}
\usepackage{multirow}
\usepackage{booktabs, array, makecell}
\usepackage{pdflscape}
\usepackage{amsmath}
\usepackage{amsthm}
\usepackage{amssymb}
\usepackage{tcolorbox}

\usepackage{color-edits}

\addauthor[BL]{bl}{blue}
\addauthor[PS]{ps}{red}
\addauthor[NSI]{nsi}{orange}

\newcommand{\minitab}[2][1]{\begin{tabular}{#1}#2\end{tabular}}

\theoremstyle{plain}
    \ifx\thechapter\undefined
      \newtheorem{lem}{\protect\lemmaname}
    \else
      
    \fi
\theoremstyle{plain}
    \ifx\thechapter\undefined
  \newtheorem{cor}{\protect\corollaryname}
\else
      
    \fi
\providecommand{\corollaryname}{Corollary}
\providecommand{\lemmaname}{Lemma}
\usepackage{tikz}
\usepackage{adjustbox}
\usetikzlibrary{shapes.geometric, arrows, positioning, arrows.meta, fit, decorations.pathreplacing, decorations.pathmorphing, calc}

\tikzstyle{task} = [rectangle, minimum width=1cm, minimum height=1cm, text centered, draw=black, fill=gray!20, rounded corners, text width=2.2cm, align=center]
\tikzstyle{decision} = [rectangle, minimum width=4cm, minimum height=2cm, text centered, draw=black, fill=green!20, rounded corners, text width=3cm, align=center]
\tikzstyle{process} = [rectangle, minimum width=2.5cm, minimum height=2cm, text centered, draw=black, fill=orange!40, rounded corners, text width=3cm, align=center]
\tikzstyle{process2} = [rectangle, minimum width=2.5cm, minimum height=2cm, text centered, draw=black, fill=gray!10, text width=3cm, align=center]
\tikzstyle{smallbox} = [rectangle, minimum width=1cm, minimum height=1cm, text centered, draw=black, fill=gray!20, text width=1.5cm, align=center]
\tikzstyle{arrow} = [thick,->,>=stealth]
\newtheorem{definition}{Definition}
\newtheorem{proposition}{Proposition}

\newtheorem{example}{Example}

\begin{document}
\pagenumbering{gobble}

\title{Chaining Tasks, Redefining Work: A Theory of AI Automation\thanks{We thank Seyed Mahdi Hosseini Maasoum, Guy Lichtinger, Anand Shah, Philip Trammell, Michael Zhao, and seminar participants at the Economics of AI Reading Group and Social Analytics Lab at MIT for helpful discussions.}}

\author{
Mert Demirer\thanks{mdemirer@mit.edu} \\ MIT
\and
John J. Horton\thanks{jjhorton@mit.edu} \\ MIT \& NBER
\and
Nicole Immorlica\thanks{nicimm@microsoft.com} \\ Yale \& Microsoft
\and
Brendan Lucier\thanks{brlucier@microsoft.com} \\ Microsoft
\and
Peyman Shahidi\thanks{peymansh@mit.edu} \\ MIT
\\[2em]
{
% \color[RGB]{0,51,153}\textbf{\large Preliminary Draft \textendash \ Comments Welcome!}\\
}
\\~\\
}

% Pinned to the date of the circulating/website version (rather than \today)
% so the arXiv and posted PDFs read as the same paper. Bump when substantively revised.
\date{February 22, 2026}

\newcommand{\T}[0]{\mathcal{T}}
\newcommand{\J}[0]{\mathcal{J}}

\newcommand{\manualLetter}{M}
\newcommand{\AIletter}{A}
\newcommand{\handoffLetter}{H}
\newcommand{\skillCostLetter}{c}
\newcommand{\timeCostLetter}{t}
\newcommand{\laborLetter}{l}
\newcommand{\skillAdjustedTimeLetter}{\ensuremath{\tau}}
\newcommand{\aggregateManualLabor}{M}
\newcommand{\aggregateAIlabor}{A}

\newcommand{\timecost}[1]{\ensuremath{\timeCostLetter_{#1}}}
\newcommand{\skillcost}[1]{\ensuremath{\skillCostLetter_{#1}}}
\newcommand{\hccost}[1]{\skillcost{#1}}
\newcommand{\labor}[1]{\ensuremath{\laborLetter_{#1}}}
\newcommand{\handofftime}[1]{\ensuremath{\timeCostLetter^{\handoffLetter}_{#1}}}

\newcommand{\manualTime}[1]{\ensuremath{\timeCostLetter^{\manualLetter}_{#1}}}
\newcommand{\AItime}[1]{\ensuremath{\timeCostLetter^{\AIletter}_{#1}}}
\newcommand{\manualSkill}[1]{\ensuremath{\skillCostLetter^{\manualLetter}_{#1}}}
\newcommand{\AIskill}[1]{\ensuremath{\skillCostLetter^{\AIletter}_{#1}}}

\newcommand{\minTime}[1]{\ensuremath{\timeCostLetter^*_{#1}}}

\newcommand{\skillAdjustedTimeManual}[1]{\ensuremath{\skillAdjustedTimeLetter^{\manualLetter}_{#1}}}
\newcommand{\skillAdjustedTimeAI}[1]{\ensuremath{\skillAdjustedTimeLetter^{\AIletter}_{#1}}}
\newcommand{\skillAdjustedTimeHandoff}[1]{\ensuremath{\skillAdjustedTimeLetter^{\handoffLetter}_{#1}}}

\newcommand{\topic}[1]{\paragraph{#1}}
\renewcommand{\arraystretch}{1.25}

\maketitle
\begin{abstract}
\noindent Production is a sequence of steps that can be executed (1) manually, (2) augmented with AI, or (3) fully automated within contiguous AI-executed steps called ``chains.'' 
Firms optimally bundle steps into tasks and then jobs, trading off specialization gains against coordination costs. 
We characterize the optimal assignment of humans and AI to steps and the firm's resulting job structure, showing that comparative advantage logic can fail with AI chaining.
The model implies non-linear productivity gains from AI quality improvements and admits a CES representation at the macro level.
Empirical evidence supports the model's key predictions that (1) AI-executed steps co-occur in chains, (2) dispersion of AI-exposed steps lowers AI execution at the job level, and (3) adjacency to AI-executed steps increases the likelihood that a step is AI-executed.

\end{abstract}

\newpage 
\clearpage
\pagenumbering{arabic}
\section{Introduction}

The proliferation of artificial intelligence (AI) tools is likely to raise productivity in the near term as existing tasks are automated.
Although these gains may be substantial, the larger and likely longer run effects will come from reorganizing production, including shifts in the skill mix of tasks, human capital requirements, and the definition and design of jobs \citep{brynjolfsson2021productivity}. 
The workhorse framework in economics for studying the impacts of automation technologies is the task-based model, which represents production as the completion of a set of independent tasks (e.g., \citealp{autor2003skill, acemoglu2011skills, acemoglu2018, acemoglu2019automation, acemoglu2022tasks}). 
In these models, what matters is each task's amenability to substitution by, or complementarity with, the alternative to human labor (e.g., computers, robots, or AI). 
This approach, however, sets aside the fact that production requires tasks to be performed in some sequence, and that groups of tasks constitute ``jobs.'' 
We argue that this sequencing is economically consequential and should be considered as it changes equilibrium predictions about which units of work are automated and how jobs are organized when technologies such as AI enter production. 

We model production as a Leontief technology over an ordered sequence of exogenously specified \emph{steps}, and treat the definition of tasks and their partitioning across jobs as endogenous outcomes.
A \emph{step} is the primitive unit of work in our framework, corresponding to what classic models would call a ``task.''
We instead reserve the term \emph{task} for a contiguous block of steps that the firm endogenously designates for joint execution by a worker.
Jobs are then determined by how tasks are assigned across workers.
In the absence of AI, tasks optimally collapse to single-step blocks so that steps and tasks coincide.
% Relative to standard task-based formulations, we endogenize what constitutes a task and adopt a sequential Leontief structure to discipline the implied complementarities across production steps.

Firms seek to minimize production costs by choosing which steps of the sequence to delegate to AI, which to assign to human workers, and how to group tasks into jobs with different skill requirements.
These choices trade off the benefits of specializing work among multiple workers against the coordination costs created by finer divisions of labor.
In the absence of an effective AI technology, a firm might hire multiple high-skill workers in different job roles to accomplish distinct steps of its production process, incurring ``hand-off'' costs whenever work passes from one worker to another.
As the AI technology improves, the firm might instead choose to automate many of these steps, lumping them together into one logical task, and then hire a single, potentially lower-skill worker to oversee this automation and validate the production outcome.
Whether this strategy is optimal depends not only on the effectiveness of the AI technology at accomplishing individual production steps, but also on the sequential relationship between steps, the relative ease of verification versus manual execution, wages, and coordination costs between workers.

Compared to earlier waves of automation, AI is different in two related ways. 
First, its capabilities are broad and potentially relevant across many points in production, but also highly jagged, with sharp variation in performance even across adjacent steps in a workflow \citep{dell2023navigating}. 
Second, getting value from AI often requires engineering the right context and interfaces, so the costs of sequencing work and handing off intermediate outputs become central. 
As a result, improvements in AI can shift deployment decisions at multiple points across the production chain at once, triggering discrete changes in overall production strategy. 
By endogenizing task sequencing and the bundling of steps into jobs, we capture these non-local effects of AI capability changes on task allocation and job design. 

We distinguish the short run and the long run in our framework by the degree of flexibility in organizational adjustment, and study how AI affects production in each horizon.
In the short run, human capital, wages, and job boundaries are fixed, and AI raises productivity by reducing the time required to perform steps.
In the long run, firms can reorganize their workflows by adjusting job boundaries, skill requirements, and AI deployment in tandem.
Table~\ref{tab:short_long_run} summarizes the key distinctions between the two horizons.
\begin{table}[htbp]
    \centering
    \vspace{0.1cm}
    \caption{Summary of AI's Impact in the Short and Long Run}
    \label{tab:short_long_run}
    \begin{adjustbox}{max width=0.98\textwidth}
    \begin{tabular}{|c|c|c|c|c|}
\hline
\textbf{Time Horizon} & 
\textbf{Job Design} & 
\minitab[c]{\textbf{Worker Skills}\\\textbf{and Wages}} & 
\minitab[c]{\textbf{AI Deployment}\\\textbf{Strategy}} & 
\textbf{Comments} \\ 
\hline
Short run & 
Fixed & 
Fixed & 
Optimized within jobs & 
\minitab[c]{Productivity gains by\\automating existing tasks} \\
\hline
Long run & 
Flexible & 
\minitab[c]{Skills and relative wages\\adjust to job design} &
Optimized across jobs &
\minitab[c]{Joint optimization of AI deployment and\\job design to minimize total labor costs}\\
\hline
\end{tabular}
    \end{adjustbox}
\end{table}

\paragraph{Automation versus Augmentation.}
Central to our framework is the distinction between full automation of a production step versus worker augmentation.
In our model, three modes of step completion are recognized: manual, augmented, and automated.
\emph{Manual} completion of a step means that a human carries out the work without any AI assistance.
\emph{Augmented} completion involves a human performing the step with the use of AI, which might (or might not) reduce labor costs.
For example, a human describes the requirements of a step to an AI tool, the AI then executes the step, and its output is reviewed and approved by the human.\footnote{
In our model, with highly capable AIs the human role shrinks to prompting and judging AI outputs, consistent with the view in \cite{agrawal2019exploring}.
}
In contrast, we say a step has been \emph{automated} if it is completed by AI end-to-end without any direct human intervention.
We view AI automation as a way of putting certain production steps ``under the hood'' of a logical meta-task that a worker is attempting to complete.
To fix ideas, think of the workflow of a data scientist whose job requires completing five steps: defining the business question (Step 1), finding and fetching data needed for answering the question (Step 2), building an analysis pipeline (Step 3), drafting a report (Step 4), and presenting the findings (Step 5).
As part of an AI-augmented request to complete one of these steps (such as Step 4: drafting a report), the AI might also be assigned to complete one or more preceding steps (such as Step 3: running an analysis, or Steps 2 and 3: finding data and running an analysis) as part of that single request.
Any such preceding step is said to be automated, and the set of all such automated steps, plus the final step that the worker is actively engaged with, can be thought of as a single worker task which we term an \emph{AI chain}.
By choosing which steps to automate and which to assign either manually or via augmentation, the firm implicitly designs the set of tasks exposed to its workers.

Although augmented and automated production steps both involve AI, they differ in one crucial respect: AI augmentation demands direct human oversight of the AI's output, whereas automation does not.
Figure~\ref{fig:task_division} illustrates these concepts using the five-step data scientist workflow described earlier.
\begin{figure}[t!]
  \begin{center}
  \caption{Illustrative Example for Division of Labor}
  \label{fig:task_division}
  \begin{tikzpicture}[scale=0.88, transform shape,
    node distance=0.5cm and 1.2cm,
    every node/.style={rectangle, rounded corners, draw, align=center, minimum width=2cm, minimum height=1cm},
    manual/.style={fill=gray!10},
    automated/.style={fill=green!30},
    augmented/.style={fill=orange!30},
    labelbox/.style={draw, rectangle, rounded corners, font=\bfseries, minimum width=2.25cm, minimum height=1cm},
    dashedbox/.style={draw, rectangle, rounded corners, dashed, inner sep=0.3cm, line width=1pt},
    >=latex
  ]

    % Top layer (Steps)
    \node[manual] (S1) {Step 1\\(Manual)};
    \node[automated, right=of S1] (S2) {Step 2\\(Automated)};
    \node[automated, right=of S2] (S3) {Step 3\\(Automated)};
    \node[augmented, right=of S3] (S4) {Step 4\\(Augmented)};
    \node[manual, right=of S4] (S5) {Step 5\\(Manual)};

    % Arrows between steps
    \draw[->, thick] (S1) -- (S2);
    \draw[->, thick] (S2) -- (S3);
    \draw[->, thick] (S3) -- (S4);
    \draw[->, thick] (S4) -- (S5);

    % Dashed boxes around steps
    \node[dashedbox, draw=olive, fit=(S1)] (DS1) {};
    \node[dashedbox, draw=blue, fit=(S2)(S3)(S4)] (DS2-4) {};
    \node[dashedbox, draw=olive, fit=(S5)] (DS5) {};

    % AI box top center aligned with Step 3
    \node[labelbox, fill=blue!20, above=1.25cm of S3] (AI) {AI};

    % Human box bottom center aligned with Step 3
    \node[labelbox, fill=olive!20, below=1.5cm of S3] (Human) {Human};

    % Lines from AI box to AI steps
    \draw[thin, blue!80] (AI.south) -- (S2.north);
    \draw[thin, blue!80] (AI.south) -- (S3.north);
    \draw[thin, blue] (AI.south) -- (S4.north);

    % Lines from Human box to Human steps
    \draw[thin, olive!80] (S1.south) -- (Human.north);
    \draw[thin, olive] (S4.south) -- (Human.north);
    \draw[thin, olive!80] (S5.south) -- (Human.north);
  \end{tikzpicture}
  \end{center}
  \footnotesize{
  \emph{Notes:} This figure illustrates division of labor across five steps: Steps 1 and 5 are manual; Steps 2 and 3 are AI-automated (done without direct human intervention); Step 4 is AI-augmented (done with AI but requiring human oversight). 
  Lines from the AI box indicate steps involving AI, and lines from the Human box indicate steps requiring human execution or oversight. 
  Dashed boxes group steps according to their task assignments: Steps 1 and 5 form separate human tasks, while Steps 2–4 form an AI chain task.
  }
\end{figure}
Steps 1 and 5 are performed manually by a human.  
Steps 2, 3, and 4 form an AI chain and are done using AI.
Steps 2 and 3 are automated because their outputs feed directly into subsequent AI steps without human review.
Step 4 is augmented because its output is evaluated by a human before proceeding. 
The resulting worker tasks in this job thus become $1,4,$ and $5$.

When a single step is performed by an AI, there is always a human requesting and evaluating its output in an augmented manner.  
But with two or more steps it is possible to re-configure production so that an AI completing one step passes its output directly into the next step without human intervention (i.e., the first step is automated). 
This chaining is potentially a source of large productivity gains \emph{but} it requires AIs that can perform each step with a high probability of success.
In this sense, the model's basic setup shares conceptual similarities with the O-ring model of production \citep{kremer1993oring, gans2026oring}.

The decision to deploy AI on a given production step compares the cost of manual execution to the unified AI-based execution cost.
This is different from saying that steps get assigned to whichever factor of production has comparative advantage in that step.
In fact, AI chaining can overturn standard comparative advantage logic in assignment because, when an AI chain is executed, a human worker verifies only the output of the last step of the chain (i.e., the augmented step).
Output verification is therefore a fixed cost of the chain rather than a marginal cost that scales with each additional step.
That is, appending a neighboring step to an existing AI chain adds no additional verification burden, while it may reduce the probability of AI's end-to-end successful completion of the extended chain.
By contrast, assigning the neighboring step to a human terminates the chain and creates an additional human checkpoint, which entails another output verification (if the step is augmented with AI) or the cost of human execution (if the step is performed manually).
When AI is sufficiently reliable on the marginal step, avoiding this additional human checkpoint can dominate, pulling the step into AI execution even if manual human execution is preferred for it in isolation.
In the data scientist workflow example, the worker verifies the report once at Step~4, so verifying a report produced after AI executes Steps~3--4 requires the same effort as verifying a report produced after AI executes Steps~2--4; the only difference is that the AI is less likely to succeed at the latter on any given try.
If AI's success probability at finding and fetching data is sufficiently high, Step~2 is appended to the AI chain to avoid creating an additional human checkpoint, even if the data scientist has comparative advantage in data collection in isolation.

More generally, we should expect the gains from AI automation to be greatest when automatable steps co-occur in the production process.
For example, consider lecture-based teaching versus tutoring.
In lecture-based teaching, many AI-suitable activities (e.g., background research, drafting slides, generating examples) are clustered in a ``preparation'' block, making it feasible to delegate them to AI in a single chain and verify only the final output.
In tutoring, by contrast, these activities are interleaved with diagnosis and rapid back-and-forth with a student, so automatable steps are more dispersed and delegating preparation activities to AI is less valuable.\footnote{Rather, we might expect AI exposure in tutoring to take the form of fully automated end-to-end AI tutors, once they become effective enough that it is feasible to chain all tutoring activities.} 
Informally, we can think of a job as more or less ``fragmented'' in its AI exposure depending on whether the steps where AI is most effective are mutually coincident or dispersed throughout the workflow.
In Section~\ref{sec:fragmentation}, we define a fragmentation metric for jobs and show analytically that it approximately tracks the impact of optimal AI deployment.
In Section~\ref{sec:empirics} we provide empirical evidence that jobs with higher fragmentation see a weaker translation from AI exposure to AI execution.

\paragraph{Jobs and Worker Skill.}
Each production step is associated with a time requirement and a skill requirement for manual execution, and a (potentially different) time and skill requirement for AI-augmented execution.\footnote{
AI automation incurs no human time or skill requirement.
}
These cost parameters are exogenously determined and we treat them as given.
To complete a job, a worker must possess the skills required of all tasks that make up the job description.
Workers are ex-ante identical, as in \cite{becker1992division}, and acquire the skills required for their assigned jobs after firms specify their roles.
% In the short run, job boundaries and worker skills are fixed, while in the long run firms can redesign jobs and retrain workers for newly defined roles.
The total labor cost (i.e., wage bill) of a job increases with both the time needed to complete all steps and the skill level of the worker assigned to the job.

To capture the inefficiencies of fragmenting work into narrowly defined jobs, we introduce a \emph{hand-off cost}, in time, that must be paid whenever output moves from one worker's job to another's.  
This creates a tension in job design: specialized jobs, containing fewer tasks, align worker skills more precisely with the job's requirements but incur higher hand-off costs.  
By contrast, generalist jobs, which bundle multiple tasks with different skill needs, reduce hand-off costs but require workers to acquire a broad set of skills, only a subset of which is being used at any given time.
Appendix Figure~\ref{fig:intro_task_illustration} illustrates this trade-off using a simple two-task production process.

AI adoption changes how tasks are defined and executed, which reshapes the time and skill requirements along the production process and therefore the forces that determine worker specialization.
When certain tasks become automated, the optimal assignment of the remaining tasks across workers can shift, changing the role of skilled labor as an input to production and affecting wages as well as the distribution of productivity gains.
We view this as a channel through which optimal AI deployment can shift the skill content of work, raising or lowering the relative demand for skilled labor within a given production process.

\paragraph{Optimizing Automation and its Implications.}
Assigning steps to production factors is not a completely trivial optimization problem, even computationally. 
For a firm with $m$ steps in its production process, the number of possible production arrangements grows exponentially in $m$. 
For the short-run problem of optimizing AI deployment given fixed job responsibilities and worker skill levels, we show how to compute the optimal strategy for an $m$-step job in time $O(m^2)$ using dynamic programming. 
The firm's long-run optimization, which includes optimal assignment of steps and allocation of tasks to worker jobs, can also be solved in polynomial time, subject to an error term that can be made arbitrarily small. 
Although we do not do it in this paper, this algorithmic approach could be combined with micro-details of task content to create rich estimates of how technological change in various tasks would impact workers. 

The AI chaining feature of the model implies that improvements in AI quality can generate non-linear effects on labor demand and wages.
For any fixed production arrangement, better AI reduces costs smoothly by raising AI success probabilities and lowering expected execution times.
However, firms optimize costs over a discrete set of AI deployment strategies and job designs, so the cost-minimizing arrangement can switch at particular AI quality thresholds.
As a result, marginal improvements in AI may have little or no effect when they do not change the optimal production arrangement, which is especially likely in the early days of adoption when AI quality is low.
Once AI quality crosses a reorganization threshold, longer AI chains and/or new job designs become viable and the optimal organization shifts abruptly, changing task allocation, job structure, and thus labor demand and wages.
Our framework thus provides a microfoundation for the productivity J-curve phenomenon \citep{brynjolfsson2021productivity} and yields testable predictions on when and how AI-driven productivity gains trigger structural reorganization in labor markets.

\paragraph{Micro Foundations of Aggregate CES Production.}
Our framework yields an algorithmic cost function at the firm level defined implicitly as the solution to a joint optimization over AI deployment and job design.
This cost function in turn induces a Leontief production function in which output requires completion of many tasks, some executed manually and others with AI assistance.
Drawing on classic results in production function aggregation \citep{houthakker1955pareto, levhari1968note}, we show that although each individual firm has a Leontief production, cross-firm heterogeneity in how firms deploy a commonly available AI technology allows aggregate production to admit a constant elasticity of substitution (CES) form at the macro level.
We carry out this aggregation in a way that yields a macro production function with three inputs: economy-wide aggregate manual labor, aggregate AI-assisted labor, and aggregate capital.
This representation links firms' internal organizational responses to AI to macroeconomic outcomes, while distinguishing manual and AI-assisted labor as distinct factors of production.

\paragraph{Empirical Evidence.}
We complement our theoretical analysis with new empirical evidence that speaks directly to the core mechanisms emphasized by the model.
We assemble a dataset that links O*NET tasks to human assessments of AI exposure \citep{eloundou2023gpts}, realized AI execution outcomes from Anthropic's Economic Index \citep{handa2025economictasksperformedai}, and GPT-generated workflow orderings for each occupation.
This allows us to observe not only which production steps are exposed to AI, but also where they sit in an occupation's workflow and whether they are ultimately performed manually, augmented with AI, or automated by AI.

We test, and find empirical evidence consistent with, three key predictions that follow from the theoretical model.
First, we examine whether AI-executed steps appear in contiguous blocks rather than being randomly scattered throughout the production workflow.
Co-occurrence of AI steps is a direct implication of the model's chaining mechanism, in which the primary gains from AI arise when multiple adjacent steps are delegated jointly to AI.
We document that in the data AI execution indeed operates over consecutive steps, a pattern consistent with modeling AI as acting on sequences rather than as an independent execution substitute at the individual step level.

Second, the model emphasizes that sequencing plays a central role in determining the returns to AI automation.
We show empirically that, controlling for the share of steps exposed to AI, occupations whose AI-exposed steps are more dispersed across the production workflow exhibit a substantially lower share of their steps executed by AI.
This finding supports the fragmentation logic of the model and illustrates why considering just the share of exposed steps to AI can be misleading for predicting occupational impacts of AI when production steps are technologically interdependent.

Third, the chaining mechanism implies strong local complementarities in AI automation decisions within a workflow.
When a step is positioned next to AI-executed neighbors, local gains from automating it as part of a chain may induce its AI execution even if human execution would be preferred for that step in isolation.
We show empirically that when conceptually similar steps appear across different occupations, a given step is more likely to be executed by AI in the occupation where its neighboring steps in the workflow are also AI-executed.
This pattern provides direct evidence that AI execution depends not only on comparative advantage and step-level characteristics, but also on the local production context created by adjacent steps in the workflow.
\section{Related Literature}

\paragraph{Task Interdependence in Task-based Models of Automation.}
Our framework relates to the literature on task-based models of technical change, which generally view tasks as predetermined and technologically independent objects combined through a CES production function \citep{acemoglu2011skills, acemoglu2018, acemoglu2019automation, acemoglu2024tasks}.
In these settings, task-level comparative advantage pins down factor assignment and automation follows relative input efficiencies.
We depart from this literature by modeling production as a sequence of technologically interdependent steps, where AI chaining creates complementarities across adjacent steps that can overturn comparative advantage as the driver of who does what.\footnote{A related perspective on task interdependence in task-based models appears in \cite{trammell2026workflows}, who studies learning spillovers across tasks.
Although both his framework and ours feature interdependent tasks, their focus and margin of analysis differ sharply.
In our setting, task interdependence is an intrinsic technological feature that creates coordination frictions when production is shared across multiple workers.
By contrast, \cite{trammell2026workflows} models interdependence as an endogenous result of economies of scale for the same worker, where performing more tasks raises productivity on subsequent tasks through learning, abstracting from cross-worker coordination frictions within the firm.
}

The chaining feature of the model further connects to, and helps reconcile, competing views about automation in production.
Although much of the AI and labor literature emphasizes task-level substitution between AI and workers, \cite{autor2003skill}, \cite{acemoglu2019automation}, and \cite{bresnahan2002information}, among others, argue that meaningful substitution often occurs at the system level.
Our model shows how step-level automation decisions can aggregate into system-level changes through AI chaining.
Rather than requiring that steps be automated one by one, entire chains can be automated in a single leap, consistent with the perspective emphasized by \cite{bresnahan2002information}.

\paragraph{Measuring AI Exposure and Realized Execution.}
A rapidly growing literature has sought to quantify the impact of new automation technologies on the labor market.
Early work studies the susceptibility of occupations to computerization \citep{frey2017future}, and more recent work develops measures of occupational exposure to AI \citep{brynjolfsson2018can, webb2020impact, felten2021occupational, eloundou2023gpts, tomlinson2025working}, typically by mapping technologies to specific tasks and then aggregating these task-level mappings linearly to construct occupation-level indices.
We depart from this literature by showing that linear aggregation obscures the crucial role of task interdependence.
In our framework, the productivity gains from AI depend not only on how many of an occupation's steps are exposed to AI, but also on where those steps sit relative to one another in the production sequence.
To capture these non-linear interactions, we introduce the concept of ``fragmentation,'' the degree to which AI-suitable steps are dispersed across the workflow.
In this sense, our fragmentation index serves as a bridge between occupation-level exposure measures and the actual patterns of AI execution observed in practice.

\paragraph{Technology Adoption, Complementarities, and O-Ring Dynamics.}
Our work contributes to the literature on friction-laden technology adoption and organizational complementarities.
Empirical work shows that information technologies raise productivity primarily when combined with complementary organizational changes \citep{bresnahan2002information, bloom2016management}.
This aligns with the supermodularity framework of \cite{milgrom1990economics, milgrom1995complementarities}, in which clusters of mutually reinforcing practices drive performance.
We provide a micro-foundation for these complementarities through the lens of the O-ring theory of production \citep{kremer1993oring}: when tasks are complementary because of sequencing, the returns to improving any one step depend on performance elsewhere in the chain.
In our setting AI chaining makes this logic salient, since the payoff to automating a step depends on whether adjacent steps can be executed as part of the same AI block, generating threshold effects and discrete changes in optimal AI adoption and job design.
Our results align with \citet{gans2026oring}, who likewise emphasize task interdependence in O-ring production and show that complementarities can generate ``lumpy'' automation decisions via time/attention reallocation rather than workflow adjacency.

Our framework also provides a micro-foundation for the ``productivity J-curve'' argument \citep{brynjolfsson2021productivity}, in which adoption costs and transitional reorganization precede realized productivity gains.
In our model, the marginal benefit of improving AI technology grows non-linearly, with substantial gains emerging only once the technology crosses thresholds that induce discrete reorganizations of work and enable the formation of longer AI chains.
This dynamic is consistent with recent analyses of U.S. Census microdata, which show that AI adoption initially reduces productivity \citep{mcelheran2025rise}, as well as evidence from several other studies \citep{furman2019ai, tambe2012productivity, bonney2024tracking, mcelheran2024adoption}.

\paragraph{Division of Labor and the Boundaries of the Job.}
Finally, our work connects to the literature on the division of labor within a firm. 
We apply the Coase–Williamson logic that firm boundaries are shaped by coordination, transaction, and governance costs \citep{coase1937nature, williamson1971vertical, williamson1979transaction} to the ``boundary of the job,'' where the firm chooses how many and which tasks to bundle into a worker's role, thereby determining the degree of worker specialization endogenously \citep{murphy1986specialization}. 
In our setting, hand-off costs serve as an intra-firm analogue of transaction costs: assigning a task to a different job moves it across a governance boundary and exposes it to coordination frictions, whereas integrating tasks into the same job resembles vertical integration at a finer, task-level margin. 
Our approach builds on the view that the division of labor is limited by coordination frictions and the difficulty of integrating specialized knowledge \citep{becker1992division}, and it echoes \cite{dessein2006adaptive} in highlighting the tension between specialization and the need to coordinate interdependent tasks.\footnote{
\cite{ide2025ai} provides a related view on how AI in particular affects the structure of work. While we examine how it redefines job boundaries through a task-based framework, they embed AI as an autonomous problem solver in a knowledge hierarchy model \citep{garicano2000hierarchies, garicano2006organization} and study how it reshapes spans of control within the firm.
}

In our framework, AI does not alter the hand-off costs associated with steps, which are intended to capture coordination frictions arising from tacit or social knowledge that is not easily executable by AI \citep{deming2017growing}.
Nevertheless, AI can still restructure the division of labor through chaining in two distinct ways.  
First, on the intensive margin, chaining can pull previously manual steps ``under the hood'' of a single AI-executed task, narrowing the span of activities requiring direct human attention and shifting responsibility within a role. 
Second, on the extensive margin, by lowering the cost of linking adjacent steps that lie across a job boundary, chaining can make it optimal to redraw that boundary, reassigning activities across roles and altering the pattern of realized hand-offs and coordination within the firm. 
Both channels effectively alter the skill requirements of workers and the expertise they must possess to perform their jobs \citep{autor2025expertise}.
\section{Model}

A firm's production process consists of a sequence of steps $\mathcal{S}=(s_1,\ldots,s_m)$.\footnote{
We hold the set of production steps fixed and abstract from the creation or disappearance of steps in our model.
}
The firm can partition contiguous subsequences of steps into a sequence of tasks $\T=(T_1,\ldots,T_n)$ where $T_b=(s_i,\ldots,s_{i+\ell})$ for some $i$ and $\ell$.
The firm can also partition contiguous subsequences of the resulting tasks into jobs $\mathcal{J}=(J_1,\ldots,J_p)$.
Figure \ref{fig:hierarchical_production} provides an illustration for a production process with $m=7$ steps, $n=5$ tasks, and $p=3$ jobs.
The cost of any such partitioning depends on the time and skill needed for the component steps, the mode in which the step is completed, and the grouping of steps and tasks.
We discuss these in turn.
\begin{figure}[ht!]
  \begin{center}
  \caption{Illustrative Example of a Firm's Production}
  \label{fig:hierarchical_production}
  
  \vspace{0.4cm}
  % Hierarchical Production Visualization
\begin{adjustbox}{center,trim=0pt 0pt 0pt 0pt,clip,margin=-2cm 0cm}
  \begin{tikzpicture}[scale=0.73, transform shape,
    node distance=0.35cm and 0.91cm,
    every node/.style={rectangle, rounded corners, draw, align=center, minimum width=2cm, minimum height=1cm},
    manual/.style={fill=gray!10},
    automated/.style={fill=green!30},
    augmented/.style={fill=orange!30},
    dashedbox/.style={draw, rectangle, rounded corners, dashed, inner sep=0.25cm, line width=1.25pt},
    humanbox/.style={rectangle, rounded corners, draw, align=center, minimum width=2.25cm, minimum height=1cm, fill=olive!20}, % changed to olive
    aibox/.style={rectangle, rounded corners, draw, align=center, minimum width=2.25cm, minimum height=1cm, fill=blue!20},
    jobbox/.style={rectangle, rounded corners, draw, align=center, minimum width=2.25cm, minimum height=1cm, fill=cyan!20},
    industrybox/.style={rectangle, rounded corners, draw, align=center, minimum width=3cm, minimum height=1cm, fill=teal!60},
    >=latex
  ]

    % Top layer (Steps)
    \node[manual] (S1) {Step 1\\(Manual)};
    \node[automated, right=of S1] (S2) {Step 2\\(Automated)};
    \node[automated, right=of S2] (S3) {Step 3\\(Automated)};
    \node[augmented, right=of S3] (S4) {Step 4\\(Augmented)};
    \node[manual, right=of S4] (S5) {Step 5\\(Manual)};
    \node[augmented, right=of S5] (S6) {Step 6\\(Augmented)};
    \node[manual, right=of S6] (S7) {Step 7\\(Manual)};

    % Arrows between steps (specific arrows as curly lines)
    \draw[->, thick] (S1) -- (S2);
    \draw[->, thick] (S2) -- (S3);
    \draw[->, thick] (S3) -- (S4);
    \draw[->, thick, red, decorate, decoration={snake, amplitude=.5mm, segment length=3.05mm}] (S4) -- (S5);
    \draw[->, thick, red, decorate, decoration={snake, amplitude=.5mm, segment length=3.05mm}] (S5) -- (S6);
    \draw[->, thick] (S6) -- (S7);

    % Dashed boxes (top layer, colors updated)
    \node[dashedbox, draw=olive, fit=(S1)] (DS1) {};
    \node[dashedbox, draw=blue, fit=(S2)(S3)(S4)] (DS2-4) {};
    \node[dashedbox, draw=olive, fit=(S5)] (DS5) {};
    \node[dashedbox, draw=blue, fit=(S6)] (DS6) {};
    \node[dashedbox, draw=olive, fit=(S7)] (DS7) {};

    \path (S1.west) -- (S7.east) coordinate[midway] (CenterTop);

    % Middle layer (Tasks, human task color updated)
    \node[humanbox, below=3.2cm of CenterTop, xshift=-6cm] (T1) {Task 1\\(Human)};
    \node[aibox, right=of T1] (T2) {Task 2\\(AI Chain)};
    \node[humanbox, right=of T2] (T3) {Task 3\\(Human)};
    \node[aibox, right=of T3] (T4) {Task 4\\(AI Chain)};
    \node[humanbox, right=of T4] (T5) {Task 5\\(Human)};

    % Arrows between tasks (specific arrows as curly lines)
    \draw[->, thick] (T1) -- (T2);
    \draw[->, thick, red, decorate, decoration={snake, amplitude=.5mm, segment length=3.05mm}] (T2) -- (T3);
    \draw[->, thick, red, decorate, decoration={snake, amplitude=.5mm, segment length=3.05mm}] (T3) -- (T4);
    \draw[->, thick] (T4) -- (T5);

    % Dashed boxes (middle layer)
    \node[dashedbox, draw=cyan, fit=(T1)(T2)] (DT1-2) {};
    \node[dashedbox, draw=cyan, fit=(T3)] (DT3) {};
    \node[dashedbox, draw=cyan, fit=(T4)(T5)] (DT4-5) {};

    % Job layer nodes
    \node[jobbox, below=2.25cm of T3] (J2) {Job 2};
    \node[jobbox, left=of J2] (J1) {Job 1};
    \node[jobbox, right=of J2] (J3) {Job 3};

    % Arrows between jobs (specific arrows as curly lines)
    \draw[->, thick, red, decorate, decoration={snake, amplitude=.5mm, segment length=3.05mm}] (J1) -- (J2);
    \draw[->, thick, red, decorate, decoration={snake, amplitude=.5mm, segment length=3.05mm}] (J2) -- (J3);

    % Arrows connecting layers (unchanged)
    \draw[->, thick, cyan] (DT1-2.south) -- (J1.north);
    \draw[->, thick, cyan] (DT3.south) -- (J2.north);
    \draw[->, thick, cyan] (DT4-5.south) -- (J3.north);

    \draw[->, thick, olive] (DS1.south) -- (T1.north);
    \draw[->, thick, blue] (DS2-4.south) -- (T2.north);
    \draw[->, thick, olive] (DS5.south) -- (T3.north);
    \draw[->, thick, blue] (DS6.south) -- (T4.north);
    \draw[->, thick, olive] (DS7.south) -- (T5.north);

  \end{tikzpicture}
\end{adjustbox}
  \end{center}
  
  \vspace{0.4cm}
  \footnotesize{\emph{Notes:} 
  This figure illustrates how a firm aggregates steps into tasks and tasks into jobs. 
  The top layer shows $m=7$ steps executed in manual (gray), automated (green), or augmented (orange) modes, with dashed boxes indicating task groupings. 
  The middle layer aggregates steps into $n=5$ tasks, classified as human-executed tasks (olive) or AI chains (purple), with cyan dashed boxes indicating job groupings. 
  The bottom layer aggregates tasks into $p=3$ jobs (cyan), each assigned to a separate worker. 
  Vertical colored arrows show aggregation across layers, and horizontal arrows show production flow within each layer. 
  Red curly arrows mark hand-off costs incurred at job boundaries, and can be traced from the bottom layer to the corresponding tasks and steps above.
  }
\end{figure}

\subsection{Steps}
\label{sec:model.steps}

Each step can be completed either with AI assistance or manually (without AI).  
If the firm assigns step $i$ to manual execution (denoted by $\manualLetter$), it hires a human worker of skill level $\manualSkill{i}$ and pays for the time $\manualTime{i}$ required to complete the step without AI.  
Alternatively, if the firm assigns step $i$ to AI-assisted execution (denoted by $\AIletter$), it hires a human worker of (potentially different) skill level $\AIskill{i}$ and pays for the time $\AItime{i}$ spent completing the step in collaboration with AI.\footnote{
This could, for example, represent the time spent formulating a prompt for AI and checking the response, or ``managing'' the AI.
}
Such a collaboration succeeds independently with some (step-dependent) probability $q_i=\alpha^{d_i}$ where $\alpha$ is the quality of the general purpose AI technology and $d_i$ represents how ``difficult'' step $i$ is for AI.
The collaboration must be repeated until it succeeds, incurring the time cost of $\AItime{i}$ per iteration for a total expected time cost of $\AItime{i}/q_i$.

Definitions~\ref{def:manual_task} and~\ref{def:augmented_task} formalize the two modes of performing a single step explicitly.

\begin{definition}[Manual Step]
\label{def:manual_task}
A step $i$ is said to be performed manually if it is executed entirely by a human worker without AI assistance.
The associated skill and time costs for a manual step are denoted by $(\manualSkill{i}, \manualTime{i})$.
\end{definition}

\begin{definition}[Augmented Step]
\label{def:augmented_task}
A step $i$ is said to be augmented if it is executed by the AI, after which its output is reviewed and approved by a human worker.
The associated skill and (expected) time costs of managing a single augmented step are denoted by $(\AIskill{i}, \frac{\AItime{i}}{q_i})$, where $q_i \in (0,1]$ is the AI's probability of successfully completing the step.
\end{definition}

\subsection{Tasks}
\label{sec:model.tasks}

Tasks are the basic unit of work of a human worker.  
Any step performed in isolation during the production process, either manually or in collaboration with AI augmentation, is a task.  
The associated skill and time costs for the resulting task are precisely the skill and time costs for the associated step performed in the chosen mode.  
Steps can also be automated by chaining them together, creating a new aggregate task.

\begin{definition}[Automated Step]
\label{def:automated_task}
A step is said to be automated if it is executed entirely by AI without direct human intervention, and its output is passed directly to a subsequent augmented or automated step. 
The direct human costs (in both skill and time dimension) associated with an automated step are zero.
\end{definition}

\begin{definition}[AI Chain]
\label{def:ai_chain}
An AI chain is a contiguous block of one or more sequential steps executed by AI, in which all steps but the final one are automated and the final step is augmented.
An AI chain spanning steps $(s_\ell,\dots,s_r)$ has steps $(s_\ell,\dots,s_{r-1})$ automated and step $s_r$ augmented.
The skill and (expected) time costs of this AI chain are given by:
\[
\left(\AIskill{r}, \frac{\AItime{r}}{\prod_{i=\ell}^{r} q_i}\right),
\]
where $q_i$ denotes the AI success probability for step $s_i$.
\end{definition}

We can think of an AI chain as a single aggregate task that is being delegated to an AI.
Successful completion requires that the AI complete all steps in $(s_\ell, \dotsc, s_r)$.
The human worker who manages the AI execution of this chain need only prompt for and evaluate the goal step, $s_r$; all other steps in the chain are assumed to be fully automated ``under the hood'' and hence beneath the awareness of the human worker.
Managing an AI attempt at the chain therefore has the same costs as augmenting step $s_r$: namely, skill cost $\AIskill{r}$ and time cost (per attempt) $\AItime{r}$.
Successful completion of the chain requires that the AI complete \emph{every} constituent step successfully.
Assuming independent failures, this occurs with probability $\prod_{i=\ell}^r q_i$, which we can view as the success probability for the aggregate chained task.
Putting this all together yields skill and expected time costs $(\AIskill{r}, \AItime{r} / \prod_{i=\ell}^{r} q_i)$ for the AI chain, as described in Definition~\ref{def:ai_chain}.

Note also that since $q_i = \alpha^{d_i}$, where recall $d_i$ is a measure of the difficulty of step $s_i$, the probability of successful completion of an AI chain spanning steps $(s_\ell, \dotsc, s_r)$ can be written as $\alpha^{(\sum_{i=\ell}^r d_i)}$.
We can therefore think of $\sum_{i=\ell}^r d_i$ as the total difficulty of the chain, which aggregates additively over its constituent steps.

To summarize, a task is either a manually-performed step, an AI-augmented step, or an AI chain. 
We define an \emph{AI deployment strategy} (or \emph{AI strategy} for short) as a sequence of tasks $\mathcal{T}$ that partitions the step sequence $\mathcal{S}$ into contiguous subsequences, with each singleton task being designated for either manual or AI-augmented execution and non-singleton tasks being AI chains.

\subsection{Jobs}
\label{sec:model.jobs}

A job is a subsequence of tasks that are assigned to a single human worker.
Recall that each task $T_b$ in task sequence $\T = \{T_1, \dotsc, T_n\}$ has associated skill and time costs $(\skillcost{b}, \timecost{b})$ that can depend on whether the task is being completed manually or with the aid of AI.
A job $J$ is a contiguous subsequence of tasks, say $J = (T_b, \dotsc, T_{b+\ell})$ for some $b$ and $\ell$.
We say that a partition $\mathcal{J} = (J_1, \dotsc, J_p)$ of all tasks into jobs is a \emph{job design}.

The firm hires one worker for each job in its job design $\mathcal{J}$.
The worker for job $J_j$ completes all tasks associated with their job.
The total time needed to complete all tasks in job $J_j$ is $\sum_{T_b \in J_j} \timecost{b}$.

A worker's wage is determined by the total skill required to complete their job.
The total skill required to complete the tasks in job $J_j$ is $\sum_{T_b \in J_j} \skillcost{b}$.
We then assume that a worker's wage is proportional to the skill level of their job.\footnote{
One motivation for this formulation is workers paying a human capital investment to acquire the skills necessary for a job. 
This investment must be offset by wages, and we express skill levels in units of their corresponding requisite wage. 
Rather than formalizing such a wage model here, we impose this as an assumption and provide a more detailed wage formulation in  Section~\ref{sec:aggregation}.
}
That is,
\begin{align}
\label{eq:wage}
\text{Wage}_j = \sum_{T_b \in J_j} \skillcost{b}.
\end{align}

The firm must pay workers their wage per unit of time regardless of which tasks they are assigned to complete at any given moment.\footnote{
This derivation implicitly assumes a common base wage rate; i.e., all wage differentiation is driven by the human capital cost of acquiring skills.
More generally, different tasks $T_b$ within job $J_j$ could have different base wage rates $w_b$ that could be influenced, for example, by the supply of and demand for labor of the corresponding type.
One special case, explicitly motivated and discussed in Section \ref{sec:aggregation}, assumes manual tasks are executed by human labor at base wage rate $w_{\manualLetter}$, and AI-assisted tasks are executed by AI management labor at base wage rate $w_{\AIletter}$.
The formulation in Equation~\ref{eq:wage} implicitly normalizes these base wage rates to 1, incorporating them into skill costs $\skillcost{b}$ for notational simplicity.
We maintain this normalization throughout subsequent sections until we explicitly distinguish human labor and AI management labor base wage rates in Section
\ref{sec:aggregation}.
\label{foot:wage}
}
While we allow the firm to specialize its job design and partition its production process into distinct jobs, we acknowledge that there are inherent inefficiencies and frictions that are introduced when splitting work among multiple workers.
Absent such frictions in our model, it would always be optimal for a firm to specialize its workers as much as possible, assigning a distinct worker to each task of its production process. 
We therefore introduce \emph{hand-off costs} to a job design, as follows.
In addition to the time required completing tasks, a worker may need to spend time handing off their work to another worker if $J_j$ is not the final job in the job design $\mathcal{J}$.
We will assume throughout that the hand-off time of a worker assigned to job $J_j$ depends only on the final step of job $J_j$ (see red curly arrows in Figure \ref{fig:hierarchical_production}).
That is, conditional on where the hand-off occurs in the production process, its cost does not otherwise depend on previous hand-off events.
Given step $s_i$, we write $\handofftime{i}$ to denote the additional hand-off time spent by a worker for whom the last step of their job is $s_i$.
For notational convenience we define the hand-off time for the final step $s_m$ to be zero: $\handofftime{m} = 0$.
Also for notational convenience, we will write $\handofftime{}(J_j)$ to denote the hand-off time of job $J_j$, which recall is equal to $\handofftime{i}$ if $s_i$ is the final step in job $J_j$.
Thus the total time needed to complete job $J_j$, including hand-off costs, is
\[
\handofftime{}(J_j) + \sum_{T_b \in J_j} \timecost{b}.
\]

Taking into account both time and skill requirements, the total wage bill paid to a worker assigned to job $J_j$, per unit of output, is
\begin{align}
\label{eq:wagebill}
\text{WageBill}_j = \Biggl(\sum_{T_b \in J_j} \skillcost{b} \Biggr) \Biggl(\handofftime{}(J_j) + \sum_{T_b \in J_j} \timecost{b} \Biggr).
\end{align}
Equation~\eqref{eq:wagebill} highlights two opposing forces that shape a job's boundaries.  
Adding more tasks to a worker's job increases the cumulative skill requirement and thus the wage rate for that worker.  
However, if tasks are kept separate, workers must incur additional hand-off costs at each job boundary.  
These two effects---higher wages from combining tasks versus higher hand-off costs from keeping them separate---jointly determine the optimal degree of task aggregation.  
We explore this trade-off in greater detail in Section~\ref{sec:model_discussion}.

\subsection{Firm's Organizational Structure}

Given the sequence of steps $\mathcal{S} = \{s_1, \dotsc, s_m\}$, the firm can design both the partition $\T$ of steps into tasks and the partition $\mathcal{J}$ of tasks into jobs.
The choice of $\T$ determines the time and skill requirements of each task.
Given $\T$, the choice of $\mathcal{J}$ then determines worker wage rates and time needed per unit of output (including hand-off costs).
Formally, we can write $\mathcal{P}(X)$ for the set of partitions of a sequence $X$ into contiguous subsequences.  Then the full optimization problem faced by the firm can be expressed as
\begin{align}
\label{eq:totalcost_with_handoff}
\min_{\T \in \mathcal{P}(\mathcal{S})} \
\min_{\mathcal{J} \in \mathcal{P}(\T)} \ 
\text{TotalCost}(\mathcal{J}; \T) 
= \sum_{J_j \in \mathcal{J}} \text{WageBill}_j = 
\sum_{J_j \in \mathcal{J}} \Biggl[\Biggl(\sum_{T_b \in J_j} \skillcost{b} \Biggr) \Biggl(\handofftime{}(J_j) + \sum_{T_b \in J_j} \timecost{b} \Biggr)\Biggr]
\end{align}
where recall that, for each $T_b \in \T$, $\timecost{b}$ and $\skillcost{b}$ are as described in Section~\ref{sec:model.tasks} and can depend on the selected mode of operation for individual steps.

We refer to \eqref{eq:totalcost_with_handoff} as the firm's \emph{long-term} optimization problem, as it anticipates the adjustment of worker wages to fit the skill requirements of each job and sets job responsibilities accordingly.
We also define a \emph{short-term} optimization exercise in which jobs and worker wages are fixed but the firm may still wish to use AI to maximize worker productivity.
This is equivalent to minimizing the time needed for a worker to perform a unit of work in their assigned job.
It therefore suffices to optimize for each job separately.
Thus, in the \emph{short-term} optimization problem, we can assume without loss of generality that the sequence of steps $\mathcal{S} = (s_1, \dotsc, s_m)$ is to be completed by a single worker paid at a (normalized) unit wage.
The resulting optimization problem faced by the firm can be expressed as
\begin{align}
\label{eq:totalcost_shortterm}
\min_{\T} \
\sum_{T_b \in \T} \timecost{b}
\end{align}
where recall that if $T_b = (s_\ell)$ is a manual task then $\timecost{b} = \manualTime{\ell}$, and otherwise if $T_b = (s_\ell, \dotsc, s_r)$ for some $\ell \leq r$ then $\timecost{b} = \frac{\AItime{r}}{\prod_{i=\ell}^{r} q_i}$ where $q_i$ is the AI's probability of successfully completing step $s_i$.

\subsection{Hand-off Costs and the Limits of Worker Specialization}
\label{sec:model_discussion}

Here we discuss how hand-off costs determine the optimal organization of production and limit full worker specialization in the firm's long-run problem \eqref{eq:totalcost_with_handoff}.
We leverage insights from a numerical example as well as a geometric illustration, which together show how task aggregation balances higher wages against lower coordination costs.

Consider a production process that, under a fixed AI deployment strategy $\T$, has $n=3$ tasks with the following cost parameters:
\[
\begin{array}{c|ccc}
\textbf{Task} & \skillcost{b} & \timecost{b} & \handofftime{b} \\ \hline
1 & 3  & 1   & 3 \\
2 & 1  & 2   & 0.5 \\
3 & 2  & 2   & 0 \\
\end{array}
\]

With three tasks, there are four ways the firm can design jobs.\footnote{
More generally, a production process with $n$ tasks has $2^{(n-1)}$ unique job designs.
} 
Denoting jobs by square brackets, with three tasks the four possible job designs are 
$\{[1][2][3], \quad [1,2][3], \quad [1][2,3], \quad [1,2,3]\}$.
The cost of these job designs for the example above is given in Table~\ref{tab:job_design}.
\begin{table}[ht!]
    \caption{Role of Hand-off Costs in Organizational Structure}
    \vspace{-0.4cm}
    \begin{center}
    \resizebox{0.9\textwidth}{!}{%
    \begin{tabular}{c}
    \begin{minipage}{\textwidth}
    \centering
    Panel (a): Without Hand-off Costs \\[1ex]
    \begin{tabular}{|
  >{\centering\arraybackslash}m{1.7cm}|
  >{\centering\arraybackslash}m{0.9cm}|
  >{\centering\arraybackslash}m{1.8cm}|
  >{\centering\arraybackslash}m{1.8cm}|
  >{\centering\arraybackslash}m{1.8cm}|
  >{\centering\arraybackslash}m{1.6cm}|
  >{\centering\arraybackslash}m{1.7cm}|
  >{\centering\arraybackslash}m{1.7cm}|
  }
\hline
\textbf{Job Design} & \textbf{Job} & \textbf{Job Tasks} & $\sum\hccost{b}$ & $\sum\timecost{b}$ & \textbf{Job Cost} & \textbf{Total Cost} & \textbf{Optimal Design} \\
\hline
\multirow{3}{*}{[1][2][3]} 
  & 1 & $\{1\}$   & 3 & 1 & 3 & \multirow{3}{*}{9} & \multirow{3}{*}{\checkmark} \\
\cline{2-6}
  & 2 & $\{2\}$   & 1 & 2 & 2 &  &  \\
\cline{2-6}
  & 3 & $\{3\}$   & 2 & 2 & 4 &  &  \\
\hline
\multirow{2}{*}{[1,2][3]}
  & 1 & $\{1,2\}$ & 4 & 3 & 12 & \multirow{2}{*}{16} &  \\
\cline{2-6}
  & 2 & $\{3\}$   & 2 & 2 & 4 &  &  \\
\hline
\multirow{2}{*}{[1][2,3]}
  & 1 & $\{1\}$   & 3 & 1 & 3 & \multirow{2}{*}{15} &  \\
\cline{2-6}
  & 2 & $\{2,3\}$ & 3 & 4 & 12 &  &  \\
\hline
[1,2,3] & 1 & $\{1,2,3\}$ & 6 & 5 & 30 & 30 &  \\
\hline
\end{tabular}
    \end{minipage} \\[21ex]
    
    \begin{minipage}{\textwidth}
    \centering
    Panel (b): Including Hand-off Costs \\[1ex]
    \begin{tabular}{|
  >{\centering\arraybackslash}m{1.7cm}|
  >{\centering\arraybackslash}m{0.9cm}|
  >{\centering\arraybackslash}m{1.8cm}|
  >{\centering\arraybackslash}m{1.8cm}|
  >{\centering\arraybackslash}m{1.8cm}|
  >{\centering\arraybackslash}m{1.6cm}|
  >{\centering\arraybackslash}m{1.7cm}|
  >{\centering\arraybackslash}m{1.7cm}|
  }
\hline
\textbf{Job Design} & \textbf{Job} & \textbf{Job Tasks} & $\sum\hccost{b}$ & $\handofftime{} + \sum\timecost{b}$ & \textbf{Job Cost} & \textbf{Total Cost} & \textbf{Optimal Design} \\
\hline
\multirow{3}{*}{[1][2][3]} 
  & 1 & $\{1\}$   & 3 & 4 & 12 & \multirow{3}{*}{18.5} &  \\
\cline{2-6}
  & 2 & $\{2\}$   & 1 & 2.5 & 2.5 &  &  \\
\cline{2-6}
  & 3 & $\{3\}$   & 2 & 2 & 4 &  &  \\
\hline
\multirow{2}{*}{[1,2][3]}
  & 1 & $\{1,2\}$ & 4 & 3.5 & 14 & \multirow{2}{*}{18} & \multirow{2}{*}{\checkmark} \\
\cline{2-6}
  & 2 & $\{3\}$   & 2 & 2 & 4 &  &  \\
\hline
\multirow{2}{*}{[1][2,3]}
  & 1 & $\{1\}$   & 3 & 4 & 12 & \multirow{2}{*}{24} &  \\
\cline{2-6}
  & 2 & $\{2,3\}$ & 3 & 4 & 12 &  &  \\
\hline
[1,2,3] & 1 & $\{1,2,3\}$ & 6 & 5 & 30 & 30 &  \\
\hline
\end{tabular}
    \end{minipage}
    \end{tabular}
    }
    \label{tab:job_design}
    \end{center}
    \footnotesize{\emph{Notes:} This table reports total production costs for each possible job design in an example production process with a fixed automation strategy and $n = 3$ tasks. 
    The cost parameters of the tasks are given by $(\skillcost{b},\timecost{b},\handofftime{b})=\{(3,1,3),(1,2,0.5),(2,2,0)\}$.
    In the absence of hand-off frictions, full specialization, corresponding to job design [1][2][3], is cost-minimizing. 
    Once hand-off costs are taken into account, the optimal organizational structure changes to job design [1,2][3], which avoids the large coordination cost between tasks 1 and 2 at the expense of employing a higher-cost (more skilled) worker to complete those tasks.
    }
\end{table}
In the absence of hand-off costs in Panel (a), the optimal job design corresponds to full specialization: each task forms a separate job assigned to a different worker.
When hand-off costs are introduced in Panel (b), however, the optimal design changes to one in which tasks 1 and 2 are combined into a single job, while task 3 remains separate.
This structure is optimal because it avoids the large hand-off cost that would otherwise arise between tasks 1 and 2 due to the large hand-off time value $\handofftime{1}=3$.
By aggregating tasks 1 and 2, the firm pays a higher wage to a more skilled and more versatile worker but eliminates the costly coordination between the two tasks, lowering total production cost.
Equation~\eqref{eq:wagebill} captures this trade-off explicitly: 
adding more tasks to a worker's job raises the wage rate through the skill term $\sum_{T_b \in J_j} \skillcost{b}$, 
while keeping tasks separate raises total cost through additional hand-offs $\handofftime{}(J_j)$.
These two effects jointly determine the optimal degree of task aggregation.

The cost of different job designs admits a geometric interpretation as well.  
Figure~\ref{fig:job_design} visualizes the production process of the example in Table~\ref{tab:job_design} as stacked rectangles, where each rectangle's height corresponds to $\skillcost{b}$ and its width to $\timecost{b}$.
\begin{figure}[ht!]
\begin{center}
\caption{Geometric Interpretation of Job Designs} \label{fig:job_design}
\begin{subfigure}[b]{0.49\textwidth}
    \caption{Job Design Without Hand-off Costs}
    \includegraphics[width=\textwidth]{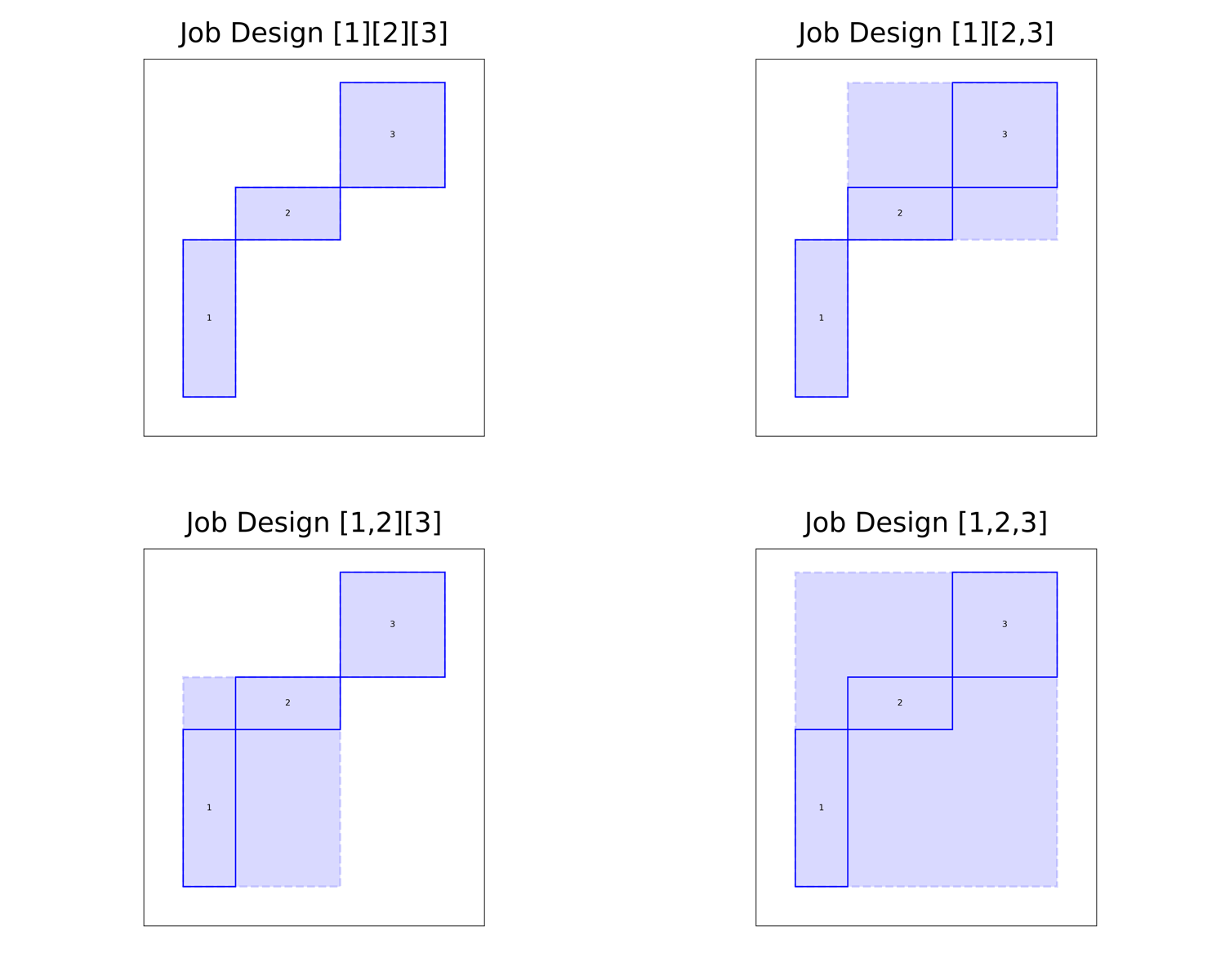}
\end{subfigure}
\begin{subfigure}[b]{0.49\textwidth}
    \caption{Job Design With Hand-off Costs}
    \includegraphics[width=\textwidth]{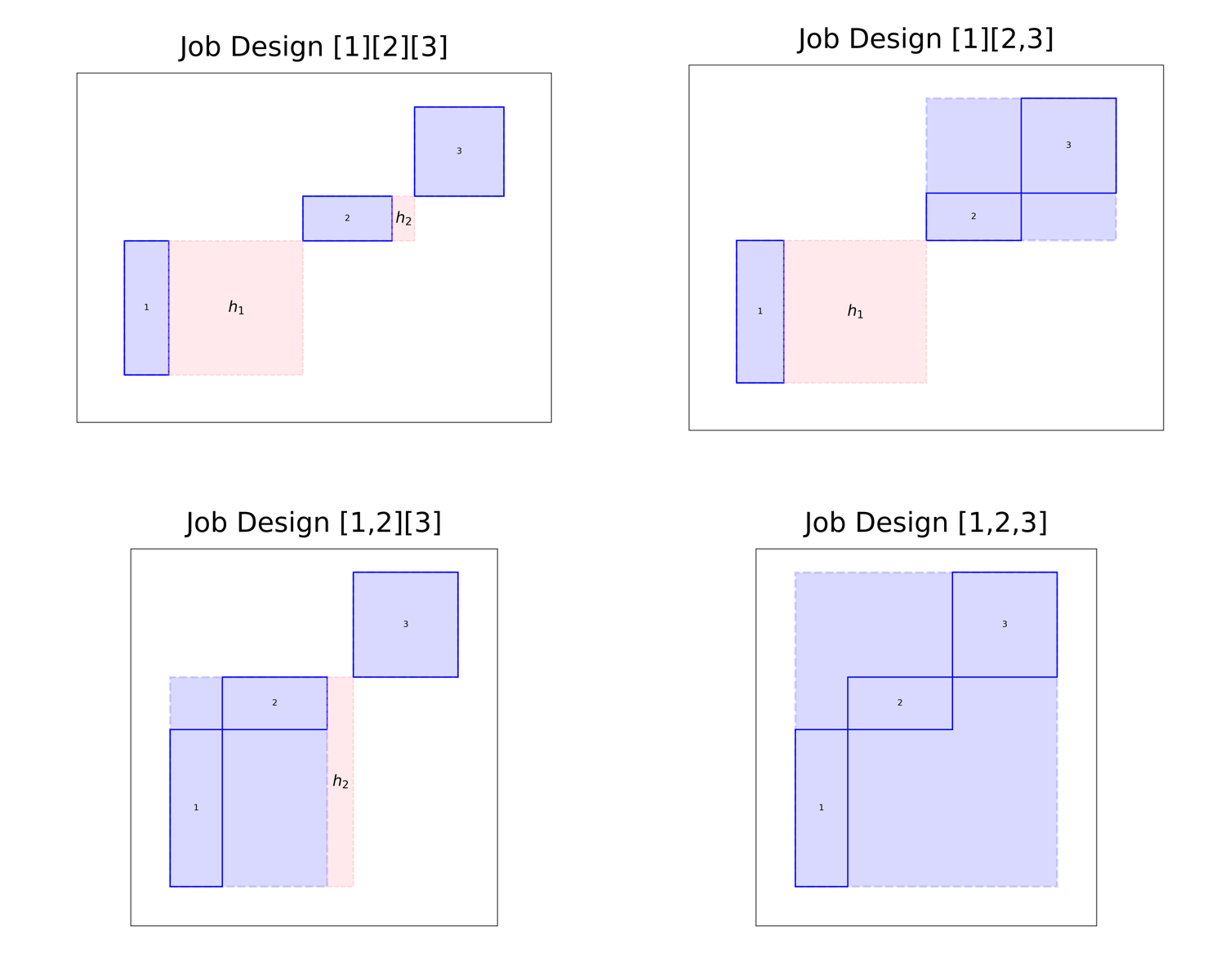}
\end{subfigure}
\end{center}
\footnotesize{\emph{Notes:} Each panel visualizes the production process as stacked rectangles, where the area of each job's bounding box represents its wage bill.  
Panel (a) sets hand-off costs to zero, in which case it is optimal to assign one task per job.
Panel (b) introduces hand-off costs, which alters the cost structure and makes the [1,2][3] design cost-minimizing.
}
\end{figure}
Panel~(a) depicts the job design when hand-off costs are zero, Panel~(b) shows the same process with hand-off costs.  
Visually, optimal job design balances two opposing forces.  
Combining tasks into a single job eliminates the pink rectangles representing additional hand-off costs between tasks, but creates inefficiencies due to the higher skill level of the worker required to do the job. 
Conversely, splitting tasks into separate jobs eliminates the intra-job inefficiencies but incurs hand-off time costs at each job boundary.
These boundary costs scale with the cumulative human capital of the switching worker, so the height of the pink hand-off rectangle in Panel (b) equals the sum of human capital heights for that job. 
Cost-effective job design must balance these effects.

We have modeled hand-off costs as fixed and not influenced by the introduction of AI. 
In reality, AI may reduce the cost of hand-offs between workers as well. 
This might occur because the communication burden of handing off tasks could be partially automated through AI assistance. 
While we do not model this explicitly, we expect that such a force would reduce the impact of hand-off costs and result in a higher degree of worker specialization.\footnote{
A useful interpretation of hand-off costs is that they consist of two components: (i) peripheral coordination work that can be facilitated by AI (e.g., drafting messages, summarizing prior work, formatting documentation), and (ii) a purely manual remainder that relies on tacit, interpersonal, or social knowledge and is not easily executable by AI \citep{deming2017growing}. 
For tractability, we abstract from the first component in our framework and treat its benefits as absorbed into execution and AI management time costs. 
The hand-off cost in the model should therefore be read as capturing the second, irreducibly human component, even though AI-facilitated reductions in the peripheral component may matter in practice.
}
\section{Optimization}

In this section we consider the optimization problems (both short-term and long-term) faced by a firm choosing how to integrate the AI technology into their production process.
We begin by studying the short-term problem of optimizing AI deployment strategy keeping job design and worker wages fixed.
We then move on to the joint optimization of deployment and job design taking into account long-term wage impacts.

\subsection{Short-Term Optimization: AI Deployment Design}
\label{sec:optimization.short}

We begin with a ``short-term'' optimization problem: the job design and worker wages are assumed to be fixed, so the goal is to find the choice of AI strategy for a single job that minimizes the total time cost.  
This optimization problem analyzes short-run benefits of AI: where worker skills are fixed but AI can still be employed to speed up tasks through some combination of augmentation and automation.

Since the job design and worker wage are fixed, it suffices to optimize for the amount of time spent to complete a unit of work for each job separately.  
So, from this point onward, we assume that there is a single job $J$ with $m$ steps $\mathcal{S} = (s_1, \dotsc, s_m)$, and our goal is to find the sequence of tasks $\T$ that minimizes total completion time.  
It turns out that the time-optimal production strategy can be calculated via dynamic programming in $O(m^2)$ time.

\begin{proposition}
Given $m$ steps organized into a single job, the time-optimal AI strategy can be calculated in time $O(m^2)$ via dynamic programming.    
\end{proposition}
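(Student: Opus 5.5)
We set up a dynamic program over prefixes of the step sequence. For $k \in \{0,1,\dots,m\}$, let $C(k)$ denote the minimum total time needed to complete the prefix $(s_1,\dots,s_k)$ using any partition of that prefix into tasks, each of which is a manual step, an augmented step, or an AI chain. Set $C(0)=0$. The key observation is that the objective in \eqref{eq:totalcost_shortterm} is additively separable across tasks. The cost of a task depends only on its own span $(s_\ell,\dots,s_r)$ and its mode, not on the tasks around it.

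Consider any optimal partition of $(s_1,\dots,s_k)$. Its last task is some contiguous block $(s_\ell,\dots,s_k)$. If that block is a singleton, it is executed either manually or by augmentation. Otherwise it is an AI chain. In every case, removing this block leaves a partition of $(s_1,\dots,s_{\ell-1})$, and by separability that remainder must itself be optimal. This gives the recurrence
\[
C(k) = \min\Biggl\{\, C(k-1) + \manualTime{k},\ \min_{1\le \ell\le k} \Bigl[ C(\ell-1) + \frac{\AItime{k}}{\prod_{i=\ell}^{k} q_i} \Bigr] \Biggr\}.
\]
The case $\ell=k$ of the inner minimum is the augmented singleton. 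I would prove by induction on $k$ that $C(k)$ equals the true optimum. The inequality $\ge$ follows from the decomposition of an optimal solution just described. The inequality $\le$ holds because every term in the minimum is the cost of a feasible partition, obtained by appending one task to an optimal partition of a shorter prefix. The answer is $C(m)$, and the optimal $\T$ itself is recovered by storing the minimizing choice at each $k$ and backtracking.

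For the running time, there are $m$ states, and state $k$ minimizes over $k+1$ candidates. The only delicate point, and the main thing to verify, is that each candidate is evaluated in $O(1)$ time rather than $O(k)$; a naive computation of the product $\prod_{i=\ell}^{k} q_i$ would give $O(m^3)$ overall. I would handle this by iterating $\ell$ downward from $k$ to $1$ while maintaining a running product $P \leftarrow P\cdot q_\ell$. Equivalently, one can precompute prefix sums of the difficulties $d_i$, since $\prod_{i=\ell}^k q_i = \alpha^{\sum_{i=\ell}^k d_i}$. Either way, each state costs $O(k)$, and the total is $\sum_{k=1}^m O(k) = O(m^2)$.
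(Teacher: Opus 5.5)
Your proposal is correct and takes essentially the same approach as the paper: the prefix recurrence $C[k]$ chooses between a manual step $k$ and an AI chain ending at step $k$, with $O(k)$ work per state and $O(m^2)$ total. Your running-product (or prefix-sum of the $d_i$) device, which evaluates $\prod_{i=\ell}^{k} q_i$ in $O(1)$ per candidate, makes explicit a detail the paper's proof leaves implicit.
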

\begin{proof}
    We first note the following recursive formulation of the optimization problem. 
    For all $k \leq m$, let $C[k]$ denote the minimum time needed to complete a hypothetical job that only includes steps $1$ through $k$.  
    Note that, because of the way we defined $C[k]$, task $k$ cannot be automated in any minimum-time solution that determines $C[k]$; it can only be augmented or performed manually.  
    Note also that $C[m]$ is the time cost of our desired optimal solution.
    
    We now show how to calculate $C[k]$ recursively.  
    As a base case we have $C[0] = 0$, as the empty set of steps requires no time to complete.  
    For $k \geq 1$, $C[k]$ is the lesser of
    \[ C[k-1] + \manualTime{k} \]
    and
    \[ \min_{\ell < k} \Biggl\{ C[\ell] + \frac{\AItime{k}}{\prod_{i=\ell+1}^k q_i} \Biggr\}.
    \]
    In other words, we either complete step $k$ manually (in which case we separately optimize over steps $1$ through $k-1$) or we augment step $k$.  
    In the latter case, we then optimize over the length of the AI chain that ends with the newly-augmented step $k$.  
    This could be a singleton chain with no automation (corresponding to $\ell = k-1$), or a longer chain that automates one or more steps before step $k$ (corresponding to a choice of $\ell < k-1$).  
    For any such choice of $\ell$, we then separately optimize the production strategy for tasks $1$ through $\ell$.
    
    Using this formulation, given the values $(C[0], \dotsc, C[k-1])$, we can calculate $C[k]$ in time $O(k)$ by considering each potential choice of $\ell$.  
    Doing so for each $k = 1, 2, \dotsc, m$ yields the value of $C[m]$ (and the corresponding optimal AI strategy) in total time $O(m^2)$.
\end{proof}

\subsection{Warm-up to Long-Term Optimization: Job Design without AI}

We now move to a broader optimization problem in which the firm can reorganize job requirements and the assignment of tasks to workers.  
As a warm-up to the full joint optimization of jobs and AI deployment, we first show how to optimize the design of jobs for a fixed task structure.
We can interpret this as a labor optimization problem without AI deployment, where each step has only a single (i.e., manual) mode of completion and hence the task structure is fully determined.

\begin{proposition}
    Given a fixed sequence of tasks $\T = (T_1, \dotsc, T_n)$ with skill and time costs $\skillcost{} = (\skillcost{1}, \dotsc, \skillcost{n})$ and $\timecost{} = (\timecost{1}, \dotsc, \timecost{n})$, the cost-minimizing job design $\J$ can be computed in time $O(n^2)$ by dynamic programming.
\end{proposition}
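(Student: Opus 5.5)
The plan is to mirror the dynamic program from the short-term proposition, now recursing over the position of the last job boundary rather than the start of the last AI chain. For each $k \in \{0,1,\dotsc,n\}$, let $D[k]$ denote the minimum total cost of a job design that covers only the prefix of tasks $(T_1,\dotsc,T_k)$. Every job is a contiguous block and the cost in \eqref{eq:totalcost_with_handoff} is a sum over jobs. Moreover, the cost of a job depends only on its own tasks and on the hand-off time of its final step. So the problem decomposes additively over the last job. The base case is $D[0]=0$, and $D[n]$ is the desired optimal cost.

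The key step is the recursion. In any job design for the prefix $(T_1,\dotsc,T_k)$, the last job is $J=(T_{\ell+1},\dotsc,T_k)$ for some $0\le \ell<k$. The remaining jobs then form a job design for $(T_1,\dotsc,T_\ell)$. Hence
\[
D[k] = \min_{0\le \ell<k}\Biggl\{ D[\ell] + \Biggl(\sum_{b=\ell+1}^{k}\skillcost{b}\Biggr)\Biggl(h_k + \sum_{b=\ell+1}^{k}\timecost{b}\Biggr)\Biggr\},
\]
where $h_k$ is the hand-off time $\handofftime{i}$ of the final step $s_i$ of task $T_k$, with $h_n=0$.

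Correctness needs two checks. The first is an optimal-substructure argument: restricting an optimal design for the prefix of length $k$ to its first $\ell$ tasks gives a valid design whose cost is at least $D[\ell]$, and conversely any design for the first $\ell$ tasks extends by appending $J$. The second is that the hand-off term of $J$ really depends only on $k$. This is where the modeling assumption that hand-off cost depends only on the final step of the job is used. Since $J$ ends at $T_k$, its hand-off time is fixed by $k$ alone, independent of earlier boundaries.

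The main obstacle is achieving $O(1)$ work per pair $(\ell,k)$; a naive evaluation of the two sums would give $O(n^3)$. I would precompute prefix sums $S^c[k]=\sum_{b\le k}\skillcost{b}$ and $S^t[k]=\sum_{b\le k}\timecost{b}$ in $O(n)$ time. Each candidate term then equals $D[\ell]+(S^c[k]-S^c[\ell])(h_k+S^t[k]-S^t[\ell])$, which is constant-time to evaluate. Computing $D[k]$ therefore takes $O(k)$ time, and the total is $O(n^2)$. Recording the minimizing $\ell$ at each $k$ and backtracking from $D[n]$ recovers the optimal $\J$ within the same bound.
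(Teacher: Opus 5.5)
Your proposal is correct and is essentially the paper's proof: the same prefix recursion over the start of the final job, with base case $0$, answer at $n$, and the hand-off term fixed by the job's last task, giving $O(k)$ work per entry and $O(n^2)$ overall. Your prefix-sum precomputation makes explicit the constant-time evaluation of each candidate that the paper leaves implicit; note also that your $D[k]$ must be read as including the hand-off cost $h_k$ of the final job, as your recursion does, which matches the paper's $C[k]$.
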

\begin{proof}
    We first note the following recursive formulation of the optimization problem. 
    For all $k \leq n$, let $C[k]$ denote the minimum cost of a job design for tasks $1$ through $k$, \emph{including hand-off costs for task $k$}.  
    Note then that $C[n]$ is the cost of the optimal job design for all tasks, recalling that the hand-off cost for the final job (i.e., the job that includes task $n$) is always $0$.
    
    We now show how to calculate $C[k]$ recursively.  
    As a base case we have $C[0] = 0$.  
    For $k \geq 1$ we have
    \[ 
    C[k] = \min_{s < k} \Biggl\{ C[s] + \Biggl[\Biggl(\sum_{i = s+1}^k \skillcost{i} \Biggr) \Biggl(\handofftime{k} + \sum_{i = s+1}^k \timecost{i} \Biggr)\Biggr] \Biggr\}.
    \]
    In other words, we optimize over the choice of the final job $\{s+1, \dotsc, k\}$, accounting recursively for the optimal job design for remaining jobs.  
    Note that we make implicit use of the assumption that the hand-off cost $\handofftime{k}$ depends on the final task $T_k$ of this final job but is otherwise independent of the job design.  
    Using this formulation, we can calculate each $C[k]$ in time $O(k)$ by considering each potential choice of $s$.  
    Doing so for each $k = 1, 2, \dotsc, n$ yields the value of $C[n]$ (and the corresponding optimal job design) in total time $O(n^2)$.
\end{proof}

\subsection{Full Long-Term Optimization}

We now consider full joint optimization of job design and AI deployment strategy, accounting for both time and skill costs of workers assigned to the resulting jobs.  
In other words, the firm's goal is to choose both the set of AI chains to implement---yielding a vector of skill and time costs---along with a job design for the resulting tasks.

\subsubsection{Recursive Formulation of Optimization Problem}
\label{sec:recursive_formulation}

Here we present a recursive formulation of the firm's cost minimization problem. 
In Section \ref{sec:DP_approximation} we propose an approach to calculate an approximation of the optimal solution via dynamic programming.

Consider a production process with $m$ steps $\mathcal{S} = \{s_1, s_2, \ldots, s_m\}$.
For $0 \leq i \leq m$, let $V(i)$ denote the minimum cost required to complete steps $1$ through $i$ using one or more jobs, including any hand-off costs to a subsequent worker.  
Note then that $V(m)$ is the solution to the desired optimization problem, recalling that the hand-off time for the final job will always be $0$.  
Note also that it is implicitly assumed in the definition of $V(i)$ that it optimizes over job designs for steps $1$ through $i$ in which the final job terminates after the completion of step $i$.

It will also be helpful to consider optimal designs for steps $1$ through $i$ in which the final job doesn't necessarily terminate after the completion of step $i$, but rather continues onward to subsequent steps.  
To that end, we introduce an auxiliary function $W(i,\; \skillcost{},\; \timecost{})$.
This function denotes the minimum cost of completing steps $1$ through $i$, as well as some additional (but unspecified) set of tasks that have already been assigned as a single job to a single worker (referred to as the ``active worker'').  
It is assumed that the tasks assigned to the active worker have total time cost $\timecost{}$ and total skill cost $\skillcost{}$.  
The time cost $\timecost{}$ is assumed to already include any hand-off cost for this active worker (which recall depends only on the final step of the worker's final task). 
Crucially, the minimum is taken over all AI deployment strategies and job designs, including those that expand the number of tasks assigned to the active worker (e.g., by including step $i$ in the active worker's job).
In this sense, $W(i,\; \skillcost{},\; \timecost{})$ captures the minimum cost across all feasible job designs and AI strategies that may potentially expand the active worker's task assignment.

Note that, for all $i \leq n$, we have 
\[ V(i) = W(i,\; 0,\; \handofftime{i}). \]
This is because, in $V(i)$, all job designs must end with a job that completes at step $i$ and hands off to the following worker.  
Consequently, the hand-off cost associated with step $i$ must be incurred by the final worker.
Equivalently, this can be thought of as assigning the active worker an additional task with zero skill requirement but a time cost equal to the hand-off cost of step $i$ right after step $i$ itself.
This scenario is represented explicitly by $W(i,\; 0,\; \handofftime{i})$.

Also note that, for the base case $i=0$, we have 
\[
W(0,\; \skillcost{},\; \timecost{}) 
= 
\skillcost{}\,\timecost{}, 
\quad
\text{for all $\skillcost{}, \timecost{}$}.
\]
This is because, if $i=0$, then there are no further steps to add to the job of the active worker.  
The active worker's skill and time requirements are therefore determined entirely by the steps they have already been assigned.

We are now ready to describe a recursive formulation of the function $W(i,\; \skillcost{},\; \timecost{})$ for $i \geq 1$.  
We emphasize that in the definition of $W(i,\; \skillcost{},\; \timecost{})$, the active worker has not been assigned step $i$; we can think of step $i$ as the highest-indexed step that has not yet been assigned to a worker.

\begin{proposition}[Recursive Formulation of Job Design Problem]
\label{prop:recursive_optimality}
For each $i \geq 1$, the minimum cost function $W(i,\; \skillcost{},\; \timecost{})$ satisfies the following recursive relation:
\begin{align*}
W(i,\; \skillcost{},\; \timecost{})
= \min\Biggl\{ 
& \skillcost{}\,\timecost{} \; + \; V(i), \\
& \ W(i-1,\; \skillcost{} + \manualSkill{i},\; \timecost{} + \manualTime{i}), \\
& \ \min_{r < i} \ W\left(r,\; \skillcost{} + \AIskill{i},\; \timecost{} + \frac{\AItime{i}}{\prod_{s=r+1}^{i} q_s}\right)
\Biggr\}.
\end{align*}
The cost of the optimal joint AI strategy and job design for a given sequence of $m$ steps is then $V(m) = W(m,\; 0,\; 0)$.
\end{proposition}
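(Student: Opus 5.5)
Read $W(i,\skillcost{},\timecost{})$ as a minimum over an explicit feasible set. Let $\mathcal{F}(i)$ be the set of pairs $(\T,\J)$, where $\T$ is an AI strategy on steps $1,\dots,i$ and $\J$ is a job design of $\T$. The cost attached to a pair is the usual total cost. The only modification is that the last job of $\J$, the one containing step $i$, may be merged with the active worker's pre-assigned bundle of skill $\skillcost{}$ and time $\timecost{}$. The cost of the merged job is $(\skillcost{}+\sum\skillcost{b})(\timecost{}+\sum\timecost{b})$, with no further hand-off term, because $\timecost{}$ already contains the active worker's hand-off. Alternatively the last job may stay separate, in which case the active worker pays $\skillcost{}\timecost{}$ on their own. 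The recursion then follows from the standard two-inequality argument. First, every candidate on the right-hand side is the cost of some feasible pair, which gives $W \le \text{RHS}$. Second, every feasible pair falls into one of the three branches according to how step $i$ is handled, which gives $W\ge \text{RHS}$.

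For the $\ge$ direction, fix an optimal feasible pair and case on the task $T$ containing step $i$ and on the job containing it.

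Case (a): step $i$ is not in the active worker's job. The active worker's job is then closed with cost $\skillcost{}\timecost{}$. The rest of the pair is a job design on steps $1,\dots,i$ whose final job ends at step $i$ and pays the hand-off $\handofftime{i}$. Its cost is therefore at least $V(i)$.

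Case (b): step $i$ is in the active worker's job and is performed manually. Remove it. This leaves a feasible pair for $W(i-1,\skillcost{}+\manualSkill{i},\timecost{}+\manualTime{i})$ with the same cost, since the job cost depends only on the summed skill and time.

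Case (c): step $i$ is in the active worker's job and is augmented, closing an AI chain $(s_{r+1},\dots,s_i)$ with $r<i$. Step $i$ cannot be automated, since no later step in $1..i$ exists to receive its output. By Definition~\ref{def:ai_chain} the chain contributes skill $\AIskill{i}$ and time $\AItime{i}/\prod_{s=r+1}^i q_s$. Absorbing the whole chain into the active bundle leaves a feasible pair for the corresponding $W(r,\cdot,\cdot)$.

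The $\le$ direction reverses each construction. Branch (a) prepends an optimal $V(i)$ design and closes the active worker. Branches (b) and (c) append step $i$, or the chain, to the active worker's tasks.

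The subtle point, and the step I expect to take the most care, is that the hand-off accounting is consistent. The active worker's hand-off depends only on the final step of their job, which is fixed before the recursion reaches step $i$, so it is correctly frozen inside $\timecost{}$. In case (a), the worker ending at step $i$ must pay $\handofftime{i}$, which is exactly what $V(i)=W(i,0,\handofftime{i})$ encodes. I would verify this identity from the definition by treating the hand-off as a zero-skill pseudo-task. One must also check that in case (c) a chain cannot straddle the active worker's job boundary. This holds because every task belongs to a single job, and the chain's skill is charged entirely to the worker verifying step $i$. The second potential subtlety is the well-foundedness of the mutual recursion between $V(i)$ and $W(i,\cdot,\cdot)$. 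Here $V(i)=W(i,0,\handofftime{i})$ calls $W(i,0,\handofftime{i})$, whose first branch is $0\cdot\handofftime{i}+V(i)$. This is self-referential but harmless: the branch is dominated by, or equal to, the others, and it corresponds to a degenerate zero-step active job, which can be excluded by requiring the active job in the $V$ call to be nonempty. With the base case $W(0,\skillcost{},\timecost{})=\skillcost{}\timecost{}$, induction on $i$ completes the proof. Finally, $V(m)=W(m,0,\handofftime{m})=W(m,0,0)$ because $\handofftime{m}=0$.
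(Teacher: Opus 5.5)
Your proposal is correct and follows the paper's route: the paper justifies the recursion with the same three-way case analysis, according to whether step $i$ is left out of the active worker's job, added to it manually, or added as the augmented endpoint of an AI chain $(s_{r+1},\dots,s_i)$. Your version is more careful than the paper's informal itemized explanation, since you make both inequalities explicit, check that the active worker's hand-off stays fixed inside the time argument, and note that the self-reference $V(i)=W(i,0,t^{H}_{i})$ through the first branch is harmless.
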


The recursive formulation in Proposition~\ref{prop:recursive_optimality} evaluates the minimum cost among three distinct alternatives at each step:
\begin{itemize}
    \item \textbf{Option (1):} The term $\skillcost{}\,\timecost{} \; + \; V(i)$ corresponds to not adding any further steps to the job of the active worker.  
    All steps $1$ through $i$ will be completed by other workers. 
    The active worker's total cost is then $\skillcost{}\,\timecost{}$, and $V(i)$ is the total cost of completing steps $1$ through $i$ (including hand-off costs to the active worker).

    \item \textbf{Option (2):} The term $W(i-1,\; \skillcost{} + \manualSkill{i},\; \timecost{} + \manualTime{i})$ corresponds to designating step $i$ for manual completion, and adding the corresponding (singleton) task.
    In this case, the manual execution costs $(\manualSkill{i}, \manualTime{i})$ of step $i$ are added to the accumulated costs of tasks already assigned to the active worker, extending their job to include step $i$ as well.

    \item \textbf{Option (3):} The term $\min_{r < i} \ W\left(r,\; \skillcost{} + \AIskill{i},\; \timecost{} + \frac{\AItime{i}}{\prod_{s=r+1}^{i} q_s}\right)$ corresponds to creating an AI chain task for steps $r+1$ through $i$, with step $i$ being augmented and steps $r+1$ through $i-1$ automated, and assigning the resulting task to the job of the active worker. 
    Step $i$ is thus chained with zero or more preceding (automated) steps, and the associated AI management costs are added to the active worker's skill and time costs. 
    The index $r$ corresponds to the highest-index step that is not added to this AI chain, which can be any value between $0$ and $i-1$.
\end{itemize}
Note that in evaluating option (3) of the recursive step for $W(i,\; \skillcost{},\; \timecost{})$, step $i$ can only be augmented (rather than automated) as each step of the recursion adds a full AI chain to an active worker's job.
AI strategies in which step $i$ is automated are considered when performing task calculations $W(r, \skillcost{}, \timecost{})$ with $r > i$.

\subsubsection{Calculating an Approximately Optimal Solution}
\label{sec:DP_approximation}

Given our recursive formulation, we wish to use dynamic programming to solve for the jointly optimal job design and AI strategy by filling in the values of $W(i,\; \skillcost{},\; \timecost{})$ for all $i$, $\skillcost{}$, and $\timecost{}$.  
Since $\skillcost{}$ and $\timecost{}$ take on continuous values, we will discretize all skill costs $\skillcost{}$ and time costs $\timecost{}$ to appropriate powers of $(1+\epsilon)$, where $\epsilon > 0$ is an arbitrarily small error term.  
We obtain the following result.

\begin{proposition}
\label{prop:totalcost_optimization_dp}
    Fix any sequence of $m$ steps $\mathcal{S} = (s_1, \dotsc, s_m)$ for which all skill costs, time costs, and hand-off costs lie in $[1/B, B]$ for some $B > 0$.  Then for any $\epsilon > 0$, an approximately cost-minimizing pair of AI strategy $\T$ and job design $\J$ minimizing expression \eqref{eq:totalcost_with_handoff} to within a factor of $(1+\epsilon)$ can be computed in time $O(m^2 \epsilon^{-2} \log^2(mB) )$ by dynamic programming.
\end{proposition}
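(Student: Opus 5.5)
The plan is to turn the recursion of Proposition~\ref{prop:recursive_optimality} into a finite dynamic program. I will restrict the two continuous arguments $\skillcost{}$ and $\timecost{}$ of $W(i,\skillcost{},\timecost{})$ to a geometric grid $\{(1+\epsilon')^k\}$ and fill the table in increasing order of $i$. Correctness of the exact recursion is taken from Proposition~\ref{prop:recursive_optimality}. What remains is to (a) bound the range of relevant values so the grid is finite, (b) bound the work per table entry, and (c) control the approximation error introduced by rounding.

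\textbf{Range.} Every accumulated skill value is a sum of at most $m$ skill costs in $[1/B,B]$, so it lies in $\{0\}\cup[1/B,mB]$. For time, the all-manual single-job design costs at most $(mB)(mB+B)=O(m^2B^2)$, which upper-bounds $V(m)$. Any partial state whose active worker has $\timecost{}$ larger than $O(m^2B^3)$ already has cost $\skillcost{}\,\timecost{}\ge \timecost{}/B$ exceeding this bound, so it can be discarded. This discards in particular AI chains with tiny success probability, so these states are pruned rather than represented. Hence both coordinates range over an interval whose endpoints have ratio $\mathrm{poly}(mB)$. The grid therefore has $O(\log_{1+\epsilon'}\mathrm{poly}(mB)) = O(\epsilon'^{-1}\log(mB))$ points per coordinate, plus the value $0$.

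\textbf{Running time.} There are $m+1$ values of $i$ and $O(\epsilon'^{-2}\log^2(mB))$ grid pairs, hence $O(m\epsilon'^{-2}\log^2(mB))$ states. Each state is evaluated from the three options in Proposition~\ref{prop:recursive_optimality}. Option (1) uses $V(i)=W(i,0,\handofftime{i})$ and costs $O(1)$, provided we order the computation so that $W(i,0,\cdot)$ is available. This is fine because the option-(1) term of $W(i,0,\handofftime{i})$ is just itself plus $0$ and can be ignored. Option (2) costs $O(1)$. Option (3) minimizes over $r<i$ and costs $O(i)$. To keep this at $O(1)$ amortized per $r$, I will precompute the prefix products $\prod_{s=r+1}^i q_s$, and I will round each new argument up to the grid before the lookup. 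This gives $O(m)$ work per state and a total of $O(m^2\epsilon'^{-2}\log^2(mB))$. Recovering the optimal $(\T,\J)$ is done by standard back-pointers.

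\textbf{Error.} This is the main obstacle. I will round upward, which yields an overestimate that is still realizable by an actual design. The claim to prove by induction on $i$ is
\[
W(i,\skillcost{},\timecost{}) \le \tilde W(i,\skillcost{},\timecost{}) \le (1+\epsilon')^{2k}\, W(i,\skillcost{},\timecost{}),
\]
where $k$ bounds the number of roundings applied along the path. The point to check is that rounding inflates each job's skill sum and time sum by at most $(1+\epsilon')^{k}$, so the product $\skillcost{}\,\timecost{}$ is inflated by at most $(1+\epsilon')^{2k}$. The $V(i)$ terms inherit their error by induction, and since they enter additively the error does not compound multiplicatively across jobs. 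With at most $m$ roundings per job, choosing $\epsilon'$ so that $(1+\epsilon')^{2m}\le 1+\epsilon$, i.e.\ $\epsilon'=\Theta(\epsilon/m)$, gives the $(1+\epsilon)$ guarantee. That choice, however, costs extra factors of $m$ in the grid size. The step I would attack first is therefore to avoid compounding. One option is to store the exact (unrounded) sums along each job and use the grid only to index and prune states, keeping for each cell the representative with the smallest exact value. A cell then spans only a $(1+\epsilon')$ factor, and a standard trimming argument bounds the relative loss per job by $(1+\epsilon')^{O(1)}$. Then $\epsilon'=\Theta(\epsilon)$ suffices and we obtain the stated $O(m^2\epsilon^{-2}\log^2(mB))$. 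If the trimming argument does not give a per-job loss of $(1+\epsilon')^{O(1)}$, I would fall back on the $\epsilon'=\Theta(\epsilon/m)$ analysis, which still gives a polynomial-time scheme but with worse powers of $m$ in the running time than the bound claimed in the statement.
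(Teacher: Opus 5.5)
Your plan follows the paper's proof. You restrict the $(c,t)$ arguments of $W(i,c,t)$ to powers of $(1+\epsilon)$, show both coordinates lie in an interval of ratio $\mathrm{poly}(mB)$ (so $O(\epsilon^{-1}\log(mB))$ grid points each), and spend $O(m)$ per entry on option (3). Some of your details are more careful than the paper's:
\begin{itemize}
\item Rounding \emph{up}, together with monotonicity of $W$ in $(c,t)$, makes the table value an upper bound on the true cost of the design you trace back. The paper rounds down, which by itself gives no such bound.
\item You include $c=0$ for the entries $V(i)=W(i,0,t^H_i)$.
\item You dispose of the self-reference in option (1).
\end{itemize}

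The gap is the step you flag yourself, and your proposed repair does not close it. Trimming to one representative per cell compounds exactly as repeated rounding does. The representative kept after one task is added is extended by the next task, then trimmed again against whatever lies in its new cell. So the factors $(1+\epsilon')$ multiply over the up to $m$ tasks of a single job. This is what the standard argument you invoke actually says: in the trimming FPTAS for subset sum, the trimming parameter must be $\epsilon/(2n)$ precisely because errors accumulate over the $n$ rounds. Rounding alone already exhibits this. From a grid point $g$, appending a cost $x<\epsilon' g$ and rounding up returns $(1+\epsilon')g$. So $k$ such tasks inflate the stored sum by $(1+\epsilon')^k$, while the true sum grows only by $kx$. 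Consequently, the argument you can actually complete is your fallback $\epsilon'=\Theta(\epsilon/m)$. That gives $O(m\epsilon^{-1}\log(mB))$ grid points per coordinate and total time $O(m^4\epsilon^{-2}\log^2(mB))$, not the stated $O(m^2\epsilon^{-2}\log^2(mB))$.

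The paper's own proof does not contain the missing idea either. It handles this step with a single assertion: that adding exact costs to a rounded-down value and re-rounding ``maintains a multiplicative error of at most $(1+\epsilon)$.'' That is not true. From $\tilde c\ge c/(1+\epsilon)$ you only get that $\tilde c+x$, rounded down, is at least $(c+x)/(1+\epsilon)^2$, and the losses keep multiplying. Concretely, once a job's stored skill sits at a grid point $g\approx B$, appending tasks of skill $1/B<\epsilon g$ rounds back to $g$ every time. The table can therefore underestimate a job's skill, and hence its cost, by a factor that grows with $m$. Your suspicion about compounding is correct, but neither your proposal nor the paper establishes the $\epsilon'=\Theta(\epsilon)$ version. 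What is actually supported is the statement with an extra factor of $m^2$ in the running time, unless you find a genuinely different idea that avoids re-rounding partial sums inside a job.
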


\begin{proof}
The first step of proving Proposition~\ref{prop:totalcost_optimization_dp} is to determine the range of powers of $(1+\epsilon)$ that we must consider in our discretization. 
Suppose that all manual and augmented skill and time costs, as well as all hand-off costs, lie in $[1/B, B]$ for some $B > 0$.  
%That is, these parameters require $O(\log B)$ bits to represent.  
In this case, note that any subsequence of steps could be completed at a total cost of at most $2mB^2$ by performing them all manually by different workers, incurring a  time and hand-off cost of at most $B$ each for a total time of $2B$, and a skill cost of at most $B$ for a total cost of $2B^2$ for each of the $m$ tasks. 
We therefore need not consider any solution containing a job with total cost greater than $2mB^2$.  
In particular, since any non-zero skill cost for any job is at least $1/B$, we need not consider any job design in which a job has total time cost greater than $2mB^3$.
Furthermore, each job has total skill cost lying between $1/B$ and $mB$ (as skill costs combine additively) and time cost at least $1/B$ (as each step's manual cost is at least $1/B$, and any AI chain's time cost is at least the management cost of its final step which is also at least $1/B$).

We conclude that when filling table $W(i, \skillcost{}, \timecost{})$, it suffices to consider skill costs $\skillcost{}$ lying in the range $[1/B, mB]$ and time costs $\timecost{}$ lying in the range $[1/B, 2mB^3]$.  There are $O(\epsilon^{-1} \log(mB))$ powers of $(1+\epsilon)$ in each of these ranges, so our table will have $O(m \epsilon^{-2} \log^2(mB))$ total entries. 
Each entry can be filled in time $O(m)$ by considering the recursive formulation in Proposition~\ref{prop:recursive_optimality}, for a total runtime of $O(m^2 \epsilon^{-2} \log^2(mB))$.

Note that while the table formally describes only the cost of a solution and not the solution itself, one can also read off the corresponding task and job design as well.  
Indeed, by recording which of the options described in Proposition~\ref{prop:recursive_optimality} achieves the minimum for each table entry $W(i, \skillcost{}, \timecost{})$, one can trace out the corresponding task and job designs starting with $V(m) = W(m, 0 , 0)$.  
The pair of AI strategy $\T$ and job design $\J$ can therefore be computed in time $O(m^2 \epsilon^{-2} \log(mB) )$, the same as the time required to fill the table.

It remains to bound the error introduced by our discretization.  
Consider the recursive formulation in Proposition~\ref{prop:recursive_optimality} and suppose that $\skillcost{}$ and $\timecost{}$ are rounded down to the nearest power of $(1+\epsilon)$.  
Since time and skill costs are multiplied together to calculate the total cost of a proposed job, this discretization introduces a multiplicative error of at most $(1+\epsilon)^2$ into our cost calculation for the first option when minimizing in the recursive definition of $W(i, \skillcost{}, \timecost{})$. 
For the other two options, note that we accumulate time and skill costs additively when chaining tasks together for a single worker.  
As $\skillcost{}$ and $\timecost{}$ are rounded to a power of $(1+\epsilon)$, adding additional (accurate) skill and time costs and subsequently rounding maintains a multiplicative error of at most $(1+\epsilon)$ on the total time and skill costs.  

We conclude that if we fill in our table for all $W(i,\; \skillcost{},\; \timecost{})$ where $i \leq m$ and $\skillcost{}$ and $\timecost{}$ are discretized into powers of $(1+\epsilon)$, rounding down to nearest values of $(1+\epsilon)$ on recursive calls into the table, we obtain a $1 + O(\epsilon)$ approximation to the optimal solution.  
An appropriate change of variables (scaling $\epsilon$ by a constant) yields a $(1+\epsilon)$ approximation factor in total runtime $O(m^2 \epsilon^{-2} \log^2(mB))$, as claimed in Proposition~\ref{prop:totalcost_optimization_dp}.
\end{proof}
\section{Discussion}

\subsection{Job-level AI Exposure and the Fragmentation Index}
\label{sec:fragmentation}

For a single step in isolation, the decision whether or not to deploy AI augmentation depends simply on whether the manual execution cost exceeds the AI management cost.  
That is, the optimal choice of whether to use AI assistance depends only on the AI exposure of the step in question. 
For two or more steps, however, the optimal deployment of AI depends on more than the exposure of each step in isolation.  
This can arise because of the benefits of automating multiple steps together in an AI chain.  
Even if one step can be completed more effectively through manual work, it may be preferable to automate it as part of a larger collection of steps that can be jointly delegated to AI.  
Whether this occurs in the optimal AI strategy depends on the relationship between other nearby steps in the production process.

Returning to the short-run optimization problem described in \eqref{eq:totalcost_shortterm} and Section~\ref{sec:optimization.short}, we can interpret job-level AI exposure as the extent to which an optimal AI strategy employs AI automation and augmentation.  
Intuitively, a run of \emph{consecutive} steps in the production process that an AI can perform effectively is a natural candidate for an AI chain.  
A job that contains such runs of consecutive steps is therefore likely to benefit most from AI automation.  
On the other hand, a job for which AI-friendly steps are separated by intermediate steps that an AI is likely to fail is less exposed to large-scale automation, even though its task-level exposure to AI may appear high.

\begin{example}
    Consider a job consisting of $m$ steps, where $m$ is even.  
    Each step takes time $5$ to complete manually and has an AI management time of $1$, but the steps vary in how difficult they are for an AI to complete: odd-numbered steps can be successfully completed by an AI with probability $1$, whereas even-numbered steps will be successfully completed with probability only $0.1$, as shown below:\\~\\~\\
    \begin{adjustbox}{center}
\begin{tikzpicture}[
    scale=0.8,
    node distance=0.6cm,
    every node/.style={
        rectangle, rounded corners, draw,
        minimum width=1.5cm, minimum height=0.85cm,
        align=center
    },
    easy/.style={fill=teal!40},
    hard/.style={fill=red!40},
    idx/.style={draw=none, font=\scriptsize},
    >=latex,
    thick
]

% Nodes
\node[easy] (E1) {Easy};
\node[idx, below=1pt of E1] {1};

\node[hard, right=of E1] (H2) {Hard};
\node[idx, below=1pt of H2] {2};

\node[easy, right=of H2] (E3) {Easy};
\node[idx, below=1pt of E3] {3};

\node[draw=none, right=of E3] (Dots) {$\cdots \cdots$};

\node[hard, right=of Dots] (Hk) {Hard};
\node[idx, below=1pt of Hk] {m--2};

\node[easy, right=of Hk] (Ek1) {Easy};
\node[idx, below=1pt of Ek1] {m--1};

\node[hard, right=of Ek1] (Hm) {Hard};
\node[idx, below=1pt of Hm] {m};

% Arrows
\draw[->] (E1) -- (H2);
\draw[->] (H2) -- (E3);
\draw[->] (E3) -- (Dots);
\draw[->] (Dots) -- (Hk);
\draw[->] (Hk) -- (Ek1);
\draw[->] (Ek1) -- (Hm);

\end{tikzpicture}
\end{adjustbox}
    In this scenario, it is suboptimal to attempt to use AI on any of the even-numbered steps, even as part of a chain.  
    The optimal AI deployment strategy is to employ AI-augmentation on the odd-numbered steps (for a cost of $1$ each) and perform even-numbered steps manually (for a cost of $5$ each), resulting in an overall time cost of $3m$.
\end{example}
\begin{example}
    Next suppose that easy and hard steps are not interleaved, but rather the first $m/2$ steps can each be completed by AI with probability $1$ and the last $m/2$ steps can each be completed with probability $0.1$, as follows:\\~\\~\\
    \begin{adjustbox}{center}
\begin{tikzpicture}[
    scale=0.8,
    node distance=0.6cm,
    every node/.style={
        rectangle, rounded corners, draw,
        minimum width=1.5cm, minimum height=0.85cm,
        align=center
    },
    easy/.style={fill=teal!40},
    hard/.style={fill=red!40},
    idx/.style={draw=none, font=\scriptsize},
    >=latex,
    thick
]

% ----- Easy segment -----
\node[easy] (E1) {Easy};
\node[idx, below=1pt of E1] {1};

\node[easy, right=of E1] (E2) {Easy};
\node[idx, below=1pt of E2] {2};

\node[draw=none, right=of E2] (Dots1) {$\cdots \cdots$};

\node[easy, right=of Dots1] (Ek) {Easy};
\node[idx, below=1pt of Ek] {$\tfrac{m}{2}$};

% ----- Hard segment -----
\node[hard, right=of Ek] (H1) {Hard};
\node[idx, below=1pt of H1] {$\tfrac{m}{2}+1$};

\node[draw=none, right=of H1] (Dots2) {$\cdots \cdots$};

\node[hard, right=of Dots2] (Hm1) {Hard};
\node[idx, below=1pt of Hm1] {$m-1$};

\node[hard, right=of Hm1] (Hm) {Hard};
\node[idx, below=1pt of Hm] {$m$};

% Arrows
\draw[->] (E1) -- (E2);
\draw[->] (E2) -- (Dots1);
\draw[->] (Dots1) -- (Ek);
\draw[->] (Ek) -- (H1);
\draw[->] (H1) -- (Dots2);
\draw[->] (Dots2) -- (Hm1);
\draw[->] (Hm1) -- (Hm);

\end{tikzpicture}
\end{adjustbox}
    In this case, the optimal AI strategy chains together the first $m/2$ steps into a single automated AI chain, for a combined cost of $1$. 
    The remaining $m/2$ steps are then performed manually.  
    This results in an overall time cost of $1 + 5m/2$, which is strictly less than $3m$ as long as there are four or more steps.
\end{example}

To make this intuition more precise, let us return to the short-run optimization problem described in \eqref{eq:totalcost_shortterm} and Section~\ref{sec:optimization.short}.  
Recall that in this short-run problem we are fixing a single job assigned to a worker with a fixed wage, so our focus is on optimizing the time cost of production.  
Consider the special case where AI management costs are normalized: $\AItime{i} = 1$ for all $i$.  
That is, each step of production requires the same amount of time to prompt and manage one attempt by an AI process.
Under this assumption, we will define a measure of the dispersion of AI-exposed tasks in a production process, which we refer to as the \emph{fragmentation index} of a job.  
We then show that the fragmentation index of a job approximates (up to constant factors) the time cost of an optimal AI strategy for that job.  
In other words, jobs for which the fragmentation index is high will tend to yield less benefit from AI automation in the short-term where worker wages and job boundaries are fixed.  
Notably, this intuition makes heavy use of the short-run assumption that worker wages are fixed and independent of task skill requirements; we discuss examples where this intuition fails when describing long-run effects in Section~\ref{sec:discussion.skills}.

Intuitively, the fragmentation index is motivated by a hypothetical scenario where a prophet can ``see the future'' to determine which tasks would be completed successfully by AI on its first attempt and which would not. If many tasks in sequence will all complete successfully on the first try, these are a good candidate for an AI chain. The prophet might therefore chain together all contiguous sequences of tasks that would all succeed, and perform all other tasks manually. The fragmentation index is then the expected cost of this strategy, under the assumption that the prophet's foresight is correct. Notably, this AI strategy isn't necessarily optimal even with the benefit of foresight: a long AI chain with a good probability of success might be optimal even if the prophet knows that it will fail on the first attempt. However, what we show is that its cost approximates the cost of the optimal strategy even without foresight.

To define the fragmentation index more formally, consider a random process in which each step $s_i$ succeeds independently with probability $q_i$; any task that does not succeed is said to fail.  
Write $F$ for the set of steps that fail, and $\mathcal{C} = \{C_1, \dotsc, C_k\}$ for the random variable representing the collection of maximal connected components of non-failed steps.  
The weight of each $C_j \in \mathcal{C}$ is defined to be $\omega(C_j) = \min\{ 1, \sum_{i \in C_j} \manualTime{i} \}$.  
That is, each $C_j$ has weight $1$ unless the sum of the manual time costs for each of its steps is less than $1$ (due to our assumption that AI management costs are normalized to $1$).

Given a realization of $\mathcal{C}$ and $F$, we define the realized fragmentation to be 
\begin{equation}
    \sum_{i \in F}\min\left\{\manualTime{i}, \frac{\AItime{i}}{q_i}\right\} + \sum_{C_j \in \mathcal{C}} \omega(C_j).\label{eq:fragmentation}
\end{equation}
The \emph{fragmentation index} is defined to be the expected value of the realized fragmentation.

Intuitively, we expect the fragmentation index to be lower when highly automatable steps (that is, those for which the AI are likely to succeed) are clustered together, since a large cluster of highly automatable steps are more likely to realize as a single connected component when failures are realized.

We will show that the fragmentation index is within a constant factor of the cost of the optimal (short-term) AI strategy for a given fixed job's step sequence.

\begin{proposition}
\label{prop:fragmentation}
   Fix a single job with a sequence of $m$ steps $\mathcal{S} = \{s_1, \dotsc, s_m\}$, each with $\AItime{i} = 1$.
   Let $FI$ denote its fragmentation index and let $OPT$ denote the minimum time cost over all AI strategies.
   Then $\tfrac{1}{8}OPT \leq FI \leq \tfrac{5}{4}OPT$.
   If we further assume $\manualTime{i} \geq 1$ for all $i$, then $\tfrac{1}{4}OPT \leq FI \leq \tfrac{5}{4}OPT$.
\end{proposition}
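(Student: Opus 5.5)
The plan is to prove the two inequalities separately: an upper bound on $FI$ charged task by task against an optimal strategy, and an upper bound on $OPT$ built from a greedy segmentation of the steps. Throughout, $\AItime{i}=1$. So a manual task $(s_i)$ costs $\manualTime{i}$, and an AI chain on $(s_\ell,\dots,s_r)$ costs $1/P$ with $P=\prod_{i=\ell}^r q_i$.

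\emph{Upper bound $FI\le \tfrac54 OPT$.} First I will prove that realized fragmentation is subadditive under cutting the step sequence into contiguous segments. Fix a realization of successes and failures. Cutting can only split a maximal success component $C$ into pieces $C',C''$, and $\min\{1,\sum_C\manualTime{i}\}\le \min\{1,\sum_{C'}\manualTime{i}\}+\min\{1,\sum_{C''}\manualTime{i}\}$. Failed steps contribute the same amount either way. Taking expectations, $FI(\mathcal S)\le\sum_{T\in\T^*}FI(T)$, where $\T^*$ is an optimal AI strategy and $FI(T)$ is the fragmentation index of the segment $T$ regarded as a job on its own. It then suffices to show $FI(T)\le \tfrac54\,\mathrm{cost}(T)$ for each task.
\begin{itemize}
\item For a manual singleton, $FI=q_i\min\{1,\manualTime{i}\}+(1-q_i)\min\{\manualTime{i},1/q_i\}\le \manualTime{i}$.
\item For an AI chain I will bound $FI(T)$ by
\[
q_\ell+\sum_{i=\ell+1}^{r}q_i(1-q_{i-1})+\sum_{i=\ell}^{r}(1-q_i)\tfrac1{q_i}.
\]
The first two terms are the expected number of components, each of weight at most $1$. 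The last term uses that a failed step costs at most $1/q_i$.
\end{itemize}
I then want to show this expression is at most $\tfrac54\prod_i q_i^{-1}$, by induction on the chain length: appending a step multiplies the right-hand side by $1/q$ and adds at most $q(1-q_{\text{prev}})+(1-q)/q$ to the left-hand side.

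I expect this per-chain inequality to be the main obstacle, because crude bounds lose a factor near $2$ when all $q_i$ are close to $1$. If the induction is too lossy, I will use the cruder observation that $FI(T)$ is at most the realized fragmentation of the always-feasible alternative. That alternative chains each realized component and handles failed steps by $\min\{\manualTime{i},1/q_i\}$. I would then optimize the constant by a direct one-variable analysis of the worst case, namely equal $q_i$.

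\emph{Upper bound $OPT\le 8\,FI$.} Here I construct an explicit strategy of cost at most $8FI$. Call a step \emph{hard} if $q_i<1/2$ and \emph{easy} otherwise.
\begin{itemize}
\item Each hard step is done alone at cost $\min\{\manualTime{i},1/q_i\}$. Since it fails with probability at least $1/2$, $FI$ pays at least half of this.
\item Each maximal run of easy steps is cut greedily, left to right, into blocks $B_1,\dots,B_k$. Each block is maximal subject to $\prod_{B_j}q_i\ge 1/2$.
\item Each block is executed as a single chain at cost at most $2$, or manually if $\sum_{B_j}\manualTime{i}<1$, in which case the cost is at most $2\min\{1,\sum_{B_j}\manualTime{i}\}$.
\end{itemize}
For the charging, maximality gives that $B_j$ together with the first step of $B_{j+1}$ fails somewhere with probability more than $1/2$. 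So each adjacent pair of blocks generates, in expectation, at least a constant fraction of a component or failure term. Each block is also entirely successful with probability at least $1/2$, in which case it lies inside a component of weight at least $\min\{1,\sum\manualTime{i}\}$. Pairing blocks in consecutive pairs and accounting for component weights shared across block boundaries loses a further factor of $2$. Together with the factor $2$ from the chain cost, this gives $8$.

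Under $\manualTime{i}\ge1$, every component has weight exactly $1$ and the manual-versus-chain distinction disappears. One of the factor-$2$ losses, the one from shared component weights, then vanishes, which yields $\tfrac14 OPT\le FI$.
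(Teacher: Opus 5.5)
The upper bound $FI\le\tfrac54 OPT$ has a genuine gap at exactly the step you flagged. Your subadditivity reduction is sound; it is the paper's first step in cleaner form. Your per-chain quantity $L=q_\ell+\sum_{i=\ell+1}^{r}q_i(1-q_{i-1})+\sum_{i=\ell}^{r}(1-q_i)/q_i$ is also a valid bound. The problem is that the induction on the hypothesis $L\le\tfrac54\prod_i q_i^{-1}$ does not close. Appending a step with $q=1$ after one with $q_{\mathrm{prev}}<1$ leaves the right-hand side unchanged, while $L$ grows by $1-q_{\mathrm{prev}}$. For instance, going from $q=(1,\tfrac12)$ to $(1,\tfrac12,1)$, $L$ rises from $2$ to $\tfrac52$ while the bound stays at $\tfrac52$, so the length-two hypothesis only yields $L\le 3$.

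Your fallback does not help either. The realized fragmentation of ``the alternative that chains each realized component and handles failures by $\min\{\manualTime{i},1/q_i\}$'' is just the definition of $FI$. And equal $q_i$ is not the extremal case: $(1,\tfrac12,1)$ attains ratio exactly $\tfrac54$, whereas three steps with $q_i=2^{-1/3}$ give $L\approx1.90$ against $1/P=2$. The missing ingredient is the paper's inequality
\[
\frac{1}{xy}-xy\;\ge\;\Bigl(\frac1x-x\Bigr)+\Bigl(\frac1y-y\Bigr)\qquad (x,y\in(0,1]),
\]
which is equivalent to $(1-x)(1-y)(1-xy)\ge 0$. With $q_\ell\le1$ and $q_i(1-q_{i-1})\le 1-q_{i-1}$ you get $L\le 1+\sum_i(1/q_i-q_i)\le 1+1/P-P$ for $P=\prod_i q_i$. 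Then $(1+1/P-P)P=1+P-P^2\le\tfrac54$, with equality at $P=\tfrac12$. Equivalently, your induction goes through once strengthened to $L_k\le 1+1/P_k-P_k-(1-q_k)$; its inductive step is precisely this inequality with $x=P_k$, $y=q_{k+1}$.

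The lower bound is less problematic. Your construction coincides with the paper's greedy (a block with product at least $\tfrac12$ can never contain a step with $q_i<\tfrac12$), and the constants match, but the charging is only gestured at. What is needed is the following.
\begin{enumerate}
\item For each non-terminal block $B_j$, let $E_j$ be the event that some step of $B_j$, or the step immediately after it, fails. Then $\Pr[E_j]\ge\tfrac12$, by maximality or because that next step is hard.
\item On $E_j$, charge to $B_j$ its failure terms plus the weights of all components meeting it. Since $1/q_i\ge 1$ and $x\mapsto\min\{1,x\}$ is subadditive, this charge is at least $\min\{1,\sum_{i\in B_j}\manualTime{i}\}$.
\item The event $E_j$ forces $B_j$ to be the leftmost or rightmost block meeting each such component, so every component is charged at most twice. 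The terminal block can be charged unconditionally, since it is the rightmost block for every component it meets.
\end{enumerate}
The block-success event you mention cannot do this job alone, because one component of weight at most $1$ may contain many consecutive blocks. Under $\manualTime{i}\ge1$, the factor $2$ is saved because every failed step then costs at least $1$. Each component's unit weight can then be attributed to the failure immediately following it (or to the terminal block), so nothing is shared.
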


We prove Proposition~\ref{prop:fragmentation} in Appendix~\ref{app:FI}.  
The key take-away from Proposition~\ref{prop:fragmentation} is that jobs with high fragmentation index---i.e., those for which the expected number of \emph{consecutive} successful step executions by an AI is low---will tend to have higher time costs even under optimal use of AI chaining.  
A key driver of efficiency gains via AI automation is therefore not simply the expected number of steps that can be automated effectively, but the number of \emph{adjacent} steps in the production process that have high exposure to AI.

An implication of our proof of Proposition~\ref{prop:fragmentation} is a natural and approximately optimal greedy algorithm for constructing AI strategies. The algorithm groups tasks into chains as long as the probability of success is sufficiently high; if the success probability falls too low then the chain is terminated and a new chain is started (with chains of length 1 converted to manual execution when appropriate).

\subsection{Impact of AI Deployment on Worker Skill and Specialization}
\label{sec:discussion.skills}

While the fragmentation index captures the short-run impact of AI deployment on task completion time, in the long run AI strategy also alters the skill requirements of workers and the degree of specialization within the workforce.  
This can dampen or even reverse the short-run intuition that gains from AI derive primarily from time savings, as the following example shows.

\begin{example}
    Consider a job consisting of a single step.  
    This step has manual time and skill requirements $(\manualTime{},\manualSkill{}) = (5,5)$.  
    Suppose that, under AI augmentation, this step has an AI management time of $1$ (per attempt), an AI management skill requirement $1$, and probability $1/8$ of successful completion. 
    That is, $(\AItime{},\AIskill{}) = (1,1)$ and $q=1/8$.
    Despite the task taking longer to complete with AI (due to the low likelihood of success on any individual attempt), the reduced skill requirement of managing the AI process means that it is firm-optimal to employ AI augmentation.
\end{example}

The impact of AI on worker skill reflects two common narratives about the impact of AI on work: first, that AI can automate mundane or repetitive tasks and allow high-skill workers to concentrate more of their time on meaningful high-skill tasks; second, that AI could lead to deskilling as high-skilled labor for manual task completion is replaced with low-skill AI management. 
Our framework unifies these two perspectives by endogenizing (a) the nature (i.e., time and skill requirements) of work to be automated and (b) how the AI strategy impacts requisite worker skill.  
If replacing a given sequence of production steps in given job with an AI chain is beneficial, in the long term, then either the total time needed to complete the steps is shortened, the total skill needed to manage the AI automation is reduced relative to completing the steps manually, or both.  
For an example of the latter, a step with low skill requirement but high time requirement (such as Step 2 in Figure~\ref{fig:job_design}) may be optimally augmented by AI even when doing so increases the skill needed for AI management, if the time savings are substantial enough.  
In this case, the skill required to complete the same job might increase: AI augmentation can be viewed as complementing worker capabilities.  
Alternatively, it may also be beneficial to employ AI on production steps with high skill requirements (such as Step 1 in Figure~\ref{fig:job_design}) even if this results in an AI-assisted task that takes longer to complete, as long as it requires substantially less worker skill to manage the AI.  
This can ultimately lead to a deskilling of the labor force assigned to the job as AI capital substitutes for high-skilled human work.  

This discussion has so far focused on impacts within a single job. 
Our framework likewise captures how AI deployment strategy shapes job design and, in particular, patterns of worker specialization across tasks. 
Consider again the example from Figure~\ref{fig:job_design}, which highlights how hand-off costs and the structure of what we refer to as \emph{tent-pole tasks} can influence how production steps are bundled into specialized jobs. 
A tent-pole task is a short but high-skill task that sits immediately next to time-consuming, low-skill ones.\footnote{
This juxtaposition creates a classic friction: if a single highly skilled worker performs all adjacent tasks, a substantial share of their time is spent on low-skill work, whereas assigning the surrounding low-skill tasks to different workers raises hand-off costs when work passes between them. 
Historically, this tension has been an important driver of the division of labor. 
As AI automation alters the time and skill profiles of individual tasks in the production sequence, it changes exactly this trade-off, shifting when it is efficient to have a single worker perform a cluster of tasks versus when specialization across workers remains optimal.
}
The second task in Figure~\ref{fig:tent_pole} is an example of a tent-pole task.
\begin{figure}[ht!]
 \caption{Illustration of Tent‐Pole Tasks} 
 \label{fig:tent_pole}
 \begin{center}
   \includegraphics[width=0.4\textwidth]{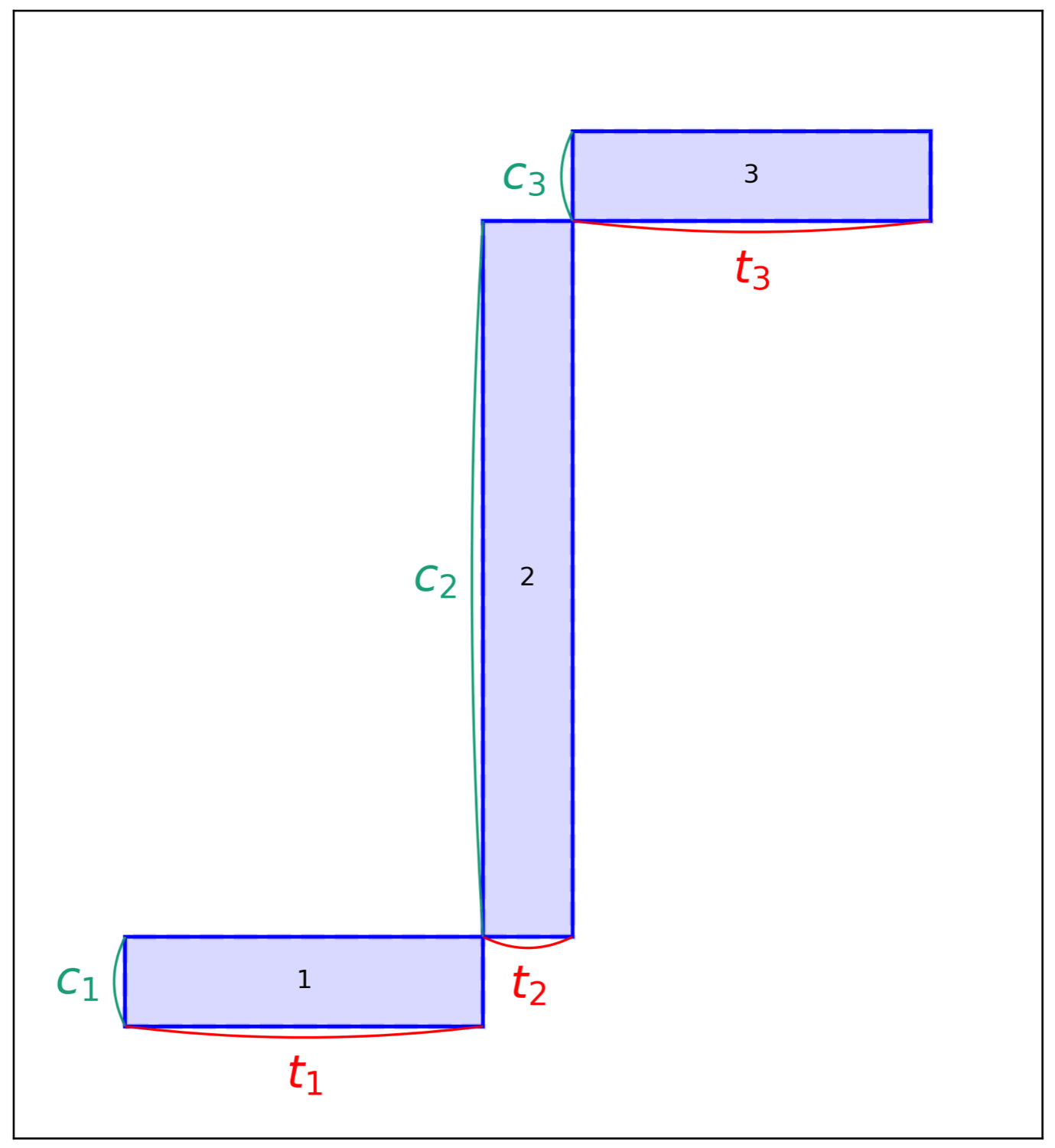}
 \end{center}
 \begin{footnotesize}
   \emph{Notes:} This figure illustrates tent‐pole tasks.  
   The width and height of each rectangle specify the time ($\timecost{}$) and human capital ($\skillcost{}$) requirements of the task, respectively.
   Tasks 1 and 3 have a high time cost and low human capital requirement (wide and short) whereas Task 2, which is the tent-pole task, exhibits a low time cost and high human capital requirement (narrow and tall).
   The hand-off time costs (\handofftime{}) are assumed to be zero for simplicity.
 \end{footnotesize}
\end{figure}

As AI deployment strategy influences the time and skill profile of tasks, the incentive to specialize workers likewise changes.
To predict the direction of this effect, one hypothesis is that AI deployment will tend to have a normalizing effect on tasks, reverting both skill requirements and time requirements toward the mean and making tasks more ``square-like'' (in terms of our geometric interpretation of job design costs).  
If so, and if hand-off time costs remain unaffected, then (roughly speaking) tent-pole inefficiencies due to grouping tasks together into the same job would be reduced.  
This suggests that AI deployment may reduce worker specialization.

\begin{example}
   Consider a two-step production process, with manual skill and time costs $(\manualSkill{1}, \manualTime{1}) = (2,4)$ and $(\manualSkill{2}, \manualTime{2}) = (4,2)$ and hand-off cost $\handofftime{1} = 5$.
   That is, the first step is low-skill but time-intensive and the second step is high-skill but can be completed quickly.
   In this example, combining both steps into a single job (at a labor cost of $(2+4)(4+2) = 36$) is strictly worse than separating into two specialized jobs with a hand-off (at a total cost of $(2)(4+5) + (4)(2) = 26$).
   However, if each step of production could be augmented separately via AI to yield effective skill and time costs of $(2,2)$ for each, then the optimal design combines both augmented steps into a single job at a total cost of $(2+2)(2+2) = 16$, which is better than separating into two distinct jobs (incurring cost $(2)(2+5) + (2)(2) = 18$).
\end{example}

It is also technically possible in our model for AI to increase the amount of specialization in an optimal job design.  
This could happen if AI-augmented steps have higher skill requirements than the corresponding manual steps, leading to an increased need for high-skill workers.  
Even if AI management is assumed to require no more skill than manual completion, an increased deployment of AI can lead to more responsibilities being combined into a single job, leading to a higher worker skill requirement, as the following example shows. 

\begin{example}
   Consider a different two-step production process, with manual skill and time costs $(\manualSkill{1}, \manualTime{1}) = (\manualSkill{2}, \manualTime{2}) = (3,3)$ and hand-off cost $\handofftime{1} = 5$.
   The optimal job design separates these into two separate jobs at a total cost of $(3)(3+5) + (3)(3) = 33$, with each job requiring a worker of skill $3$.
   Suppose AI augmentation can allow each step to be completed with an AI management skill of $2$, a management time of $1/4$, and an AI success probability of $1/8$, for a total expected execution time of $(1/4) \times (1/8)^{-1} = 2$.
   In this case, the optimal design employs AI augmentation in each step and combines the two steps into a single job, resulting in a total cost of $(2+2)(2+2) = 16$ and requiring a worker of skill $4$.
   Note that this is preferable to combining the two steps into a single AI chain, which would result in a total cost of $(2)( (1/4) \times (1/8)^{-1} \times (1/8)^{-1} ) = 32$.
\end{example}

\subsection{Non-Linear Impacts of AI Improvements}

Improvements to AI technology naturally lead to improvements in the overall cost of production, but these effects need not be linear.  
The full economic impact of a disruptive general-purpose technology can take significant time and investment to materialize, with notable historical examples ranging from the steam engine to electricity to computers.  
In each case, a new technology enables modest short-term improvements to existing production processes, but the full impact is not felt until large-scale reorganization of production is enabled.  
This drives a potentially non-monotone marginal value from technology improvements, where major improvements are only unlocked after surpassing a threshold of capability at which they become feasible.

Our framework captures such non-monotonicities in the marginal value of technological improvements.  
This arises in our model via optimization over different AI deployment and job design strategies.  
One can think of parameter $\alpha \in (0,1]$ in our model as metric of the quality of a general purpose AI technology, taken to be the probability that a task of normalized difficulty $1$ is completed successfully.  
We can then model general technology improvements as increasing the value of $\alpha$.  
For a given \emph{fixed} AI strategy (described by a task sequence $\T$) and job design (described by job sequence $\J$), the total cost can be expressed as a polynomial in $(1/\alpha)$.  
As a result, the cost of $\T$ and $\J$ shifts in a continuous and smooth manner as $\alpha$ increases, with monotone marginal gains to technology improvements.  
However, the optimal choice of design (expressed as optimization problem \eqref{eq:totalcost_with_handoff}) involves taking the minimum-cost solution over many possible AI strategies and job designs.  
AI strategies with a higher degree of AI chaining will naturally involve higher powers of $1/\alpha$ in their cost expressions, meaning that they are more sensitive to changes in $\alpha$ and become preferable only at higher values of $\alpha$ (i.e., as AI becomes more effective as a tool). 
This naturally leads to scenarios where the productivity improvements of AI integration are modest at low levels of quality but can grow sharply past a certain inflection point, as we describe in the following example illustrated in Figure~\ref{fig:example_5_graph} below.

\begin{example}
Consider a production process with two steps.  
The first step is short but high-skill and difficult for AI tools to perform correctly.  
Specifically, its manual skill cost is $\manualSkill{1} = 5$, its manual time cost is $\manualTime{1} = 1$, and its hand-off time is $1$.  
Its AI difficulty score, $d_1$, is $6$, meaning that an AI can successfully complete the step with probability $q_1 = \alpha^6$.  
The AI management skill and time requirements are the same as the manual execution requirements: $(\AIskill{1}, \AItime{1}) = (5,1)$.

The second step is low-skill and time-consuming to execute manually, but easy for AI.  
It has $(\manualSkill{2}, \manualTime{2}) = (2,4)$ and AI difficulty score $1$.  
The skill needed to manage the AI is also $2$, but the time needed to manage the AI is reduced to $1$: $(\AIskill{2}, \AItime{2}) = (2,1)$.

Consider the optimal AI strategy and job design for this example as a function of $\alpha$, the general AI quality metric.  
The costs of different AI strategies are plotted in Panel (a) of Figure~\ref{fig:example_5_graph}.  
For $\alpha < 0.25$, it is optimal to complete each step manually as a separate task, and to separate the two tasks into separate specialized jobs. 
This incurs a total cost of $(5)(1+1) + (2)(4) = 18$.  
As this strategy does not employ AI, the cost is not affected by improvements to $\alpha$ in this range, which is evident in Panel (b) of the figure.

For $\alpha \in (0.25, 0.77)$, the optimal AI strategy changes: it is better to use AI augmentation for the second step, reducing that step's total completion cost from $(2)(4)$ to $(2)(1/\alpha)$.  
As $\alpha$ increases, the total execution cost drops gradually from $18$ (at $\alpha = 0.25$) to $10 + 2/\alpha \approx 12.5$ (at $\alpha \approx 0.77$). 
This gradual improvement is visible as a modest slope in Panel (b).

However, once $\alpha$ grows larger than $0.77$, the optimal AI strategy and job design change more dramatically.  
At this point, it is better to combine both steps into a single AI chain, automating step $1$, and to assign the resulting aggregate task to a single low-skill worker. 
The completion cost of this strategy is $(2)(1/\alpha^7)$. 
This cost starts at approximately $12.5$ at $\alpha = 0.77$ but drops sharply, eventually converging to a total cost of $2$ at $\alpha = 1$. 
This sharp drop indicates substantial marginal benefits from incremental improvements in AI quality in this region, illustrated by the pronounced jump in the slope in Panel (b) at $\alpha \approx 0.77$.
\end{example}

\begin{figure}[htbp]
  \begin{center}
  \caption{Cost Analysis of AI Strategies in Example 6} 
  \label{fig:example_5_graph}
  \begin{subfigure}[b]{0.49\textwidth}
    \caption{Total Cost by AI Strategy}
    \includegraphics[width=\textwidth]{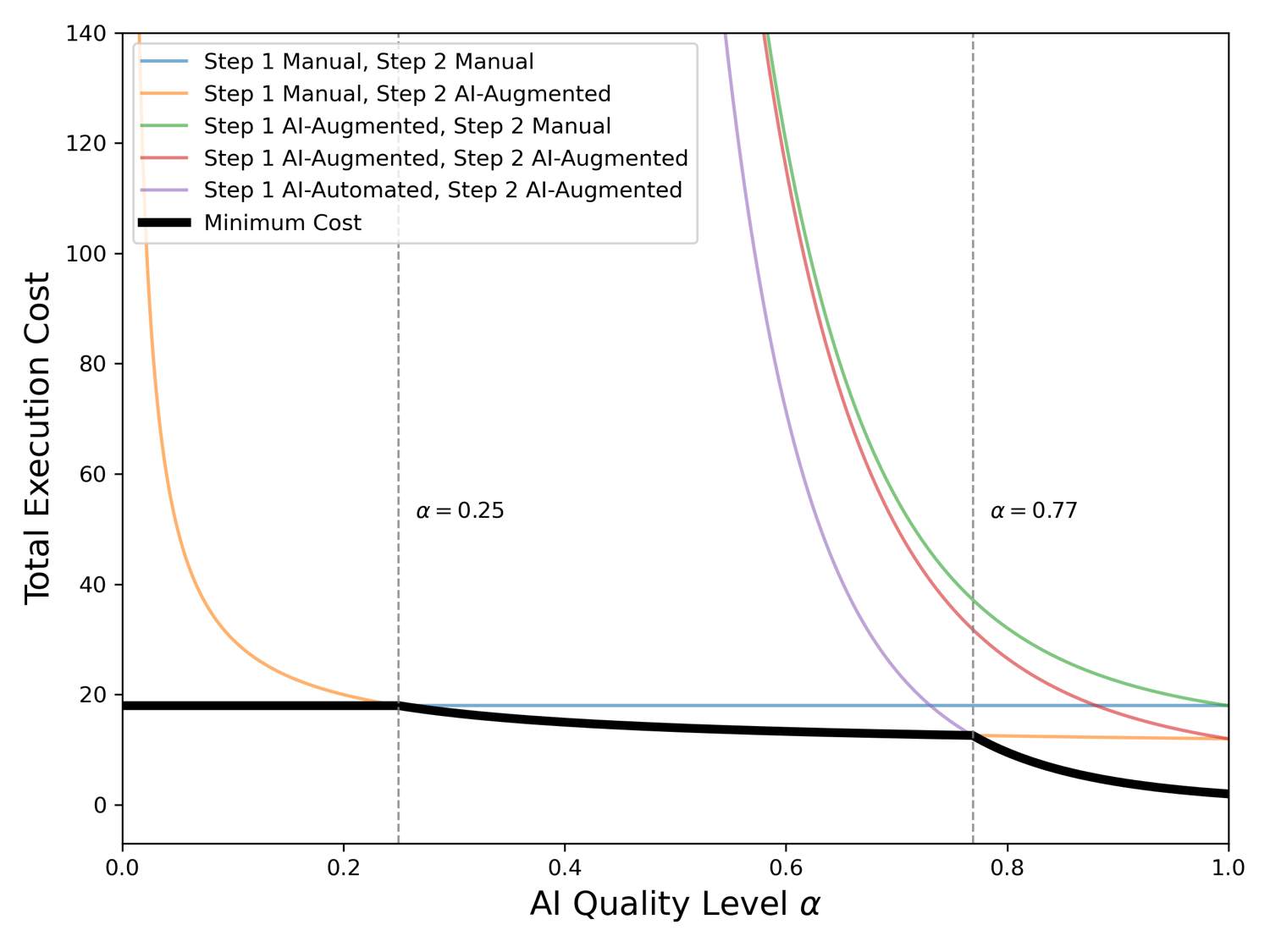}
  \end{subfigure}
  \begin{subfigure}[b]{0.49\textwidth}
    \caption{Marginal Benefit of Improving AI Quality}
    \includegraphics[width=\textwidth]{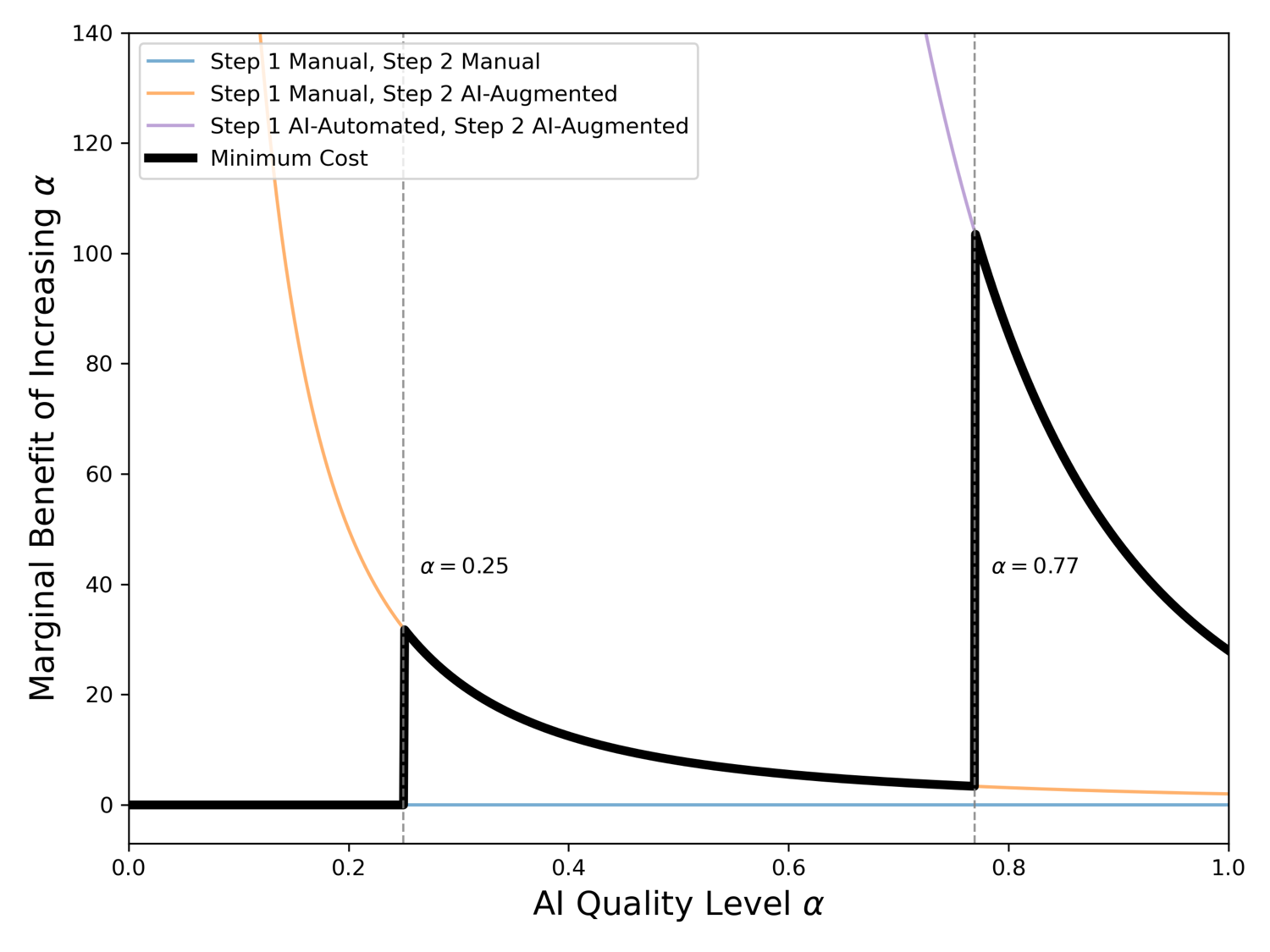}
  \end{subfigure}
  \end{center}
  \footnotesize{\emph{Notes:} Panel (a) shows total execution costs under various AI strategies as AI quality $\alpha$ improves. 
  Panel (b) illustrates the marginal reduction in optimal costs associated with increased AI quality. 
  Dashed vertical lines indicate thresholds where optimal AI deployment strategies shift.}
\end{figure}
\section{Macro-level Production Function}
\label{sec:aggregation}

So far, the model has focused on individual firm decision-making: how a firm allocates production steps among automation, augmentation, and manual execution, and how its micro-level choices of task assignment and job design determine internal productivity and costs.
Ultimately, however, our interest extends beyond the firm to the broader economy. 
Improvements in AI quality not only alter how individual firms organize production but also affect the composition of inputs used across the economy and the aggregate relationship between labor, capital, and output.
Yet the firm-level optimization framework developed above prevents direct analysis of these aggregate effects and also makes our approach distinct from the large literature in which production is modeled directly at the task level and then aggregated to the economy level (e.g., \cite{acemoglu2022artificial, acemoglu2024tasks, acemoglu2025simple}).
To bridge this gap, this section examines whether the micro-level cost-minimization problem of firms can be aggregated into well-behaved firm- and economy-level production functions. 
In particular, we show that the detailed, many-input Leontief representation of production at the task level can be reduced to a more compact formulation with only a small number of effective inputs, and that aggregating across heterogeneous firms (in terms of AI quality level) yields a macro-level production function with a CES form. 
This provides a tractable link between micro-level organization decisions and macroeconomic implications of improving AI quality.

We begin by showing how a firm's micro-level decisions can be explicitly represented by a Leontief production function at the task level.
Next, we demonstrate how, for the purposes of analyzing labor allocation within a firm, this detailed task-level production function can be simplified into a Leontief production function with just two aggregate inputs: skill-adjusted AI management labor and skill-adjusted manual labor (we will describe what we mean by ``skill-adjusted'' below).
Finally, we explore how the production functions of numerous individual firms, which differ only in how they can effectively put the same AI technology into use, can be aggregated into a single, economy-wide CES production function.
In this macro-level representation, the inputs are economy-wide aggregates of the corresponding micro-level inputs from individual firms, thereby facilitating clearer insights into the overall labor allocation in the economy.

Before the aggregation analysis, let us first describe the setup of production.
A firm uses two production inputs: (1) skill-adjusted AI management labor (to complete AI-assisted tasks), and (2) skill-adjusted manual labor (to perform manual tasks).
These two types of labor demand different compensations: executing an AI-assisted task with skill and time cost of one demands $w_{\AIletter}$, while completing a manual task with similar skill and time requirements commands a compensation of $w_{\manualLetter}$.
We call $w_{\AIletter}$ and $w_{\manualLetter}$ the \emph{base wage rates} for AI management and manual labor, respectively.

Firms use these two inputs to complete tasks.
Recall that the skill and time requirements of tasks are determined by the firm's AI strategy $\T$, while the aggregate skill requirements of jobs depend jointly on the AI strategy and job design $\J$.
Specifically, the per unit of time compensation of a worker employed to do tasks of a job is determined by all tasks in that job.
Consider Job 1 in Figure~\ref{fig:hierarchical_production} for example.
The worker assigned to perform Job 1 is required to obtain not only the skill required for manual Task 1, $\manualSkill{1}$, but must also possess the required skill for AI-assisted Task 2, $\AIskill{2}$, as the worker's total compensation per unit of time is determined at the job level, and equals $\manualSkill{1} + \AIskill{2}$.
Therefore, the more tasks firm includes in a job, the higher required compensation for every task in that job.

In order to produce, the firm hires manual labor to perform manual tasks and AI management labor for AI-assisted tasks.
Therefore, the total compensation of a job will be a weighted average of skill costs of the job's tasks, with weights being the base wage rates of each type of labor.
Specifically, the compensation for job $J$ will be:
\begin{equation}
w_{\manualLetter} \left(\sum_{T_b^{\manualLetter} \in J} \manualSkill{b}\right) + w_{\AIletter} \left(\sum_{T_b^{\AIletter} \in J} \AIskill{b}\right),\label{eq:new_wage}
\end{equation}
which is composed of the sum of contributions of skill costs from its manual tasks denoted by set $T_b^{\manualLetter}$ (each weighted by $w_{\manualLetter}$) and its AI-assisted tasks denoted by set $T_b^{\AIletter}$ (each weighted by $w_{\AIletter}$).\footnote{
This formulation is essentially equivalent to the original definition of wage in \eqref{eq:wage}, as pointed out in Footnote~\ref{foot:wage}. 
If we multiply each skill cost in \eqref{eq:new_wage} by its respective base wage rate and redefine the task's skill cost as the resulting product, we arrive at the original formulation given by \eqref{eq:wage}.
The formulation in \eqref{eq:new_wage} allows tasks to contribute differently to the job's wage depending on their mode of execution and type of labor that need to perform them.}

Next, we define skill-adjusted labor, which is the smallest unit of work in our analysis.
Let $J(b)$ denote the job to which task $b$ belongs, and $E(b)$ be the mode of execution of task $b$ (i.e., $E(b) = \manualLetter$ if $b$ is a manual task, or $E(b) = \AIletter$ if $b$ is an AI chain).
Define the skill-adjusted time requirement of task $b$ as
\[
\skillAdjustedTimeLetter_b = \frac{w_{\manualLetter} \left(\sum_{T_{\ell}^{\manualLetter} \in J(b)} \manualSkill{\ell}\right) + w_{\AIletter} \left(\sum_{T_{\ell}^{\AIletter} \in J(b)} \AIskill{\ell}\right)}{w_{E(b)}} \, \timecost{b}.\label{eq:skill_adjusted_time}
\]
Note that the fraction appearing behind $\timecost{b}$ represents an effective skill adjustment factor, as the numerator captures the total compensation of the job task $b$ belongs to, while the denominator normalizes by the base wage rate for the specific mode of execution of task $b$.
The resulting fraction thus has units of skill intensity.
This characterization becomes useful later as it allows expressing the (expected) wage bill paid for task $b$ simply as $w_{\AIletter}\,\skillAdjustedTimeLetter_{b}\,\alpha^{-d_b}$ if it is AI-assisted or as $w_{\manualLetter}\,\skillAdjustedTimeLetter_{b}$ if it is executed manually.

\subsection{Within-Firm Aggregation: Leontief to Leontief}
\label{sec:agg_within}

We now consider the firm's production.
To produce output, each firm must complete requirements of all \( m \) production steps \( s_1, \ldots, s_m \).
This naturally results in a Leontief production structure with multiple inputs.\footnote{
Different AI deployment strategies and job designs lead to different input proportions but share the same fundamental Leontief form.
}
Although firms choose whether to execute each step manually or by the help of AI, we show that analyzing labor allocation can be simplified.
Specifically, production function of the firm can be represented by a Leontief function with only two firm-level aggregate inputs: AI management labor and manual labor.

Suppose the firm solves the long-run cost minimization problem \eqref{eq:totalcost_with_handoff} for a given AI quality level $\alpha$, and obtains the optimal AI strategy \( \T \) and job design \( \J \).
Let \( |\T| = n \), implying the original \( m \) production steps are organized into \( n \) distinct tasks, and \( |\J| = p \), indicating these \( n \) tasks are grouped into \( p \) jobs.
Given the fixed AI strategy, the execution mode for each production step is predetermined. 
This lets us consider and work directly with tasks rather than individual steps.
Furthermore, with the job design held fixed, each firm takes the job boundaries and the total skill requirements within jobs as given.

We assume hand-offs are performed manually by the worker who completes the final task of each job.
This allows us to simplify the analysis in two ways:
(1) Since job design is fixed, we know exactly which steps occur at the job boundaries and thus incur hand-off time costs. Therefore, hand-off of job $J_j$ can be treated as a standalone, human-executed task with time requirement \( \handofftime{}(J_j) \) and skill requirement equal to the total skill requirement of job $J_j$.\footnote{
Denote the skill-adjusted time cost of hand-off task of job $J_j$ with $\skillAdjustedTimeHandoff{}(J_j)$.
}
(2) The order of tasks---including the newly defined hand-off tasks---can be rearranged within the production sequence.
Specifically, we relabel tasks such that tasks \( 1 \) to \( k \) are AI-assisted and thus require AI management labor, while tasks \( k+1 \) to \( n \), along with the hand-off tasks \( n+1 \) to \( n+p-1 \), are executed manually and require manual labor.\footnote{
Recall that the final job \( p \) requires no hand-off by definition. The last hand-off thus occurs at the end of job \( p-1 \).
}

The firm's task-level production can be expressed in the following Leontief form:
\begin{equation}
x = \min\left\{\frac{\labor{1}}{\skillAdjustedTimeAI{1}\,\alpha^{-d_{1}}},\,\cdots,\,\frac{\labor{k}}{\skillAdjustedTimeAI{k}\,\alpha^{-d_{k}}},\,\frac{\labor{k+1}}{\skillAdjustedTimeManual{k+1}},\,\cdots,\,\frac{\labor{n}}{\skillAdjustedTimeManual{n}},\,\frac{\labor{n+1}}{\skillAdjustedTimeHandoff{}(J_1)},\,\cdots,\,\frac{\labor{n+p-1}}{\skillAdjustedTimeHandoff{}(J_{p-1})}\right\}.\label{eq:task_level_prod}
\end{equation}
The \( x \) on the left-hand side represents the firm's output while $\labor{b}$ is the amount of labor assigned to task $b$. All other variables remain as previously defined.

Since the AI strategy and job design are fixed, the required amount of skill-adjusted time to spend on each task in the denominators of (\ref{eq:task_level_prod}) are fixed, and the firm takes them as given.
In equilibrium, allocation of labor to tasks satisfies
\[
x=\frac{\labor{1}}{\skillAdjustedTimeAI{1}\,\alpha^{-d_{1}}}=\cdots=\frac{\labor{k}}{\skillAdjustedTimeAI{k}\,\alpha^{-d_{k}}}=\frac{\labor{k+1}}{\skillAdjustedTimeManual{k+1}}=\cdots=\frac{\labor{n}}{\skillAdjustedTimeManual{n}}=\frac{\labor{n+1}}{\skillAdjustedTimeHandoff{}(J_1)}=\cdots=\frac{\labor{n+p-1}}{\skillAdjustedTimeHandoff{}(J_{p-1})}.
\]
Notice that the ratio of labor allocated to any two tasks $a$ and $b$ is independent of output level and wage rates (regardless of whether tasks $a$ and $b$ are executed manually or by the help of AI), and solely depends on their relative skill-adjusted time requirements, which are fixed given $\T$ and $\J$.
The fixed ratio of labor inputs implies that the rate of substitution between any pair of tasks is independent of the labor allocated to other tasks:
\begin{equation}
\frac{\partial}{\partial \labor{z}}\left(\frac{\frac{\Delta x}{\Delta \labor{a}}}{\frac{\Delta x}{\Delta \labor{b}}}\right)=0,\qquad\forall z\neq a,b.\label{eq:task_subst_rate_indep}
\end{equation}

A necessary and sufficient condition to aggregate the firm's production function from a Leontief with one input per task into a Leontief with (firm-level) aggregate AI management labor and manual labor is that the rate of substitution between any two tasks within the same aggregate input type is independent of all tasks in the other aggregate input type \citep{leontief1947introduction, fisher1965embodied, felipe2003aggregation}.
In other words, the relative amount of labor allocated to any two human-executed (AI-managed) tasks should depend exclusively on tasks within the human-executed (AI-managed) group.

The condition expressed in (\ref{eq:task_subst_rate_indep}) satisfies the aggregation requirement above.\footnote{In fact, this condition is stricter than necessary. It not only guarantees labor independence across aggregated input types, but also imposes a stronger condition: that the rate of substitution between tasks within the same aggregate input type is independent even of other tasks within the same aggregate type.}
Thus, the firm's production function can be represented as:
\begin{equation}
x=\min\,\left\{\frac{\bar{\alpha}\,\labor{\AIletter}}{\skillAdjustedTimeLetter_{\AIletter}},\,\frac{\labor{\manualLetter}}{\skillAdjustedTimeLetter_{\manualLetter}}\right\}.\label{eq:micro_agg_prod}
\end{equation}
Here, $\labor{\AIletter}$ and $\labor{\manualLetter}$ represent respectively the firm-level aggregate AI management labor and manual labor.
Moreover, the aggregate skill-adjusted AI and manual time requirements, $\skillAdjustedTimeLetter_{\AIletter}$ and $\skillAdjustedTimeLetter_{\manualLetter}$, are defined as the sum of skill-adjusted time requirements of tasks in their respective groups:
\begin{align}
\skillAdjustedTimeLetter_{\AIletter} & = \sum_{b=1}^{k}\skillAdjustedTimeAI{b}, \label{eq:def_t_AI}\\[8pt]
\skillAdjustedTimeLetter_{\manualLetter} & = \sum_{j=1}^{p-1}\skillAdjustedTimeHandoff{}(J_j) + \sum_{b=k+1}^{n}\skillAdjustedTimeManual{b}. \label{eq:def_t_H}
\end{align}
Finally, $\bar{\alpha}$ is the firm's effective AI quality level defined as
\begin{equation}
\bar{\alpha}=\frac{\sum_{b=1}^{k}\skillAdjustedTimeAI{b}}{\sum_{b=1}^{k}\skillAdjustedTimeAI{b}\alpha^{-d_{b}}},\label{eq:alpha_bar}
\end{equation}
so that the following relationship holds:
\[
\frac{\skillAdjustedTimeLetter_{\AIletter}}{\bar{\alpha}}=\sum_{b=1}^{k}\skillAdjustedTimeAI{b}\alpha^{-d_{b}}.
\]

In short, (\ref{eq:micro_agg_prod}) allows us to analyze labor allocation using the simpler two-input Leontief production function for a given AI strategy \( \T \) and job design \( \J \).
This firm-level aggregate Leontief function implies the equilibrium condition:
\begin{equation}
x=\frac{\bar{\alpha}\,\labor{\AIletter}}{\skillAdjustedTimeLetter_{\AIletter}}
=\,\frac{\labor{\manualLetter}}{\skillAdjustedTimeLetter_{\manualLetter}}.
\label{eq:micro_fixed_proportions}
\end{equation}
The representation in \eqref{eq:micro_agg_prod} helps us gain clearer insights into the allocation between manual and AI management labor at the firm level by removing the need to deal with the complex, many-input production function (\ref{eq:task_level_prod}) at the level of individual tasks.

\subsection{Cross-Firm Aggregation: Leontief to CES}

Next, we analyze whether Leontief production functions of many firms can be collectively represented by a single economy-wide production function, where each input in the macro function aggregates the corresponding inputs from the micro functions. 
To do so, we draw on the extensive literature on aggregation theorems in production theory, which establishes conditions for the existence of an aggregate production function based on individual firms' production functions.
The central idea is to introduce an appropriate form of heterogeneity across firms, allowing micro-level production functions to be smoothed into a well-behaved aggregate function (see \cite{sato1975production}, Chapters 2 and 4, for existence conditions.).\footnote{
For instance, \cite{houthakker1955pareto} studies aggregation from micro Leontief functions to a macro Cobb-Douglas function; \cite{levhari1968note} investigates aggregation from micro Leontief functions to a macro CES function; and \cite{sato1969micro} explores aggregation from micro CES to macro CES functions. 
See \cite{baqaee2019foundations} for a broader formulation of this problem.}
While most existence results in this literature do not yield a closed-form functional representation, we show here that under certain conditions on firm heterogeneity, firm-level Leontief production functions can be aggregated into an economy-wide CES production function.

Consider a unit mass of firms in the economy, each producing output with AI management labor, human labor, and capital.
We assume capital is a fixed input, so firms make labor decisions conditional on their given capital stock.
Moreover, we impose a Leontief structure between capital and labor inputs, normalizing the production process so that exactly one unit of capital is required per unit of output produced.\footnote{This normalization can be interpreted as assuming identical capital productivity across firms. Since we introduce firm-level heterogeneity through variations in AI quality, we abstract away from additional sources of heterogeneity across firms.} 
Firms do not know their individual effective AI quality level before production occurs; instead, they share a common prior belief regarding the distribution from which these quality levels are drawn.

Production unfolds in two stages.
In the first stage, firms solve the long-run cost minimization problem \eqref{eq:totalcost_with_handoff}, committing to an AI deployment strategy $\T$ and a job design $\J$ based on their \emph{expected} AI quality levels.
Since firms hold the same expectations about the distribution of effective AI quality, they choose identical AI strategies and job designs.
In the second stage, firms learn their realized effective AI quality levels, hire the required labor and capital, and begin production.\footnote{As production depends on the realized effective AI quality level, $\bar{\alpha}$ also determines firms' sizes.}
The resulting Leontief production function, determined by the previously selected AI strategy, job design, and the firm's realized effective AI quality, takes a form similar to (\ref{eq:micro_agg_prod}).

This two-stage structure mirrors realistic firm behavior.
Firms initially decide how to structure operations and post job vacancies.
Only after making these structural commitments do they hire workers and acquire machines to begin production.
Once hiring occurs, firms discover the actual productivity levels of inputs, which necessitates operating under their previously chosen organizational structures.

In what follows, we demonstrate that the heterogeneity in effective AI quality level can be structured so that micro-level Leontief production functions (\ref{eq:micro_agg_prod}) aggregate into a macro-level CES production function, in which the CES inputs are aggregates of the micro-level inputs \citep{levhari1968note}.
Suppose specifically that the macro-level production function takes the form:
\begin{equation}
X = \left(\theta_{\AIletter}\,\aggregateAIlabor^{\rho}+\theta_{\manualLetter}\,\aggregateManualLabor^{\rho}+(1-\theta_{\AIletter}-\theta_{\manualLetter})\,K^\rho\right)^{\frac{1}{\rho}},\label{eq:macro_agg_prod}
\end{equation}
where $X$ represents the (economy-wide) aggregate output, $\aggregateAIlabor$ denotes aggregate AI management labor, $\aggregateManualLabor$ denotes aggregate manual labor, $K$ is aggregate capital, and parameters $\theta_{\AIletter}$ and $\theta_{\manualLetter}$ represent the weights of corresponding inputs.
Since the measure of firms in the economy is normalized to $1$, aggregate capital is also normalized to $1$, making the third term in (\ref{eq:macro_agg_prod}) effectively constant at $1-\theta_{\AIletter}-\theta_{\manualLetter}$.

To link the macro production function above to the micro production functions, we must relate aggregated firm-level inputs \( \labor{\AIletter}\) and \( \labor{\manualLetter} \) from (\ref{eq:micro_agg_prod}) to their corresponding economy-wide aggregates \( \aggregateAIlabor\) and \( \aggregateManualLabor \) in (\ref{eq:macro_agg_prod}).
This requires either determining the macro production function parameters from the distribution of heterogeneity, or specifying the form of heterogeneity given a set of macro production function parameters.
We adopt the latter approach and derive the output probability density function in terms of effective AI quality, denoted by \( \phi(\bar{\alpha}) \), as a function of \( \theta_{\AIletter}, \theta_{\manualLetter}, \rho \).
Throughout, we assume that $\rho<0$ (which implies elasticity of substitution $\sigma<1$), indicating that macro-level production exhibits some degree of complementarity between aggregate inputs.

Normalize the output price to $p=1$, so that $w_{\AIletter}$ and $w_{\manualLetter}$ can be interpreted as real wage rates for a unit of skill-adjusted AI management and manual labor, respectively.
A firm produces only if it earns nonnegative profits:
\[
w_{\AIletter}\,\labor{\AIletter}+w_{\manualLetter}\,\labor{\manualLetter}\leq x.
\]
Substituting for $\labor{\manualLetter}$ and $x$ from (\ref{eq:micro_fixed_proportions}) into the profitability condition yields:
\[
w_{\AIletter}\,\labor{\AIletter}+w_{\manualLetter}\,\frac{\skillAdjustedTimeLetter_{\manualLetter}\,\bar{\alpha}\,\labor{\AIletter}}{\skillAdjustedTimeLetter_{\AIletter}} \leq \frac{\bar{\alpha}\,\labor{\AIletter}}{\skillAdjustedTimeLetter_{\AIletter}}.
\]
Rearranging terms and canceling $\labor{\AIletter}$ gives the lower bound of firms' effective AI quality level:\footnote{We assume the parameters $w_{\AIletter},\,w_{\manualLetter},\,\skillAdjustedTimeLetter_{\AIletter},\,\skillAdjustedTimeLetter_{\manualLetter}$ are such that $0 < ({w_{\AIletter}\,\skillAdjustedTimeLetter_{\AIletter}})/(1-w_{\manualLetter}\,\skillAdjustedTimeLetter_{\manualLetter}) < 1$.}
\[
\bar{\alpha} \geq \frac{w_{\AIletter}\,\skillAdjustedTimeLetter_{\AIletter}}{1 - w_{\manualLetter}\,\skillAdjustedTimeLetter_{\manualLetter}}.
\]
This lower bound implies that only firms with realized effective AI quality level above this threshold will produce in equilibrium and others exit the market.
Moreover, observe from \eqref{eq:alpha_bar} that for $d_b \geq 0$ the upper bound of $\bar{\alpha}$ is 1 and is achieved when $\alpha \rightarrow 1^{-}$.
Therefore:
\begin{equation}
\frac{w_{\AIletter}\,\skillAdjustedTimeLetter_{\AIletter}}{1 - w_{\manualLetter}\,\skillAdjustedTimeLetter_{\manualLetter}} \leq \bar{\alpha} \leq 1.
\label{eq:alpha_threshold}
\end{equation}

Let $\phi(\bar{\alpha})$ be the distribution of output across firms according to their effective AI quality level. 
A firm with effective AI quality $\bar{\alpha}$ thus produces output $x = \phi(\bar{\alpha})$ by definition.
From equation (\ref{eq:micro_fixed_proportions}), it directly follows that the AI management labor and manual labor used by this firm are: $\labor{\AIletter} = (\skillAdjustedTimeLetter_{\AIletter}/\bar{\alpha})\,\phi(\bar{\alpha}),$ and $\labor{\manualLetter} = \skillAdjustedTimeLetter_{\manualLetter}\,\phi(\bar{\alpha}).$
Consequently, aggregate output $X$, aggregate AI management labor $\aggregateAIlabor$, and aggregate manual labor $\aggregateManualLabor$ can be expressed as:
\begin{align}
X & = \int_{\frac{w_{\AIletter}\,\skillAdjustedTimeLetter_{\AIletter}}{1 - w_{\manualLetter}\,\skillAdjustedTimeLetter_{\manualLetter}}}^{1}\,\phi(\bar{\alpha})\,d\bar{\alpha}, \label{eq:micro_X}\\
\aggregateAIlabor & = \int_{\frac{w_{\AIletter}\,\skillAdjustedTimeLetter_{\AIletter}}{1 - w_{\manualLetter}\,\skillAdjustedTimeLetter_{\manualLetter}}}^{1}\,\skillAdjustedTimeLetter_{\AIletter}\frac{\phi(\bar{\alpha})}{\bar{\alpha}}\,d\bar{\alpha}, \label{eq:micro_M}\\
\aggregateManualLabor & = \int_{\frac{w_{\AIletter}\,\skillAdjustedTimeLetter_{\AIletter}}{1 - w_{\manualLetter}\,\skillAdjustedTimeLetter_{\manualLetter}}}^{1}\,\skillAdjustedTimeLetter_{\manualLetter}\,\phi(\bar{\alpha})\,d\bar{\alpha}.\label{eq:micro_H}
\end{align}

Substituting the aggregated variables from the micro-level equations (\ref{eq:micro_X}, \ref{eq:micro_M}, \ref{eq:micro_H}) into the aggregate production function (\ref{eq:macro_agg_prod}), we obtain:
\begin{equation}
\int_{\frac{w_{\AIletter}\,\skillAdjustedTimeLetter_{\AIletter}}{1 - w_{\manualLetter}\,\skillAdjustedTimeLetter_{\manualLetter}}}^{1}\,\phi(\bar{\alpha})\,d\bar{\alpha}
=
\left(
\theta_{\AIletter}\left(\skillAdjustedTimeLetter_{\AIletter}\int_{\frac{w_{\AIletter}\,\skillAdjustedTimeLetter_{\AIletter}}{1 - w_{\manualLetter}\,\skillAdjustedTimeLetter_{\manualLetter}}}^{1}\,\frac{\phi(\bar{\alpha})}{\bar{\alpha}}\,d\bar{\alpha}\right)^{\rho}
+
\theta_{\manualLetter} \left(\skillAdjustedTimeLetter_{\manualLetter}\int_{\frac{w_{\AIletter}\,\skillAdjustedTimeLetter_{\AIletter}}{1 - w_{\manualLetter}\,\skillAdjustedTimeLetter_{\manualLetter}}}^{1}\,\phi(\bar{\alpha})\,d\bar{\alpha}\right)^{\rho}
+
\left(1-\theta_{\AIletter}-\theta_{\manualLetter}\right)
\right)^{\frac{1}{\rho}}.
\label{eq:macro_agg_prod_with_micro_aggregates}
\end{equation}
Solving for $\phi(\bar{\alpha})$ in terms of \( \theta_{\AIletter}, \theta_{\manualLetter}, \rho \) we get the following  distribution for effective AI quality:
\begin{equation}
\phi(\bar{\alpha})=\frac{\left(1-\theta_{\AIletter}-\theta_{\manualLetter}\right)^{\frac{1}{\rho}}\left(1-\theta_{\manualLetter}\,\skillAdjustedTimeLetter_{\manualLetter}^{\rho}\right)^{\frac{\rho}{\rho-1}}\left(\theta_{\AIletter}\,\skillAdjustedTimeLetter_{\AIletter}^{\rho}\right)^{\frac{1}{1-\rho}}}{1-\rho}(\bar{\alpha})^{\frac{1}{\rho-1}}\left[1-\theta_{\manualLetter}\,\skillAdjustedTimeLetter_{\manualLetter}^{\rho}-\left(1-\theta_{\manualLetter}\,\skillAdjustedTimeLetter_{\manualLetter}^{\rho}\right)^{\frac{\rho}{\rho-1}}\left(\theta_{\AIletter}\,\skillAdjustedTimeLetter_{\AIletter}^{\rho}\right)^{\frac{1}{1-\rho}}(\bar{\alpha})^{\frac{\rho}{\rho-1}}\right]^{-\frac{1+\rho}{\rho}}.\label{eq:phi}
\end{equation}
See Appendix \ref{app:aggregationDist} for derivation details.

This completes the production function aggregation procedure.
The derived distribution (\ref{eq:phi}) characterizes firm-level heterogeneity in the effective AI quality level required to support a macro CES production function of the form (\ref{eq:macro_agg_prod}), with economy-wide aggregate AI management and manual labor inputs and parameters $\theta_{\AIletter}, \theta_{\manualLetter}, \rho$, obtained from micro-level Leontief production functions of the form (\ref{eq:micro_agg_prod}).

The aggregation results in this section provide a foundation for linking micro-level technology deployment and organizational choices to macroeconomic production frameworks.
By showing how heterogeneous firms with step-by-step Leontief technologies and endogenous task boundaries can be represented by a CES aggregator, our framework offers a micro-founded rationale for using aggregate CES production functions to study labor demand, substitution patterns, and productivity in economies adopting new AI technologies.
This connection clarifies how firm-level decisions about automation, augmentation, and job design scale up to shape aggregate technological change.

\section{Empirical Evaluation}
\label{sec:empirics}

In this section, we empirically test three predictions of our model: (1) AI-executed steps tend to appear next to each other in the production sequence, forming AI chains; (2) occupations in which AI-suitable steps are more dispersed have fewer AI-executed steps; and (3) the AI-execution status of a step's neighbors affects the likelihood that the step itself is executed by AI.

For our analysis, we construct a dataset that records three attributes of each step: its AI exposure status (exposed or unexposed to AI), its realized mode of execution (manual, AI-augmented, or AI-automated), and its position in the production sequence.
To create this dataset, we combine data from four sources:
\begin{enumerate}
    \item The O*NET 27.3 Database \citep{onet27_3},
    \item Human-generated labels for AI exposure of O*NET tasks from \cite{eloundou2023gpts},
    \item Realized AI execution mode of O*NET tasks from the Anthropic Economic Index \citep{handa2025economictasksperformedai},
    \item A GPT-generated task ordering for each O*NET occupation.
\end{enumerate}

The structure of our theoretical framework closely aligns with the structure of the O*NET dataset.
Each production step in the model corresponds to an O*NET task, and each job corresponds to an O*NET occupation.
In the absence of AI, a step (that is, an O*NET task) is executed manually and maps directly into a task under the model's definition.
Thus, without AI, an O*NET task can be viewed as both a step and a task, which matches the standard treatment of tasks in the O*NET dataset.
Later when we discuss AI execution, however, multiple steps, corresponding to a subset of O*NET tasks, may be chained together and executed jointly by AI, forming a composite task per the model's definition.
Throughout this section, we use the terms step, O*NET task, and task interchangeably and expect the reader to keep these distinctions in mind.
We similarly use the terms job and occupation interchangeably without restating the mapping each time.

We use the May 2023 release of O*NET. 
This version contains roughly 18{,}000 tasks assigned to more than 850 U.S. occupations.
For AI exposure measures, we use human-generated labels from \cite{eloundou2023gpts}.
Unless explicitly mentioned otherwise, in all analyses we treat tasks with a human-assigned E1 label as exposed to AI and those with E0 or E2 labels as unexposed to AI.\footnote{
\cite{eloundou2023gpts} define E1 tasks as those that an AI can perform in at least half the time required by a human while preserving human-level output quality.
E2 tasks meet the same time-saving and quality thresholds but require additional software or tools to fully leverage AI capabilities.
Any task that is neither E1 nor E2 is labeled E0.
}
This gives a conservative measure of exposure to AI for all O*NET tasks.

To obtain AI execution labels, we draw on Anthropic's Economic Index dataset \citep{handa2025economictasksperformedai}, which classifies millions of Claude conversations into six categories: \textit{``Validation'', ``Task Iteration'', ``Learning'', ``Directive'', ``Feedback Loop''}, and \textit{``Filtered.''}
Conversations in the first three categories (Validation, Task Iteration, Learning) are treated by Anthropic as AI-augmenting activities, whereas those in Directive and Feedback Loop are recognized as AI-automating activities.
The sixth category, Filtered, corresponds to conversations that either could not be classified or whose actual category could not be disclosed for privacy reasons.

Anthropic reports the share of matched conversations falling into each of the six categories for a subset of O*NET tasks.
In total, conversations are linked to 3{,}364 tasks, of which 1{,}017 have 100\% of their conversations filtered, leaving 2{,}347 tasks with at least one non-filtered conversation.
Following Anthropic's definitions of AI-augmenting and AI-automating activities, we assign each of these 2{,}347 tasks an AI execution label based on the majority share of its non-filtered conversations across the augmenting versus automating categories.
Appendix Table~\ref{tab:example_anthropic_AI_tasks} provides several example tasks labeled using this procedure. 

Finally, we treat tasks that do not appear in the Anthropic dataset, as well as those with 100\% filtered conversations, as manual tasks.
Figure \ref{fig:task_execution_dist} shows the distribution of task execution modes in our dataset. 
Of the 872 occupations, 605 (69\%) contain at least one AI-exposed task and 555 (64\%) contain at least one AI-executed task.
Figure \ref{fig:occupation_ai_share} shows the distribution of occupations by the fraction of their tasks that are exposed to AI (left) and executed by AI (right).
\begin{figure}[h!]
    \caption{Distribution of Modes of Task Execution in the Dataset}
    \label{fig:task_execution_dist}
    \begin{center}
    \resizebox{0.6\textwidth}{!}{
    \includegraphics[width=.8\textwidth]{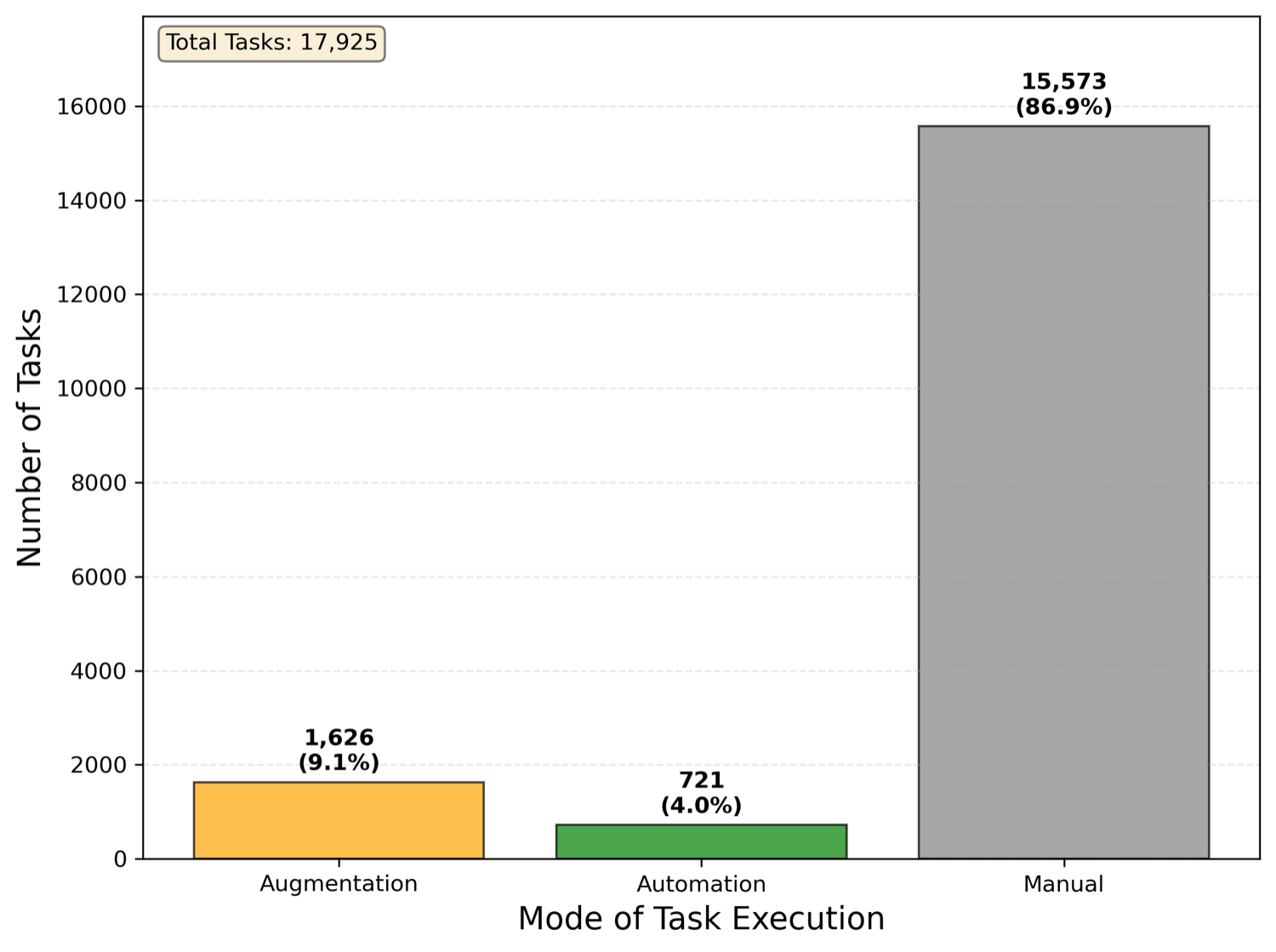}
    }
    \end{center}
    \footnotesize{\emph{Notes:} The automation and augmentation labels are drawn from Anthropic's Economic Index dataset, and the universe of tasks comes from the May~2023 O*NET release.}
\end{figure}
\begin{figure}[h!]
    \caption{Distribution of Share of Occupation Tasks Exposed to and Executed by AI}
    \label{fig:occupation_ai_share}
    \begin{center}
    \begin{subfigure}[b]{0.49\textwidth}
        \captionsetup{labelformat=empty}
        \caption{(a) AI Exposure}
        \includegraphics[width=\textwidth]{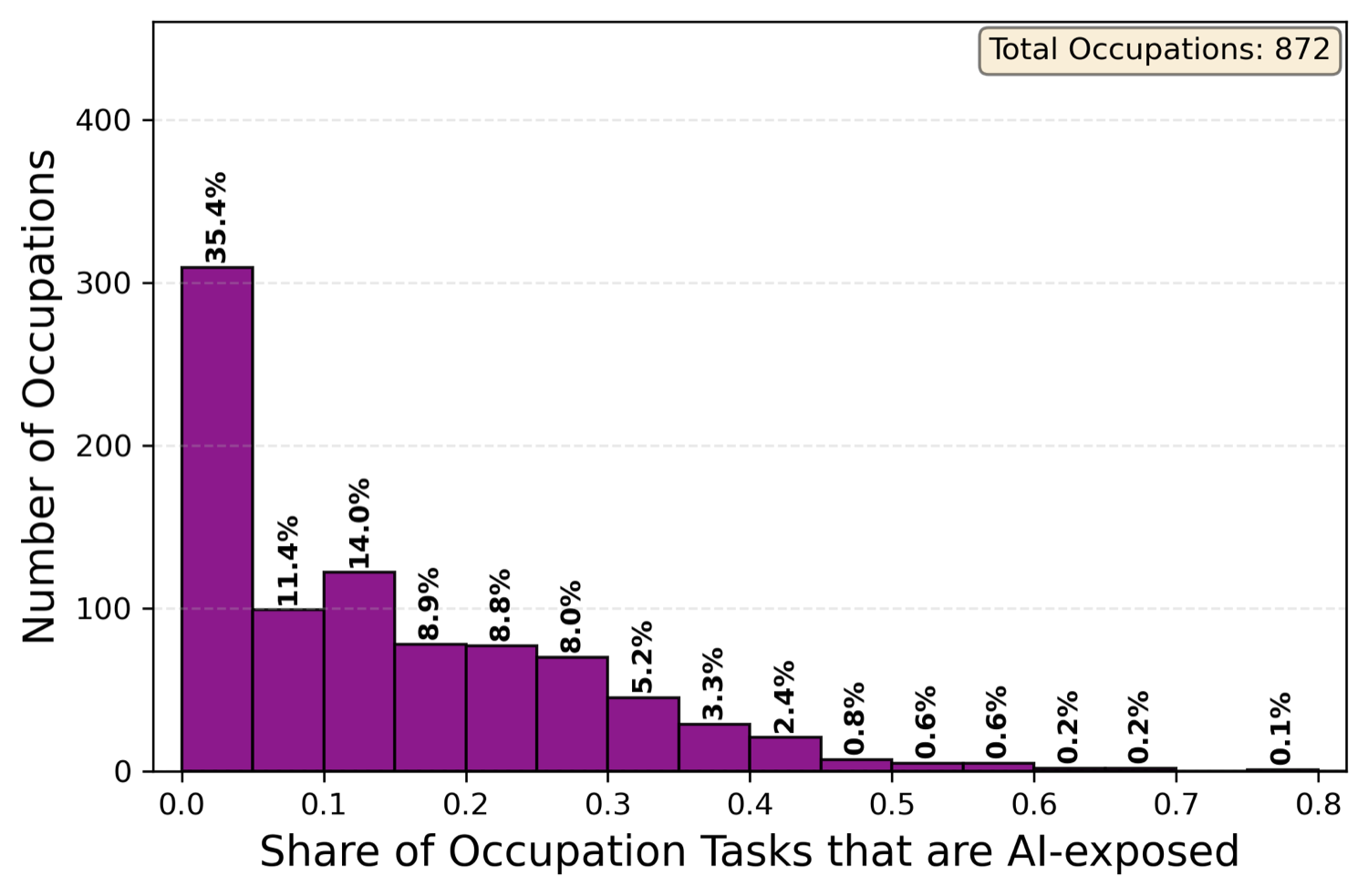}
    \end{subfigure}
    \begin{subfigure}[b]{0.49\textwidth}
        \captionsetup{labelformat=empty}
        \caption{(b) AI Execution}
        \includegraphics[width=\textwidth]{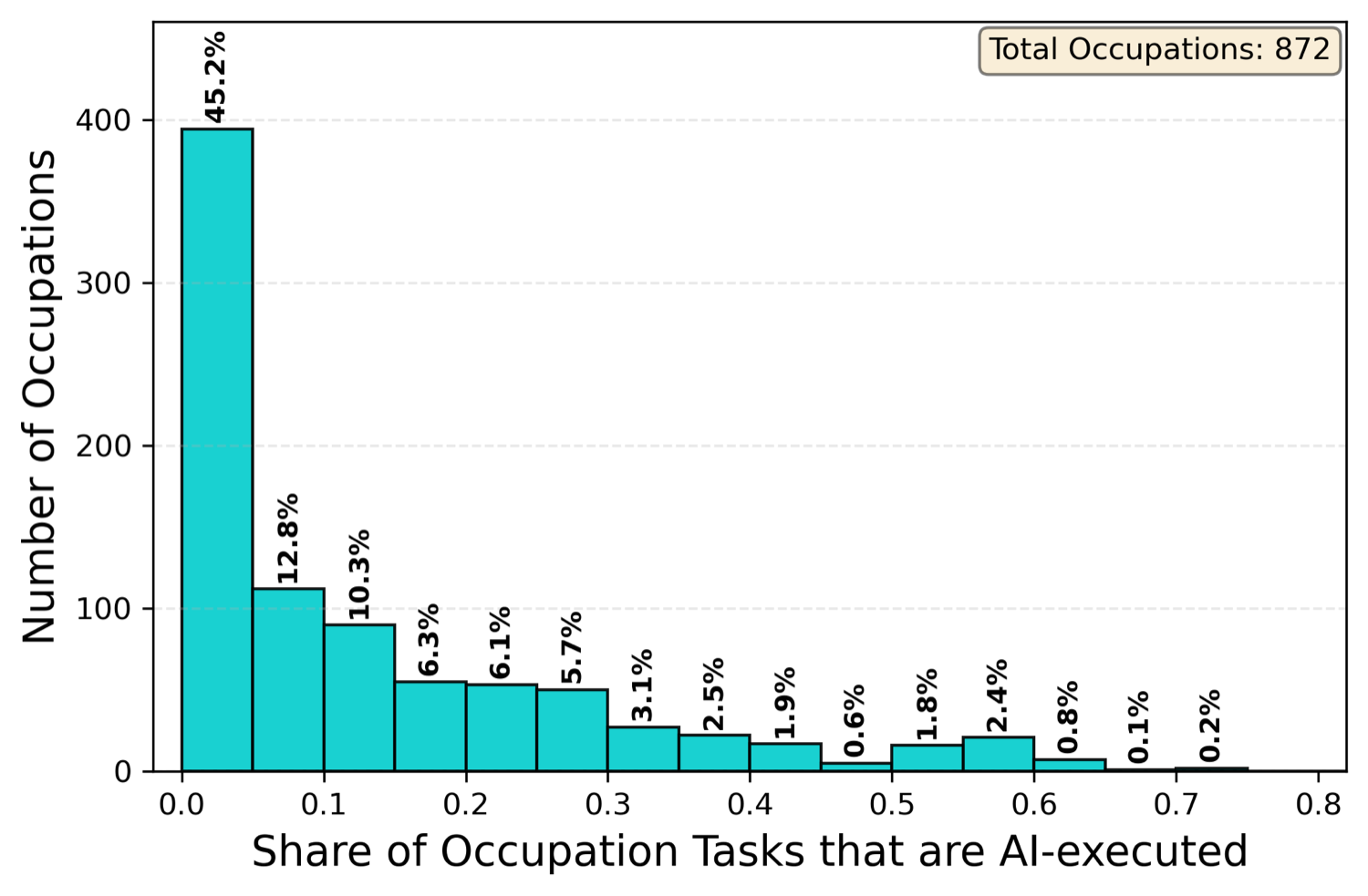}
    \end{subfigure}
    \end{center}
    \footnotesize{\emph{Notes:} Percentages above bars denote fraction of observations corresponding to that bar out of all 872 occupations.}
\end{figure}

As the final input into our analyses, we generate a workflow sequence for every O*NET occupation. 
Specifically, we prompt GPT-5-mini through Expected Parrot \citep{Horton2024EDSL} with the complete set of tasks within each occupation and ask it to determine the most reasonable sequential ordering in which these tasks would be performed in practice.\footnote{The prompt used to generate task sequences is given in Appendix \ref{app:prompts}.
}
Appendix Figures \ref{fig:computer_programmers}, \ref{fig:public_relations}, and \ref{fig:electronic_equipments} show the resulting task sequences for three occupations: \textit{Computer Programmers} with a high share of its tasks executed by AI, \textit{Public Relations Specialists} with a moderate share, and \textit{Electronic Equipment Installers and Repairers, Motor Vehicles} with no tasks executed by AI.

We validate these GPT-generated sequences in two ways. 
First, throughout our empirical analyses, we benchmark the outcomes of interest against those generated by placebo datasets obtained by randomly reshuffling task positions within each occupation. 
We find that patterns in the GPT-generated data differ sharply from those in the placebo datasets, indicating that GPT captures meaningful workflow structure rather than producing arbitrary task orderings. 
Second, in Data Appendix~\ref{app:gptPrompts_robustness}, we show that our headline results are robust to the choice of prompt. 
Specifically, we evaluate ten deliberately different alternative prompt formulations and find substantial, though not perfect, overlap in pairwise task orderings across prompts.\footnote{%
Across all occupations and task pairs, the average Kendall's $\tau$ is 0.6.%
} 
We also find no evidence that prompt choice induces systematic differences in task orderings across the occupation groupings and measures used in our later analyses, suggesting that our results are unlikely to be driven by prompt-specific bias toward particular groups in the main prompt.
Finally, we re-run all our headline analyses using task orderings from these alternative prompts and find patterns that closely mirror those under the main prompt. 

A noteworthy feature of these task sequences is that the realized execution modes are broadly consistent with, but do not perfectly satisfy, our model's definition of AI chains.
In particular, in the model an automated task can only be followed by another AI-executed task, either automated or augmented, whereas in the data we occasionally observe an automated task followed by a non-AI (i.e., manual) task.
Task 5 in \textit{Computer Programmers}' occupation (Appendix Figure \ref{fig:computer_programmers}) and Task 3 in \textit{Public Relations Specialists}' occupation (Appendix Figure \ref{fig:public_relations}) are two such instances.
This pattern is a consequence of natural definitional differences between our model and the Anthropic data.
Specifically, the automated versus augmented distinction in our model is defined by production sequencing and task dependencies, whereas the Anthropic classifications are based on task-level exposure and feasibility and are agnostic to workflow position.
As a result, tasks labeled as automated in the data may still be followed by manual steps, leading to apparent deviations from the model's sequencing restriction without contradicting its underlying logic.
Therefore, when considering the structure of AI chains in the remainder of this section, we do not distinguish between tasks labeled as automated versus augmented in the empirical data; rather, we treat both labels as indicating AI execution. %and think of a chain as a contiguous sequence of such tasks, as discussed further in Section~\ref{sec:chainLength_prediction}.

Before discussing the findings, it is worth emphasizing two points about the relevance of our results.
First, note that we use a conservative measure of AI involvement in task execution.
Recall that we label tasks with fully filtered conversations as manual, even though we know that all of them were executed in some way using Claude.
These tasks account for roughly one third of all Anthropic-mapped tasks, and excluding them limits our statistical power.
Nevertheless, the remaining two thirds prove sufficient for revealing the patterns predicted by our framework, suggesting that what we report should be viewed as a conservative signal of potentially much stronger effects.

A second concern is that workers may use different AI tools for different purposes, for example Claude for writing and ChatGPT for coding, which could potentially undermine our results.
While we cannot directly rule out such heterogeneous tool use given that we only observe Claude usage, our analyses are not tied to particular industries, occupations, or tasks, which makes them less vulnerable to missing isolated pockets of tool-specific usage.\footnote{
For our results to be entirely spurious, one would have to believe not only that no one used Claude for a large subset of tasks during Anthropic's sample period but also that this missing class of tasks has distinctive characteristics, such as workflow position or AI-suitability, that differ from other tasks for which Claude is used across the economy.
We do not have a reason to believe this is the case.
If anything, specialization across tools would imply that Anthropic's data might understate the true set of AI-executed tasks in the economy, which would primarily reduce our statistical power rather than generate artificial patterns.
}
Nevertheless, wherever applicable, we verify that our results are not driven by any narrow subset of occupations or tasks.
The patterns we document appear consistently across broad occupation groups, even if somewhat stronger in some groups than others, reflecting heterogeneous impacts of AI.
Finally, while we do not claim that these relationships are causal, taken together they provide strong suggestive evidence that the mechanisms highlighted by our model are relevant in practice and that our findings reflect broad patterns of AI execution rather than Claude-specific usage.

\subsection{Prediction \#1: Tendency to Have Runs of Consecutive AI-executed Tasks}
\label{sec:chainLength_prediction}

The first prediction of the model that we test is that, to leverage cost savings from forming longer AI chains, AI steps tend to appear in contiguous blocks rather than being scattered in the workflow.
This prediction follows from our definition of AI chains (Definition~\ref{def:ai_chain}), which requires every automated step to be followed by another automated or augmented step, and from the fragmentation argument, which suggests that for a given level of AI quality longer AI chains are more likely to emerge when AI-easy steps are clustered together in the workflow (see Examples 1 and 2 in Subsection~\ref{sec:fragmentation}).

Instances of this pattern are evident in example task sequences shown in Appendix Figures~\ref{fig:computer_programmers} and~\ref{fig:public_relations}.
To test this systematically, we compute the average AI chain length and the average number of AI chains per occupation and compare them with two sets of placebo reshuffles of the original dataset.
In the first placebo, we randomize the positions of tasks within each occupation's task sequence whereas in the second we randomize the assignment of AI execution labels across tasks in the entire dataset.

These placebo tests serve two purposes.
First, and most importantly, because we do not have an external benchmark for these statistics, they verify that the patterns we observe in the actual data are meaningful rather than artifacts of randomness.
Second, each placebo targets a distinct concern about our dataset.
The randomized task position placebo tests whether the GPT-generated workflow sequences contain meaningful structure or behave like random permutations of tasks within occupations, while the execution label reassignment placebo tests whether Anthropic's AI execution labels are randomly scattered across tasks or instead exhibit meaningful co-occurrence within occupations.

For measurement, recall that we treat all automated and augmented tasks as a single type of AI task, with any consecutive run of AI-executed tasks constituting an AI chain.
% We adopt this modified definition for two reasons.
% First, given the limited number of automated tasks overall, our power to meaningfully measure chains under the strict definition of the model is limited.
% Second, because some automated tasks appear out of order, the reshuffling exercises mechanically place these tasks next to other AI tasks and artificially inflate measured chain lengths in the placebo datasets.
% Because it is already difficult to meaningfully identify ``real'' AI chains under the strict definition, the random assignment causes placebo AI chains to occur too frequently, making it harder to distinguish actual chains from artificial ones induced by out-of-order AI automation tasks.
% In other words, the out-of-order automation tasks not only fail to contribute to real AI chains but also cause over-representation of AI chains in placebo tests once they appear next to other AI tasks in random permutations.
% Thus, treating all automated and augmented tasks symmetrically yields a more sensible measure of AI chains given our data limitations.
With this definition, the average AI chain length in the original dataset is 1.45, and the average number of AI chains per occupation is 2.10.
Figure~\ref{fig:aiChains_graphs_def1} plots the observed values in the actual dataset under the main prompt alongside histograms of the same two statistics calculated from 1{,}000 task position reshuffles (Panel A) and 1{,}000 random execution label assignments (Panel B) of the original dataset.
\begin{figure}[ht!]
\caption{Average Length and Average Count of AI Chains: Actual versus Placebo Datasets}
\label{fig:aiChains_graphs_def1}
\begin{center}
\begin{subfigure}[b]{\textwidth}
    \centering
    \captionsetup{labelformat=empty}
    \caption{Panel (A): Random Task Position Assignment}
    
    \begin{subfigure}[b]{0.49\textwidth}
        \captionsetup{labelformat=empty}
        \includegraphics[width=\textwidth]{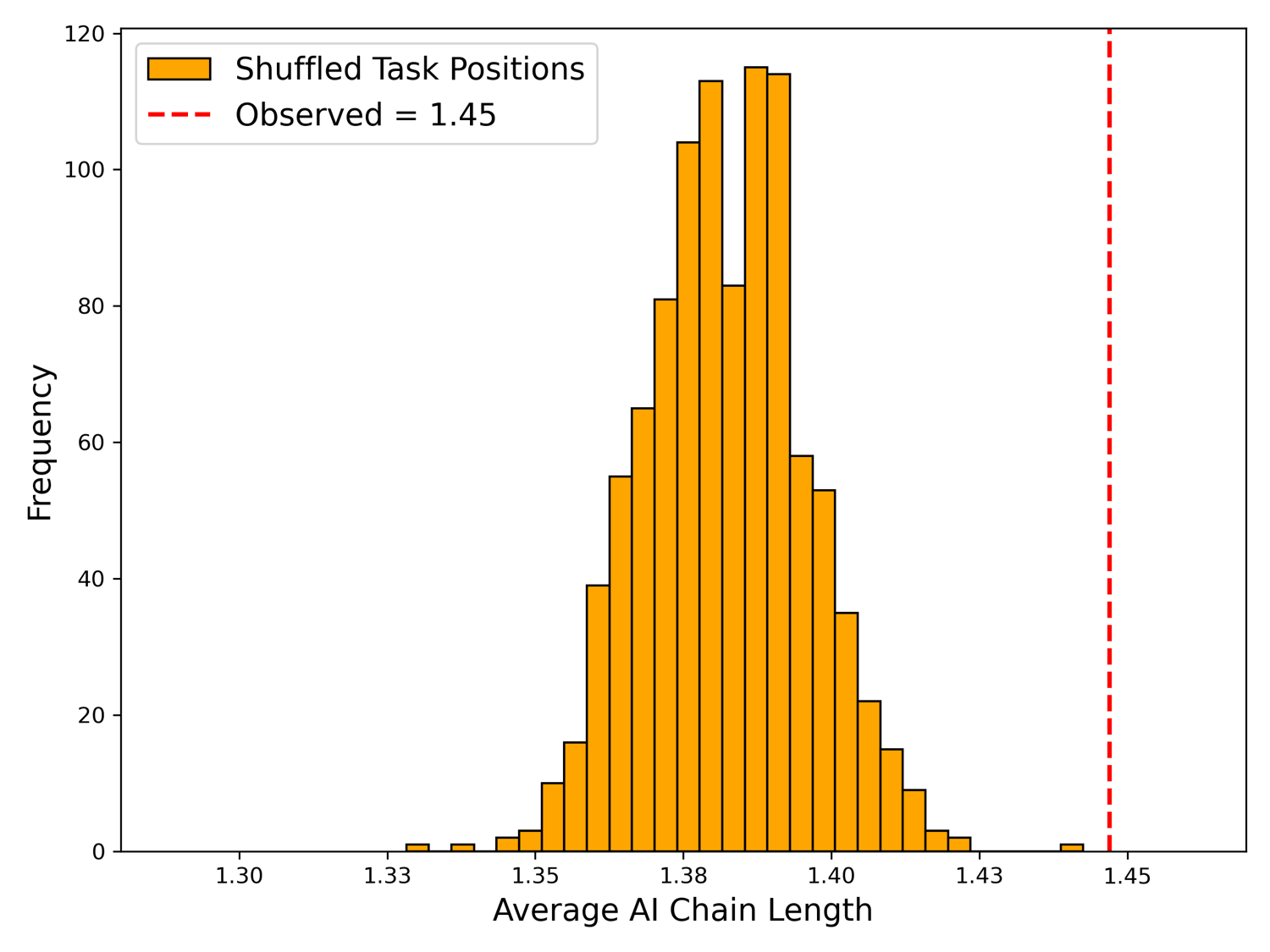}
    \end{subfigure}
    \begin{subfigure}[b]{0.49\textwidth}
        \captionsetup{labelformat=empty}
        \includegraphics[width=\textwidth]{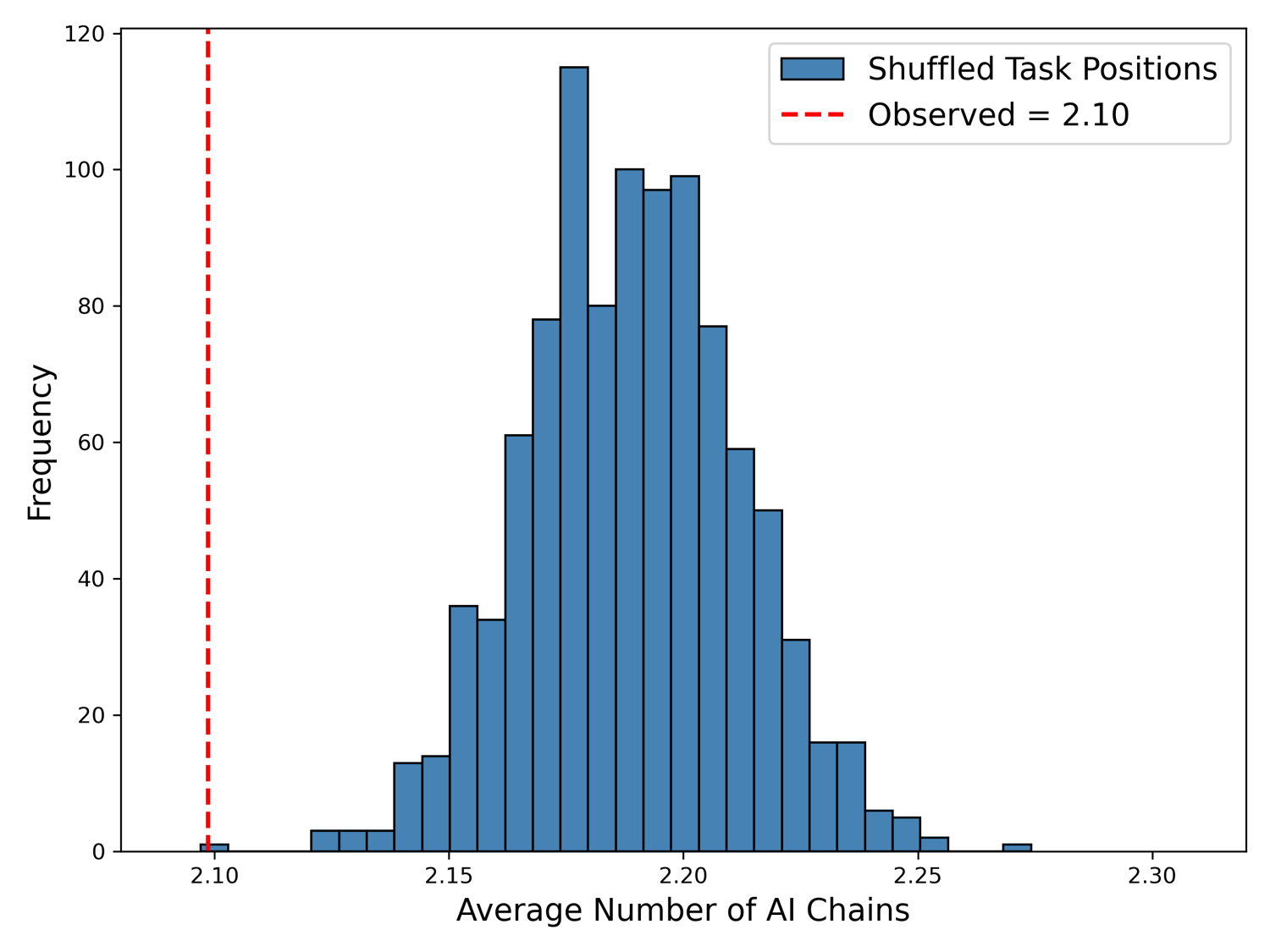}
    \end{subfigure}
\end{subfigure}

\vspace{1.2em}

\begin{subfigure}[b]{\textwidth}
    \centering
    \captionsetup{labelformat=empty}
    \caption{Panel (B): Random Execution Label Assignment}
    
    \begin{subfigure}[b]{0.49\textwidth}
        \captionsetup{labelformat=empty}
        \includegraphics[width=\textwidth]{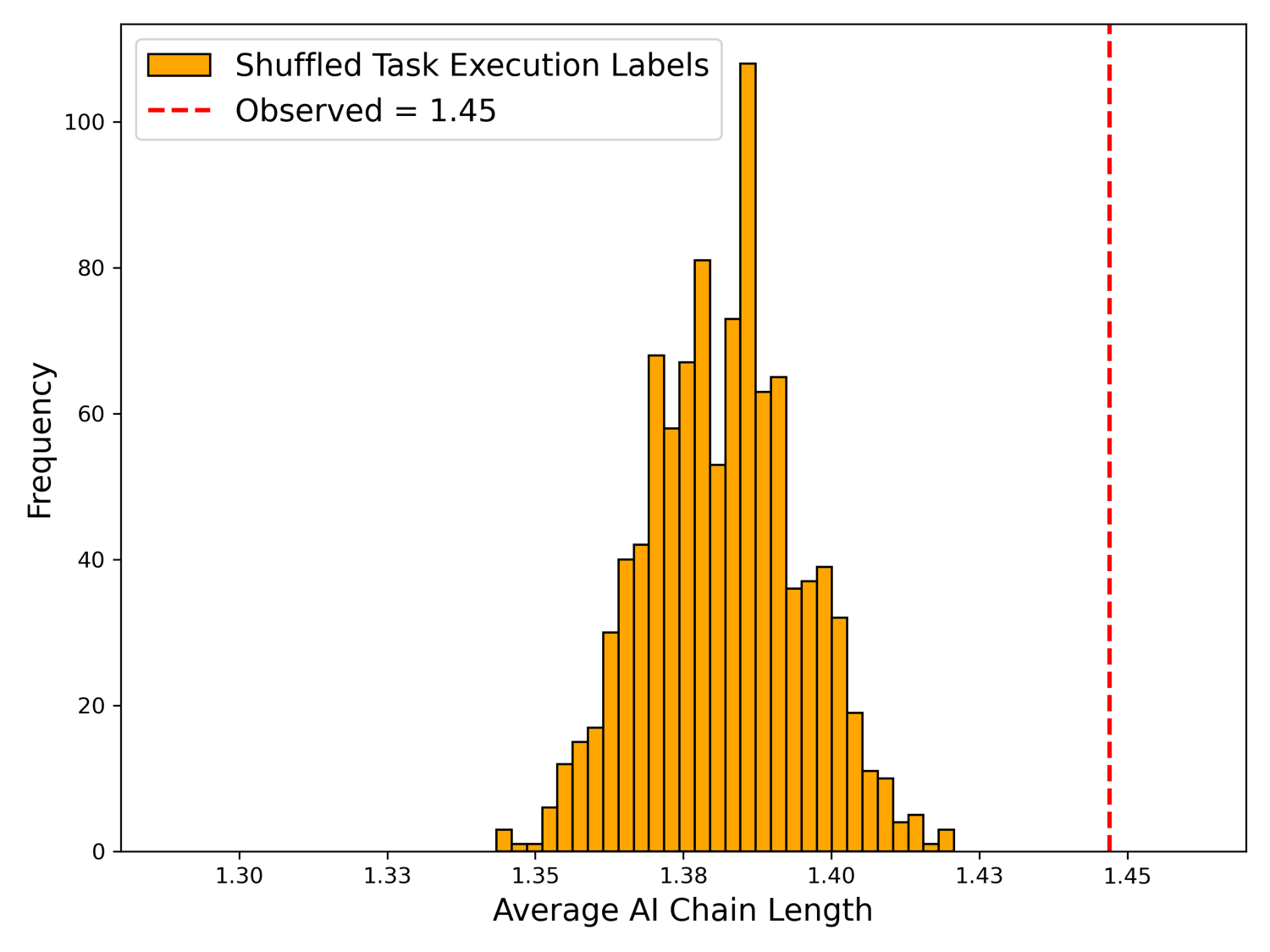}
    \end{subfigure}
    \begin{subfigure}[b]{0.49\textwidth}
        \captionsetup{labelformat=empty}
        \includegraphics[width=\textwidth]{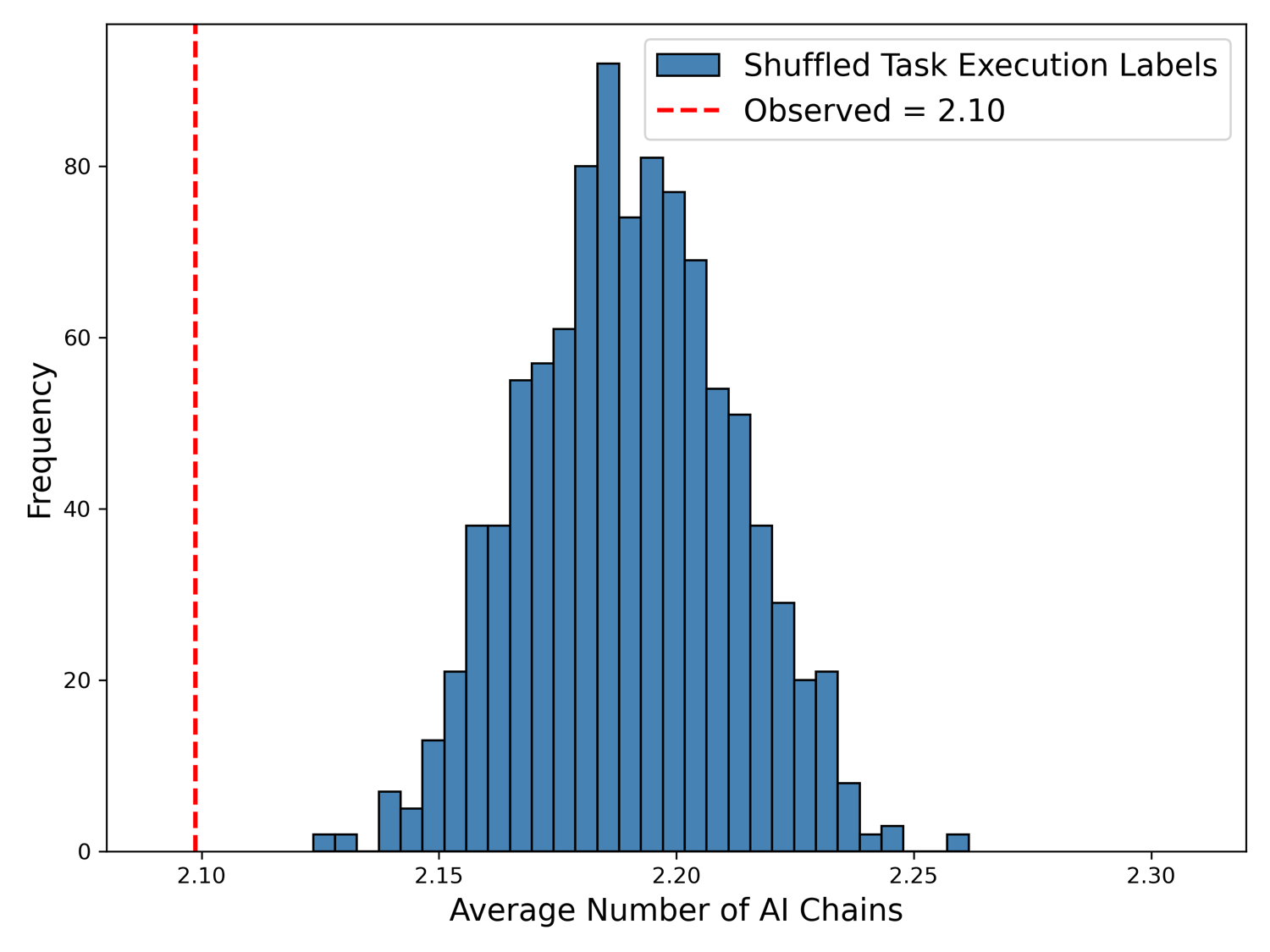}
    \end{subfigure}
\end{subfigure}
\end{center}
\footnotesize{
\emph{Notes:}
Red dashed lines show the observed average AI chain length (orange graphs on the left) and average AI chain count (blue graphs on the right) in the actual dataset under a more lenient definition of AI chains.
Panel (A) shows the histogram of the two statistics for 1,000 placebo datasets in which the positions of tasks within each occupation are randomized.
Panel (B) shows the histogram of the two statistics for 1,000 placebo datasets in which the execution labels of tasks in the entire dataset are randomized.
}
\end{figure}
Comparing the observed average AI chain length with the distributions generated by random task position and execution label assignments in the graphs on the left, we see that the average AI chain length in the original data is noticeably larger than in both placebo distributions.
This indicates that the AI-executed tasks tend to cluster within occupations, forming longer AI chains than would arise under either type of random assignment.
The results for the number of AI chains in the right-hand graphs tell a consistent story.
Holding fixed the number of AI-executed tasks in the dataset, longer chains necessarily imply fewer separate chains.
Accordingly, the right-hand graphs show that the observed number of AI chains per occupation is meaningfully smaller than what would be expected under random assignment in either case.

These findings also mitigate two potential concerns regarding the structure of our dataset.
In both placebo tests, the average AI chain length and the number of AI chains observed in the actual data lie at the extreme tails of the placebo distributions.
First, this implies that the GPT-generated workflow sequences are unlikely to be chosen arbitrarily and without logic.
If the generated sequences were random, the average AI chain length and the number of AI chains would lie closer to the center of the distributions generated by arbitrary within-occupation reordering of tasks.
Second, the results indicate that the clustering of AI-executed tasks in the Anthropic data is not driven by chance co-occurrence of execution labels, but instead reflects systematic patterns within occupations across the economy.
Together, these results show that the observed patterns of AI chaining capture meaningful workflow structure rather than artifacts of random task ordering or random AI execution label assignment.

\subsection{Prediction \#2: Dispersion of AI-able Tasks Affects AI Execution Outcomes}
\label{sec:fragmentation_prediction}

The second prediction of the model that we test is that the extent of realized AI execution in an occupation is determined not just by the AI exposure status of its steps, but also by how clustered or dispersed those AI-able steps are in the workflow.
To examine this, we ask whether occupations with similar shares of their tasks exposed to AI exhibit different shares of realized AI-executed tasks depending on how dispersed their AI-able tasks are in the production sequence. 
Notably, this level of dispersion is related to the fragmentation index from Section~\ref{sec:fragmentation}; we will make use of a slightly modified metric (which we call the \emph{empirical fragmentation index}, defined in more detail below) that can be calculated from our dataset.
Specifically, we estimate the following regression:
\begin{equation}
    \text{ai\_execution}_{o} = \beta_0 + \beta_1 \, \text{ai\_exposure}_{o} + \beta_2 \, \text{empirical\_fragmentation\_index}_{o} + \varepsilon_o,
    \label{eq:fragmentation_index_regression}
\end{equation}
where \emph{$\text{ai\_execution}_{o}$} is the share of tasks executed by AI in occupation $o$, \emph{$\text{ai\_exposure}_{o}$} is the share of AI-exposed tasks, and \emph{$\text{empirical\_fragmentation\_index}_{o}$} measures how dispersed the AI-able tasks are across the occupation's workflow.

The model predicts that occupations with higher AI exposure should have a higher share of steps executed by AI ($\beta_1 > 0$).
Conditional on exposure, it also predicts that when AI-able steps are more dispersed in the workflow, the share of realized AI-executed steps should be lower ($\beta_2 < 0$).
The intuition is that, for a given level of AI quality (parameter $\alpha$ in the model), AI chains are more likely to form when AI-easy steps appear next to each other, as discussed in Examples 1 and 2 of Subsection~\ref{sec:fragmentation}.
When such clustering occurs, nearby steps that would otherwise remain manual may instead be automated as part of a chain.

To construct an empirical measure of fragmentation, we abstract from heterogeneity in tasks' underlying cost profiles, which are unobserved in our data, and focus solely on their positioning within the workflow.  
We define an occupation's empirical fragmentation index (EFI) as follows.
Assuming all AI-exposed steps turn into AI-executed tasks, empirical fragmentation index is the ratio of the number of \emph{potential} tasks in the sense of our theoretical framework to the total number of steps, which is fixed and is measured by the raw count of O*NET tasks in that occupation.
The following example clarifies the definition.  
Consider an occupation with five O*NET tasks.   
If none of the five tasks are exposed to AI, then the number of tasks in the theoretical sense is five, and the EFI equals $5/5 = 1$.  
If instead two consecutive O*NET tasks are exposed to AI (and thus can form an AI chain when all AI-exposed tasks turn into AI-executed tasks), the number of tasks in the model falls to four and the EFI becomes $4/5 = 0.8$.  
Thus, the empirical fragmentation index decreases as more AI-exposed steps can potentially be consolidated into longer AI chains.   

For the calculation of EFI, we treat all E1-exposed tasks as identical for the purpose of potential chaining and assume that all AI-exposed tasks could, in principle, be executed by AI.  
Because only 14\% of tasks in our dataset are E1-exposed, we also consider a broader definition of AI-able tasks when computing the empirical fragmentation.  
Specifically, we expand the set of AI-able tasks that can form AI chains to include both E1- and E2-exposed tasks, which together account for 44\% of all tasks in the dataset.  
This expansion allows us to capture a wider range of tasks that may plausibly be executed by AI and yields a more informative measure of AI-able task fragmentation.  
Accordingly, we construct two versions of the empirical fragmentation index.   
Definition~1 is a stricter measure that permits only E1-exposed tasks to form AI chains.  
Definition~2 allows a broader set of AI-exposed tasks, including both E1- and E2-exposed tasks, to potentially form AI chains and therefore reflects a more lenient mapping from exposure to execution.  

Figure~\ref{fig:fragmentation_index_regression} visualizes the relationship between occupational AI exposure, empirical fragmentation, and realized AI execution.  
\begin{figure}[t!]
\caption{Relationships Between Occupational AI Exposure, Empirical Fragmentation, and AI Execution}
\label{fig:fragmentation_index_regression}

\begin{center}
\begin{subfigure}[b]{\textwidth}
    \centering
    \begin{subfigure}[b]{0.49\textwidth}
        \centering
        \captionsetup{labelformat=empty}
        \caption{(a) Empirical Fragmentation vs.\ AI Exposure}
        \includegraphics[width=\textwidth]{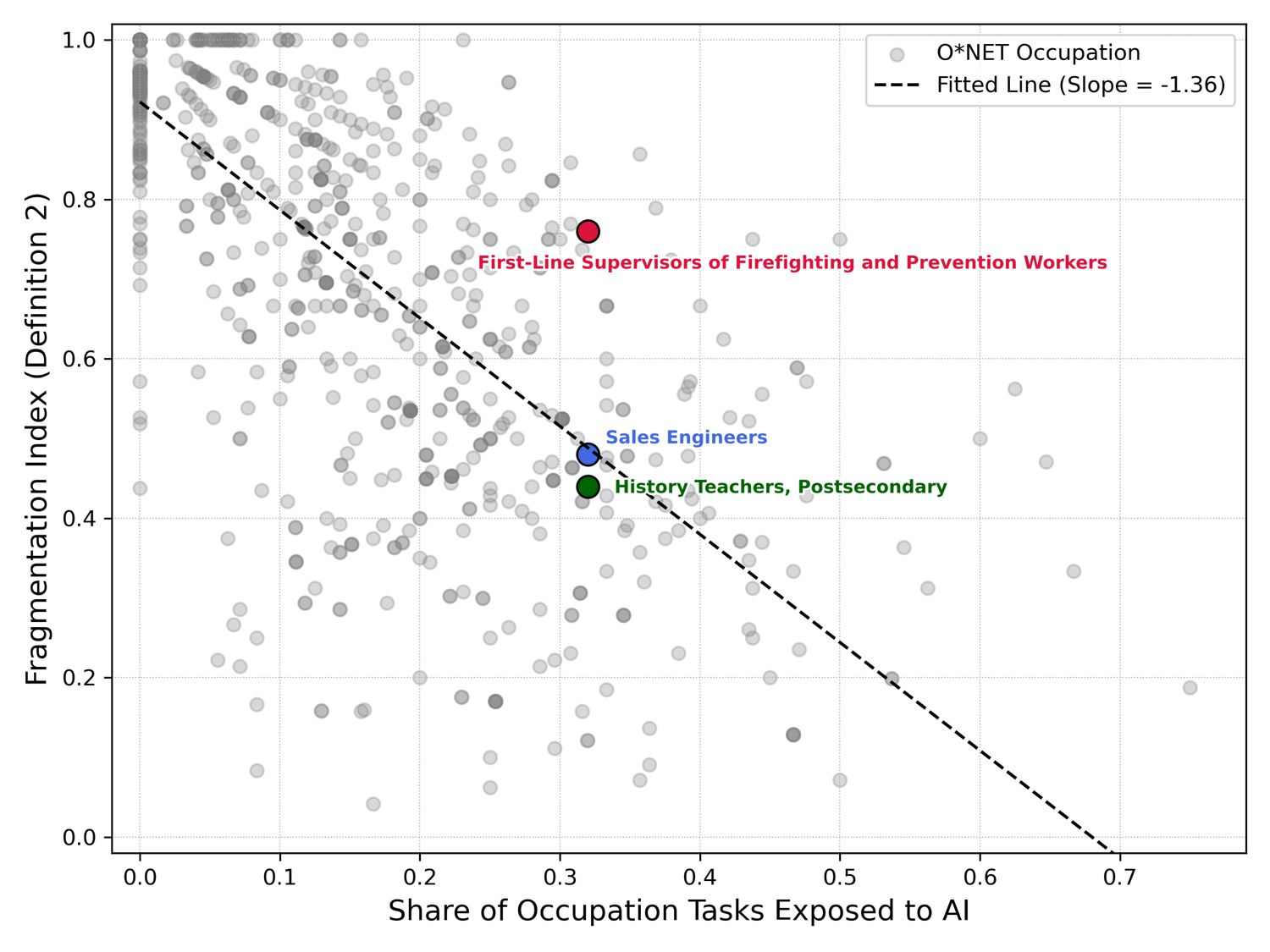}
    \end{subfigure}
    \hfill
    \begin{subfigure}[b]{0.49\textwidth}
        \centering
        \captionsetup{labelformat=empty}
        \caption{(b) AI Execution vs.\ AI Exposure}
        \includegraphics[width=\textwidth]{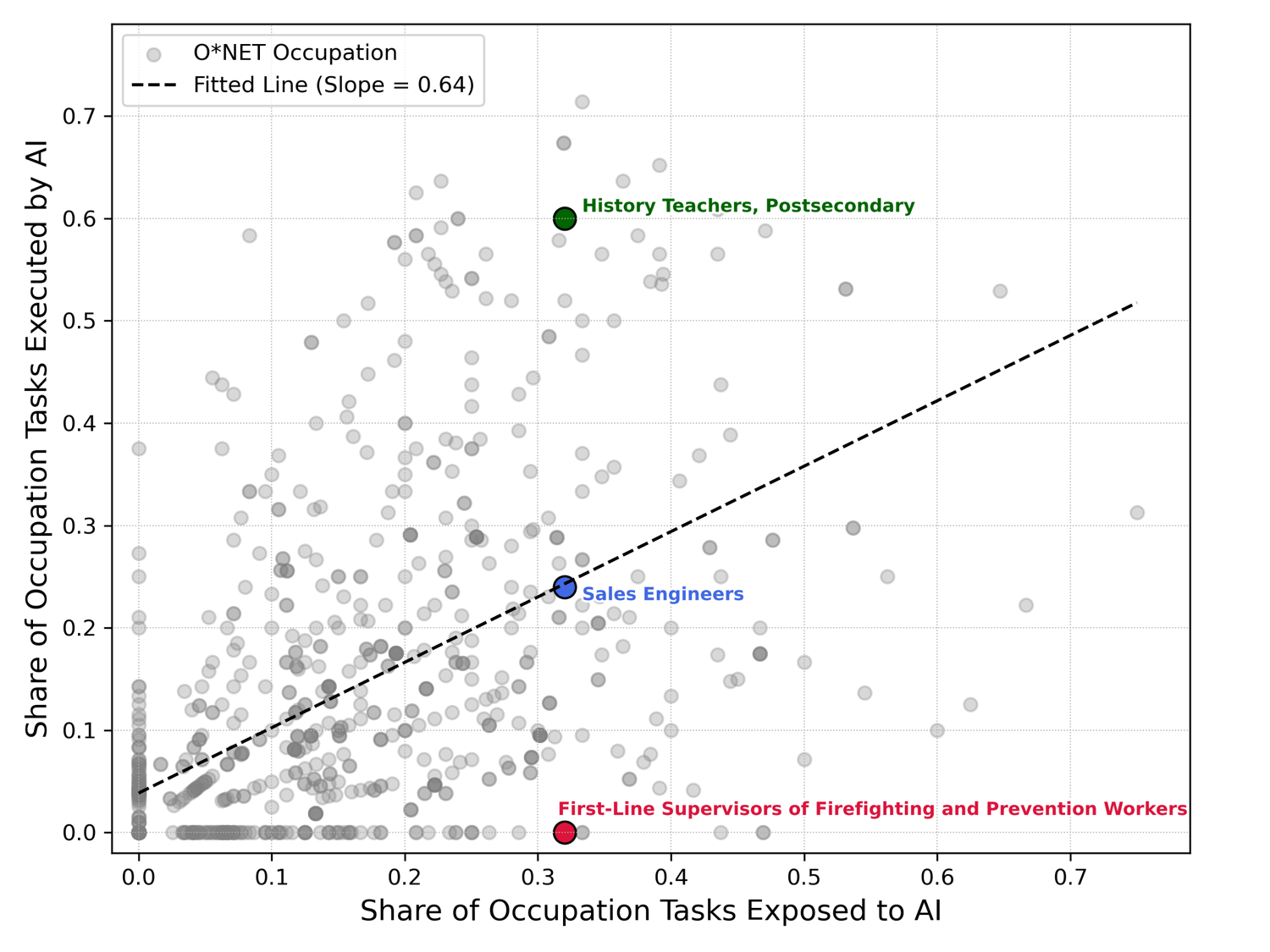}
    \end{subfigure}
\end{subfigure}

\end{center}
\footnotesize{
\emph{Notes:} These graphs show how the fragmentation of AI-able tasks within an occupation's workflow shapes the conversion of AI-exposed tasks into AI-executed tasks. 
Each bin represents a single O*NET occupation in both panels.  
Panel~(a) plots version~2 of the empirical fragmentation index against the share of AI-exposed tasks, whereas Panel~(b) plots the share of realized AI-executed tasks against the share of AI-exposed tasks.  
All three highlighted occupations have 25 tasks in their production sequence and have 32\% of their tasks exposed to AI.  
}
\end{figure}
Panel~(a) plots the second definition of the empirical fragmentation index against the share of occupation tasks exposed to AI, while Panel~(b) plots the realized share of tasks executed by AI against occupation-level AI exposure.  
The bins for three selected occupations are highlighted: \textit{First-Line Supervisors of Firefighting and Prevention Workers, Sales Engineers,} and \textit{History Teachers}.  
We focus on these three occupations for illustrative purposes, as they all contain the same number of tasks in their workflow, namely 25, and have 32\% of their tasks exposed to AI.

Despite having identical exposure shares, these occupations differ in the dispersion of their AI-exposed tasks, as shown in Panel~(a).  
In line with the model's prediction, these differences in dispersion translate into differences in realized AI execution outcomes, as shown in Panel~(b).  
Specifically, the more clustered AI-exposed tasks are within an occupation (corresponding to a lower EFI), the larger is the share of its tasks executed by AI.  

To generalize the patterns illustrated in the figure, we estimate regression~\eqref{eq:fragmentation_index_regression} using the two exposure-based empirical fragmentation measures.  
The results are reported in Table~\ref{tab:fragmentation_index_regression_exposure}.  
\begin{table}[t!]
    \caption{Role of AI-able Task Fragmentation in Determining AI Execution Outcomes}
    \label{tab:fragmentation_index_regression_exposure}
    \begin{center}
    \setlength{\tabcolsep}{6.5pt} % Adjust column padding slightly if needed
\begin{tabular}{lcccccc}
\toprule
 & \multicolumn{3}{c}{EFI Definition 1} & \multicolumn{3}{c}{EFI Definition 2} \\
 \cmidrule(lr){2-4} \cmidrule(lr){5-7}
 & (1) & (2) & (3) & (4) & (5) & (6) \\
\midrule
\addlinespace
AI Exposure & 0.66*** & 0.20** & 0.18** & 0.33*** & 0.14** & 0.12** \\
 & (0.08) & (0.08) & (0.08) & (0.06) & (0.06) & (0.06) \\
\addlinespace
Empirical Fragmentation Index & 0.04 & -0.15 & -0.01 & -0.23*** & -0.21*** & -0.14*** \\
 & (0.20) & (0.16) & (0.16) & (0.03) & (0.04) & (0.04) \\
\hline\\[-1.25em]
R-squared & 0.31 & 0.63 & 0.73 & 0.37 & 0.66 & 0.74 \\
Adj. R-squared & 0.31 & 0.62 & 0.70 & 0.37 & 0.65 & 0.71 \\
Observations & 872 & 872 & 872 & 872 & 872 & 872 \\
SOC Group Fixed Effect & & Major & Minor & & Major & Minor \\
\bottomrule
\multicolumn{7}{l}{\footnotesize Clustered standard errors in parentheses. $^{*}:p<0.1$, $^{**}:p<0.05$, $^{***}:p<0.01$.} \\
\end{tabular}
    \end{center}
    \footnotesize{\emph{Notes:}} This table shows that occupation-level AI execution outcomes depend not only on the overall exposure of tasks to AI, but also on where AI-able steps are situated in the production sequence.
    The table reports results from estimating regression~\eqref{eq:fragmentation_index_regression}.
    The dependent variable in all specifications is the share of steps executed by AI in an occupation ($\text{ai\_execution}$).
    The variable ``AI Exposure'' denotes the share of AI-exposed (E1) steps in the occupation, while the ``Empirical Fragmentation Index'' captures how dispersed AI-able steps are across the occupation's workflow.
    Definition~1 measures fragmentation based on potential AI chains formed exclusively by E1-exposed tasks, whereas Definition~2 measures fragmentation based on potential chains formed by both E1- and E2-exposed tasks.
\end{table}
For each definition, we estimate three specifications: without additional controls, with SOC major group fixed effects, and with SOC minor group fixed effects.  
We include SOC fixed effects to ensure that the estimated relationships are not driven by a narrow subset of occupation families.  
Across specifications, we find that a 10 percentage point increase in occupation-level AI exposure is associated with a 1.2 to 6.6 percentage point increase in the share of occupation tasks executed by AI.  
Consistent with the model's prediction, the coefficient on the empirical fragmentation index is negative in nearly all specifications, indicating that greater dispersion of AI-able steps is associated with lower realized AI execution at the job level after controlling for AI exposure.  
The only exception arises when we use the stricter fragmentation measure in Definition~1 without SOC controls in column~(1).  
Under this conservative measurement of the hypothetical exposure-to-execution mapping, the fragmentation estimates are imprecise and statistically insignificant because the limited number of E1-exposed tasks sharply restricts the set of steps that can potentially form an AI chain.  
When we instead use the more permissive measure in Definition~2, which allows a broader set of AI-exposed tasks to form potential chains when computing fragmentation, the estimated coefficients on EFI are negative and statistically significant at the 1\% level across all specifications.  

To ensure that our results are not driven by how fragmentation is measured using the AI exposure classifications of \citet{eloundou2023gpts}, we repeat the analysis as a robustness check using two additional measures of task fragmentation constructed directly from realized AI execution outcomes, rather than from AI-exposed tasks.  
The corresponding regression results are reported in Appendix Table~\ref{tab:fragmentation_index_regression_execution}, with complementary visual evidence provided in Appendix Figure~\ref{fig:fragmentation_index_regression_execution}.

By construction, the execution-based EFI measures are mechanically related to the share of tasks executed by AI.  
Holding the task sequence fixed, an additional AI-executed task necessarily increases the occupation's AI execution share while weakly reducing its EFI.\footnote{The reduction is weak because adding an isolated AI-executed task does not change fragmentation, whereas adding a task adjacent to an existing AI-executed task extends an AI chain and reduces fragmentation.}  
As a result, these measures mechanically amplify the negative relationship between AI execution and fragmentation in the regression, as reflected in the estimated coefficients on the EFI reported in Appendix Table~\ref{tab:fragmentation_index_regression_execution}.  
Accordingly, they are not intended to provide an independent characterization of the relationship between task fragmentation and AI execution.  

Instead, these measures serve as a diagnostic of the \emph{form} that AI adoption takes within occupations.  
The model predicts not only that greater clustering of AI-able tasks facilitates higher overall AI execution, but also that realized AI adoption should occur disproportionately through the extension of contiguous AI chains rather than through scattered, isolated task substitutions.  
If AI execution were independent across steps, then conditional on AI exposure, increases in execution would primarily appear as isolated events and would not substantially reduce fragmentation.  
The strong negative relationship observed between execution-based EFI and AI execution therefore indicates that AI adoption follows the clustered, chain-based pattern implied by the model.
In this sense, the execution-based EFI measures function as a falsification-style check on the model's underlying mechanism rather than as a separate test of its predictions.  

Taken together, the evidence from both exposure-based (ex-ante) and execution-based (ex-post) empirical fragmentation measures points to a common pattern.  
Occupations with greater AI exposure exhibit higher levels of realized AI execution, but the extent to which exposure translates into execution depends systematically on how AI-able steps are organized within the workflow.  
When AI-able steps are more clustered, realized AI execution is higher and occurs through contiguous chains rather than as isolated step substitutions.

\subsection{Prediction \#3: Adjacency to AI Tasks Increases Likelihood of AI Execution}
\label{sec:DWA_prediction}

The final prediction of the model that we test is that when the same step appears in two occupations, in one sitting between strong AI performers and in the other located between manual ones, it is more likely to be executed by AI in the first occupation.
Intuitively, when a step is surrounded by AI-executed steps, there are local gains from grouping it with its neighbors as part of a long AI chain that are absent when the step is situated next to manual steps.
For example, step X is more likely to be executed by AI in the first occupation below than in the second because, all else equal, the local gains from forming an AI chain containing all three steps are larger than those from performing only step X via AI in the second occupation: \\

\begin{adjustbox}{center}
\begin{tikzpicture}[
    node distance=1cm,
    every node/.style={rectangle, rounded corners, draw, minimum width=1.9cm, minimum height=0.9cm, align=center},
    human/.style={fill=olive!20},
    ai/.style={fill=blue!20},
    xnode/.style={fill=teal!20},
    >=latex
]

% ======================
% OCCUPATION 1
% ======================

\node[draw=none, anchor=west] (L1) at (0,0) {\textbf{Occupation 1:}};

% Place chain on its own clean baseline
\begin{scope}[xshift=3.5cm, yshift=0cm]

\node[draw=none] (Dots1L) at (0,0) {\ \ $\cdots \cdots$};
\node[ai, right=1 of Dots1L] (A1a) {AI};
\node[xnode, right=1 of A1a] (X1) {X};
\node[ai, right=1 of X1] (A1b) {AI};
\node[draw=none, right=1 of A1b] (Dots1R) {$\cdots \cdots$};

\draw[->, thick] (Dots1L.east) -- (A1a);
\draw[->, thick] (A1a) -- (X1);
\draw[->, thick] (X1) -- (A1b);
\draw[->, thick] (A1b) -- (Dots1R.west);

\end{scope}

% ======================
% OCCUPATION 2
% ======================

\node[draw=none, anchor=west] (L2) at (0,-1.4) {\textbf{Occupation 2:}};

% Place chain on its own line (same centering)
\begin{scope}[xshift=3.5cm, yshift=-1.4cm]

\node[draw=none] (Dots2L) at (0,0) {\ \ $\cdots \cdots$};
\node[human, right=1 of Dots2L] (H2a) {Manual};
\node[xnode, right=1 of H2a] (X2) {X};
\node[human, right=1 of X2] (H2b) {Manual};
\node[draw=none, right=1 of H2b] (Dots2R) {$\cdots \cdots$};

\draw[->, thick] (Dots2L.east) -- (H2a);
\draw[->, thick] (H2a) -- (X2);
\draw[->, thick] (X2) -- (H2b);
\draw[->, thick] (H2b) -- (Dots2R.west);

\end{scope}

\end{tikzpicture}
\end{adjustbox}

Testing this prediction requires identifying steps that appear across different occupations.  
Because O*NET tasks are occupation-specific, no single task is shared across occupations in the data.  
To work around this, we rely on O*NET's broader categories of Detailed Work Activities (DWAs), which group conceptually similar tasks across occupations.  
In total, the roughly 18{,}000 O*NET tasks are mapped into 2{,}067 DWAs.  

In our main sample, we drop tasks that are mapped to multiple DWAs.\footnote{
A little over 20\% of tasks are associated with more than one DWA in the O*NET dataset.
}
We further restrict attention to DWAs that contain tasks appearing in more than one occupation.  
Together, these restrictions reduce the sample to 10{,}708 tasks spread across 1{,}748 DWAs.  
We treat the remaining tasks as instances of the same step appearing across different occupations and estimate the following regression:
\begin{equation}
    \Pr\!\left( \text{is\_ai}_{t} = 1 \mid X_t \right)
    =
    \Lambda\!\Big(
        \beta_0 
        + \beta_1 \, \text{prev2\_is\_ai}_{t}
        + \beta_2 \, \text{prev\_is\_ai}_{t}
        + \beta_3 \, \text{next\_is\_ai}_{t}
        + \beta_4 \, \text{next2\_is\_ai}_{t}
        + \varepsilon_t
    \Big),
    \label{eq:DWA_regression_ai}
\end{equation}
where $\Lambda$ is the logistic CDF and \emph{$\text{is\_ai}_{t}$} indicates whether step $t$ is executed by AI.
The variables \emph{$\text{prev2\_is\_ai}_{t}$} and \emph{$\text{prev\_is\_ai}_{t}$} equal 1 if the step two positions before or immediately before step $t$ is AI-executed, respectively, and \emph{$\text{next\_is\_ai}_{t}$} and \emph{$\text{next2\_is\_ai}_{t}$} are defined analogously for the steps immediately after and two positions after step $t$.
In the implementation we also control for the step $t$'s AI exposure status and number of steps in the occupation it appears in.

Table~\ref{tab:DWA_regression_aiExecution_mainSample} reports average marginal effects (AMEs) from estimating equation~\eqref{eq:DWA_regression_ai} under a series of increasingly restrictive specifications.  
\begin{table}[t!]
\begin{center}
\caption{Effect of Neighboring Tasks' AI Execution Status on Task's AI Execution Likelihood}
\label{tab:DWA_regression_aiExecution_mainSample}
\resizebox{\textwidth}{!}{
% --- LaTeX Table for full_0 ---
\begin{tabular}{lcccccc}
\toprule
Specification & (1) & (2) & (3) & (4) & (5) & (6) \\
\midrule
($t-2$) Task AI & 0.07*** & 0.01 & 0.00 & -0.01 & 0.06* & -0.01 \\
 & (0.02) & (0.01) & (0.01) & (0.03) & (0.03) & (0.03) \\
\addlinespace
($t-1$) Task AI & 0.12*** & 0.06*** & 0.05*** & 0.05** & 0.13*** & 0.05** \\
 & (0.02) & (0.01) & (0.01) & (0.02) & (0.03) & (0.02) \\
\addlinespace
($t+1$) Task AI & 0.12*** & 0.06*** & 0.05*** & 0.04** & 0.10*** & 0.04** \\
 & (0.02) & (0.01) & (0.01) & (0.02) & (0.02) & (0.02) \\
\addlinespace
($t+2$) Task AI & 0.05*** & 0.00 & -0.01 & 0.00 & 0.04 & 0.00 \\
 & (0.02) & (0.01) & (0.01) & (0.02) & (0.03) & (0.02) \\
\addlinespace
\midrule
Pseudo $R^2$ & 0.112 & 0.173 & 0.171 & 0.196 & 0.030 & 0.197 \\
Observations & 10,708 & 10,708 & 9,861 & 4,096 & 4,096 & 4,096 \\
SOC Group FE &  & Major & Minor &  &  &  \\
DWA FE &  &  &  & Yes &  & Yes \\
NumTasks in DWA-Occupation Control &  &  &  &  & Yes & Yes \\
\bottomrule
\footnotesize{Clustered standard errors in parentheses. *** p$<$0.01, ** p$<$0.05, * p$<$0.1}
\end{tabular}

}
\end{center}
\footnotesize{\emph{Notes:}} 
This table reports average marginal effects from estimating regression~\eqref{eq:DWA_regression_ai}.  
The dependent variable in all specifications is an indicator for whether task $t$ is AI-executed ($\text{is\_ai}_{t}$).  
The sample is restricted to similar tasks across occupations, identified as tasks belonging to the same DWA in the O*NET dataset.  
All specifications control for the AI exposure status of task $t$ and for the total number of tasks in task $t$'s occupation.
Standard errors are bootstrapped using $B=200$ replications and clustered at the DWA level.  
\end{table}
Column~(1) presents the baseline specification without additional controls.  
In this specification, having a task's immediate neighbors executed by AI is associated with a large and statistically significant increase in the probability that the task itself is AI-executed.  
More distant neighbors also exhibit positive effects, but their magnitudes are substantially smaller than those of the immediate neighbors.  

Columns~(2) and~(3) add SOC major group and SOC minor group fixed effects, respectively.  
Once we control for occupation families, the estimated effect of immediate neighbors attenuates, and the coefficients on more distant neighbors become small and statistically insignificant.  
This pattern suggests that part of the baseline estimates reflect systematic differences across occupation families, while the effect of immediate neighbors remains present even within families.  

In column~(4), we include DWA fixed effects so that the variation comes from within detailed work activities, rather than being driven by systematic differences across DWAs that may exhibit distinct AI-ability characteristics.  
Under this specification, the effect of immediate neighbors attenuates relative to the baseline, and the influence of more distant neighbors effectively disappears.  
This indicates that a meaningful share of the baseline relationship operates at the level of the DWA, and that once comparisons are restricted to tasks within the same detailed work activity, the estimated effects are reduced but remain present for immediate neighbors.\footnote{
We do not include simultaneous SOC occupation family and DWA fixed effects as doing so severely restricts the sample and we lack enough variation required for properly estimating the coefficients in those specifications.
} 

In a non-trivial number of cases, a DWA contains multiple tasks within the same occupation in the O*NET dataset.  
While our main sample retains the full set of such tasks, column~(5) controls for the number of tasks associated with a task's DWA within the same occupation.  
This control is intended to mitigate concerns that tasks belonging to the same DWA and occupation may appear close together in the workflow and mechanically inflate the estimated impact of neighboring tasks on a task's likelihood of AI execution.
The resulting estimates closely resemble those from the baseline, although the effects of more distant neighbors are weaker and less precisely estimated.  

Finally, column~(6) augments the last specification by including DWA fixed effects.  
Under this most restrictive specification, the estimated pattern mirrors that observed in columns~(2) through~(4), with small positive effects of immediate neighbors and negligible effects of more distant tasks.  

Taken together, the estimates indicate that having a task's immediate neighbors executed by AI increases the probability that the task itself is AI-executed, whereas the mode of execution of more distant neighbors has little to no effect, especially after controlling for a range of relevant factors. 
This finding aligns with the relatively short AI chains observed in the data.  
Recall that the average AI chain length in our sample is only 1.45, implying that tasks two positions away rarely belong to the same chain as the focal task.  
Perhaps as AI quality improves longer chains may become more prevalent, in which case the execution status of more distant neighbors could begin to meaningfully influence a task's likelihood of AI execution as well.

Because we do not have a benchmark for what constitutes a large versus small effect in our context, interpreting the magnitudes in a vacuum is difficult.
To better understand these results and to check whether the observed patterns could have arisen by chance, we conduct a robustness exercise in which we re-estimate \eqref{eq:DWA_regression_ai} for 1,000 placebo datasets constructed by randomizing the positions of tasks within occupations.
The results are shown in Figure~\ref{fig:DWA_regression_aiExecution_mainSample}.
\begin{figure}[p]
  \begin{center}
  \caption{Effect of Neighboring Tasks' AI Execution Status on Task's AI Execution Likelihood} 
  \label{fig:DWA_regression_aiExecution_mainSample}
  \begin{subfigure}[b]{\textwidth}
    \captionsetup{labelformat=empty}
    \caption{Panel (A): No Fixed Effects}
    \includegraphics[width=\textwidth]{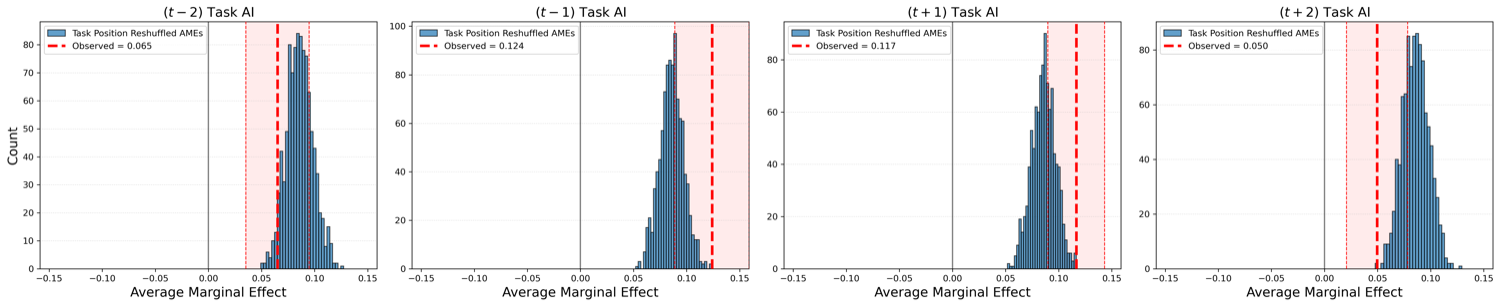}
  \end{subfigure}
  
  \vspace{1em}
  
  \begin{subfigure}[b]{\textwidth}
    \captionsetup{labelformat=empty}
    \caption{Panel (B): SOC Major Group Fixed Effects}
    \includegraphics[width=\textwidth]{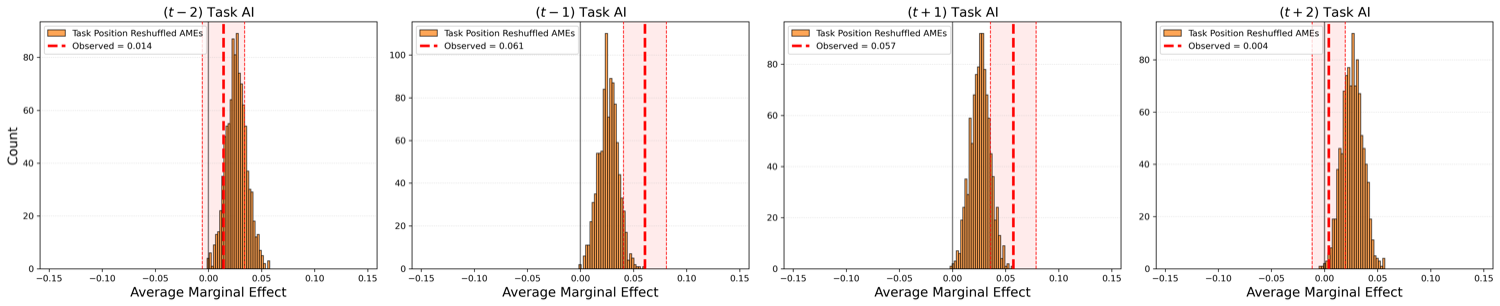}
  \end{subfigure}
  
  \vspace{1em}
  
  \begin{subfigure}[b]{\textwidth}
    \captionsetup{labelformat=empty}
    \caption{Panel (C): SOC Minor Group Fixed Effects}
    \includegraphics[width=\textwidth]{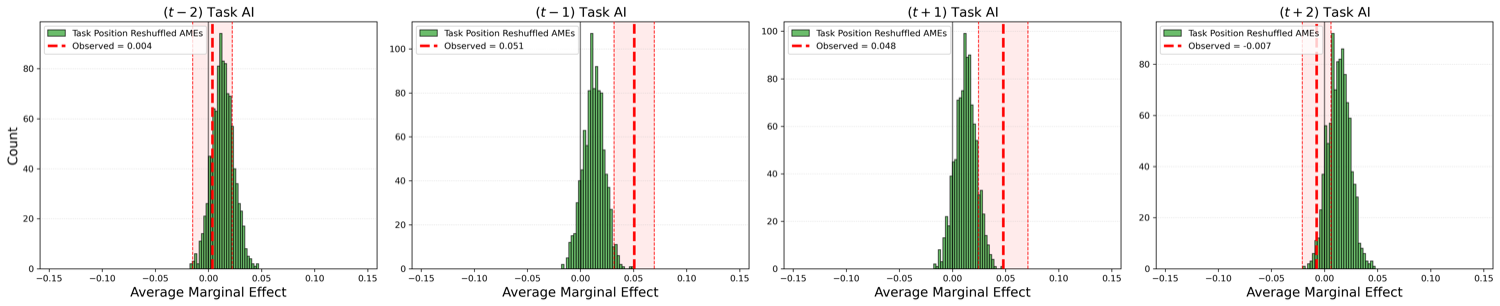}
  \end{subfigure}

  \vspace{1em}

  \begin{subfigure}[b]{\textwidth}
    \captionsetup{labelformat=empty}
    \caption{Panel (D): Detailed Work Activity Fixed Effects}
    \includegraphics[width=\textwidth]{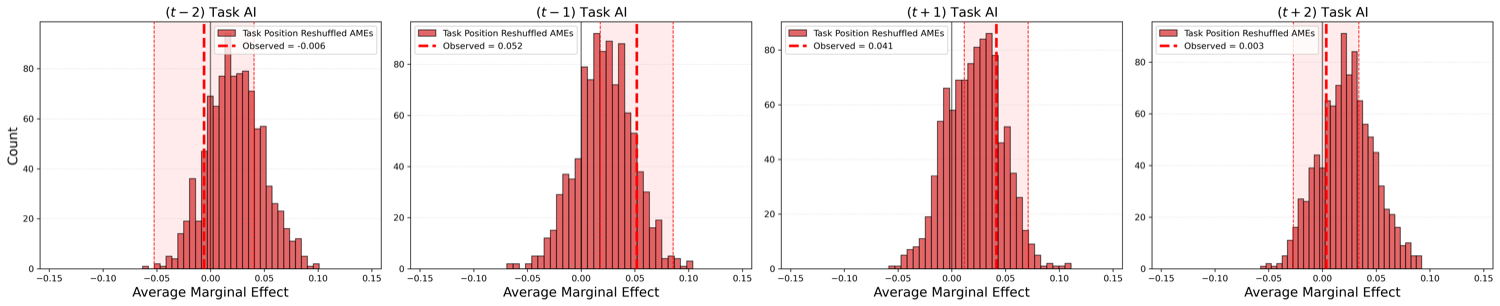}
  \end{subfigure}
  \end{center}
  \footnotesize{\emph{Notes:} These graphs show that, among similar tasks appearing in multiple occupations, those whose immediate neighbors are AI-executed exhibit higher probabilities of being executed by AI, while more distant neighbors have little to no effect on a task's AI execution likelihood.
  The red dashed line in each graph indicates the observed average marginal effect of the corresponding variable in the original dataset reported in Table~\ref{tab:DWA_regression_aiExecution_mainSample}.
  The red shaded area marks the 90\% confidence interval around the observed point estimates in the main sample.
  The histograms plot the distribution of average marginal effects across 1{,}000 randomized reshuffles of task positions within occupations.
  Panel~(A) reports results from specification~\eqref{eq:DWA_regression_ai}, while Panels~(B), (C), and (D) augment the baseline specification with SOC major group, SOC minor group, and DWA fixed effects, respectively.
}
\end{figure}
In each graph, the red dashed line marks the AME estimated using the actual data (i.e., the values reported in Table~\ref{tab:DWA_regression_aiExecution_mainSample}) along with the 90\% confidence interval around it.
The blue bars in Panel~(A) show the histogram of AMEs obtained from reshuffled datasets estimated using the baseline regression specification.  
Panels~(B), (C), and (D) report the corresponding histograms when SOC major group, SOC minor group, and DWA fixed effects are included, respectively.

Several patterns emerge from the graphs.
First, across all four neighbors within each specification, the randomized distributions are centered at similar means and have similar shapes.
This indicates that if ordering of tasks in the workflow had no effect on AI execution outcomes, each neighbor's mode of execution would contribute similarly, on average, to the AI execution probability of the task in question.

Second, in all panels, the observed marginal effects for immediate neighbors (middle columns) in the actual data lie to the right of the randomized distributions, while the actual effects for farther neighbors (side columns) tend to lie to the left.
This suggests that immediate neighbors exert a stronger positive influence on a task's AI execution likelihood than would be expected under random assignment, while farther neighbors exert less influence once the contribution of immediate neighbors is accounted for.

Third, consistent with the results in Table~\ref{tab:DWA_regression_aiExecution_mainSample}, once SOC occupation group and DWA fixed effects are included, the estimated effects of more distant neighbors collapse to essentially zero in Panels~(B) through~(D), whereas the effects of immediate neighbors remain positive and distinct from the placebo distributions.
This is consistent with the interpretation that due to short length of AI chains farther neighbors do not contribute to the execution mode of the task in the middle.

Next, we conduct two robustness exercises. 
The first addresses a concern regarding the selection of similar tasks across occupations.  
Although DWAs are intended to group conceptually similar tasks, one might argue that even within a given DWA tasks may still somewhat differ in their objectives, execution nature, or required skills.  
To alleviate concerns about the set of DWA tasks not sharing similar profiles, we repeat the analysis using a more restrictive definition of task similarity within DWAs. 
Specifically, for each DWA, we ask GPT-5-mini to select the subset of tasks that are most similar in terms of skill requirements and execution complexity.\footnote{The prompt used to identify similar DWA tasks is provided in Appendix~\ref{app:prompts}.}  
This procedure yields, for each DWA, a set of highly comparable tasks that appear across different occupational contexts.\footnote{Panel~(A) of Appendix Table~\ref{tab:DWA_tasks} presents example DWAs and their associated O*NET tasks across occupations, along with execution modes.  
Panel~(B) reports the subset of tasks retained after applying the similarity filter.}  
In this robustness test we treat these tasks as instances of the same underlying step appearing across occupations and re-estimate regression~\eqref{eq:DWA_regression_ai} on this restricted sample.  
The resulting estimates, reported in Appendix Table~\ref{tab:DWA_regression_aiExecution_GPTsample} and Appendix Figure~\ref{fig:DWA_regression_aiExecution_GPTsample}, closely mirror the baseline patterns given above in both sign and magnitude.  

The second robustness exercise examines a stricter notion of AI execution that is directly implied by our model, under which automated steps can only precede other AI-executed steps, whether automated or augmented.  
Accordingly, we repeat the analysis focusing specifically on AI automation rather than AI execution more broadly.  
In this scenario, the dependent variable in equation~\eqref{eq:DWA_regression_ai} is an indicator for whether step $t$ is AI-automated ($\text{is\_automated}_{t}$) instead of AI-executed ($\text{is\_ai}_{t}$).  
The corresponding estimates for the main sample are reported in Appendix Table~\ref{tab:DWA_regression_aiAutomation_mainSample} and Appendix Figure~\ref{fig:DWA_regression_aiAutomation_mainSample}, while results for the GPT-filtered sample are reported in Appendix Table~\ref{tab:DWA_regression_aiAutomation_GPTsample} and Appendix Figure~\ref{fig:DWA_regression_aiAutomation_GPTsample}.\footnote{
Because relatively few tasks in the data are AI-automated, and because we further restrict attention to highly similar tasks across occupations, we retain tasks associated with multiple DWAs in the original O*NET dataset and randomly assign each such task to one of its associated DWAs for these exercises. 
}   
Under this stricter specification, the estimated effects are smaller in magnitude and often statistically insignificant, reflecting the more limited prevalence of AI automation in the data.  
Nevertheless, the signs and qualitative patterns of the estimates remain consistent with those obtained for AI execution. 
\section{Conclusion}

This paper develops a task-based framework for understanding how advances in AI reshape automation, the division of labor, firms' organizational structure, and production more generally.
By modeling production as a sequence of steps that can be aggregated into tasks and then into jobs, we show how firms optimally allocate work between humans and AI (whether in augmented or fully automated form) when doing so requires trading off gains from specialization against coordination frictions.
A central feature of the model is the possibility of AI chains, in which multiple steps are executed by AI and only the final output is verified by a human.
Chaining can overturn step-level comparative advantage logic in factor assignment and non-linearly amplify the effects of marginal improvements in AI quality.

We characterize short-run AI deployment decisions when job boundaries, and therefore the set of tasks assigned to each worker, are fixed, and the long-run joint optimization of job design and AI deployment strategy.
We show that improvements in AI quality can induce discrete reorganizations of work.
A key implication is that marginal increases in AI quality may generate little or no cost savings until a threshold is crossed, after which the optimal production structure changes discontinuously.
This helps explain why firms invest so heavily in AI capabilities even when short-run returns appear limited, and is consistent with the J-curve pattern of technology adoption \citep{brynjolfsson2021productivity} in which early adoption raises costs before the anticipated gains from reorganization are realized.

We also show how the production functions derived from firm-level cost minimization can be mapped into a CES production function at the aggregate level when firms differ in how they deploy a common, general-purpose AI technology.
Our framework therefore also allows studying the aggregate productivity and labor demand implications of improvements in AI quality through a micro-foundation for how individual firms respond to such advances.

We complement the theoretical analysis with empirical evidence consistent with three core mechanisms implied by the model.
Using a task-level dataset of AI exposure and AI execution outcomes, we find that AI-executed steps tend to appear in contiguous AI chains, that occupations with more fragmented AI-exposed steps in their workflow exhibit lower realized AI execution conditional on exposure levels, and that when comparable steps appear across occupations those adjacent to AI-executed neighbors are significantly more likely to be executed by AI themselves.
These patterns align with the central economic forces highlighted by the model: the importance of positional relationships among steps, the role of AI-able step dispersion in shaping execution outcomes, and the local complementarities created by AI chains.

\bibliographystyle{aer}
\bibliography{rubin}

\newpage
\appendix
\section{Technical Details \textendash \ Fragmentation Index}
\label{app:FI}

In this appendix we prove Proposition~\ref{prop:fragmentation}, which relates the cost of the optimal short-term AI deployment strategy for a fixed job to the fragmentation index of that job.
Intuitively, we expect a production process in which automatable tasks are clustered together will benefit more from AI deployment than one where steps with high and low levels of automation susceptibility are interspersed.  
This is due to the potential benefits from AI chains.

First recall the definition of fragmentation index for a given step sequence $\mathcal{S} = (s_1, \dotsc, s_m)$.  
As a reminder, we assume for Proposition~\ref{prop:fragmentation} that $\AItime{i} = 1$ for all $s_i$.  
Consider a random process in which each step $s_i$ succeeds independently with probability $q_i$; any step that does not succeed is said to fail.  
Write $F$ for the set of steps that fail, and $\mathcal{C} = \{C_1, \dotsc, C_k\}$ for the random variable representing the collection of maximal connected components of non-failed steps.  
The weight of each $C_j \in \mathcal{C}$ is defined to be $\omega(C_j) = \min\{ 1, \sum_{s_i \in C_j} \manualTime{i} \}$.  
That is, each $C_j$ has weight $1$ unless the sum of the manual time costs for each of its steps is less than $1$ (due to our assumption that AI management costs are normalized to $1$).

We now introduce some notation: for each $s_i$, write $\minTime{i} = \min\left\{\manualTime{i}, \frac{\AItime{i}}{q_i}\right\}$.  That is, $\minTime{i}$ is the minimum cost to implement step $s_i$ individually (whether manually or through AI augmentation).  
Given a realization of $\mathcal{C}$ and $F$, we define the realized fragmentation to be 
\[ \sum_{s_i \in F}\minTime{i} + \sum_{C_j \in \mathcal{C}} \omega(C_j).\]
The \emph{fragmentation index} is defined to be the expected value of the realized fragmentation.

We now fix the sequence $\mathcal{S} = (s_1, \dotsc, s_m)$, write $FI$ for the fragmentation index of $\mathcal{S}$, and let $OPT$ be the cost of the optimal (i.e., time-minimizing) task structure for those steps.  

We begin by showing that $FI$ is not much larger than $OPT$.

\begin{proposition}
\label{prop:FI.upper.bound}
    $FI \leq \tfrac{5}{4}OPT$.
\end{proposition}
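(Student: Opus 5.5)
The plan is to fix an optimal task structure $\T^*$ for $\mathcal{S}$, so that $OPT=\sum_{T_b\in\T^*}\timecost{b}$. I would then bound the realized fragmentation \emph{task by task}: for every realization, the realized fragmentation should be at most the sum, over tasks $T_b\in\T^*$, of a ``local'' realized fragmentation computed on the steps of $T_b$ alone. Finally I would show that the expectation of each local term is at most $\tfrac54\timecost{b}$.

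\textbf{Step 1 (reduction to tasks).} The weight $\omega(C)=\min\{1,\sum_{s_i\in C}\manualTime{i}\}$ is subadditive: for disjoint $A,B$ we have $\omega(A\cup B)\le\omega(A)+\omega(B)$. So if a maximal component $C_j$ straddles the boundary between consecutive tasks of $\T^*$, cutting it at the boundary can only increase the sum of component weights. Failed steps contribute $\minTime{i}$ regardless of the partition. Hence the realized fragmentation is pointwise at most $\sum_b \Phi_b$, where $\Phi_b$ is the realized fragmentation of the subsequence $T_b$ viewed in isolation. Taking expectations gives $FI\le\sum_b\mathbb{E}[\Phi_b]$, and it suffices to prove $\mathbb{E}[\Phi_b]\le\tfrac54\timecost{b}$ for each task.

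\textbf{Step 2 (singleton tasks).} First take a manual task $(s_i)$. Here
\[
\mathbb{E}[\Phi_b]=q_i\min\{1,\manualTime{i}\}+(1-q_i)\minTime{i}\le\manualTime{i},
\]
since $\minTime{i}\le\manualTime{i}$. Next take an augmented singleton with cost $1/q_i$. Then $\mathbb{E}[\Phi_b]\le q_i+(1-q_i)/q_i$, using $\minTime{i}\le 1/q_i$ and $\omega\le 1$. The inequality $q+(1-q)/q\le\tfrac{5}{4q}$ is equivalent to $q^2-q+\tfrac14\ge0$, i.e.\ $(q-\tfrac12)^2\ge0$. This shows the constant $\tfrac54$ is tight at $q_i=\tfrac12$.

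\textbf{Step 3 (AI chains; the main obstacle).} Consider a chain $(s_\ell,\dots,s_r)$ with $Q=\prod_{i=\ell}^r q_i$ and cost $1/Q$ (recall $\AItime{r}=1$). Bound each failed step's contribution by $\minTime{i}\le 1/q_i$ and each component's weight by $1$. Writing the expected number of components as the expected number of ``component starts'', we get
\[
\mathbb{E}[\Phi_b]\le\sum_{i=\ell}^r\frac{1-q_i}{q_i}+q_\ell+\sum_{i=\ell+1}^r q_i(1-q_{i-1}).
\]
The crude estimate ``components $\le$ failures $+1$'' is not enough, since adding a step with $q$ near $1$ to a chain with $Q$ near $1$ breaks a naive induction. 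The key inequality to establish is that the right-hand side is at most $\tfrac{5}{4Q}$. I would prove this by induction on the chain length, appending one step with probability $q$ to a chain of product $Q$. The increment of the right-hand side is $(1-q)/q+q(1-q_{\text{prev}})$, and I need it to be at most $\tfrac{5}{4Q}\cdot\tfrac{1-q}{q}$. The $q(1-q_{\text{prev}})$ term is the delicate one, so I would strengthen the inductive hypothesis to keep track of the last step's probability.

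If that direct route fails, I would try a fallback: optimality of $\T^*$ lets us assume the chain beats splitting off its last step. That gives inequalities of the form $1/Q\le 1/Q'+1/q_r$, which cap how small any individual $q_i$ can be relative to $Q$. Those bounds could then be fed into the estimate. Summing the resulting per-task bounds over $\T^*$ yields $FI\le\tfrac54 OPT$.
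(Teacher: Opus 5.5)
Your Steps 1 and 2 are sound; subadditivity of $\omega$ is a clean way to justify the per-task accounting that the paper uses implicitly. However, Step 3 is the core of the proposition, and you leave it as a plan that rests on a misdiagnosis. The estimate ``components $\le$ failures $+1$'' \emph{is} enough. What fails is inducting directly on the target $\frac{5}{4Q}$; you should induct on a sharper quantity and compare with $\frac{5}{4Q}$ only at the end.

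The paper charges each failed step of a chain its own cost $t^*_i \le 1/q_i$ plus at most $1$ for a new component. So the chain contributes at most $1+\sum_i (1-q_i)(1+1/q_i) = 1 + \sum_i g(q_i)$, where $g(x) = 1/x - x$. The missing idea is that $g$ is superadditive under multiplication on $(0,1]$:
\[
g(xy) - g(x) - g(y) = (1-x)(1-y)\Bigl(\frac{1}{xy}-1\Bigr) \ge 0.
\]
Hence $\sum_i g(q_i) \le g(Q)$, and the chain contributes at most $1 + 1/Q - Q$. Only now compare with the cost: $Q\,(1 + 1/Q - Q) = 1 + Q - Q^2 \le \frac54$, with equality at $Q=\frac12$.

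Your finer bound (counting component starts) is dominated by this one, since $q_\ell \le 1$ and $q_i(1-q_{i-1}) \le 1-q_{i-1}$. So your ``key inequality'' follows in two lines. You need neither a strengthened induction hypothesis nor the fallback appeal to optimality of $\mathcal{T}^*$. In fact the argument gives $FI \le \frac54$ times the cost of \emph{any} task sequence.

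There is also an algebra slip in Step 2. The inequality $q + (1-q)/q \le \frac{5}{4q}$ is equivalent to $q^2 - q - \frac14 \le 0$, not to $(q-\frac12)^2 \ge 0$. It still holds on $(0,1]$, with room to spare, since $q\bigl(q + (1-q)/q\bigr) = 1 - q + q^2 \le 1$. So singletons never attain $\frac54$, and your tightness claim at $q_i = \frac12$ is wrong. The inequality $(q-\frac12)^2\ge 0$ is really the last step $1 + Q - Q^2 \le \frac54$ of the chain bound. Tightness comes from chains such as the paper's three-step example with success probabilities $1,\frac12,1$.
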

\begin{proof}
Fix an arbitrary task sequence $\T = (T_1, \dotsc, T_n)$.  
Partition this into two sets of tasks: $\T_{\manualLetter}$ containing all (singleton) tasks that are completed manually, and $\T_{\AIletter}$ containing all other tasks (consisting of AI-augmented singletons and AI chains).  
We first claim that 
\[ FI \leq \sum_{T_j \in \T_{\AIletter}}\left(1 + \sum_{s_i \in T_j}(1-\alpha^{d_i})(1+\minTime{i})\right) + \sum_{(s_i) \in \T_{\manualLetter}}\manualTime{i}. \]

To see why, first note that for any human-executed task $(s_i) \in \T_{\manualLetter}$, either the task fails in the realization of $F$ (in which case it contributes $\minTime{i} \leq \manualTime{i}$ to $FI$), or it succeeds (in which case it appears in some connected component $C_j$, contributing at most $\manualTime{i}$ to the component's weight $\omega(C_j)$).

Next consider the AI-assisted tasks.  
For any $T_j \in \T_{\AIletter}$, consider the set of steps in $T_j$ that fail in any given realization of $F$. 
If no steps in $T_j$ fail, then $T_j$ collectively contributes at most $1$ to the realized fragmentation (since all $s_i \in T_j$ lie in the same connected component of non-failed tasks).  
Otherwise, each failed task $s_i \in T_j$ contributes $\minTime{i}$ to the realized fragmentation (for itself, recalling that $\minTime{i}$ is the minimal cost of executing $s_i$ on its own) plus an additional value of at most $1$ for the weight of a potential new connected component of non-failed tasks.  
As each task $s_i \in T_j$ fails independently with probability $(1-\alpha^{d_i})$, we get that the contribution of tasks in $T_j$ to the fragmentation index is at most $1 + \sum_{s_i \in T_j}(1-\alpha^{d_i})(1+\minTime{i})$ as claimed.

On the other hand, recall that if we take $\T$ to be the cost-minimizing task sequence, then
\[ OPT = \sum_{T_j \in \T_{\AIletter}} \alpha^{-d(T_j)} + \sum_{(s_i) \in \T_{\manualLetter}} \manualTime{i} \] 
by definition, where we write $d(T_j) = \sum_{s_i \in T_j}d_i$ for the total difficulty of steps in task $T_j$.

Comparing our expression for $OPT$ with our bound on $FI$, we see that each task in $\T_{\manualLetter}$ contributes the same amount to each, whereas each $T_j \in \T_{\AIletter}$ contributes $1+\sum_{s_i \in T_j}(1-\alpha^{d_i})(1+\minTime{i})$ to the former and $\alpha^{-d(T_j)}$ to the latter.  
We will show that, for each $T_j \in \T_{\AIletter}$, $1 + \sum_{s_i \in T_j}(1-\alpha^{d_i})(1+\minTime{i}) \leq \frac{5}{4} \alpha^{-d(T_j)}$, which will complete the proof.

First, since $\minTime{i} \leq \alpha^{-d_i}$ for each $s_i \in \mathcal{S}$, we have 
$1 + \sum_{s_i \in T_j}(1-\alpha^{d_i})(1+\minTime{i}) \leq 1 + \sum_{s_i \in T_j}(1-\alpha^{d_i})(1+\alpha^{-d_i})$.

Next note that for any $x,y\in [0,1]$, $(1-xy)(1+\tfrac{1}{xy}) \geq (1-x)(1+\tfrac{1}{x}) + (1-y)(1+\tfrac{1}{y})$.  
(This can be checked by expanding and taking derivatives.)  
Repeatedly applying this fact, we get that $1 + \sum_{s_i \in T_j}(1-\alpha^{d_i})(1+\alpha^{-d_i}) \leq 1 + (1-\alpha^{d(T_j)})(1+\alpha^{-d(T_j)}) = 1 + \alpha^{-d(T_j)} - \alpha^{d(T_j)}$ for any $T_j \in \T_{\AIletter}$.  
The ratio between this expression and $\alpha^{-d(T_j)}$ is maximized at $\alpha^{-d(T_j)} = 1/2$, achieving a value of $5/4$ as claimed.
\end{proof}

Note that this bound of $5/4$ is tight.  
Suppose we have 3 steps in $\mathcal{S}$: the first and last always succeed, and the second succeeds with probability $1/2$ and has a manual cost of $2$.
Then the optimal policy automates all three together for an expected cost of $2$, whereas the fragmentation index is $(1/2)\times 1 + (1/2) \times (1+2+1) = 5/2$.

\medskip

We next argue that $F$ is not much smaller than $OPT$.  
We begin with an analysis under the assumption that $\manualTime{i} \geq 1$ for all $s_i \in \mathcal{S}$.  
That is, for each step $s_i$, completing the step manually is at least as costly as managing an AI tool to complete the step in a single attempt (with guaranteed success).

\begin{proposition}
\label{prop:FI.lower.bound.4}
    Suppose $\manualTime{i} \geq 1$ for all $s_i \in \mathcal{S}$. 
    Then $FI \geq \tfrac{1}{4}OPT$.
\end{proposition}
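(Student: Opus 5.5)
The plan is to exhibit one explicit task sequence $\T$ whose cost is at most $4\,FI$. Since $OPT$ is no larger than the cost of any feasible strategy, this gives $OPT \leq 4\,FI$. The hypothesis $\manualTime{i} \geq 1$ simplifies the fragmentation index in two ways.
\begin{itemize}
\item Every component has weight exactly one: $\omega(C_j) = 1$ for every $C_j \in \mathcal{C}$.
\item Every step has $\minTime{i} \geq 1$, because both $\manualTime{i} \geq 1$ and $\AItime{i}/q_i = 1/q_i \geq 1$.
\end{itemize}
So the realized fragmentation is at least the number of failed steps plus the number of maximal non-failed components. Each of these objects contributes at least $1$, and a failed step $s_i$ contributes exactly $\minTime{i}$.

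\textbf{Construction (greedy, as suggested after Proposition~\ref{prop:fragmentation}).} Scan the steps from left to right.
\begin{itemize}
\item A step with $q_i < 1/2$ that is not absorbed into a chain becomes a singleton task, executed manually or augmented, whichever costs $\minTime{i}$.
\item Otherwise, start an AI chain at the current step. Keep extending it while the product of the $q$'s along the chain stays at least $1/2$.
\item Close the chain at the last step before the product would drop below $1/2$, or at $s_m$.
\end{itemize}
Every chain $B_k$ produced this way has success probability at least $1/2$, so its cost is at most $2$. Every singleton $s_i$ costs $\minTime{i}$.

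\textbf{Charging singletons.} Each singleton has $q_i < 1/2$, so it fails with probability at least $1/2$. A failed step contributes $\minTime{i}$ to the realized fragmentation. Its expected contribution is therefore at least $\minTime{i}/2$, which is half of its cost in $\T$.

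\textbf{Charging a non-final chain $B_k$.} By construction, the product of the $q$'s over $B_k$ together with the step $s^+$ right after it is below $1/2$. So with probability more than $1/2$, some step of $B_k \cup \{s^+\}$ fails. On that event I charge $B_k$ one unit of realized fragmentation, as follows.
\begin{itemize}
\item If some step inside $B_k$ fails, charge that failed step. It contributes $\minTime{i} \geq 1$.
\item If no step of $B_k$ fails, then $s^+$ fails. Then $B_k$ lies in a component that ends exactly at the last step of $B_k$; charge that component, which contributes $1$.
\end{itemize}
The final chain is handled the same way, except that it always contains either a failed step or the end of a component, so it is charged with probability $1$.

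\textbf{Disjointness of the charges.} I expect this to be the main obstacle. The charges must be disjoint across different chains and also disjoint from the charges to singletons.
\begin{itemize}
\item A failed step charged to $B_k$ lies inside $B_k$. Chains and singletons are disjoint sets of steps, so no failed step is charged twice.
\item A component charged to $B_k$ is identified by its right endpoint, which is the last step of $B_k$. Different chains have different last steps, so each component is charged by at most one chain.
\item Components are never charged to singletons.
\item When $s^+$ is itself a singleton, the chain charged a component and not the failure of $s^+$. So the failure of $s^+$ remains available for the singleton's own charge.
\end{itemize}

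\textbf{Conclusion.} Summing in expectation, $FI \geq \tfrac12 \#\{\text{chains}\} + \tfrac12\sum_{\text{singletons}} \minTime{i}$. The cost of $\T$ is at most $2\,\#\{\text{chains}\} + \sum_{\text{singletons}} \minTime{i}$. Each term on the right is at most $4$ times the corresponding term in the lower bound on $FI$. Hence the cost of $\T$ is at most $4\,FI$, and therefore $OPT \leq 4\,FI$.
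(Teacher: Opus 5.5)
Your proof is correct and follows essentially the same route as the paper: the same greedy construction of maximal AI chains with success probability at least $1/2$ (singletons handled individually otherwise), the same cost bound of $2$ per chain plus $t^*_i$ per singleton, and the same charging rule. Under that rule, each non-terminal chain is charged either a failed step inside it or, when the following step fails, the component ending at its last step, and each singleton is charged its own failure cost. Your treatment of the terminal chain (charging a failed step in it or the component ending at $s_m$) and your explicit disjointness check are slightly more careful than the paper's bookkeeping. The paper charges the terminal chain a ``baseline'' unit that is actually present only when $s_m$ does not fail.
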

\begin{proof}

    Consider the following task sequence.  
    Beginning with the first step $s_1$, consider the maximal contiguous sequence of steps $T$ whose success probability $\alpha^{d(T)}$ is greater than or equal to $1/2$.  
    If that set is empty (i.e., the first step has success probability strictly less than $1/2$) then the first step is set to be completed individually, either manually or augmented, whichever is cheaper.  
    In this case, we'll say the resulting singleton task is completed \emph{individually}. 
    Otherwise, the sequence $T = (s_1, \dotsc, s_\ell)$ (which could still be a singleton task) is added to the task sequence as an AI chain; we call this a \emph{non-individual} task.  
    This process is then repeated beginning with the next step (which is $s_2$ in the former case, or $s_{\ell+1}$ in the latter case), until all steps have been added to a task.  
    Call the resulting task sequence $\T$.

    \medskip
    
    Let $ALG$ denote the cost of the task sequence $\T$.  
    Note that we must have $ALG \geq OPT$. 
    We will argue that $FI \geq ALG/4$, which will prove the claim.

    \medskip

    We first define some notation.  
    Write $\T_I$ for the set of individual tasks in $\T$ and $\T_{NI}$ for the set of non-individual tasks.
    Note that $\alpha^{d_i} < 1/2$ for all $(s_i) \in \T_I$, by construction.  
    Note also that $\T_{\manualLetter} \subseteq \T_I$ (recalling that $\T_{\manualLetter}$ is the set of all singleton tasks executed manually), and this containment may be strict if $\T_I$ contains singleton tasks that are augmented.  
    We emphasize that $\T_{NI}$ may still include singleton tasks, but any singleton task $(s_i) \in \T_{NI}$ must have the property that $\alpha^{d_i} \geq 1/2$.
 
    If the final task $s_m$ from $\mathcal{S}$ is contained in some $T_j \in \T_{NI}$, we say that $T_j$ is the \emph{terminal} task.  
    All other tasks in $\T_{NI}$ are non-terminal tasks.  Note that if the final step is a singleton task in $\T_I$ then no task is terminal.  
    For each non-terminal task $T_j \in \T_{NI}$, we'll write $\overline{T}_j$ to denote $T_j$ plus the first step that follows $T_j$. 
    For example, if $T_j = (s_i, \dotsc, s_\ell)$, then $\overline{T}_j = (s_i, \dotsc, s_\ell, s_{\ell+1})$.
    The set $\T_{NI}$ then has the property that for each task $T_j \in \T_{NI}$ we have $\alpha^{d(T_j)} \geq 1/2$, and for each non-terminal $T_j \in \T_{NI}$ we have $\alpha^{(d(\overline{T}_j))} < 1/2$.

    \medskip

    We are now ready to relate $FI$ and $ALG$.  
    Let $k = |\T_{NI}|$ be the number of non-individual tasks. 
    Note first that 
    \[ ALG = \sum_{T_j \in \T_{NI}} \alpha^{-d(T_j)} + \sum_{(s_i) \in \T_I} \minTime{i} \leq 2k + \sum_{(s_i) \in \T_I} \minTime{i}. \]  
    This is because $\alpha^{d(T_j)} \geq 1/2$ for each $T_j \in \T_{NI}$, and hence $\alpha^{-d(T_j)} \leq 2$.

    \medskip

    Next we bound $FI$.  
    Note that the assumption $\manualTime{i} \geq 1$ implies that, in any realization of the connected components of non-failed steps $\mathcal{C} = (C_1, \dotsc, C_\ell)$, $\omega(C_j) = 1$ for all $j$. 
    This implies that the realized fragmentation is equal to $1$ plus, for each $s_i \in F$ (i.e., each step $s_i$ that fails), a contribution of $\minTime{i}$ plus an additional $1$ in the event that $i > 1$ and the task immediately preceding $s_i$ did not fail.  
    (This additional cost of $1$ accounts for creating a new connected component of non-failed steps ending with $s_{i-1}$.)
    
    We claim that $FI \geq k/2 + \frac{1}{2}\sum_{(s_i) \in \T_I} \minTime{i}$.  
    We will show this by charging  the realized fragmentation to tasks in $\T_{NI}$ and $\T_I$, so that each $T_j \in \T_{NI}$ is charged at least $1$ with probability at least $1/2$, and each $(s_i) \in \T_I$ is charged at least $\minTime{i}$ with probability at least $1/2$.
    
    To see this, let $E_j$ denote the event that some step in $\overline{T}_j$ fails, for each non-terminal $T_j \in \T_{NI}$.  
    Note that $E_j$ occurs with probability at least $1/2$, since $\alpha^{d(\overline{T}_j)} < 1/2$. 
    Also, if $E_j$ occurs, then either a step $s_i \in T_j$ fails, or no step in $T_j$ fails but the step immediately following $T_j$ does.  
    In the former case, we will charge the $\minTime{i} \geq 1$ contribution of $s_i$'s failure to $T_j$.  
    In the latter case, it must be that the step immediately preceding the failed task did not fail; so we will charge the additional contribution of $1$ from $s_i$'s failure (as described above) to $T_j$.  
    Additionally, if there is a terminal $T_j$ in $\T_{NI}$, then we charge the baseline $1$ from the calculation of $FI$ to that terminal $T_j$.  
    Aggregating over all these cases, we have that at least $1$ is charged to each set $T_j \in \T_{NI}$ with probability at least $1/2$.  
    Finally, for each $(s_i) \in \T_I$ that fails (which occurs with probability at least $1/2$, by construction), we charge the $\minTime{i}$ portion of its contribution to task $(s_i)$.  
    We conclude that $FI \geq k/2 + \frac{1}{2}\sum_{(s_i) \in \T_I}\minTime{i}$ as claimed.

    But now since $FI \geq k/2 + \frac{1}{2}\sum_{(s_i) \in \T_I}\minTime{i}$ and $ALG \leq 2k + \sum_{(s_i) \in \T_I}\minTime{i}$, we conclude $FI \geq ALG/4$ as claimed.
\end{proof}

Can the approximation factor in Proposition~\ref{prop:FI.lower.bound.4} be improved? 
While we do not have a matching lower bound of $4$, we do know that the approximation factor is at least $\tfrac{2}{3}(2+\sqrt{2}) \approx 2.276$.  
Here is an example. 
Suppose the problem instance is a sequence of $m$ steps, each with success probability $1/\sqrt{2}$ and manual cost $\sqrt{2}$.  
The task sequence described in the proof of Proposition~\ref{prop:FI.lower.bound.4} then handles each task separately, for a total cost of $m \sqrt{2}$.  
The optimal task sequence must therefore perform at least this well.  
The fragmentation index is such that each task contributes $\sqrt{2}$ if it fails, plus $1$ if the preceding task did not fail, for a total contribution of $(1-1/\sqrt{2})(\sqrt{2} + 1(1/\sqrt{2}) = \tfrac{3}{2}(\sqrt{2}-1) \approx 0.6213$.  
As $m$ grows large, the ratio between $m\sqrt{2}$ and $1 + m \times 0.6213$ approaches $\tfrac{2}{3}(2+\sqrt{2}) \approx 2.276$.

\bigskip

We can relax the assumption in Proposition~\ref{prop:FI.lower.bound.4} that $\manualTime{i} \geq 1$ for all $s_i$, but at the cost of a worse approximation factor.

\begin{proposition}
\label{prop:FI.lower.bound.8}
    For general $\manualTime{i}$ (i.e., allowing $\manualTime{i} < 1$ for some steps $s_i$), $FI \geq \frac{1}{8}OPT$.
\end{proposition}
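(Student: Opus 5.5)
We would follow the template of Proposition~\ref{prop:FI.lower.bound.4}. We build a greedy task sequence $\T$, bound its cost $ALG \geq OPT$ from above, and then lower-bound $FI$ by charging realized fragmentation to the tasks of $\T$. The extra factor of $2$ relative to $\tfrac14$ should come from one new effect: components can now have weight $\omega(C)<1$, so a single cheap component may have to be shared between two neighbouring tasks.

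\textbf{Construction.} We run the same greedy scan: starting from the first unassigned step, take the maximal contiguous block $T$ with $\alpha^{d(T)} \geq 1/2$. If it is empty, the step becomes an individual task executed at cost $\minTime{i}$, and as before $\alpha^{d_i}<1/2$. Otherwise $T$ is a non-individual task, but now it is executed at cost
\[
\min\Bigl\{\alpha^{-d(T)},\ \textstyle\sum_{s_i\in T}\manualTime{i}\Bigr\} \leq \min\Bigl\{2,\ \textstyle\sum_{s_i\in T}\manualTime{i}\Bigr\} =: 2\mu_j,
\]
where $\mu_j := \min\{1,\tfrac12\sum_{s_i\in T_j}\manualTime{i}\} \leq \min\{1,\sum_{s_i\in T_j}\manualTime{i}\}$. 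The second option means doing the whole block manually. Hence
\[
ALG \leq \sum_{T_j\in\T_{NI}} 2\mu_j + \sum_{(s_i)\in\T_I}\minTime{i}.
\]

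\textbf{Charging.} Individual tasks are handled exactly as before: each fails with probability above $1/2$ and then contributes $\minTime{i}$, which we charge to it. For a non-terminal $T_j\in\T_{NI}$, the event $E_j$ (some step of $\overline{T}_j$ fails) still has probability at least $1/2$. On $E_j$, we claim the realized fragmentation restricted to steps of $T_j$ is at least $\min\{1,\sum_{s_i\in T_j}\manualTime{i}\}$. Each failed step of $T_j$ contributes $\minTime{i}\geq\min\{\manualTime{i},1\}$, since $1/q_i\geq1$. Each non-failed step of $T_j$ lies in some component $C$ with $\omega(C)\geq\min\{1,\manualTime{i}\}$. Summing over the pieces of $T_j$ and using $\min\{1,a\}+\min\{1,b\}\geq\min\{1,a+b\}$ gives the claim. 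The role of $E_j$ is that it guarantees a failure at or right after the end of $T_j$. So the last component touching $T_j$ closes no later than the end of $T_j$, and its weight is ``realized'' in this sample. For the terminal task we use the final component instead, which is always present.

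\textbf{Main obstacle.} The difficulty is double counting. A component $C$ can begin inside an earlier task (or run through earlier steps) and end inside $T_j$, so its weight $\omega(C)\leq1$ may be claimed both by $T_j$ and by the task where $C$ starts. We would resolve this by charging each component only to the tasks containing its first and its last step, so that each component is claimed at most twice. Failed steps are claimed once. Any component wholly inside $T_j$ is counted for $T_j$ alone. This yields
\[
2\,FI \geq \sum_{T_j\in\T_{NI}}\tfrac12\mu_j + \tfrac12\sum_{(s_i)\in\T_I}\minTime{i},
\]
that is, $FI\geq \tfrac14\bigl(\sum_j\mu_j+\sum_{\T_I}\minTime{i}\bigr)\geq ALG/8 \geq OPT/8$. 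The delicate point to verify carefully is that the probability-$1/2$ events for different $T_j$ interact correctly with this at-most-twice charging. In particular, a component spanning several consecutive non-individual tasks must still be charged to only its two endpoint tasks, while the middle tasks receive their charge from their own failures. If that bookkeeping breaks, the fallback is to charge by linearity of expectation per task, on the event $E_j$, only the contribution of the last piece of $T_j$.
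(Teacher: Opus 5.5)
Your construction, cost bound and charging scheme are essentially the paper's. The paper also switches a greedy block to manual execution (when $\sum_{s_i\in T_j}\manualTime{i}<1$). It bounds $ALG \le 2\sum_{T_j\in\T_{NI}}\min\{1,\sum_{s_i\in T_j}\manualTime{i}\} + \sum_{(s_i)\in\T_I}\minTime{i}$. On $E_j$ it charges each non-terminal $T_j$ its failed steps plus every component meeting it, and it notes each component is charged at most twice. Your per-task bound via $\minTime{i}\ge\min\{\manualTime{i},1\}$ and subadditivity of $\min\{1,\cdot\}$ is correct, and it spells out a step the paper only asserts.

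What is wrong is your account of what $E_j$ does, and that account is what you lean on for the point you left open. $E_j$ only says that some step of $\overline{T}_j$ fails. If that step is the first step of $T_j$, the last component meeting $T_j$ can run past $T_j$, so it is false that it ``closes no later than the end of $T_j$.'' The fact you actually need is this: a component meeting $T_j$ with neither endpoint in $T_j$ must start before $T_j$ and end after it. It would then contain all of $\overline{T}_j$, which is impossible on $E_j$. So on $E_j$ every component meeting $T_j$ has its first or last step in $T_j$. That is exactly what makes your endpoint-charging rule consistent with charging $T_j$ the full weight of every component it meets, and it settles your ``delicate point.''

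A task lying strictly inside a component has $E_j$ false in that realization and needs no charge there. It does not receive charge ``from its own failures''; it has none. Only linearity of expectation and $\Pr(E_j)\ge 1/2$ are used. Drop your fallback of charging only the last piece of $T_j$, since that piece alone need not be worth $\min\{1,\sum_{s_i\in T_j}\manualTime{i}\}$.

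Likewise, do not rely on ``the final component'' for the terminal task: it is absent if $s_m$ fails, and it may be light when present. Instead, charge the terminal task its failed steps and all components meeting it in every realization. Your piecewise bound holds unconditionally. Every such component ends inside the terminal task, so the at-most-twice count is preserved.
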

\begin{proof}
    The argument follows the proof of Proposition~\ref{prop:FI.lower.bound.4}, with the following changes.

    First we make a slight change in the definition of the task sequence $\T$ that defines $ALG$.  
    If a constructed task $T_j$ has the property that $\sum_{s_i \in T_j} \manualTime{i} < 1$ (which implies it is cheaper to run all steps of $T_j$ manually than as an AI chain that is guaranteed to succeed), we switch to running the tasks of $T_j$ manually in the task sequence.  
    In our analysis, we still consider $T_j$ to be a set in $\T_{NI}$; but its cost is taken to be $\sum_{s_i \in T_j} \manualTime{i}$.

    This modification changes our upper bound on the total cost of $ALG$ to the following:
    \[ ALG \leq 2 \sum_{T_j \in \T_{NI}} \min\left\{1, \sum_{s_i \in T_j} \manualTime{i} \right\} + \sum_{ (s_i) \in \T_I} \minTime{i}. \]

    Then, to bound $FI$, we claim that 
    \[ FI \geq \frac{1}{4}\sum_{T_j \in \T_{NI}} \min\left\{1, \sum_{s_i \in T_j} \manualTime{i} \right\} + \frac{1}{2}\sum_{(s_i) \in \T_I} \minTime{i} \]
    which would prove the claim.

    To see why this bound on $FI$ holds, we employ a charging argument as before.  
    Recall that for each non-terminal $T_j \in \T_{NI}$, the event $E_j$ occurs with probability at least $1/2$.  
    When it does, we charge to $T_j$ the contribution $\minTime{i}$ to $FI$ from any $s_i \in F$ that lies in $T_j$.  
    Furthermore, when $E_j$ occurs, for each connected component $C_\ell$ that intersects $T_j$ we additionally charge the contribution $\omega(C_\ell)$ from $FI$ to $T_j$.   
    Crucially, the contribution from each $C_\ell$ can be charged to at most two different tasks $T_j$ in this way: once for the leftmost $T_j$ that it intersects, and once for the rightmost $T_j$ that it intersects.  
    (The reason is that if some $T_j$ is a subset of $C_\ell$, then by definition no step of $T_j$ fails and hence $E_j$ did not occur.  
    So this charging can only occur for a task $T_j$ that intersects $C_\ell$ but is not a subset of $C_\ell$, of which there are at most two.)  
    Moreover, this charging to $T_j$ adds up to a  total value of at least $\min\left\{1, \sum_{s_i \in T_j} \manualTime{i} \right\}$.  
    Since the charging occurs with probability $1/2$ for each $T_j \in \T_{NI}$, and we are at most double-counting each $\omega(C_\ell)$, we conclude that the expected sum of all $\omega(C_\ell)$, plus $\minTime{i}$ for all failed tasks $s_i$ that lie in some $T_j \in \T_{NI}$, is at least $1/4$ of $\sum_{T_j \in \T_{NI}}\min\left\{1, \sum_{s_i \in T_j} \manualTime{i} \right\}$.

    The charging for $(s_i) \in \T_I$ remains unchanged: the value $\minTime{i}$ is charged in the event that step $s_i$ fails, which occurs with probability at least $1/2$.

    Taken together, we obtain our desired $8$ approximation.
\end{proof}

We suspect the factor of $8$ in Proposition~\ref{prop:FI.lower.bound.8} is not tight. 
But, as we now show, the factor is strictly greater than the factor $4$ from Proposition~\ref{prop:FI.lower.bound.4}, so the assumption that $\manualTime{i} \geq 1$ for all $s_i$ is necessary for Proposition~\ref{prop:FI.lower.bound.4} to hold.

Consider a step sequence $\mathcal{S}$ that alternates between (a) steps of manual cost $0$ and success probability $1/\sqrt{2} - \epsilon$ where $\epsilon$ is vanishingly small, and (b) steps of manual cost $1$ and success probability $1$.  
Suppose there are $K > 1$ such pairs.  
The task sequence described in the proof of Proposition~\ref{prop:FI.lower.bound.8} above groups the steps into pairs of two-step tasks, for a total cost of $K \sqrt{2}$ (ignoring $\epsilon$).  
The fragmentation index is $1$ plus $1$ for each task that fails, which is $1 + K(1 - 1/\sqrt{2})$ (again ignoring the impact of $\epsilon$).  
As $K$ grows large, the ratio tends to $2(\sqrt{2}+1)$ which is strictly greater than $4$.
\section{Technical Details \textendash \  Derivation of Effective AI Quality Heterogeneity Distribution}
\label{app:aggregationDist}

In this appendix we describe how to obtain the analytical formula for distribution of heterogeneity $\phi(\bar{\alpha})$ expressed in (\ref{eq:phi}) by extending the method of \cite{levhari1968note}.

Define $u=({w_{\AIletter}\,\skillAdjustedTimeLetter_{\AIletter}})/(1-w_{\manualLetter}\,\skillAdjustedTimeLetter_{\manualLetter})$ and rewrite equation (\ref{eq:macro_agg_prod_with_micro_aggregates}) as:
\begin{equation}
\left(\int_{u}^{1}\,\phi(\bar{\alpha})\,d\bar{\alpha}\right)^{\rho}
=
\theta_{\AIletter}\left(\skillAdjustedTimeLetter_{\AIletter}\int_{u}^{1}\,\frac{\phi(\bar{\alpha})}{\bar{\alpha}}\,d\bar{\alpha}\right)^{\rho}
+
\theta_{\manualLetter} \left(\skillAdjustedTimeLetter_{\manualLetter}\int_{u}^{1}\,\phi(\bar{\alpha})\,d\bar{\alpha}\right)^{\rho}
+
\left(1-\theta_{\AIletter}-\theta_{\manualLetter}\right).
\label{eq:macro_agg_prod_with_hetDist}
\end{equation}
Our goal is to solve for $\phi(\bar{\alpha})$ in the equation above as a function of $\theta_{\AIletter},\theta_{\manualLetter},\rho$.
Define:
\begin{equation}
\Gamma(u)=\int_{u}^{1}\,\phi(\bar{\alpha})\,d\bar{\alpha},\label{eq:gamma}
\end{equation}
and
\begin{equation}
\Psi(u)=\int_{u}^{1}\,\frac{\phi(\bar{\alpha})}{\bar{\alpha}}\,d\bar{\alpha}.\label{eq:psi}
\end{equation}
Note that $\Gamma^{'}(u)=-\phi(u)$ and $\Psi^{'}(u)=-\phi(u)/u$.
Rewrite equation (\ref{eq:macro_agg_prod_with_hetDist}) as:
\[
\Gamma^{\rho}(u)=\theta_{\AIletter}\,\skillAdjustedTimeLetter_{\AIletter}^{\rho}\,\Psi^{\rho}(u)+\theta_{\manualLetter}\,\skillAdjustedTimeLetter_{\manualLetter}^{\rho}\,\Gamma^{\rho}(u)+(1-\theta_{\AIletter}-\theta_{\manualLetter}).
\]
Rearranging terms, we obtain:
\begin{equation}
\left(1-\theta_{\manualLetter}\,\skillAdjustedTimeLetter_{\manualLetter}^{\rho}\right)\Gamma^{\rho}(u)=\theta_{\AIletter}\,\skillAdjustedTimeLetter_{\AIletter}^{\rho}\,\Psi^{\rho}(u)+(1-\theta_{\AIletter}-\theta_{\manualLetter}).\label{eq:macro_agg_pro_with_gamma}
\end{equation}

Differentiating equation (\ref{eq:macro_agg_pro_with_gamma}) with respect to $u$ yields:
\[
\left(1-\theta_{\manualLetter}\,\skillAdjustedTimeLetter_{\manualLetter}^{\rho}\right)\Gamma^{\rho-1}(u)=\theta_{\AIletter}\,\skillAdjustedTimeLetter_{\AIletter}^{\rho}\,\Psi^{\rho-1}(u)\,u^{-1}.
\]
Solving for $\Psi^{\rho-1}(u)$ as a function of $u$ and $\Gamma^{\rho-1}(u)$ we get:
\[
\Psi^{\rho-1}(u)=\frac{\left(1-\theta_{\manualLetter}\,\skillAdjustedTimeLetter_{\manualLetter}^{\rho}\right)}{\theta_{\AIletter}\,\skillAdjustedTimeLetter_{\AIletter}^{\rho}}\Gamma^{\rho-1}(u)\,u.
\]
Finally, express $\Psi(u)$ in terms of $\Gamma(u)$:
\begin{equation}
\Psi(u)=\left(1-\theta_{\manualLetter}\,\skillAdjustedTimeLetter_{\manualLetter}^{\rho}\right)^{\frac{1}{\rho-1}}\left(\theta_{\AIletter}\,\skillAdjustedTimeLetter_{\AIletter}^{\rho}\right)^{\frac{1}{1-\rho}}u^{\frac{1}{\rho-1}}\,\Gamma(u).\label{eq:psi_solved}
\end{equation}

Substituting \( \Psi(u) \) from equation (\ref{eq:psi_solved}) into equation (\ref{eq:macro_agg_pro_with_gamma}), we have:
\[
\left(1-\theta_{\manualLetter}\,\skillAdjustedTimeLetter_{\manualLetter}^{\rho}\right)\Gamma^{\rho}(u)=\left(1-\theta_{\manualLetter}\,\skillAdjustedTimeLetter_{\manualLetter}^{\rho}\right)^{\frac{\rho}{\rho-1}}\left(\theta_{\AIletter}\,\skillAdjustedTimeLetter_{\AIletter}^{\rho}\right)^{\frac{1}{1-\rho}}u^{\frac{\rho}{\rho-1}}\,\Gamma^{\rho}(u)+\left(1-\theta_{\AIletter}-\theta_{\manualLetter}\right).
\]
Rearranging terms to solve for \( \Gamma(u) \), we get:
\[
\Gamma(u)=\left(1-\theta_{\AIletter}-\theta_{\manualLetter}\right)^{\frac{1}{\rho}}\left[1-\theta_{\manualLetter}\,\skillAdjustedTimeLetter_{\manualLetter}^{\rho}-\left(1-\theta_{\manualLetter}\,\skillAdjustedTimeLetter_{\manualLetter}^{\rho}\right)^{\frac{\rho}{\rho-1}}\left(\theta_{\AIletter}\,\skillAdjustedTimeLetter_{\AIletter}^{\rho}\right)^{\frac{1}{1-\rho}}u^{\frac{\rho}{\rho-1}}\right]^{-\frac{1}{\rho}}.
\]

Recall that \( \Gamma'(u)=-\phi(u) \).
Differentiating \( \Gamma(u) \), we obtain:
\begin{align*}
\Gamma'(u)&=\frac{\partial}{\partial u}\left[\left(1-\theta_{\AIletter}-\theta_{\manualLetter}\right)^{\frac{1}{\rho}}\left[1-\theta_{\manualLetter}\,\skillAdjustedTimeLetter_{\manualLetter}^{\rho}-\left(1-\theta_{\manualLetter}\,\skillAdjustedTimeLetter_{\manualLetter}^{\rho}\right)^{\frac{\rho}{\rho-1}}\left(\theta_{\AIletter}\,\skillAdjustedTimeLetter_{\AIletter}^{\rho}\right)^{\frac{1}{1-\rho}}u^{\frac{\rho}{\rho-1}}\right]^{-\frac{1}{\rho}}\right] \\
&=\frac{\left(1-\theta_{\AIletter}-\theta_{\manualLetter}\right)^{\frac{1}{\rho}}\left(1-\theta_{\manualLetter}\,\skillAdjustedTimeLetter_{\manualLetter}^{\rho}\right)^{\frac{\rho}{\rho-1}}\left(\theta_{\AIletter}\,\skillAdjustedTimeLetter_{\AIletter}^{\rho}\right)^{\frac{1}{1-\rho}}}{\rho-1}u^{\frac{1}{\rho-1}}\left[1-\theta_{\manualLetter}\,\skillAdjustedTimeLetter_{\manualLetter}^{\rho}-\left(1-\theta_{\manualLetter}\,\skillAdjustedTimeLetter_{\manualLetter}^{\rho}\right)^{\frac{\rho}{\rho-1}}\left(\theta_{\AIletter}\,\skillAdjustedTimeLetter_{\AIletter}^{\rho}\right)^{\frac{1}{1-\rho}}u^{\frac{\rho}{\rho-1}}\right]^{-\frac{1+\rho}{\rho}}.
\end{align*}
Thus, we have:
\begin{equation}
\phi(\bar{\alpha})=\frac{\left(1-\theta_{\AIletter}-\theta_{\manualLetter}\right)^{\frac{1}{\rho}}\left(1-\theta_{\manualLetter}\,\skillAdjustedTimeLetter_{\manualLetter}^{\rho}\right)^{\frac{\rho}{\rho-1}}\left(\theta_{\AIletter}\,\skillAdjustedTimeLetter_{\AIletter}^{\rho}\right)^{\frac{1}{1-\rho}}}{1-\rho}(\bar{\alpha})^{\frac{1}{\rho-1}}\left[1-\theta_{\manualLetter}\,\skillAdjustedTimeLetter_{\manualLetter}^{\rho}-\left(1-\theta_{\manualLetter}\,\skillAdjustedTimeLetter_{\manualLetter}^{\rho}\right)^{\frac{\rho}{\rho-1}}\left(\theta_{\AIletter}\,\skillAdjustedTimeLetter_{\AIletter}^{\rho}\right)^{\frac{1}{1-\rho}}(\bar{\alpha})^{\frac{\rho}{\rho-1}}\right]^{-\frac{1+\rho}{\rho}},
\end{equation}
which is the formula provided in (\ref{eq:phi}) in the main text.
\section{Appendix Tables}
\label{app:tables}

\renewcommand{\thetable}{C.\arabic{table}}
\setcounter{table}{0}

\begin{table}[h!]
    \caption{Role of AI-able Task Fragmentation in Determining AI Execution Outcomes (Using Execution-Based Empirical Fragmentation Measures)}
    \label{tab:fragmentation_index_regression_execution}
    \begin{center}
    \setlength{\tabcolsep}{6pt} % Adjust column padding slightly if needed
\begin{tabular}{lcccccc}
\toprule
 & \multicolumn{3}{c}{EFI Definition 3} & \multicolumn{3}{c}{EFI Definition 4} \\
 \cmidrule(lr){2-4} \cmidrule(lr){5-7}
 & (1) & (2) & (3) & (4) & (5) & (6) \\
\midrule
\addlinespace
AI Exposure            & 0.43***  & 0.20***  & 0.15***  & 0.26***  & 0.13***  & 0.11***   \\
                        & (0.03)   & (0.04)   & (0.05)   & (0.02)   & (0.03)   & (0.03)    \\
Empirical Fragmentation Index    & -2.58*** & -1.97*** & -1.68*** & -1.48*** & -1.29*** & -1.24***  \\
                        & (0.12)   & (0.15)   & (0.17)   & (0.05)   & (0.05)   & (0.06)    \\
\hline\\[-1.25em]
R-squared               & 0.65     & 0.74     & 0.80     & 0.84     & 0.87     & 0.90      \\
R-squared Adj.          & 0.65     & 0.74     & 0.77     & 0.84     & 0.87     & 0.88      \\
N                       & 872      & 872      & 872      & 872      & 872      & 872       \\
SOC Group Fixed Effects &          & Major    & Minor    &          & Major    & Minor     \\
\bottomrule
\multicolumn{7}{l}{\footnotesize Clustered standard errors in parentheses. $^{*}:p<0.1$, $^{**}:p<0.05$, $^{***}:p<0.01$.} \\
\end{tabular}
    \end{center}
    \footnotesize{\emph{Notes:}} This table shows that occupation-level AI execution outcomes depend not only on the overall exposure of tasks to AI, but also on where AI-able steps are situated within the production sequence.
    The table reports results from estimating regression~\eqref{eq:fragmentation_index_regression}.
    The dependent variable in all specifications is the share of steps executed by AI in an occupation ($\text{ai\_execution}$).
    The variable ``AI Exposure'' denotes the share of AI-exposed (E1) steps in the occupation, while the ``Empirical Fragmentation Index'' captures how dispersed AI-able steps are across the occupation’s workflow.
    Definition~3 measures fragmentation based on realized AI chains per the model's definition (given in Definition~\ref{def:ai_chain} of the model section), whereas Definition~4 expands the definition of AI chains by treating both automated and augmented tasks contributing equally to the formation of AI chains in measuring task fragmentation.
\end{table}

\newpage
\begin{landscape}
\begin{table}[htbp]
    \caption{Example O*NET Tasks Associated with Claude Conversations, and Calculated Augmentation vs.\ Automation Outcomes}
    \label{tab:example_anthropic_AI_tasks}
    \begin{center}
    \resizebox{1.33\textwidth}{!}{
\begin{tabular}{|p{9.5cm}|cccccc|cc|c|}
\hline
\multirow{2}{*}{\textbf{O*NET Task}} &
\multicolumn{8}{c|}{\textbf{Share of Claude Conversations (\%)}} & \multirow{2}{*}{\textbf{Label}} \\ % <-- Added | after 8-column group
\cline{2-9} % <-- extended to include column 10 (Label)
& \textbf{Validation} &
\textbf{Task Iteration} &
\textbf{Learning} &
\textbf{Directive} &
\textbf{Feedback Loop} &
\textbf{Filtered} &
\textbf{Augmentation} &
\textbf{Automation} &
\\
\hline
\raggedright Provide road information to assist motorists. &
0.00 & 11.47 & 44.39 & 33.42 & 5.74 & 4.99 & \textbf{55.86} & 39.15 & Augmentation \\
\hline
\raggedright Calculate costs of orders, and charge or forward invoices to appropriate accounts. &
0.00 & 0.00 & 0.00 & 68.00 & 0.00 & 32.00 & 0.00 & \textbf{68.00} & Automation \\
\hline
\raggedright Process data for analysis, using computers. &
4.23 & 18.60 & 31.40 & 25.79 & 16.38 & 3.59 & \textbf{54.23} & 42.18 & Augmentation \\
\hline
\raggedright Adapt text to accommodate musical requirements of composers and singers. &
0.00 & 40.00 & 0.00 & 58.26 & 0.00 & 1.74 & 40.00 & \textbf{58.26} & Automation \\
\hline
\raggedright Provide system design and integration recommendations. &
2.13 & 29.75 & 39.98 & 21.23 & 5.80 & 1.11 & \textbf{71.87} & 27.02 & Augmentation \\
\hline
\raggedright Communicate traffic and crossing rules and other information to students and adults. &
0.00 & 0.00 & 0.00 & 65.52 & 0.00 & 34.48 & 0.00 & \textbf{65.52} & Automation \\
\hline
\raggedright Prepare, manipulate, and manage extensive databases. &
0.00 & 25.89 & 22.34 & 23.40 & 24.47 & 3.90 & \textbf{48.23} & 47.87 & Augmentation \\
\hline
\raggedright Prepare, administer, and grade tests and assignments to evaluate students' progress. &
7.64 & 20.94 & 7.39 & 58.13 & 3.94 & 1.97 & 35.96 & \textbf{62.07} & Automation \\
\hline
\raggedright Conduct statistical analyses to quantify risk using statistical analysis software or econometric models. &
0.00 & 25.23 & 31.53 & 17.12 & 18.92 & 7.21 & \textbf{56.76} & 36.04 & Augmentation \\
\hline
\raggedright Assemble, typeset, scan and produce digital camera-ready art or film negatives and printer's proofs. &
0.00 & 21.88 & 0.00 & 77.34 & 0.00 & 0.78 & 21.88 & \textbf{77.34} & Automation \\
\hline
\raggedright Conduct searches to find needed information, using such sources as the internet. &
1.31 & 5.53 & 58.30 & 23.14 & 3.57 & 8.15 & \textbf{65.14} & 26.71 & Augmentation \\
\hline
\raggedright Compile data and create statistical reports on library usage. &
0.00 & 0.00 & 0.00 & 64.86 & 0.00 & 35.14 & 0.00 & \textbf{64.86} & Automation \\
\hline
\raggedright Create custom illustrations or other graphic elements. &
0.00 & 48.17 & 3.17 & 43.82 & 2.74 & 2.10 & \textbf{51.34} & 46.56 & Augmentation \\
\hline
\raggedright Confer with clients to obtain and provide information when claims are made on a policy. &
0.00 & 0.00 & 45.45 & 0.00 & 0.00 & 54.55 & \textbf{45.45} & 0.00 & Augmentation \\
\hline
\end{tabular}
}
    \end{center}
    \footnotesize{\emph{Notes:} Augmentation percentage is the sum of Validation, Task Iteration, and Learning types of conversations whereas the automation percentage is the sum of Directive and Feedback Loop types of conversations. 
    Whichever mode between Augmentation and Automation has a higher share is selected as the task's mode of AI execution, ignoring the Filtered share.}
\end{table}
\end{landscape}

\newpage
\begin{landscape}
\begin{table}[p]
\caption{Example DWAs with Tasks in Multiple Occupations and the Tasks Surviving the ``Finding Similar Tasks'' Procedure}
\label{tab:DWA_tasks}

\resizebox{0.545\textheight}{!}{%
\begin{minipage}{\textheight}

\begin{center}

\vspace{-0.1cm}

\captionsetup{labelformat=empty}
\caption{\hfill \Large{Panel (A)}}
\begin{tabular}{|l|p{15cm}|l|c|}
\hline
\textbf{O*NET DWA Title} & \textbf{O*NET Task Title} & \textbf{O*NET Occupation Title} & \textbf{Execution Label} \\
\hline

% ============================
% DWA 1
% ============================
\multirow{6}{*}{Analyze operational or research data.} &
Analyze or manipulate bioinformatics data using software packages, statistical applications, or data mining techniques. &
\multirow{2}{*}{Bioinformatics Technicians} & Augmentation \\ \cline{2-2}\cline{4-4}

& \textcolor{red}{Conduct quality analyses of data inputs and resulting analyses or predictions.} &
& \textcolor{red}{Augmentation} \\ \cline{2-4}

& Analyze research data to determine its significance, using computers. &
Astronomers & Augmentation \\ \cline{2-4}

& \textcolor{red}{Collect and analyze data, such as studying old records, tallying the number of outpatients entering each day or week, or participating in federal, state, or local population surveys as a Census Enumerator.} &
\textcolor{red}{Interviewers, Except Eligibility and Loan} & \textcolor{red}{Manual} \\ \cline{2-4}

& \textcolor{red}{Analyze data to determine answers to questions from customers or members of the public.} &
\textcolor{red}{Receptionists and Information Clerks} & \textcolor{red}{Augmentation} \\ \cline{2-4}

& Compute and analyze data, using statistical formulas and computers or calculators. &
Statistical Assistants & Augmentation \\ \hline

% ============================
% DWA 2
% ============================
\multirow{8}{*}{Prepare research or technical reports on environmental issues.} &
\textcolor{red}{Write reports or articles for Web sites or newsletters related to environmental engineering issues.} &
\textcolor{red}{Environmental Engineers} & \textcolor{red}{Manual} \\ \cline{2-4}

& Prepare technical and research reports, such as environmental impact reports, and communicate the results to individuals in industry, government, or the general public. &
Biologists & Manual \\ \cline{2-4}

& Prepare scientific atmospheric or climate reports, articles, or texts. &
Atmospheric and Space Scientists & Automation \\ \cline{2-4}

& Prepare charts or graphs from data samples, providing summary information on the environmental relevance of the data. &
Environmental Scientists and Specialists, Including Health & Augmentation \\ \cline{2-4}

& \textcolor{red}{Write reports or academic papers to communicate findings of climate-related studies.} &
\multirow{2}{*}{Climate Change Policy Analysts} & \textcolor{red}{Manual} \\ \cline{2-2}\cline{4-4}

& Prepare study reports, memoranda, briefs, testimonies, or other written materials to inform government or environmental groups on environmental issues, such as climate change. &
 & Manual \\ \cline{2-4}

& Prepare technical and research reports, such as environmental impact reports, and communicate the results to individuals in industry, government, or the general public. &
Industrial Ecologists & Manual \\ \cline{2-4}

& Produce environmental documents, such as environmental assessments or environmental impact statements. &
Transportation Planners & Manual \\ \hline

% ============================
% DWA 3
% ============================
\multirow{4}{*}{Document events or evidence using photographic or audiovisual equipment.} &
Take photographs and motion pictures for use in lectures and publications and to develop displays. &
Park Naturalists & Manual \\ \cline{2-4}

& Create data records for use in describing and analyzing social patterns and processes, using photography, videography, and audio recordings. &
Anthropologists and Archeologists & Automation \\ \cline{2-4}

& Create photographic recordings of information, using equipment. &
Geological Technicians, Except Hydrologic Technicians & Automation \\ \cline{2-4}

& \textcolor{red}{Use photographic or video equipment to document evidence or crime scenes.} &
\textcolor{red}{Forensic Science Technicians} & \textcolor{red}{Manual} \\ \hline

\end{tabular}

\vspace{0.3cm}

\captionsetup{labelformat=empty}
\caption{\hfill \Large{Panel (B)}}
\begin{tabular}{|l|p{15cm}|l|c|}
\hline
\textbf{O*NET DWA Title} & \textbf{O*NET Task Title} & \textbf{O*NET Occupation Title} & \textbf{Execution Label} \\
\hline

% ============================
% DWA 1
% ============================
\multirow{3}{*}{Analyze operational or research data.} &
Analyze or manipulate bioinformatics data using software packages, statistical applications, or data mining techniques. &
Bioinformatics Technicians & Augmentation \\ \cline{2-4}

& Analyze research data to determine its significance, using computers. &
Astronomers & Augmentation \\ \cline{2-4}

& Compute and analyze data, using statistical formulas and computers or calculators. &
Statistical Assistants & Augmentation \\ \hline

% ============================
% DWA 2
% ============================
\multirow{6}{*}{Prepare research or technical reports on environmental issues.} & Prepare technical and research reports, such as environmental impact reports, and communicate the results to individuals in industry, government, or the general public. &
Biologists & Manual \\ \cline{2-4}

& Prepare scientific atmospheric or climate reports, articles, or texts. &
Atmospheric and Space Scientists & Automation \\ \cline{2-4}

& Prepare charts or graphs from data samples, providing summary information on the environmental relevance of the data. &
Environmental Scientists and Specialists, Including Health & Augmentation \\ \cline{2-4}

& Prepare study reports, memoranda, briefs, testimonies, or other written materials to inform government or environmental groups on environmental issues, such as climate change. &
Climate Change Policy Analysts & Manual \\ \cline{2-4}

& Prepare technical and research reports, such as environmental impact reports, and communicate the results to individuals in industry, government, or the general public. &
Industrial Ecologists & Manual \\ \cline{2-4}

& Produce environmental documents, such as environmental assessments or environmental impact statements. &
Transportation Planners & Manual \\ \hline

% ============================
% DWA 3
% ============================
\multirow{3}{*}{Document events or evidence using photographic or audiovisual equipment.} &
Take photographs and motion pictures for use in lectures and publications and to develop displays. &
Park Naturalists & Manual \\ \cline{2-4}

& Create data records for use in describing and analyzing social patterns and processes, using photography, videography, and audio recordings. &
Anthropologists and Archeologists & Automation \\ \cline{2-4}

& Create photographic recordings of information, using equipment. &
Geological Technicians, Except Hydrologic Technicians & Automation \\ \hline

\end{tabular}
\end{center}
\end{minipage}
}

\vspace{0.3cm}

\footnotesize{\emph{Notes:} Panel (A) shows three example detailed work activities (DWAs) and tasks associated with them in the O*NET dataset. 
Entries highlighted with a red font are those that are dropped in the ``finding similar tasks'' procedure described in text. 
Panel (B) shows the set of tasks for DWAs of Panel (A) that successfully survive the ``finding similar tasks'' procedure.}
\end{table}
\end{landscape}

\newpage
\begin{table}[ht!]
\begin{center}
\caption{Effect of Neighboring Tasks' AI Execution Status on Task's AI Execution Likelihood (GPT-filtered Sample)}
\label{tab:DWA_regression_aiExecution_GPTsample}
\resizebox{\textwidth}{!}{
% --- LaTeX Table for filtered_0 ---
\begin{tabular}{lcccccc}
\toprule
Specification & (1) & (2) & (3) & (4) & (5) & (6) \\
\midrule
($t-2$) Task AI & 0.04 & 0.01 & 0.01 & 0.04 & 0.04 & 0.04 \\
 & (0.03) & (0.02) & (0.02) & (0.03) & (0.04) & (0.03) \\
\addlinespace
($t-1$) Task AI & 0.11*** & 0.07** & 0.06** & 0.06** & 0.10*** & 0.06** \\
 & (0.04) & (0.03) & (0.02) & (0.03) & (0.04) & (0.03) \\
\addlinespace
($t+1$) Task AI & 0.11*** & 0.08*** & 0.07*** & 0.09*** & 0.09*** & 0.09*** \\
 & (0.03) & (0.03) & (0.03) & (0.03) & (0.03) & (0.03) \\
\addlinespace
($t+2$) Task AI & 0.01 & -0.02 & -0.02 & -0.00 & 0.00 & -0.00 \\
 & (0.03) & (0.02) & (0.02) & (0.03) & (0.04) & (0.03) \\
\addlinespace
\midrule
Pseudo $R^2$ & 0.031 & 0.055 & 0.065 & 0.198 & 0.020 & 0.198 \\
Observations & 3,689 & 3,689 & 3,567 & 2,544 & 2,544 & 2,544 \\
SOC Group FE &  & Major & Minor &  &  &  \\
DWA FE &  &  &  & Yes &  & Yes \\
NumTasks in DWA-Occupation Control &  &  &  &  & Yes & Yes \\
\bottomrule
\footnotesize{Clustered standard errors in parentheses. *** p$<$0.01, ** p$<$0.05, * p$<$0.1}
\end{tabular}

}
\end{center}
\footnotesize{\emph{Notes:}} 
This table reports average marginal effects from estimating regression~\eqref{eq:DWA_regression_ai}.  
The dependent variable in all specifications is an indicator for whether task $t$ is AI-executed ($\text{is\_ai}_{t}$).  
The sample is restricted to similar tasks across occupations, identified by GPT-5-mini as tasks belonging to the same DWA with similar execution nature and skill characteristics.  
All specifications control for the AI exposure status of task $t$ and for the total number of tasks in task $t$'s occupation.
Standard errors are bootstrapped using $B=200$ replications and clustered at the DWA level.  
\end{table}

\newpage
\begin{table}[ht!]
\begin{center}
\caption{Effect of Neighboring Tasks' AI Execution Status on Task's AI Automation Likelihood}
\label{tab:DWA_regression_aiAutomation_mainSample}
\resizebox{\textwidth}{!}{
% --- LaTeX Table for full_0 ---
\begin{tabular}{lcccccc}
\toprule
Specification & (1) & (2) & (3) & (4) & (5) & (6) \\
\midrule
($t-2$) Task AI & 0.02** & -0.01 & -0.01 & -0.01 & 0.01 & -0.01 \\
 & (0.01) & (0.01) & (0.01) & (0.02) & (0.03) & (0.02) \\
\addlinespace
($t-1$) Task AI & 0.05*** & 0.02* & 0.01 & 0.02 & 0.09** & 0.02 \\
 & (0.02) & (0.01) & (0.01) & (0.03) & (0.04) & (0.03) \\
\addlinespace
($t+1$) Task AI & 0.05*** & 0.02*** & 0.02** & 0.01 & 0.09** & 0.01 \\
 & (0.01) & (0.01) & (0.01) & (0.02) & (0.03) & (0.02) \\
\addlinespace
($t+2$) Task AI & 0.03*** & 0.00 & -0.00 & 0.02 & 0.05* & 0.02 \\
 & (0.01) & (0.01) & (0.01) & (0.02) & (0.03) & (0.02) \\
\addlinespace
\midrule
Pseudo $R^2$ & 0.101 & 0.192 & 0.173 & 0.245 & 0.033 & 0.245 \\
Observations & 13,786 & 12,777 & 10,390 & 2,815 & 2,815 & 2,815 \\
SOC Group FE &  & Major & Minor &  &  &  \\
DWA FE &  &  &  & Yes &  & Yes \\
NumTasks in DWA-Occupation Control &  &  &  &  & Yes & Yes \\
\bottomrule
\footnotesize{Clustered standard errors in parentheses. *** p$<$0.01, ** p$<$0.05, * p$<$0.1}
\end{tabular}

}
\end{center}
\footnotesize{\emph{Notes:}} 
This table reports average marginal effects from estimating regression~\eqref{eq:DWA_regression_ai} but with dependent variable ($\text{is\_automated}_{t}$) instead of ($\text{is\_ai}_{t}$).  
The sample is restricted to similar tasks across occupations, identified as tasks belonging to the same DWA in the O*NET dataset.  
Tasks that are mapped to multiple DWAs are retained, with one associated DWA assigned randomly for estimation to preserve sufficient variation in the data.  
All specifications control for the AI exposure status of task $t$ and for the total number of tasks in task $t$’s occupation.  
Standard errors are bootstrapped using $B=200$ replications and clustered at the DWA level.  
\end{table}

\newpage
\begin{table}[ht!]
\begin{center}
\caption{Effect of Neighboring Tasks' AI Execution Status on Task's AI Automation Likelihood (GPT-filtered Sample)}
\label{tab:DWA_regression_aiAutomation_GPTsample}
\resizebox{\textwidth}{!}{
% --- LaTeX Table for filtered_0 ---
\begin{tabular}{lcccccc}
\toprule
Specification & (1) & (2) & (3) & (4) & (5) & (6) \\
\midrule
($t-2$) Task AI & 0.01 & -0.02 & -0.02 & 0.00 & 0.00 & 0.00 \\
 & (0.01) & (0.01) & (0.01) & (0.02) & (0.03) & (0.02) \\
\addlinespace
($t-1$) Task AI & 0.06** & 0.02 & 0.02 & 0.03 & 0.11** & 0.03 \\
 & (0.03) & (0.01) & (0.01) & (0.03) & (0.05) & (0.03) \\
\addlinespace
($t+1$) Task AI & 0.06** & 0.03* & 0.03 & 0.02 & 0.09* & 0.02 \\
 & (0.03) & (0.02) & (0.02) & (0.03) & (0.05) & (0.03) \\
\addlinespace
($t+2$) Task AI & 0.01 & -0.02 & -0.02 & 0.00 & 0.02 & 0.01 \\
 & (0.02) & (0.01) & (0.01) & (0.02) & (0.04) & (0.02) \\
\addlinespace
\midrule
Pseudo $R^2$ & 0.042 & 0.125 & 0.118 & 0.247 & 0.031 & 0.247 \\
Observations & 5,156 & 5,077 & 4,490 & 1,771 & 1,771 & 1,771 \\
SOC Group FE &  & Major & Minor &  &  &  \\
DWA FE &  &  &  & Yes &  & Yes \\
NumTasks in DWA-Occupation Control &  &  &  &  & Yes & Yes \\
\bottomrule
\footnotesize{Clustered standard errors in parentheses. *** p$<$0.01, ** p$<$0.05, * p$<$0.1}
\end{tabular}

}
\end{center}
\footnotesize{\emph{Notes:}} 
This table reports average marginal effects from estimating regression~\eqref{eq:DWA_regression_ai} but with dependent variable ($\text{is\_automated}_{t}$) instead of ($\text{is\_ai}_{t}$).
The sample is restricted to similar tasks across occupations, identified by GPT-5-mini as tasks belonging to the same DWA with similar execution nature and skill characteristics.
Tasks that are mapped to multiple DWAs are retained, with one associated DWA assigned randomly for estimation to preserve sufficient variation in the data.  
All specifications control for the AI exposure status of task $t$ and for the total number of tasks in task $t$'s occupation.
Standard errors are bootstrapped using $B=200$ replications and clustered at the DWA level.  
\end{table}
\section{Appendix Figures}
\label{app:figures}

\renewcommand{\thefigure}{D.\arabic{figure}}
\setcounter{figure}{0}

\begin{figure}[h]
    \begin{center}
    \caption{Illustration of the Trade-off between Worker Specialization and Coordination in Task Assignment}
    \label{fig:intro_task_illustration}
    \includegraphics[width=1\linewidth]{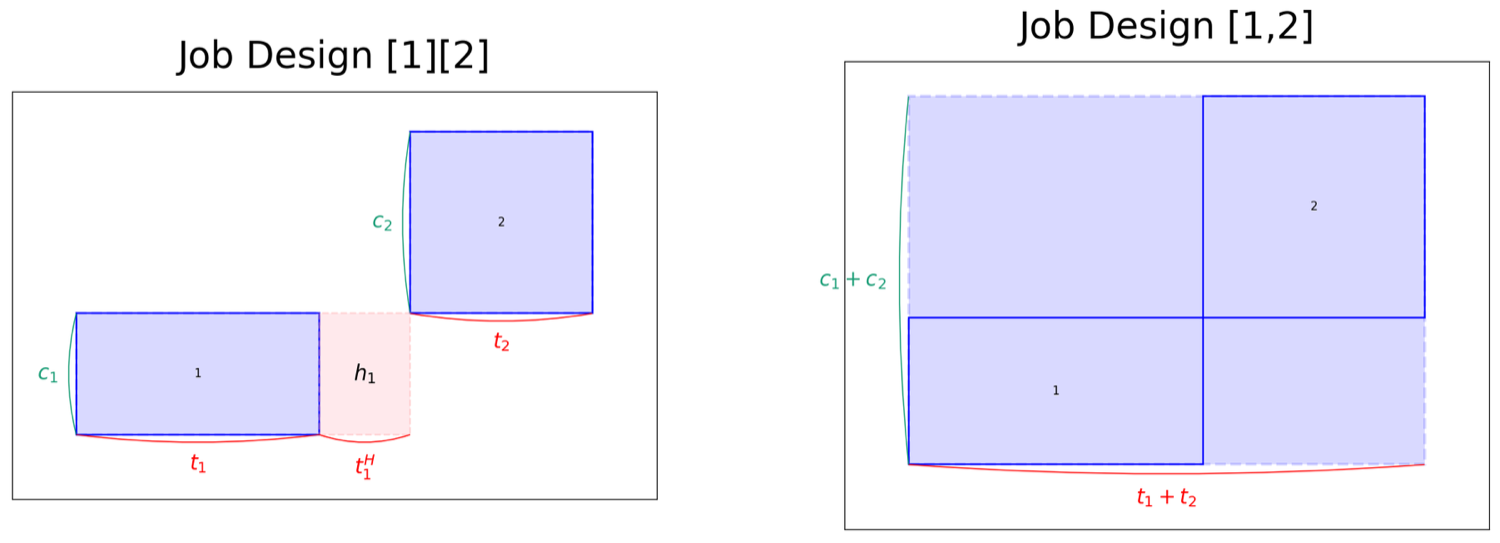}
    \end{center}
    \footnotesize
    \emph{Notes:} The blue bounded rectangles indicate tasks.
    The height of each rectangle corresponds to the task's skill requirement ($\skillcost{}$) and its width to the time requirement ($\timecost{}$).
    The shaded areas represent the wage bills of jobs.
    In the left panel, tasks are assigned to two specialized workers but a hand-off cost (the pink rectangle) is introduced due to coordination frictions between them.
    In the right panel, tasks are bundled into a single job performed by one worker, eliminating hand-off costs but requiring a worker skilled enough to perform both tasks and compensated with a higher per-unit-time wage rate.
\end{figure}

\newpage
\begin{figure}[ht!]
    \caption{Task Sequence of Computer Programmers Occupation}
    \label{fig:computer_programmers}
    \begin{center}
    \resizebox{\textwidth}{!}{
    \includegraphics[width=\linewidth]{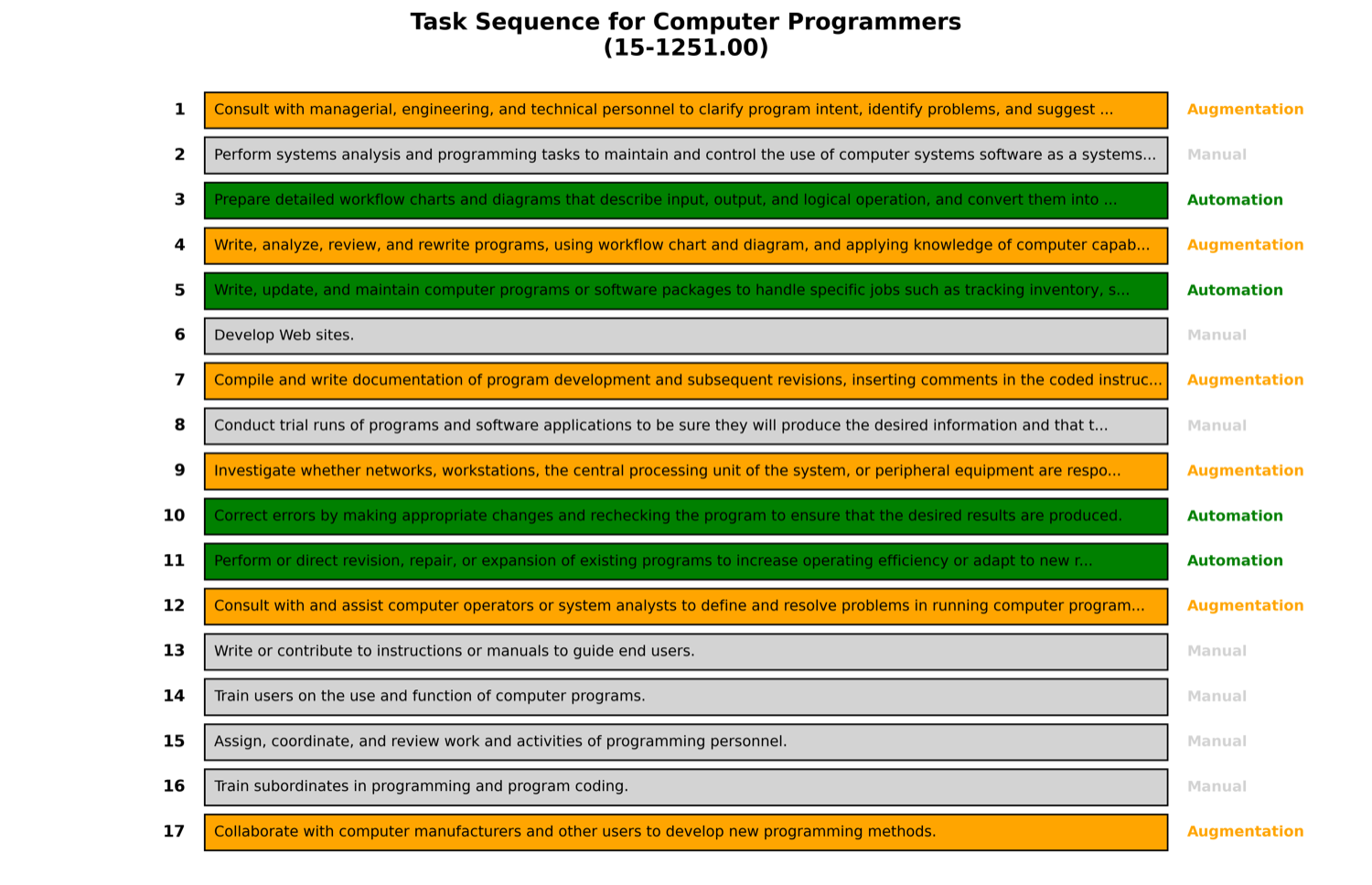}
    }
    \end{center}
    \footnotesize{
    \emph{Notes:} This figure shows the task sequence for an occupation with a high share of its tasks (about two thirds) executed by AI. 
    The task ordering is generated by GPT-5-mini.
    The execution labels come from the Anthropic Economic Index. 
    The O*NET code for this occupation is 15-1251.
    }
\end{figure}

\newpage
\begin{figure}[ht!]
    \caption{Task Sequence of Public Relations Specialists Occupation}
    \label{fig:public_relations}
    \begin{center}
    \resizebox{\textwidth}{!}{
    \includegraphics[width=\linewidth]{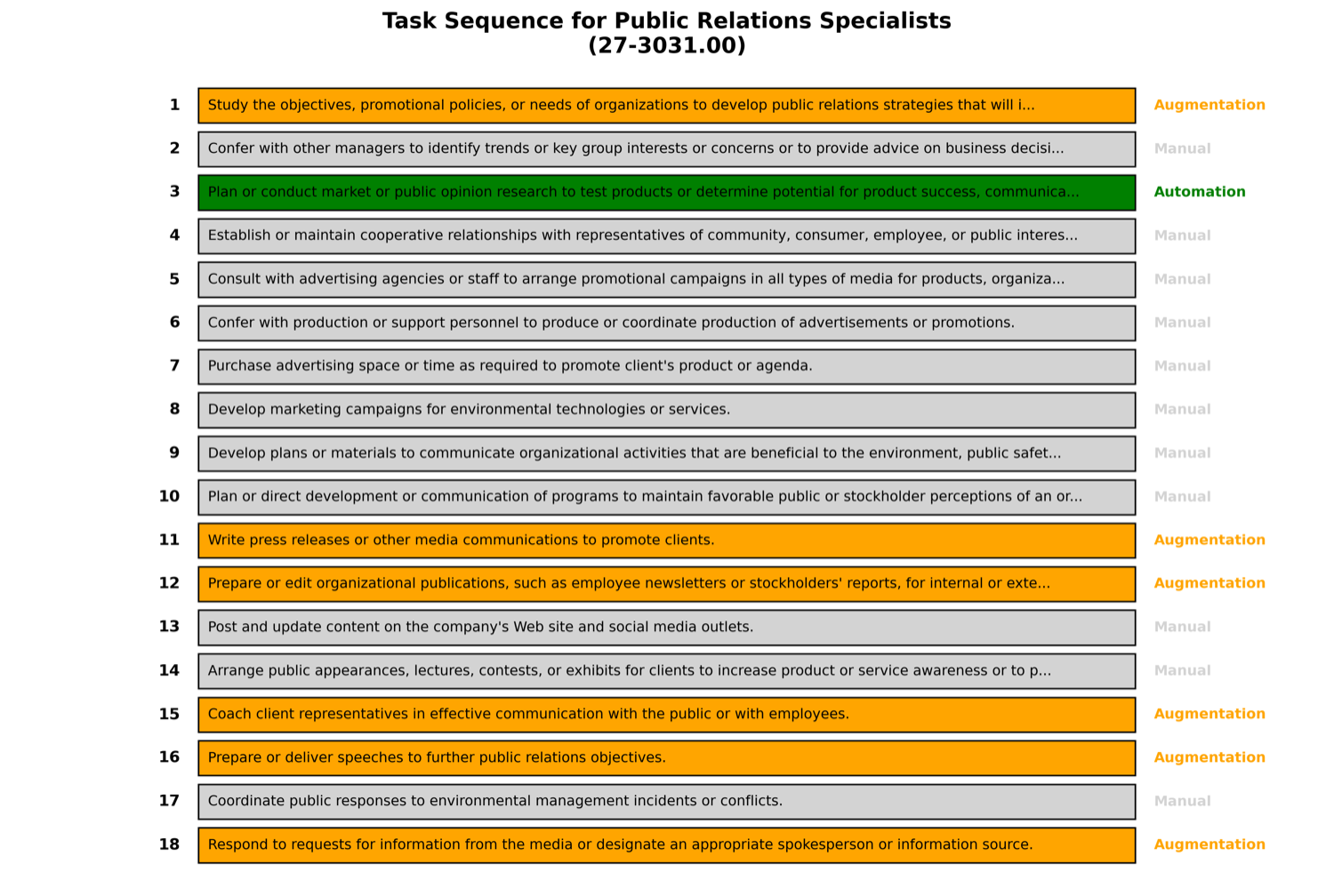}
    }
    \end{center}
    \footnotesize{
    \emph{Notes:} This figure shows the task sequence for an occupation with a moderate share of its tasks (about one third) executed by AI. 
    The task ordering is generated by GPT-5-mini. 
    The execution labels come from the Anthropic Economic Index. 
    The O*NET code for this occupation is 27-3031.
    }
\end{figure}

\newpage
\begin{figure}[ht!]
    \caption{Task Sequence of Electronic Equipment Installers and Repairers, Motor Vehicles}
    \label{fig:electronic_equipments}
    \begin{center}
    \resizebox{\textwidth}{!}{
    \includegraphics[width=\linewidth]{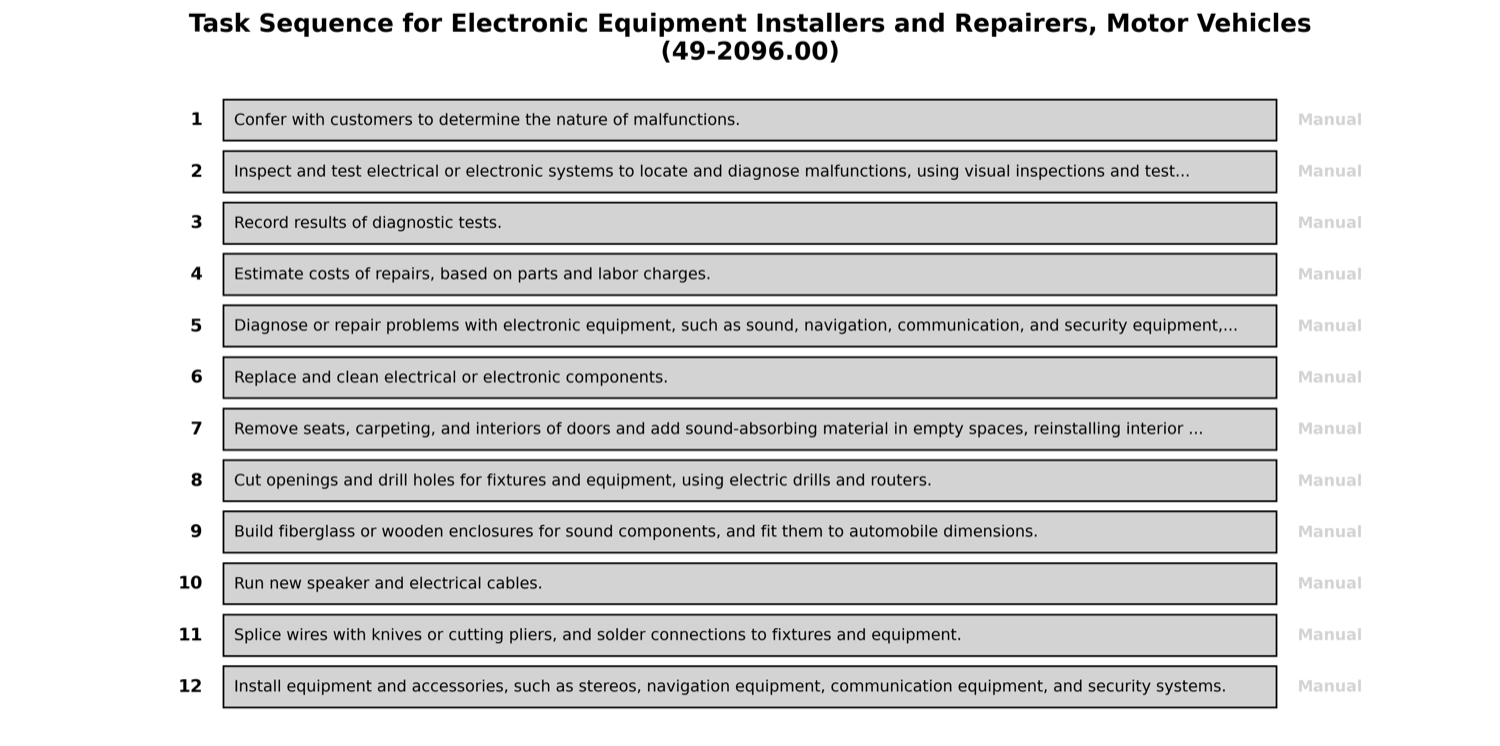}
    }
    \end{center}
    \footnotesize{
    \emph{Notes:} This figure shows the task sequence for an occupation with none of its tasks executed by AI. 
    The task ordering is generated by GPT-5-mini. 
    The execution labels come from the Anthropic Economic Index. 
    The O*NET code for this occupation is 49-2096.
    }
\end{figure}

\newpage
\begin{figure}[t!]
\caption{Relationships Between Occupational AI Exposure, Empirical Fragmentation, and AI Execution (Execution-Based Empirical Fragmentation Measure)}
\label{fig:fragmentation_index_regression_execution}

\begin{center}
\begin{subfigure}[b]{\textwidth}
    \centering
    \begin{subfigure}[b]{0.49\textwidth}
        \centering
        \captionsetup{labelformat=empty}
        \caption{(a) Empirical Fragmentation Index vs.\ AI Exposure}
        \includegraphics[width=\textwidth]{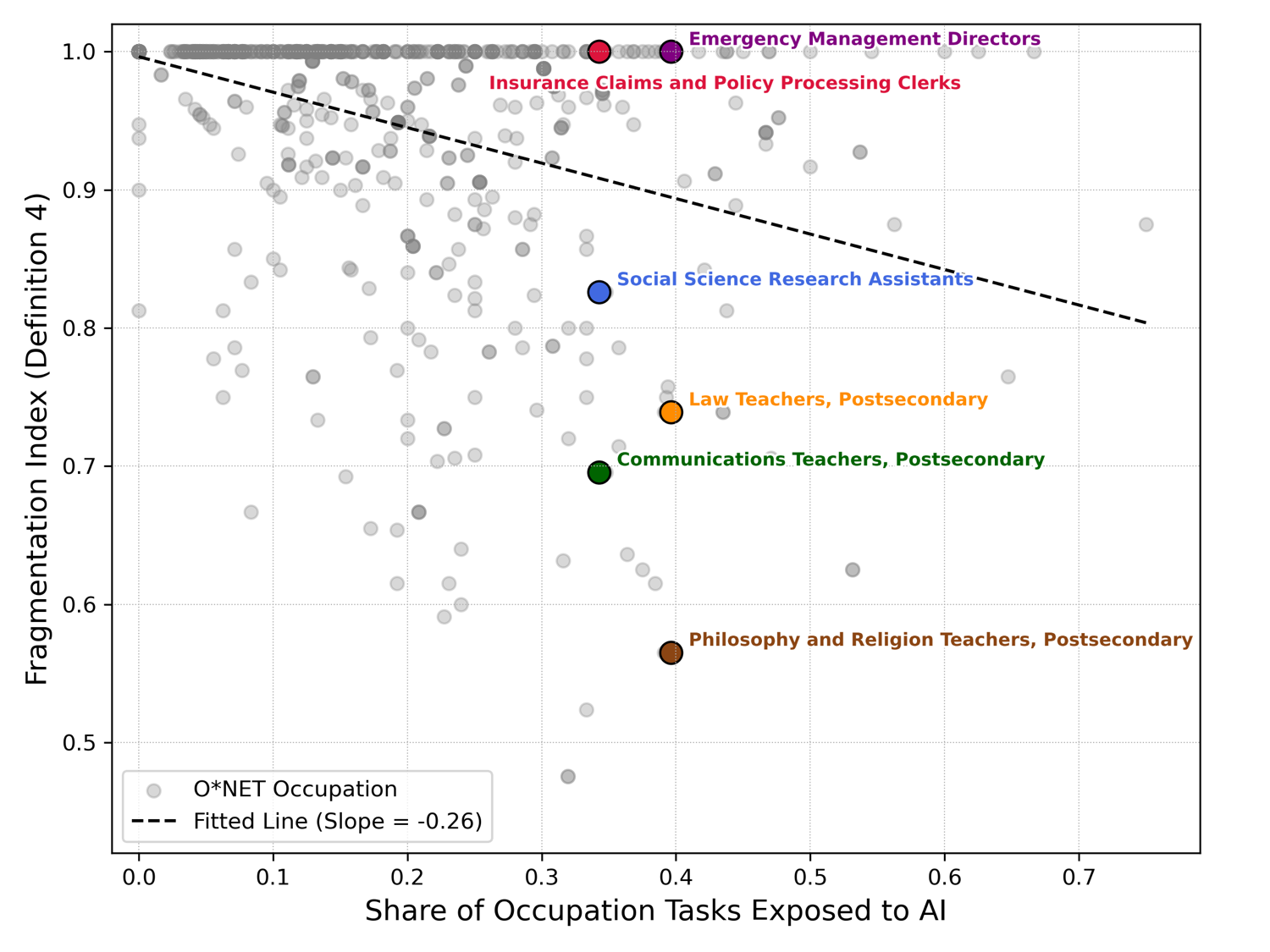}
    \end{subfigure}
    \hfill
    \begin{subfigure}[b]{0.49\textwidth}
        \centering
        \captionsetup{labelformat=empty}
        \caption{(b) AI Execution vs.\ AI Exposure}
        \includegraphics[width=\textwidth]{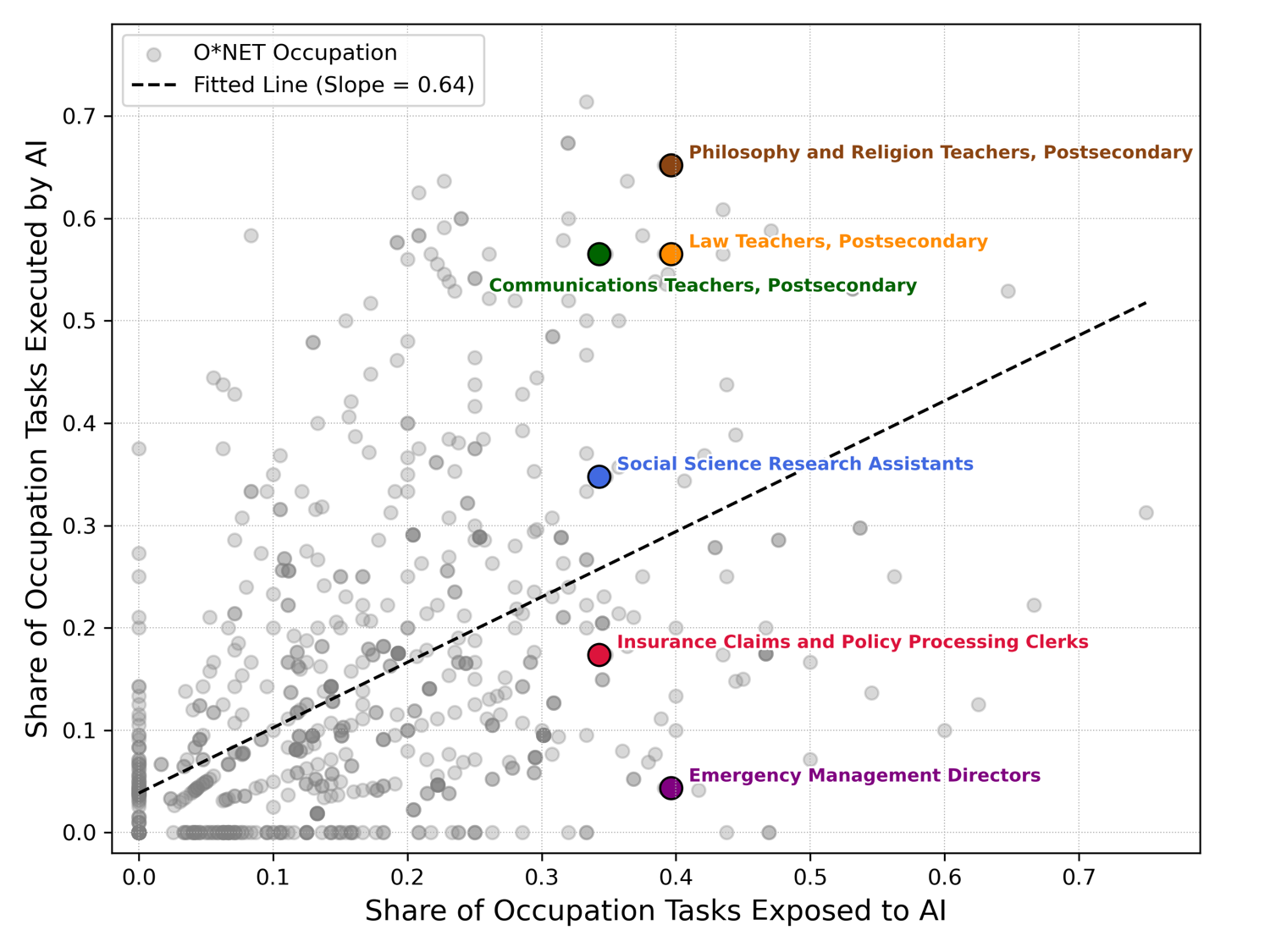}
    \end{subfigure}
\end{subfigure}

\end{center}
\footnotesize{
\emph{Notes:} These graphs show how the fragmentation of AI-able tasks within an occupation's workflow shapes the conversion of AI-exposed tasks into AI-executed tasks.  
Each bin represents a single O*NET occupation in both panels.  
Panel~(a) plots version~4 of the empirical fragmentation index defined in Table~\ref{tab:fragmentation_index_regression_execution} against the share of AI-exposed tasks, whereas Panel~(b) plots the share of realized AI-executed tasks against the share of AI-exposed tasks. 
All six highlighted occupations have 23 tasks in their production sequence and share similar occupation-level exposure to AI: \textit{Emergency Management Directors, Law Teachers,} and \textit{Philosophy and Religion Teachers} each have 39\% of their tasks exposed to AI, whereas \textit{Insurance Claims and Policy Processing
Clerks, Social Science Research Assistants,} and \textit{Communications Teachers} each have 35\% of their tasks exposed to AI.
}
\end{figure}

\newpage
\begin{figure}[ht!]
  \begin{center}
  \caption{Effect of Neighboring Tasks' AI Execution Status on Task's AI Execution Likelihood (GPT-filtered Sample)} 
  \label{fig:DWA_regression_aiExecution_GPTsample}
  \begin{subfigure}[b]{\textwidth}
    \captionsetup{labelformat=empty}
    \caption{Panel (A): No Fixed Effects}
    \includegraphics[width=\textwidth]{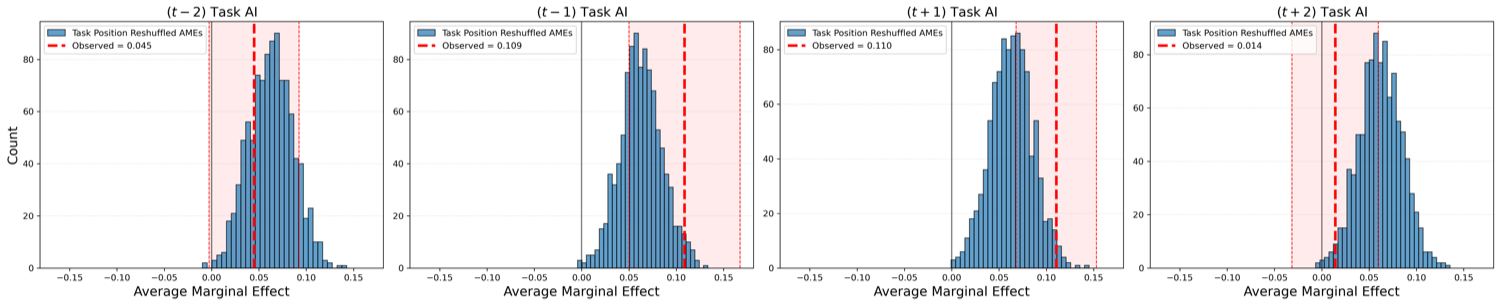}
  \end{subfigure}
  
  \vspace{1em}
  
  \begin{subfigure}[b]{\textwidth}
    \captionsetup{labelformat=empty}
    \caption{Panel (B): SOC Major Group Fixed Effects}
    \includegraphics[width=\textwidth]{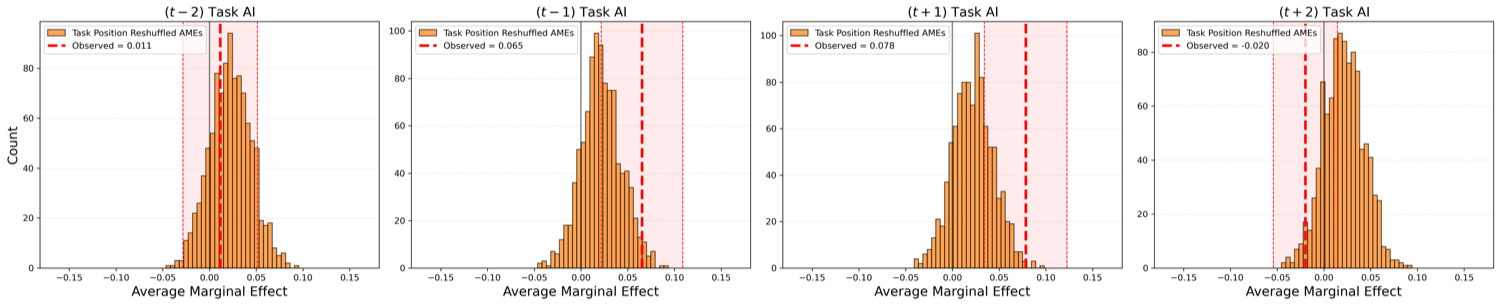}
  \end{subfigure}
  
  \vspace{1em}
  
  \begin{subfigure}[b]{\textwidth}
    \captionsetup{labelformat=empty}
    \caption{Panel (C): SOC Minor Group Fixed Effects}
    \includegraphics[width=\textwidth]{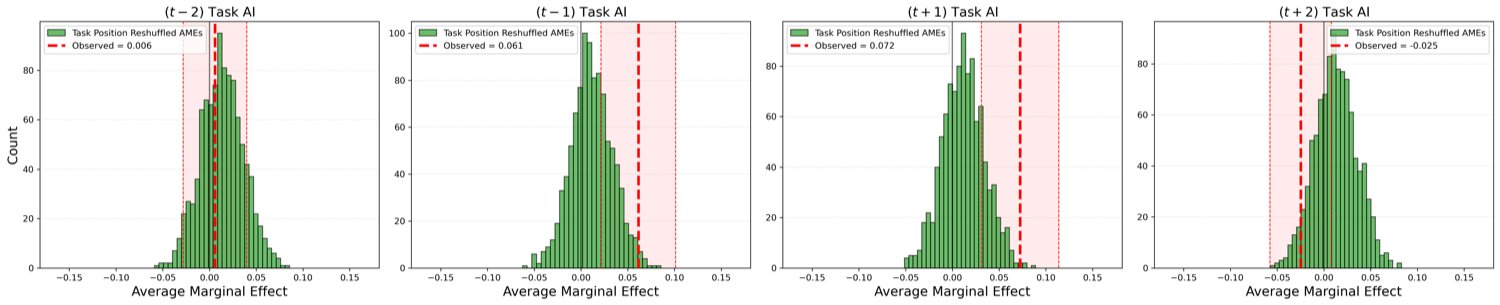}
  \end{subfigure}

  \vspace{1em}

  \begin{subfigure}[b]{\textwidth}
    \captionsetup{labelformat=empty}
    \caption{Panel (D): Detailed Work Activity Fixed Effects}
    \includegraphics[width=\textwidth]{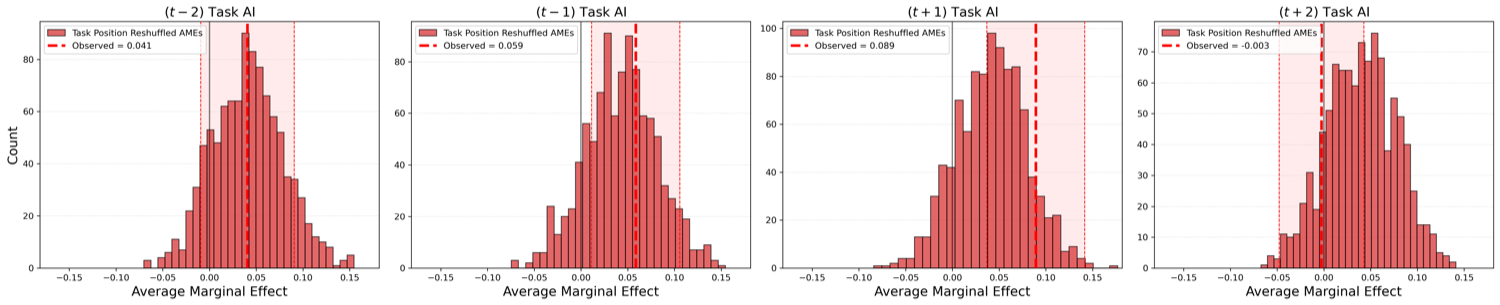}
  \end{subfigure}
  \end{center}
  \footnotesize{\emph{Notes:} These graphs show that, among similar tasks appearing in multiple occupations, those whose immediate neighbors are AI-executed exhibit higher probabilities of being executed by AI, while more distant neighbors have little to no effect on a task's AI execution likelihood.
  The red dashed line in each graph indicates the observed average marginal effect of the corresponding variable in the original dataset reported in Appendix Table~\ref{tab:DWA_regression_aiExecution_GPTsample}.
  The red shaded area marks the 90\% confidence interval around the observed point estimates in the GPT-filtered sample.
  The histograms plot the distribution of average marginal effects across 1{,}000 randomized reshuffles of task positions within occupations.
  Panel~(A) reports results from specification~\eqref{eq:DWA_regression_ai}, while Panels~(B), (C), and (D) augment the baseline specification with SOC major group, SOC minor group, and DWA fixed effects, respectively.
}
\end{figure}

\newpage
\begin{figure}[ht!]
  \begin{center}
  \caption{Effect of Neighboring Tasks' AI Execution Status on Task's AI Automation Likelihood} 
  \label{fig:DWA_regression_aiAutomation_mainSample}
  \begin{subfigure}[b]{\textwidth}
    \captionsetup{labelformat=empty}
    \caption{Panel (A): No Fixed Effects}
    \includegraphics[width=\textwidth]{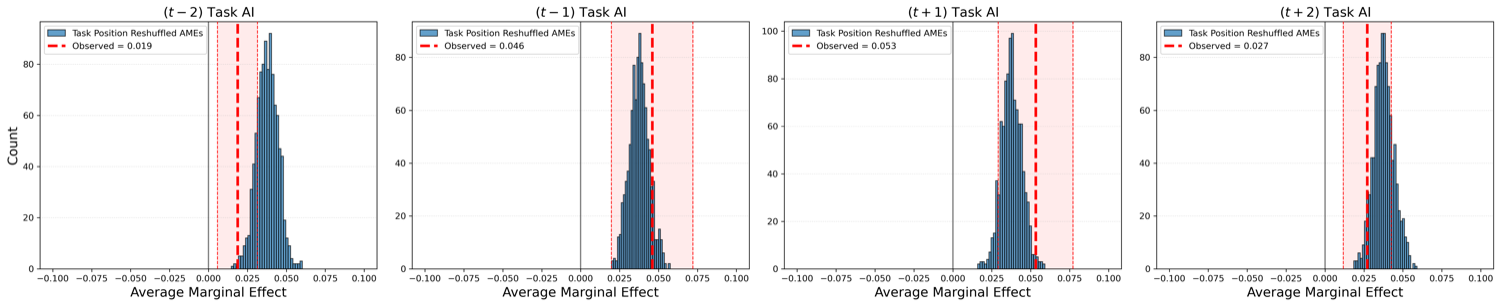}
  \end{subfigure}
  
  \vspace{1em}
  
  \begin{subfigure}[b]{\textwidth}
    \captionsetup{labelformat=empty}
    \caption{Panel (B): SOC Major Group Fixed Effects}
    \includegraphics[width=\textwidth]{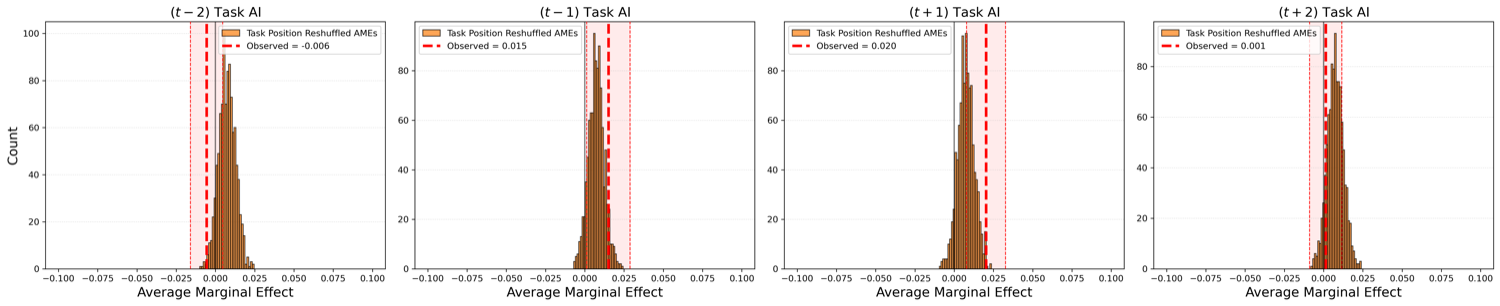}
  \end{subfigure}
  
  \vspace{1em}
  
  \begin{subfigure}[b]{\textwidth}
    \captionsetup{labelformat=empty}
    \caption{Panel (C): SOC Minor Group Fixed Effects}
    \includegraphics[width=\textwidth]{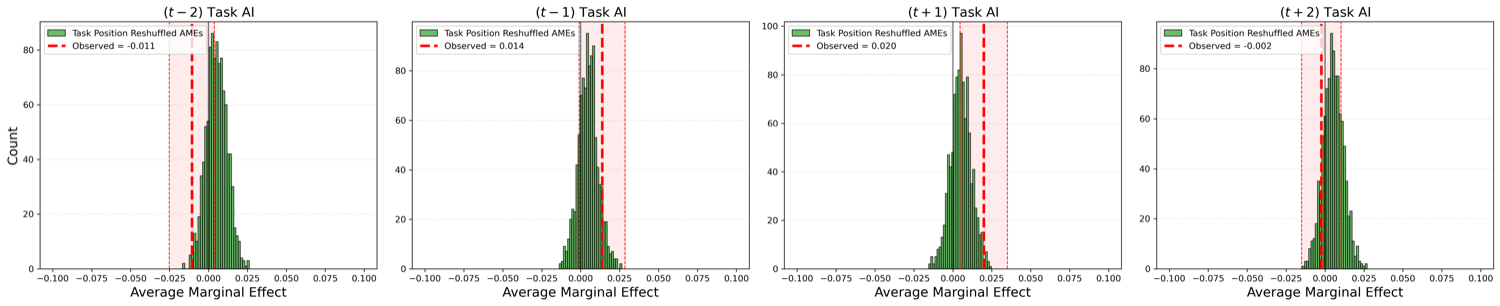}
  \end{subfigure}

  \vspace{1em}

  \begin{subfigure}[b]{\textwidth}
    \captionsetup{labelformat=empty}
    \caption{Panel (D): Detailed Work Activity Fixed Effects}
    \includegraphics[width=\textwidth]{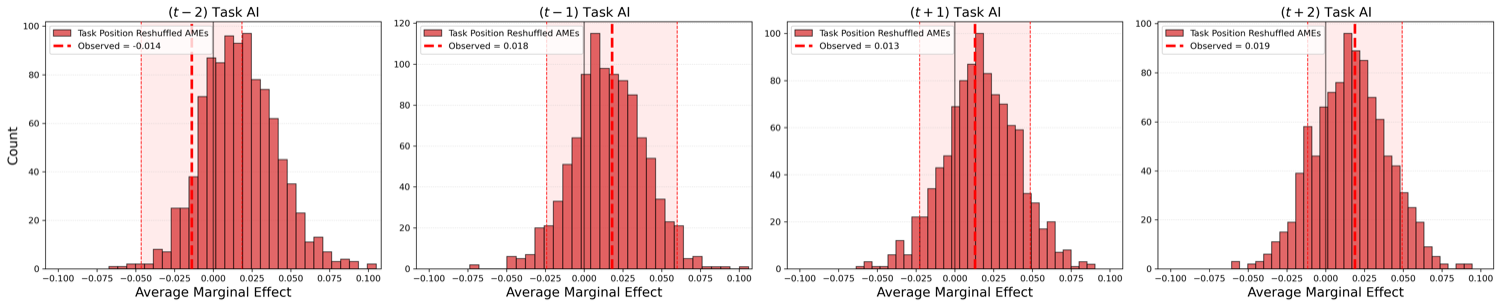}
  \end{subfigure}
  \end{center}
  \footnotesize{\emph{Notes:} These graphs show that, among similar tasks appearing in multiple occupations, those whose immediate neighbors are AI-executed exhibit higher probabilities of being automated by AI, while more distant neighbors have little to no effect on a task's AI automation likelihood.
  The red dashed line in each graph indicates the observed average marginal effect of the corresponding variable in the original dataset reported in Table~\ref{tab:DWA_regression_aiAutomation_mainSample}.
  The red shaded area marks the 90\% confidence interval around the observed point estimates in the main sample.
  The histograms plot the distribution of average marginal effects across 1{,}000 randomized reshuffles of task positions within occupations.
  Panel~(A) reports results from specification~\eqref{eq:DWA_regression_ai} but with dependent variable ($\text{is\_automated}_{t}$) instead of ($\text{is\_ai}_{t}$), while Panels~(B), (C), and (D) augment the baseline specification with SOC major group, SOC minor group, and DWA fixed effects, respectively.
}
\end{figure}

\newpage
\begin{figure}[ht!]
  \begin{center}
  \caption{Effect of Neighboring Tasks' AI Execution Status on Task's AI Automation Likelihood (GPT-filtered Sample)} 
  \label{fig:DWA_regression_aiAutomation_GPTsample}
  \begin{subfigure}[b]{\textwidth}
    \captionsetup{labelformat=empty}
    \caption{Panel (A): No Fixed Effects}
    \includegraphics[width=\textwidth]{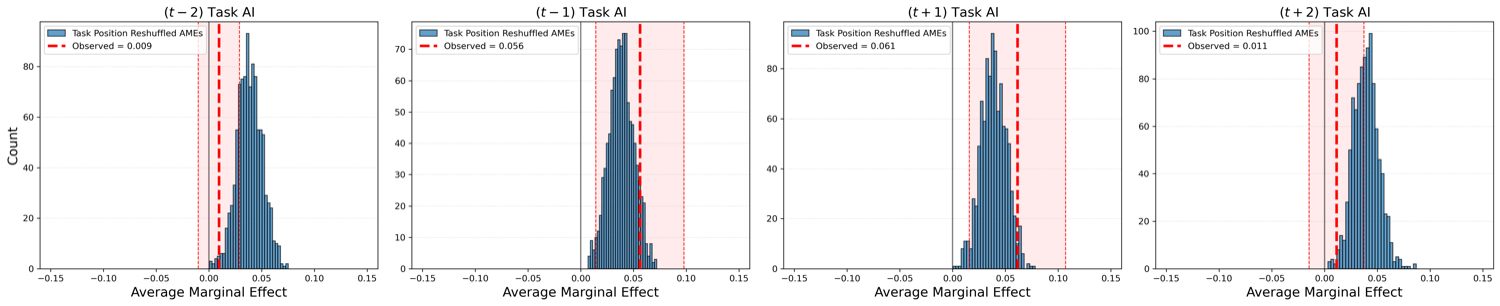}
  \end{subfigure}
  
  \vspace{1em}
  
  \begin{subfigure}[b]{\textwidth}
    \captionsetup{labelformat=empty}
    \caption{Panel (B): SOC Major Group Fixed Effects}
    \includegraphics[width=\textwidth]{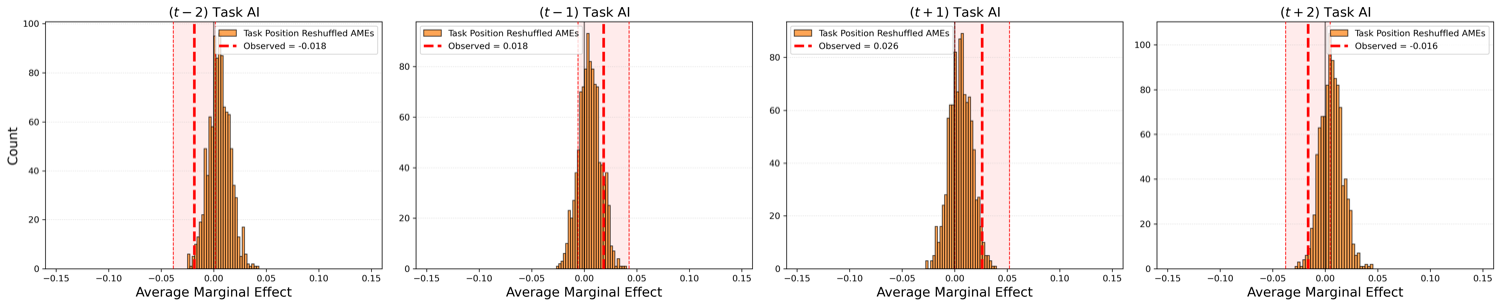}
  \end{subfigure}
  
  \vspace{1em}
  
  \begin{subfigure}[b]{\textwidth}
    \captionsetup{labelformat=empty}
    \caption{Panel (C): SOC Minor Group Fixed Effects}
    \includegraphics[width=\textwidth]{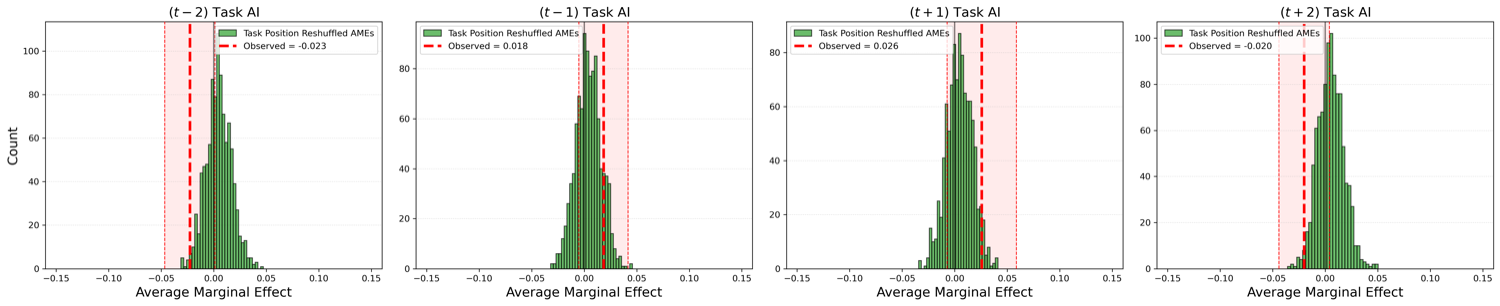}
  \end{subfigure}

  \vspace{1em}

  \begin{subfigure}[b]{\textwidth}
    \captionsetup{labelformat=empty}
    \caption{Panel (D): Detailed Work Activity Fixed Effects}
    \includegraphics[width=\textwidth]{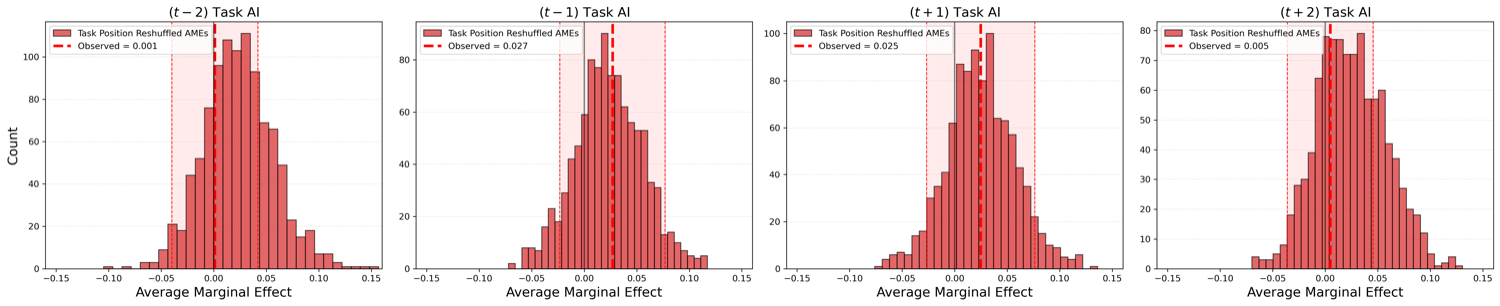}
  \end{subfigure}
  \end{center}
  \footnotesize{\emph{Notes:} These graphs show that, among similar tasks appearing in multiple occupations, those whose immediate neighbors are AI-executed exhibit higher probabilities of being automated by AI, while more distant neighbors have little to no effect on a task's AI automation likelihood.
  The red dashed line in each graph indicates the observed average marginal effect of the corresponding variable in the original dataset reported in Table~\ref{tab:DWA_regression_aiAutomation_GPTsample}.
  The red shaded area marks the 90\% confidence interval around the observed point estimates in the main sample.
  The histograms plot the distribution of average marginal effects across 1{,}000 randomized reshuffles of task positions within occupations.
  Panel~(A) reports results from specification~\eqref{eq:DWA_regression_ai} but with dependent variable ($\text{is\_automated}_{t}$) instead of ($\text{is\_ai}_{t}$), while Panels~(B), (C), and (D) augment the baseline specification with SOC major group, SOC minor group, and DWA fixed effects, respectively.
}
\end{figure}
\section{GPT-5-mini Prompts}
\label{app:prompts}

This Appendix provides the GPT-5-mini prompts used to generate a task sequence for each occupation (Prompt 1), and for finding similar tasks across occupations for each DWA (Prompt 2). \\

\begin{tcolorbox}[colback=gray!5, colframe=black!60, title={Prompt \#1: GPT-5-mini Prompt for Ordering of Tasks in Occupation}]
\begin{verbatim}
You are an expert in workflow analysis for the occupation: {{ occupation }}. 
Below is a list of {{ num_tasks }} tasks that are part of this occupation:
{{ tasks_list }}

Provide the typical sequential order in which these tasks are performed in a 
real-world workflow. 

Return your answer as a JSON array where each element has:
- "Task Position": the sequence number (1, 2, 3, etc.).
- "Task Title": the exact task text from the list above.
Format: [{"Task Position": 1, "Task Title": "..."}, 
         {"Task Position": 2, "Task Title": "..."},
         ...]
Only return the JSON array, nothing else.
\end{verbatim}
\end{tcolorbox}

\newpage
\begin{tcolorbox}[colback=gray!5, colframe=black!60, title={Prompt \#2: GPT-5-mini Prompt for Finding Similar Tasks in Each DWA}]
\begin{verbatim}
You are an expert in workflow analysis for the detailed work activity: 
{{ detailed_work_activity }}.

Below is a list of {{ num_tasks }} task IDs and titles that belong to this 
detailed work activity and appear across similar or different occupations 
(tasks and occupations are ordered such that the first task belongs to the 
first occupation, the second task belongs to the second occupation, etc.).
Tasks IDs: {{ tasks_ids }}
Tasks list: {{ tasks_list }}
Occupations list: {{ occupations_list }}
Occupation Codes list: {{ occupation_codes_list }}

Determine which tasks are similar in nature and in terms of their objectives, 
methods, or required skills. There may be more than one task associated with 
an occupation. Return only the most relevant task for every occupation. 
Only look for tasks that are actually similar. Do not feel obliged to return 
all occupations.

Return the task-occupation pairs you determine as similar as a JSON array 
where each element has:
- "Task ID": the exact task ID from the list of task IDs above
- "Task Title": the exact task text from the list of tasks above
- "O*NET-SOC Code": the exact occupation code text from the list of occupation
   codes above
- "Occupation Title": the exact occupation text from the list of occupations 
   above
Format: [{"Task ID": 1234, "Task Title": "...", 
          "O*NET-SOC Code": "...", "Occupation Title": "..."}, 
          {"Task ID": 5678, "Task Title": "...", 
          "O*NET-SOC Code": "...", "Occupation Title": "..."}, 
          ...]
Only return the JSON array, nothing else.
\end{verbatim}
\end{tcolorbox}

\section{Data Appendix \textendash \  Robustness to Alternative GPT Prompts}
\label{app:gptPrompts_robustness}

\renewcommand{\thefigure}{F.\arabic{figure}}
\setcounter{figure}{0}

In this appendix, we assess whether our results are sensitive to the specific GPT prompt used to order tasks within occupations.  
Although alternative prompts can generate different task sequences, we show that these differences do not exhibit systematic patterns across subsets of occupations, and that our headline results are robust to prompt formulation.

We begin by generating task orderings for each occupation using 10 alternative prompts in addition to the baseline prompt.
We use the same structure as the first prompt in Appendix~\ref{app:prompts}, but only change the sentence starting with ``\textit{Provide the typical sequential...}'' with an alternative from the list below:
\begin{enumerate}
    \item \small \textit{Imagine a typical workday for this occupation. As the day unfolds, tasks arise and are completed as needed. Order the tasks in the sequence they most naturally occur.}
    \item \small \textit{For each task, consider its inputs and outputs. Order tasks so outputs of earlier tasks plausibly feed into later tasks. If tasks are parallel, place the more upstream task first.}
    \item \small \textit{Order tasks to minimize rework, waiting, and unnecessary handoffs. Assume an experienced worker executing the workflow efficiently.}
    \item \small \textit{Think about what must ultimately be produced in this occupation and what needs to happen before that. Use this reasoning to produce a natural forward sequence of tasks.}
    \item \small \textit{Identify which tasks logically depend on others, then order the tasks in a single sequence consistent with those dependencies and typical practice.}
    \item \small \textit{Order tasks according to how information is generated, transformed, and used over the course of the work.}
    \item \small \textit{Order tasks based on when mistakes would be most costly, placing tasks that prevent or constrain downstream errors earlier.}
    \item \small \textit{Order tasks so that tasks informing important decisions tend to occur before tasks that rely on those decisions.}
    \item \small \textit{Order the tasks as an experienced practitioner would intuitively carry them out, without explicitly planning or formalizing the workflow.}
    \item \small \textit{Order the tasks to reflect how the work is most commonly carried out in practice, rather than how it is formally described.}
\end{enumerate}
Notice that these prompts are not simple re-wordings of the same instruction.  
Instead, they approach the task ordering problem from different perspectives, so that, in principle, the resulting task sequences could differ.  
Our aim is to quantify how much variation these alternative prompt formulations induce in practice, and whether such variation poses a concern for our purposes.

To measure the degree of overlap between task orderings, we compute pairwise Kendall's $\tau$ within each occupation across all pairs of prompts.\footnote{
Kendall's $\tau$ is a standard rank correlation measure that, roughly speaking, captures the fraction of task pairs that appear in the same relative order across two ranked lists.
This measure ranges from $-1$ (completely reversed orderings) to $1$ (identical orderings), with $0$ indicating no systematic overlap.  
Higher values therefore correspond to greater similarity between task sequences.
}
Across all occupations and prompt pairs, the average Kendall's $\tau$ is $0.6$, with the full distribution shown in Panel (a) of Figure~\ref{fig:kendall_tau_dist}.
This value implies that task orderings are not identical across prompts, but that there is a reasonably high degree of overlap given the diversity of prompt formulations considered.
More importantly for our analyses, we find no evidence that GPT generates systematically different task orderings across different types of occupations.
Specifically, we split occupations based on whether they fall above or below the median in the \emph{main prompt sample} along three dimensions discussed in Subsection~\ref{sec:fragmentation_prediction}: the empirical fragmentation index (Definition~2), the share of occupation tasks exposed to AI (E1), and the share of tasks executed by AI.
Panels (b)\textendash(d) of Figure~\ref{fig:kendall_tau_dist} show that the mean Kendall's $\tau$ is very similar for above- and below-median occupations in all cases.
\begin{figure}[ht!]
\centering
\caption{Kendall's $\tau$ Distribution Across GPT Prompt Task Orderings}
\label{fig:kendall_tau_dist}
\vspace{0.2cm}

\begin{subfigure}[t]{0.49\linewidth}
    \centering
    \caption{Full Sample}
    \includegraphics[width=\linewidth]{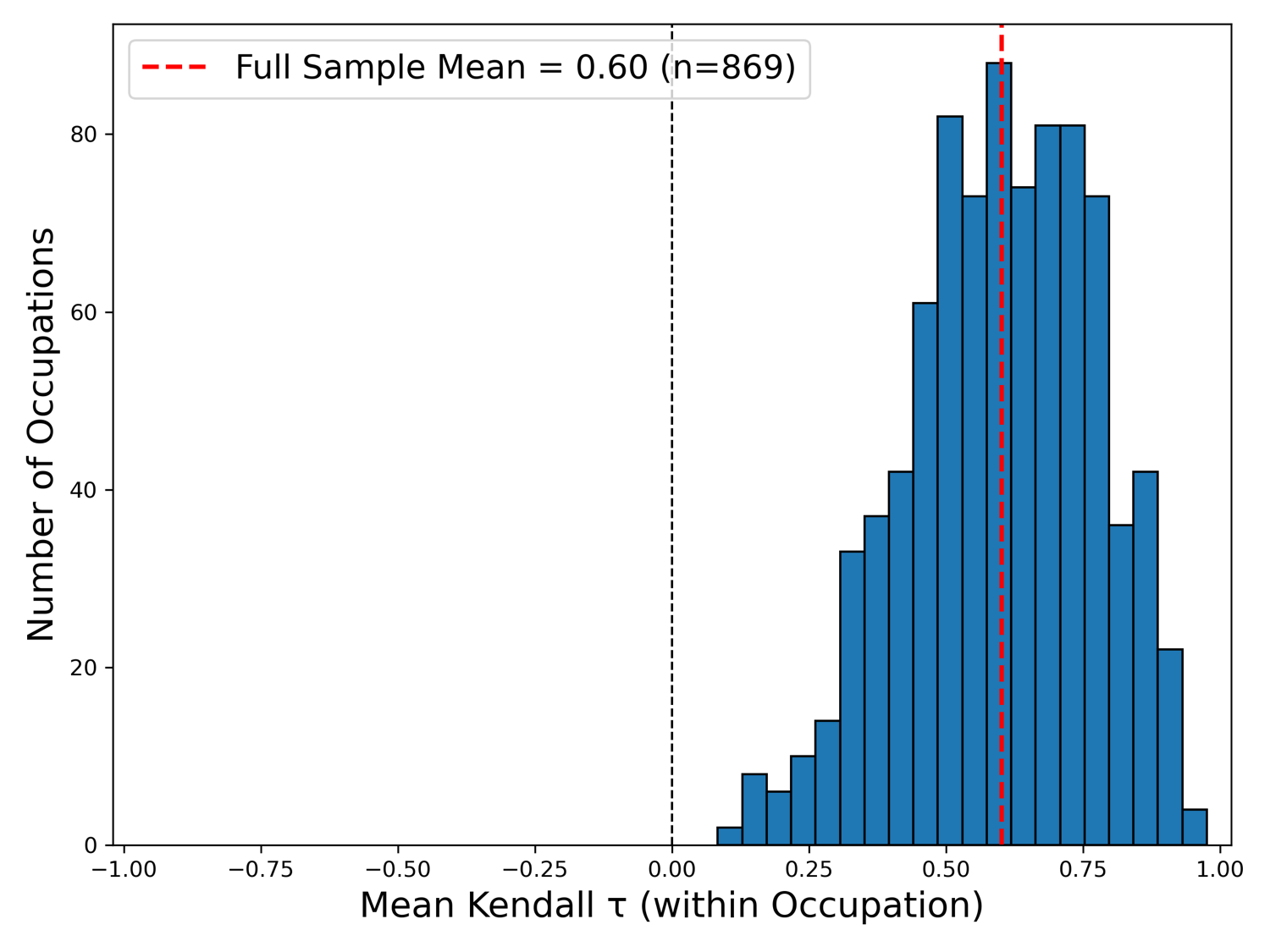}
\end{subfigure}
\hfill
\begin{subfigure}[t]{0.49\linewidth}
    \centering
    \caption{Empirical Fragmentation Index Split}
    \includegraphics[width=\linewidth]{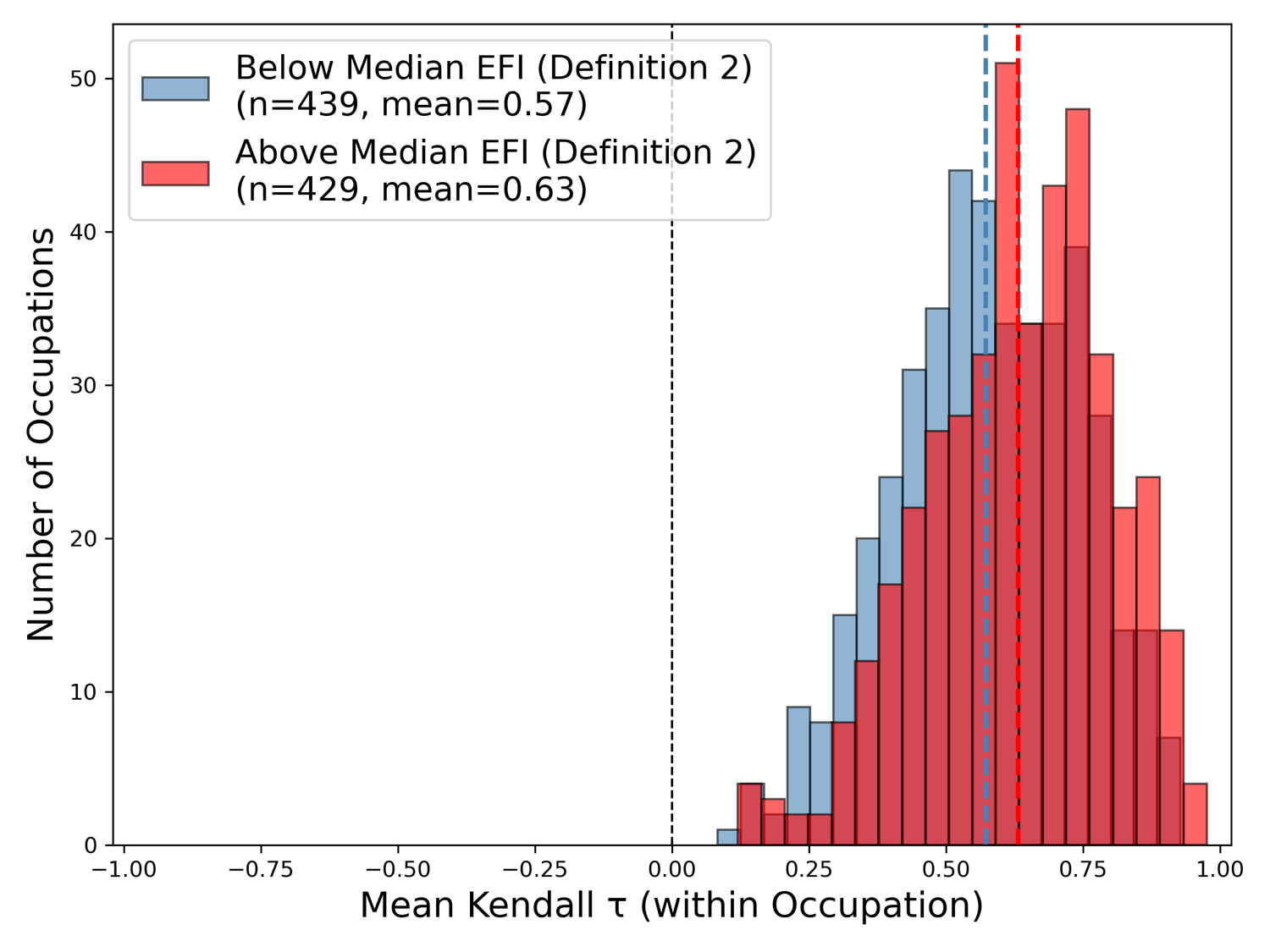}
\end{subfigure}

\vspace{0.3cm}

\begin{subfigure}[t]{0.49\linewidth}
    \centering
    \caption{AI Exposure (E1) Split}
    \includegraphics[width=\linewidth]{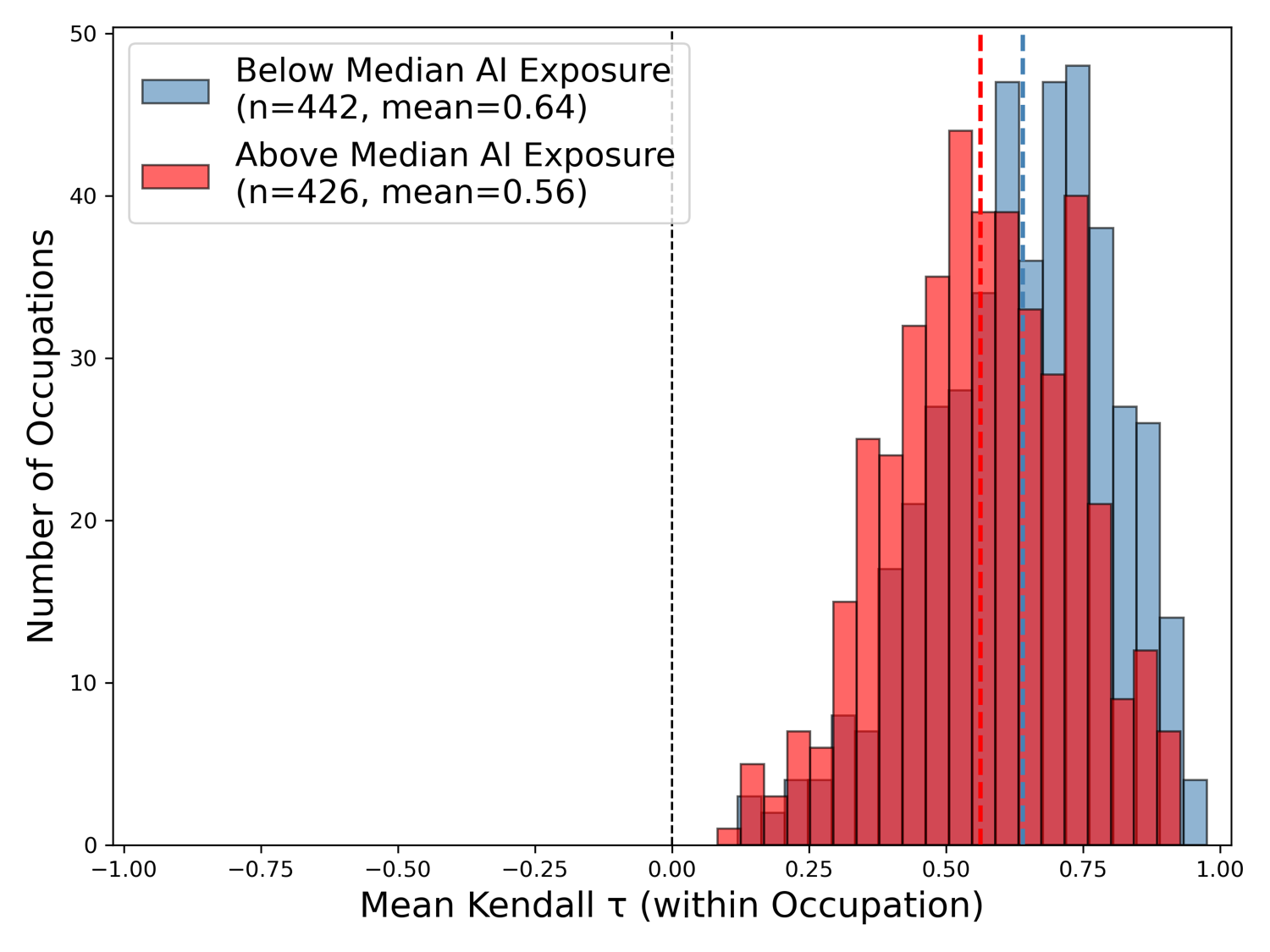}
\end{subfigure}
\hfill
\begin{subfigure}[t]{0.49\linewidth}
    \centering
    \caption{AI Execution Split}
    \includegraphics[width=\linewidth]{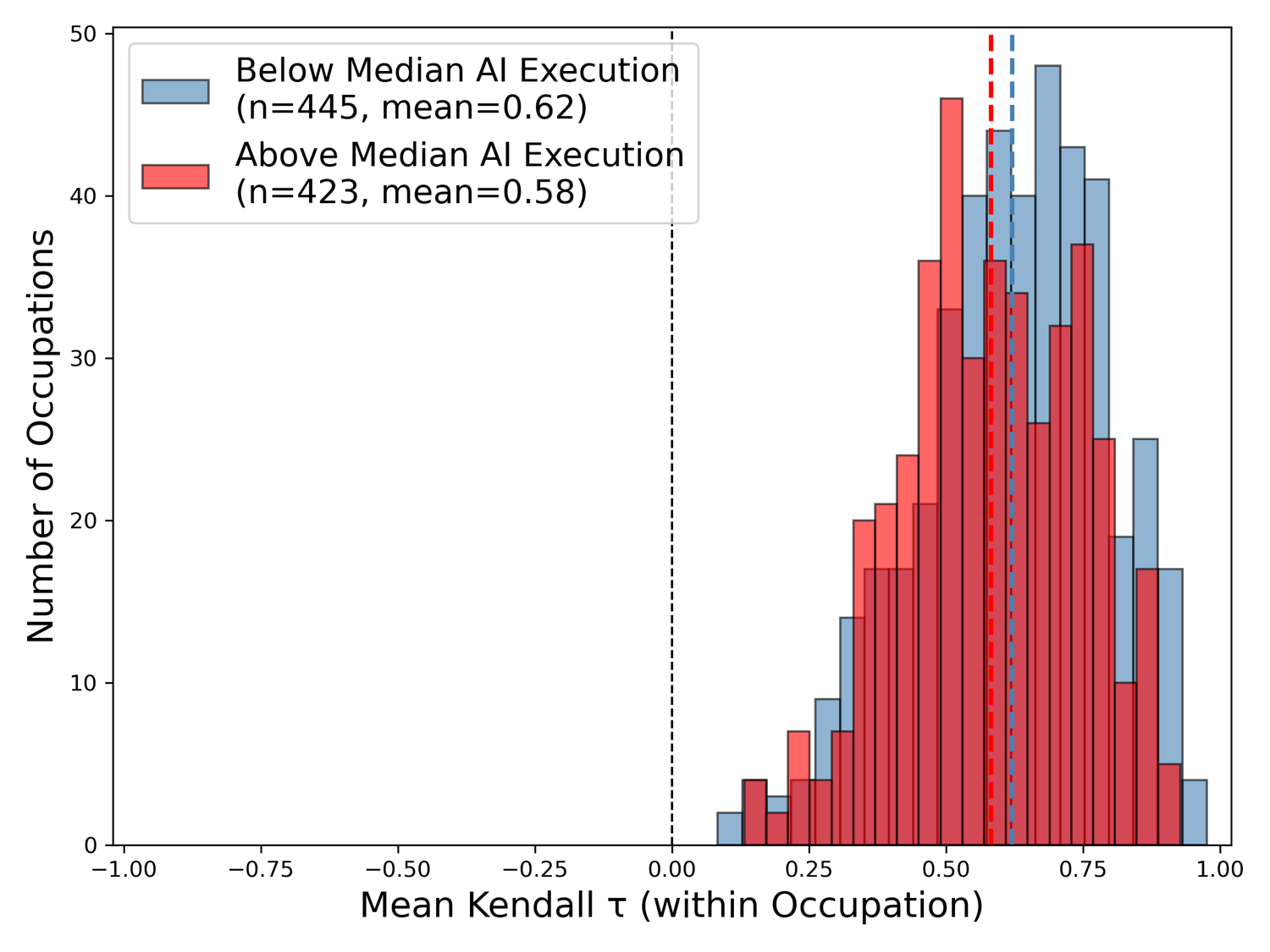}
\end{subfigure}

\vspace{0.2cm}
\raggedright \footnotesize{
\emph{Notes:} Each panel shows the distribution of mean Kendall's $\tau$ computed across 11 prompts (main + ten alternatives) within occupations. 
Split figures in Panels (b)\textendash (d) compare occupations above (red) and below (blue) the median of the indicated measure, computed in the sample generated by the main prompt.
}
\end{figure}

In the following Subsections, we show that we obtain the same results for each of our headline analyses using these alternative prompts.

\subsection{Robustness of Prediction \#1 Results to Alternative GPT Prompts}
\label{app:gptPrompts_robustness_pred1}

Here, we examine how sensitive the AI chain length and AI chains count statistics are to the choice of GPT prompt.
Panel~(a) of Figure~\ref{fig:ai_chains_length_count_prompt_robustness} reports the average AI chain length computed under each prompt, and Panel~(b) reports the corresponding average AI chains count.
Across prompts, both statistics remain tightly clustered around the baseline value implied by the main prompt.
In particular, the mean across alternative prompts (the dashed colored lines) matches the main sample estimate from the main prompt, which lies in the extreme tail of the placebo distributions in Figure~\ref{fig:aiChains_graphs_def1} in both cases.
\begin{figure}[ht!]
\caption{Robustness of AI Chain Length and AI Chain Counts to GPT Prompts}
\label{fig:ai_chains_length_count_prompt_robustness}
\begin{center}
\begin{subfigure}[b]{\textwidth}
    \centering
    \begin{subfigure}[b]{0.49\textwidth}
        \centering
        \captionsetup{labelformat=empty}
        \caption{(a) Average AI Chain Length}
        \includegraphics[width=\textwidth]{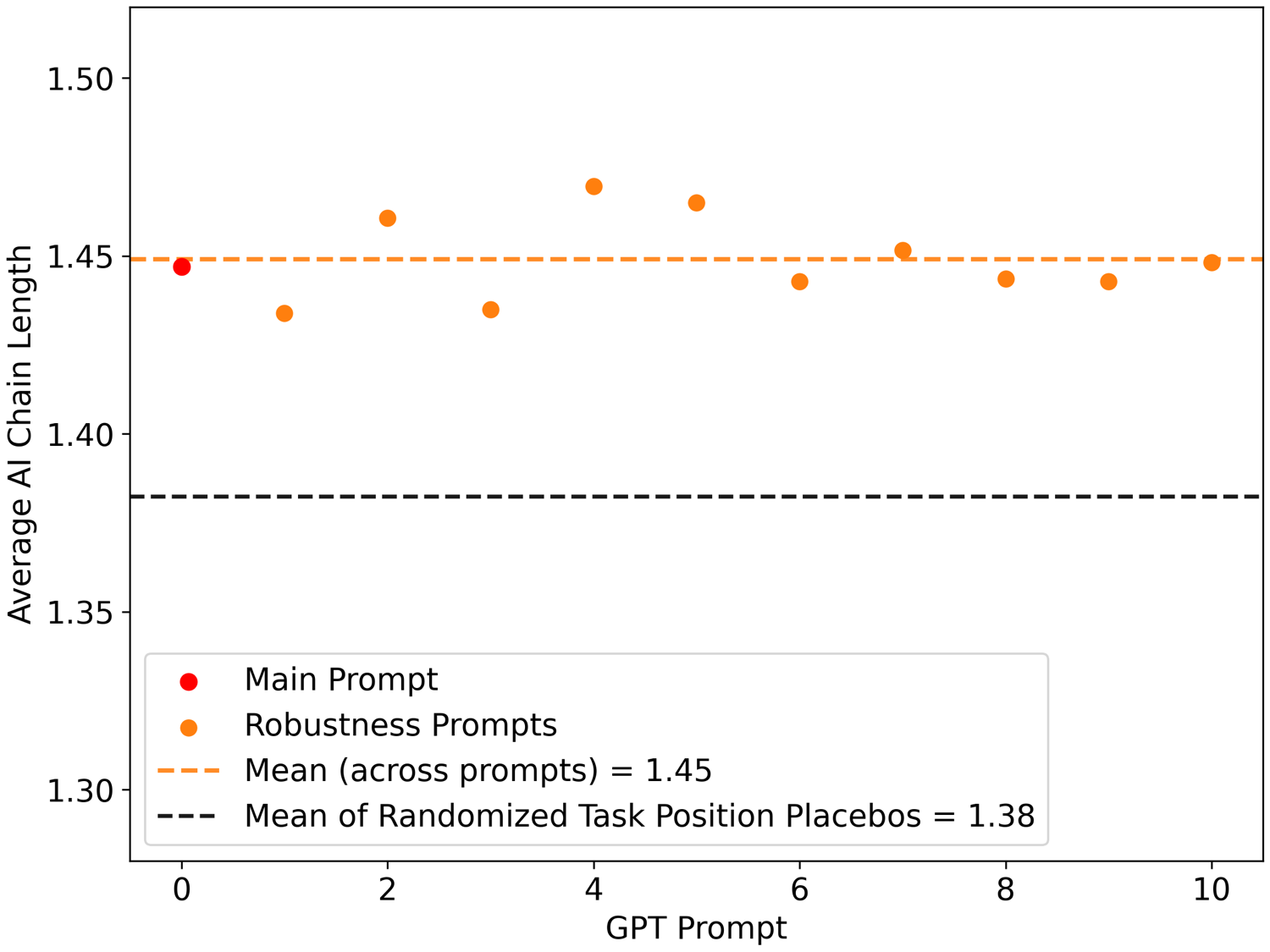}
    \end{subfigure}
    \hfill
    \begin{subfigure}[b]{0.49\textwidth}
        \centering
        \captionsetup{labelformat=empty}
        \caption{(b) Average AI Chains Count}
        \includegraphics[width=\textwidth]{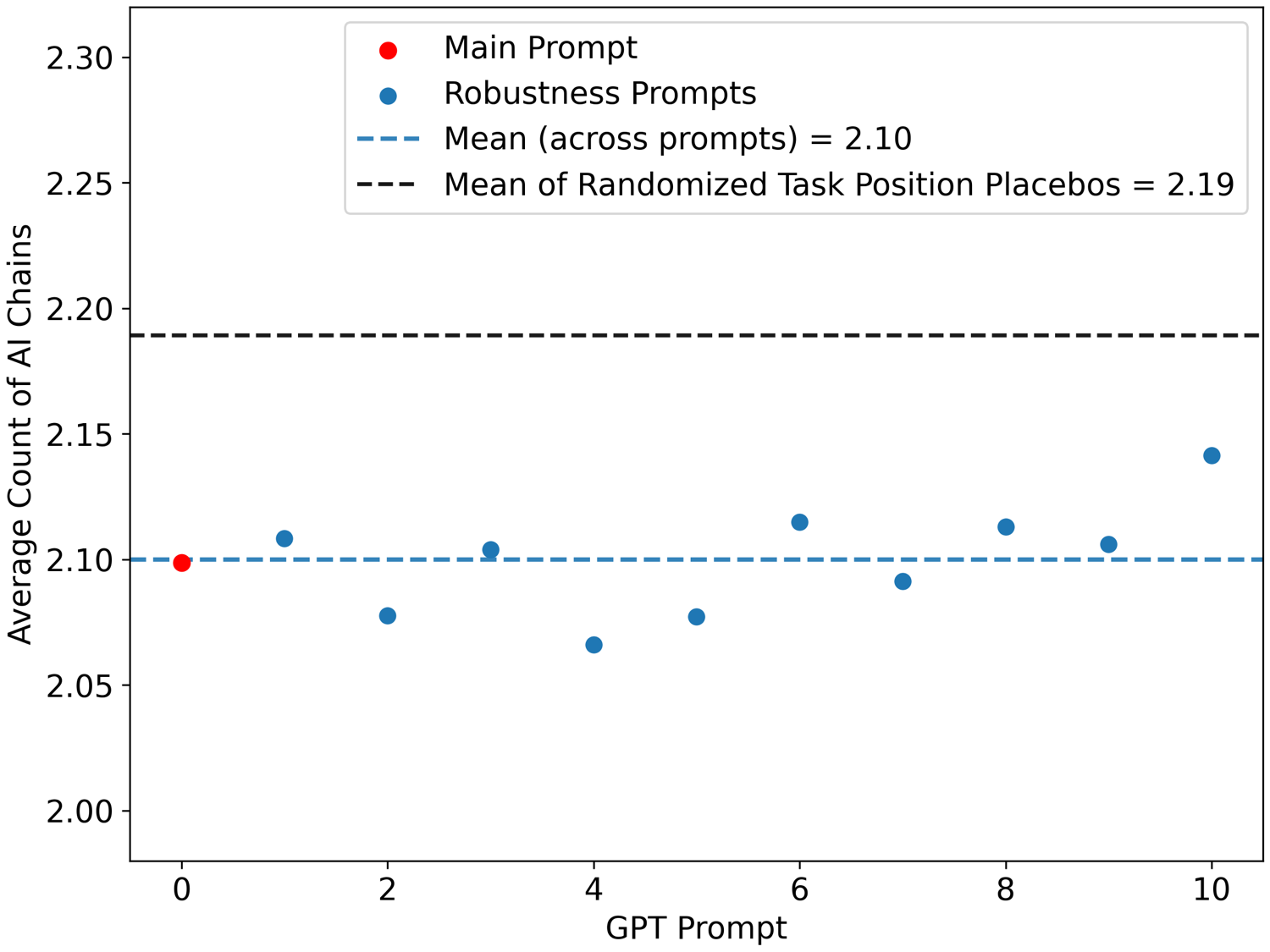}
    \end{subfigure}
\end{subfigure}
\end{center}
\raggedright \footnotesize{
\emph{Notes:} This figure shows that the average AI chain length and average AI chains count measures are robust to alternative prompt formulations.
Panel~(a) reports average AI chain length, and Panel~(b) reports the average number of AI chains per occupation, computed separately under each prompt.
Prompt~0 corresponds to the baseline prompt used in the main text.
The dashed black reference line in each panel is the mean of the same statistic across 1{,}000 randomized task-position reshuffle placebos shown in Figure~\ref{fig:aiChains_graphs_def1}.
The dashed colored line reports the mean of the statistic across all 11 prompts.
}
\end{figure}

\subsection{Robustness of Prediction \#2 Results to Alternative GPT Prompts}
\label{app:gptPrompts_robustness_pred2}

Here, we re-estimate Equation~\eqref{eq:fragmentation_index_regression} using task orderings generated by the alternative prompts and report the analogs of the estimates in Tables~\ref{tab:fragmentation_index_regression_exposure} and \ref{tab:fragmentation_index_regression_execution}.

Figure~\ref{fig:efi_regression_exposure_robustness} plots the equivalents of estimated coefficients reported in Table~\ref{tab:fragmentation_index_regression_exposure} for the main variables across the three specifications (baseline, SOC major group fixed effects, and SOC minor group fixed effects) for empirical fragmentation index Definition~1 in Panel~(A) and Definition~2 in Panel~(B).
Across all specifications, the estimates obtained under alternative prompts fall within the range of the main prompt estimates in both magnitude and statistical significance.
Specifically, the estimates on occupation-level AI exposure are positive and statistically different from zero throughout.
For EFI, the Definition~1 coefficients in Panel~(A) vary in sign but remain statistically indistinguishable from zero, while the Definition~2 coefficients in Panel~(B) are consistently negative and statistically different from zero.
This mirrors the results obtained under the main prompt and reflects measurement differences across EFI definitions.
Recall that under Definition~1 only about 14\% of tasks are labeled AI-exposed under the E1 measure, which makes AI chain formation sparse and leaves EFI weakly measured.
Definition~2, however, allows both E1 and E2 exposure labels to form AI chains, yielding a denser exposure signal and therefore a more precisely estimated relationship between fragmentation and AI execution.

Similarly, Figure~\ref{fig:efi_regression_execution_robustness} plots the equivalents of estimates reported in Table~\ref{tab:fragmentation_index_regression_execution} for the two ex-post, execution-based EFI definitions 3 and 4 using alternative prompts.
Again, similar to the results obtained from the main prompt, the estimated coefficient on AI exposure is positive and the estimated coefficient on empirical fragmentation index is negative.

\subsection{Robustness of Prediction \#3 Results to Alternative GPT Prompts}
\label{app:gptPrompts_robustness_pred3}

Here, we assess the robustness of our spillover results for neighboring tasks being AI-executed on the focal task's likelihood of AI execution.

For each alternative prompt, we re-estimate Equation~\eqref{eq:DWA_regression_ai} on the corresponding main sample under four specifications: baseline, SOC major group fixed effects, SOC minor group fixed effects, and DWA fixed effects.
Figure~\ref{fig:DWA_regression_aiExecution_mainSample_robustness} reports the equivalent of average marginal effects reported in columns (1)\textendash(4) of Table~\ref{tab:DWA_regression_aiExecution_mainSample} for all prompts.
The layout and color coding of specifications in this figure mirror those in Figure~\ref{fig:DWA_regression_aiExecution_mainSample} from the reshuffled task positions robustness exercise.

We find that even though the average effect of more distant neighbors across alternative prompts is slightly larger than under the main prompt, the substantive conclusion remains unchanged.
The two immediate neighbors (middle columns) still have larger and more precisely estimated effects than more distant neighbors (side columns), and the effects of distant neighbors attenuate and become less distinguishable from zero as additional fixed effects are included in Panels~(B)\textendash (D).

\newpage
\begin{figure}[ht!]
  \begin{center}
  \caption{Robustness of Table~\ref{tab:fragmentation_index_regression_exposure} Estimates to GPT Prompts} 
  \label{fig:efi_regression_exposure_robustness}
  
  \begin{subfigure}[b]{\textwidth}
    \captionsetup{labelformat=empty}
    \caption{Panel (A): Empirical Fragmentation Index Definition 1}
    \includegraphics[width=\textwidth]{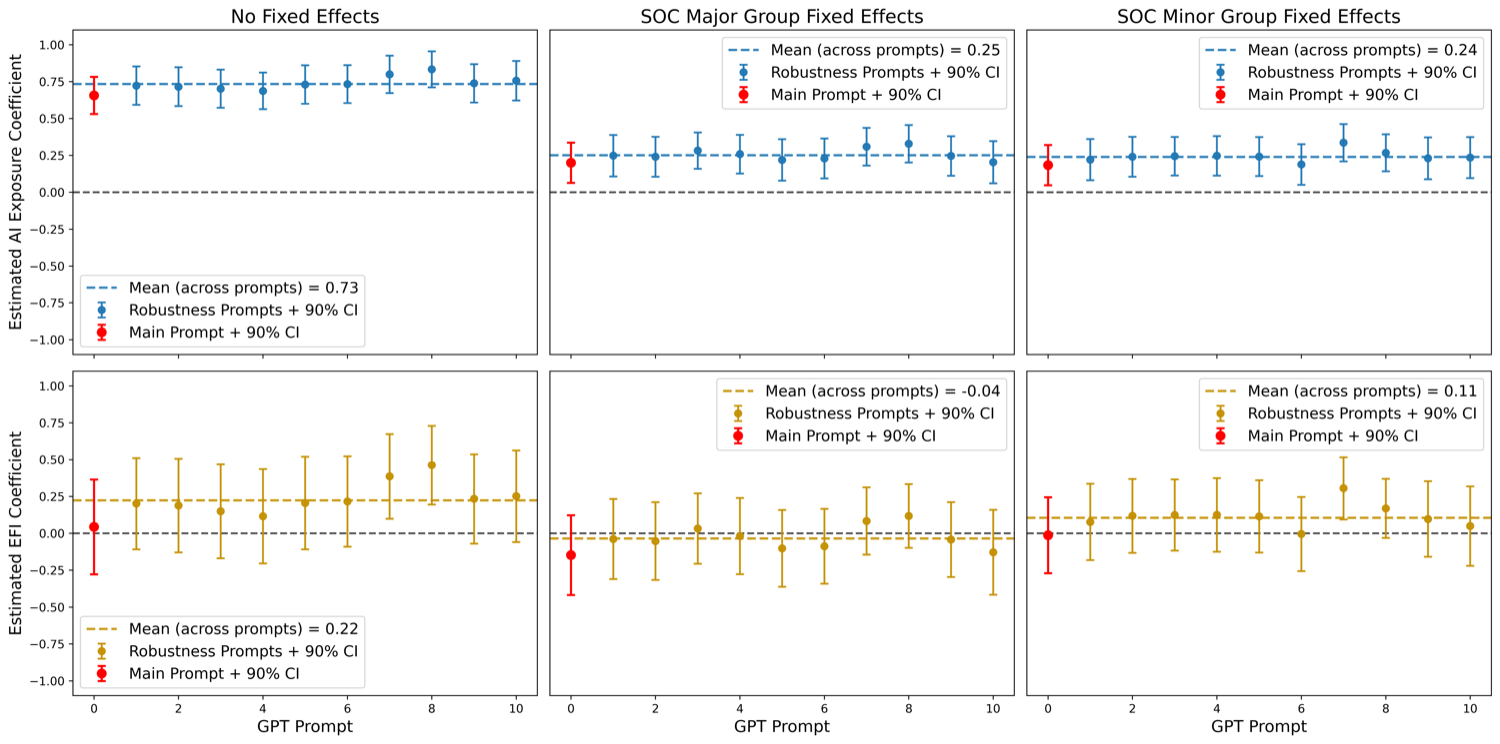}
  \end{subfigure}
  
  \vspace{1em}
  
  \begin{subfigure}[b]{\textwidth}
    \captionsetup{labelformat=empty}
    \caption{Panel (B): Empirical Fragmentation Index Definition 2}
    \includegraphics[width=\textwidth]{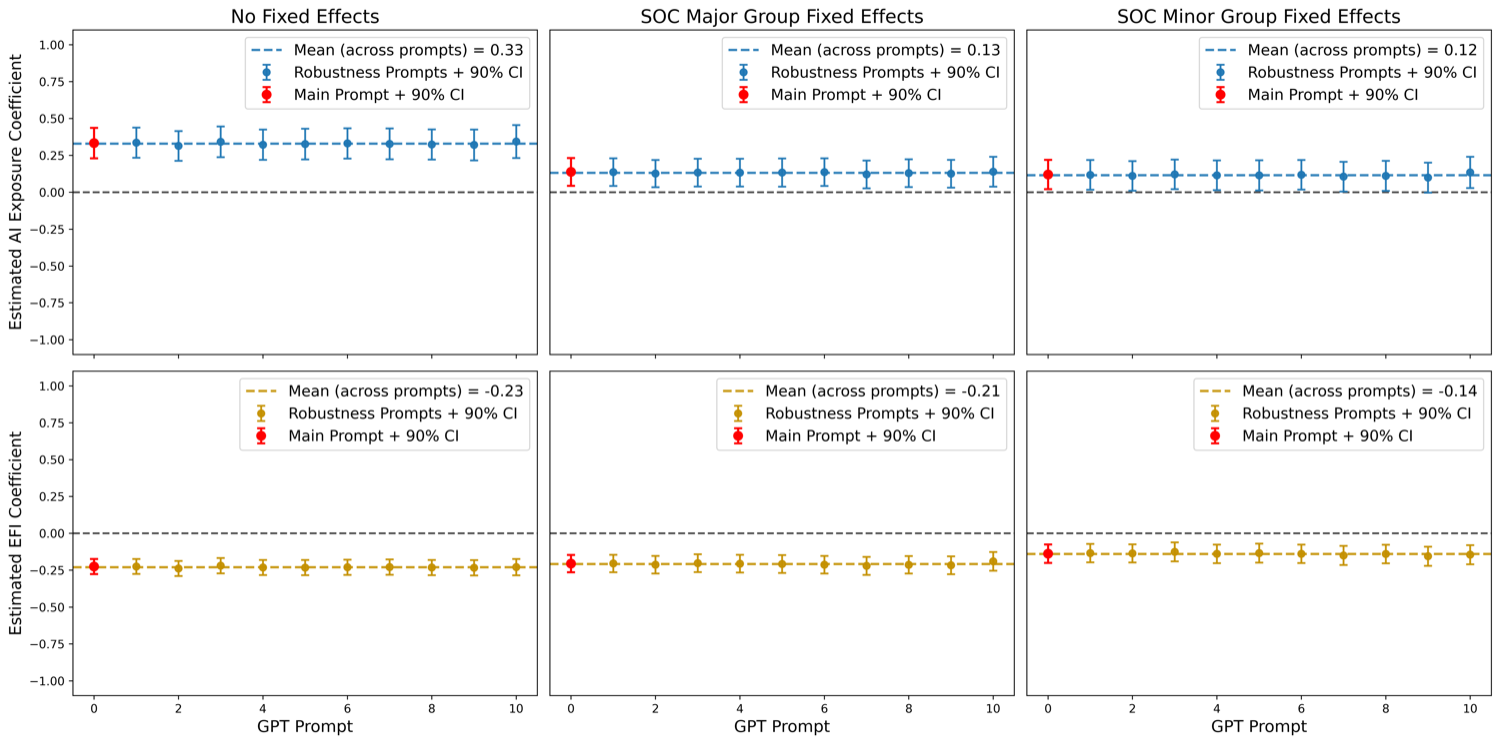}
  \end{subfigure}
  \end{center}
  \raggedright \footnotesize{\emph{Notes:} This figure shows that the estimated relationships between occupation-level AI execution and both AI exposure and the empirical fragmentation index are robust to alternative prompt formulations.
  Panel~(A) reports coefficients on AI exposure (top row, blue) and EFI (bottom row, gold) for exposure-based EFI Definition~1.
  Panel~(B) reports the corresponding coefficients for exposure-based EFI Definition~2.
  Error bars denote 90\% confidence intervals.
  The dashed colored lines report the mean across all 11 prompts.
  In each graph, the Prompt~0 estimate is highlighted in red and corresponds to the estimate reported in Table~\ref{tab:fragmentation_index_regression_exposure} using the main prompt.
  The Panel~(A) graphs correspond to columns (1)\textendash(3), and the Panel~(B) graphs correspond to columns (4)\textendash(6) of Table~\ref{tab:fragmentation_index_regression_exposure}.
}
\end{figure}

\begin{figure}[ht!]
  \begin{center}
  \caption{Robustness of Table~\ref{tab:fragmentation_index_regression_execution} Estimates to GPT Prompts} 
  \label{fig:efi_regression_execution_robustness}
  
  \begin{subfigure}[b]{\textwidth}
    \captionsetup{labelformat=empty}
    \caption{Panel (A): Empirical Fragmentation Index Definition 3}
    \includegraphics[width=\textwidth]{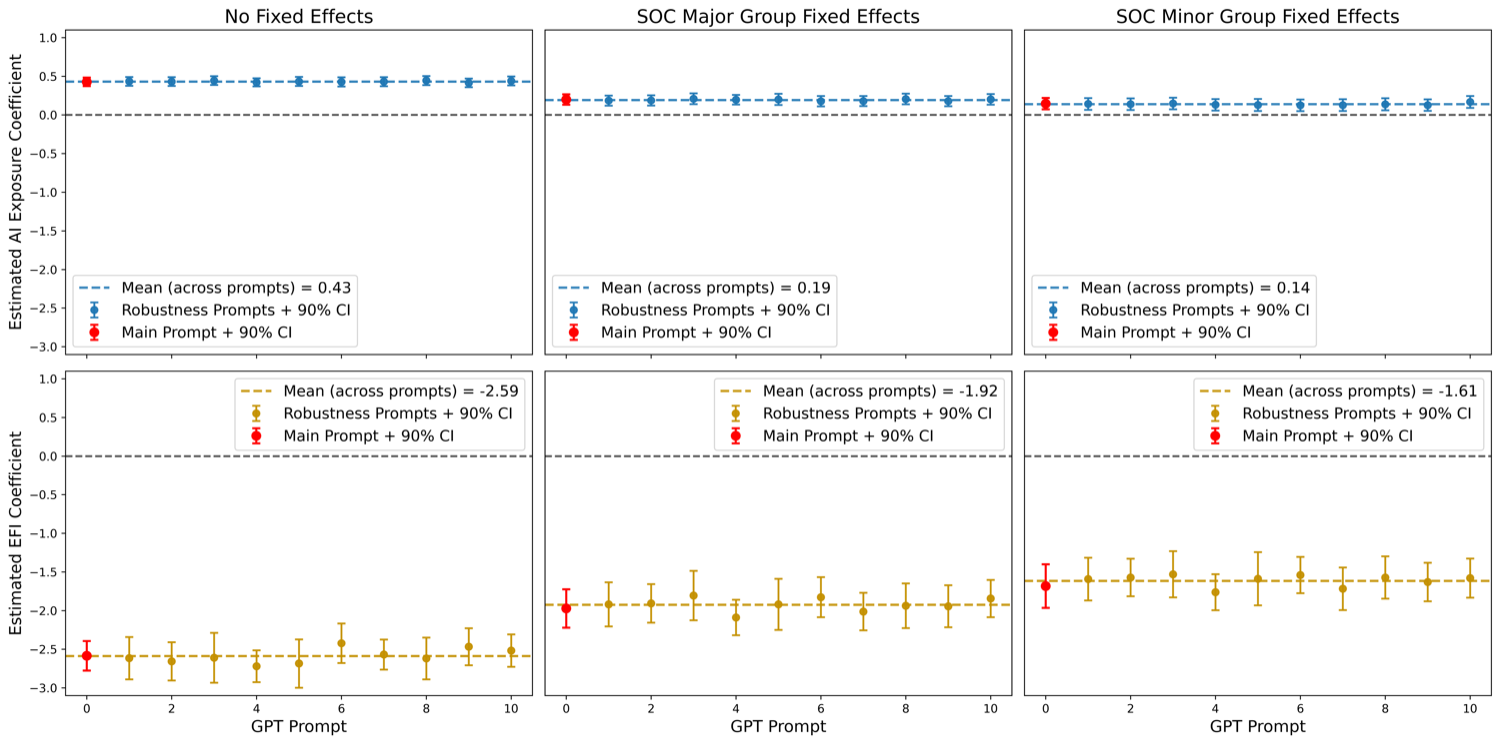}
  \end{subfigure}
  
  \vspace{1em}
  
  \begin{subfigure}[b]{\textwidth}
    \captionsetup{labelformat=empty}
    \caption{Panel (B): Empirical Fragmentation Index Definition 4}
    \includegraphics[width=\textwidth]{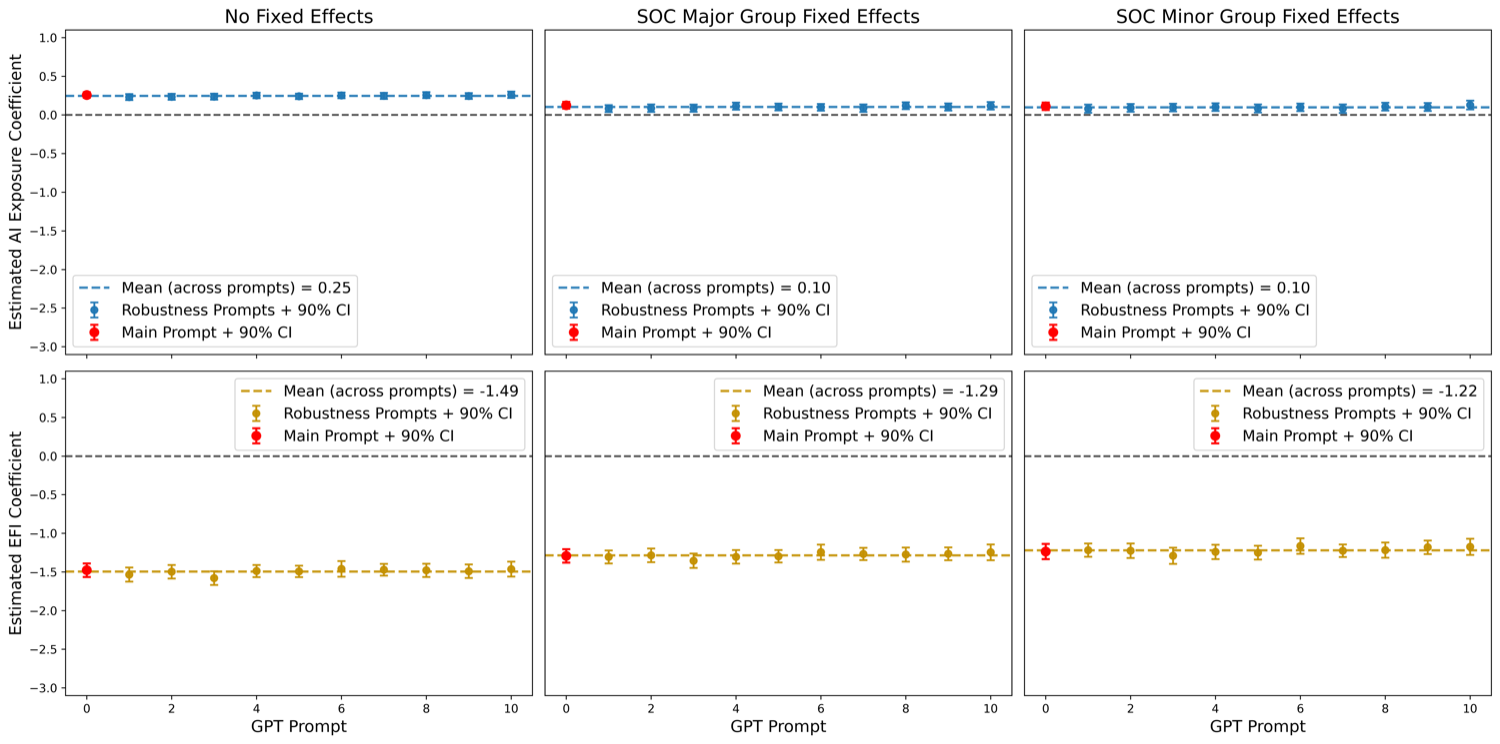}
  \end{subfigure}
  \end{center}
  \raggedright \footnotesize{\emph{Notes:} This figure shows that the estimated relationships between occupation-level AI execution and both AI exposure and the empirical fragmentation index are robust to alternative prompt formulations.
  Panel~(A) reports coefficients on AI exposure (top row, blue) and EFI (bottom row, gold) for execution-based EFI Definition~3.
  Panel~(B) reports the corresponding coefficients for execution-based EFI Definition~4.
  Error bars denote 90\% confidence intervals.
  The dashed colored lines report the mean across all 11 prompts.
  In each graph, the Prompt~0 estimate is highlighted in red and corresponds to the estimate reported in Table~\ref{tab:fragmentation_index_regression_execution} using the main prompt.
  The Panel~(A) graphs correspond to columns (1)\textendash(3), and the Panel~(B) graphs correspond to columns (4)\textendash(6) of Table~\ref{tab:fragmentation_index_regression_execution}.
}
\end{figure}

\begin{figure}[ht!]
  \begin{center}
  \caption{Robustness of Table~\ref{tab:DWA_regression_aiExecution_mainSample} Estimates to GPT Prompts} 
  \label{fig:DWA_regression_aiExecution_mainSample_robustness}
  \begin{subfigure}[b]{\textwidth}
    \captionsetup{labelformat=empty}
    \caption{Panel (A): No Fixed Effects}
    \includegraphics[width=\textwidth]{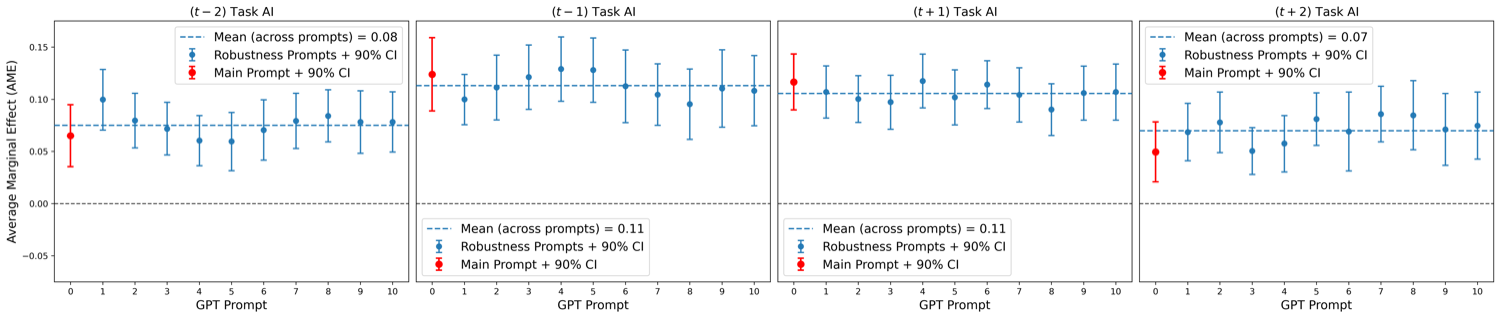}
  \end{subfigure}
  
  \vspace{1em}
  
  \begin{subfigure}[b]{\textwidth}
    \captionsetup{labelformat=empty}
    \caption{Panel (B): SOC Major Group Fixed Effects}
    \includegraphics[width=\textwidth]{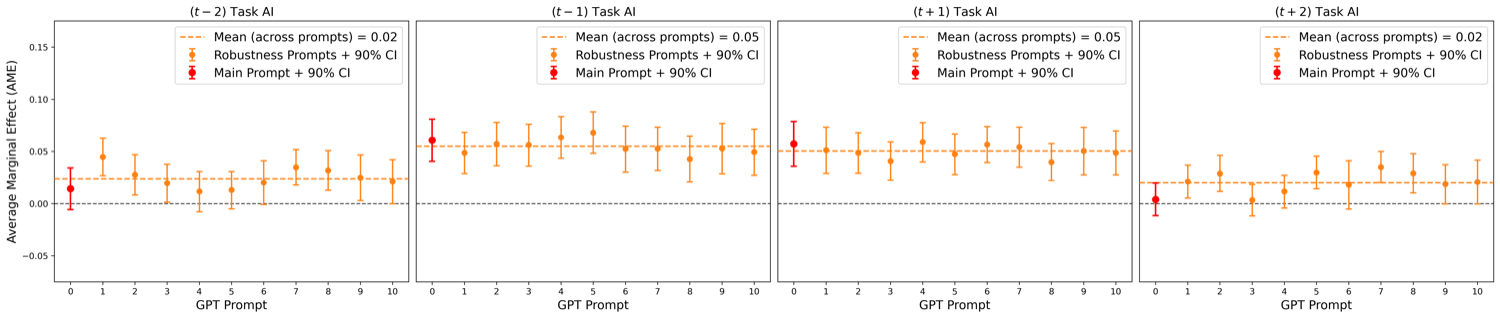}
  \end{subfigure}
  
  \vspace{1em}
  
  \begin{subfigure}[b]{\textwidth}
    \captionsetup{labelformat=empty}
    \caption{Panel (C): SOC Minor Group Fixed Effects}
    \includegraphics[width=\textwidth]{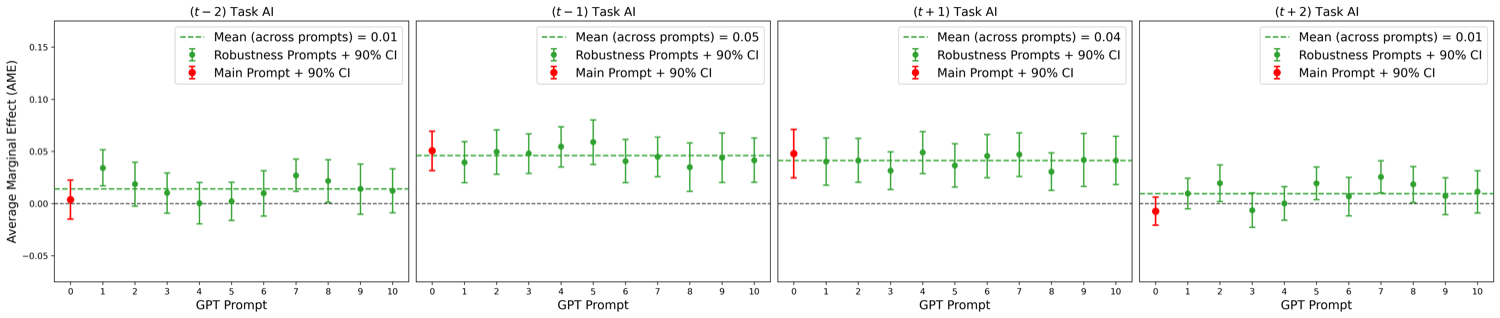}
  \end{subfigure}

  \vspace{1em}

  \begin{subfigure}[b]{\textwidth}
    \captionsetup{labelformat=empty}
    \caption{Panel (D): Detailed Work Activity Fixed Effects}
    \includegraphics[width=\textwidth]{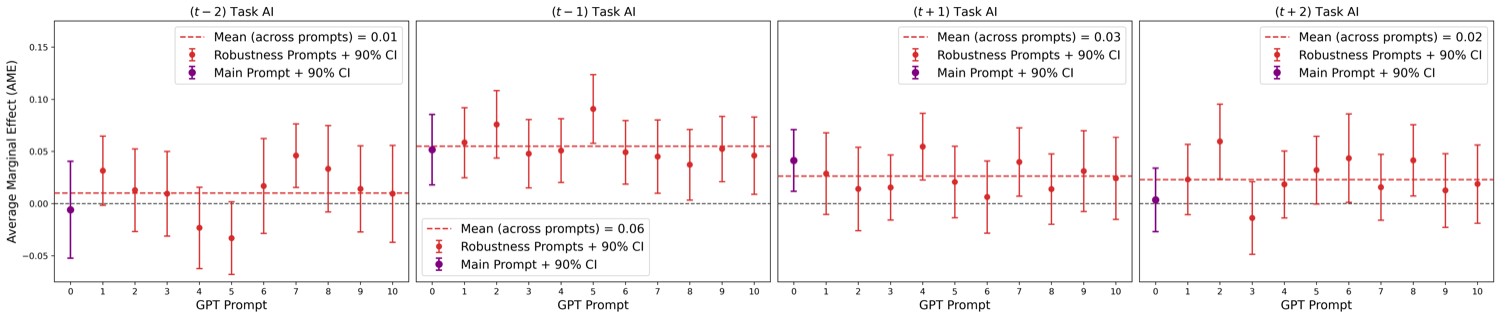}
  \end{subfigure}
  \end{center}
  \raggedright \footnotesize{\emph{Notes:} This figure shows that the spillover effect of neighboring tasks being AI-executed on the focal task's AI execution is robust to alternative prompt formulations.
  Across prompts, the two immediate neighbors (middle columns) have larger and more precisely estimated effects than more distant neighbors (side columns).
  The effects of distant neighbors attenuate toward zero and become less distinguishable from zero as additional fixed effects are included.
  Panels~(A)\textendash(D) report the prompt-specific analogs of the estimates in columns (1)\textendash(4) of Table~\ref{tab:DWA_regression_aiExecution_mainSample}.
  Error bars denote 90\% confidence intervals.
  The dashed colored lines report the mean across all 11 prompts.
  In each graph, the Prompt~0 estimate corresponds to the estimate reported in Table~\ref{tab:DWA_regression_aiExecution_mainSample} using the main prompt.
}
\end{figure}

\end{document}